\def\bs{\boldsymbol}
\newcommand{\hbbeta}{\widehat{\bs{\beta}}}
\newcommand{\hbgamma}{\widehat{\bs{\gamma}}}
\newcommand{\hbtheta}{\widehat{\bs{\theta}}}
\newcommand{\tbbeta}{\widetilde{\bs{\beta}}}
\newcommand{\tbgamma}{\widetilde{\bs{\gamma}}}
\newcommand{\tbtheta}{\widetilde{\bs{\theta}}}
\newcommand{\sigs}[1]{\bs{#1}_{\mathcal{S}}}
\newcommand{\sigsy}[1]{\bs{#1}_{\mathcal{S}^{\prime}}}
\newcommand{\thetheorem}{{\thesection. \arabic{theorem}}}
\newcommand{\thelemma}{{\thesection. \arabic{lemma}}}
\newcommand{\theproposition}{{\thesection. \arabic{proposition}}}
\newcommand{\thecorollary}{{\thesection. \arabic{corollary}}}
\newtheorem{theorem}{{\sc Theorem}}
\newtheorem{lemma}{{\sc Lemma}}
\begin{document}
\renewcommand{\baselinestretch}{1.2}
\markboth{\hfill{\footnotesize\rm Xinyi Li, Li Wang and Dan Nettleton}\hfill}
{\hfill {\footnotesize\rm Sparse Model Identification and Learning for Ultra-high-dimensional Additive Partially Linear Models} \hfill}
\renewcommand{\thefootnote}{}
$\ $\par \fontsize{10.95}{14pt plus.8pt minus .6pt}\selectfont
\vspace{0.8pc} \centerline{\Large\bf Sparse Model Identification and Learning for }
\centerline{\Large\bf Ultra-high-dimensional Additive Partially Linear Models}
\vspace{.4cm} \centerline{Xinyi Li$^{a}$, Li Wang$^{b}$ and Dan Nettleton$^{b}$
\footnote{\emph{Address for correspondence}: Li Wang, Department of Statistics and the Statistical Laboratory, Iowa State University, Ames, IA, USA. Email: lilywang@iastate.edu}} \vspace{.4cm} \centerline{\it $^{a}$SAMSI / University of North Carolina at Chapel Hill and $^{b}$Iowa State University} \vspace{.55cm}
\fontsize{9}{11.5pt plus.8pt minus .6pt}\selectfont

\begin{quotation}
\noindent \textit{Abstract:} The additive partially linear model (APLM) combines the flexibility of nonparametric regression with the parsimony of regression models, and has been widely used as a popular tool in multivariate nonparametric regression to alleviate the ``curse of dimensionality''. A natural question raised in practice is the choice of structure in the nonparametric part, that is, whether the continuous covariates enter into the model in linear or nonparametric form. In this paper, we present a comprehensive framework for simultaneous sparse model identification and learning for ultra-high-dimensional APLMs where both the linear and nonparametric components are possibly larger than the sample size. We propose a fast and efficient two-stage procedure. In the first stage, we decompose the nonparametric functions into a linear part and a nonlinear part. The nonlinear functions are approximated by constant spline bases, and a triple penalization procedure is proposed to select nonzero components using adaptive group LASSO. In the second stage, we refit data with selected covariates using higher order polynomial splines, and apply spline-backfitted local-linear smoothing to obtain asymptotic normality for the estimators. The procedure is shown to be consistent for model structure identification. It can identify zero, linear, and nonlinear components correctly and efficiently. Inference can be made on both linear coefficients and nonparametric functions. We conduct simulation studies to evaluate the performance of the method and apply the proposed method to a dataset on the Shoot Apical Meristem (SAM) of maize genotypes for illustration.

\vspace{9pt} \noindent \textit{Key words and phrases:} Dimension reduction, inference for ultra-high-dimensional data, semiparametric regression, spline-backfitted local polynomial, structure identification, variable selection.
\end{quotation}

\fontsize{10.95}{14pt plus.8pt minus .6pt}\selectfont

\thispagestyle{empty}

\setcounter{chapter}{1}
\setcounter{equation}{0}
\renewcommand{\theequation}{\arabic{equation}}
\renewcommand{\thesection}{\arabic{section}}\setcounter{section}{0}
\section{Introduction} \label{SEC:introduction} \vskip 0.1in

In the past three decades, flexible and parsimonious additive partially linear models (APLMs) have been extensively studied and widely used in many statistical applications, including biology, econometrics, engineering, and social science. Examples of recent work on APLMs include \cite{liang2008additive}, \cite{liu2011estimation}, \cite{ma2011spline}, \cite{wang2011estimation}, \cite{ma2013simultaneous}, \cite{wang2014estimation} and \cite{lian2014generalized}. APLMs are natural extensions of classical parametric models with good interpretability and are becoming more and more popular in data analysis.

Suppose we observe $\{(Y_i,\mathbf{Z}_{(i)},$ $\mathbf{X}_{(i)})\}_{i=1}^{n}$. For subject $i =1,\ldots, n$, $Y_i$ is a univariate response, $\mathbf{Z}_{(i)}=( Z_{i1},\ldots ,$ $Z_{ip_1})^{\top}$ is a $p_1$-dimensional vector of covariates that may be linearly associated with the response, and $\mathbf{X}_{(i)}=(X_{i1},\ldots ,X_{i p_2})^{\top}$ is a $p_2$-dimensional vector of continuous covariates that may have nonlinear associations with the response. We assume $\{(Y_i,\mathbf{Z}_{(i)},$ $\mathbf{X}_{(i)})\}_{i=1}^{n}$ is an i.i.d sample from the distribution of $\left(Y,\bs{Z},\bs{X}\right) $, satisfying the following model:
\begin{align}
    Y_i &= \mu + \mathbf{Z}_{(i)}^{\top}\bs{\alpha} + \sum_{\ell=1}^{p_2}\phi_{\ell}(X_{i\ell}) + \varepsilon_i
    = \mu + \sum_{k=1}^{p_1}Z_{ik}\alpha_k + \sum_{\ell=1}^{p_2}\phi_{\ell}(X_{i\ell}) + \varepsilon_i,
\label{model:aplm}
\end{align}
where $\mu$ is the intercept, $\alpha_k$, $k=1,\ldots,p_1$, are unknown regression coefficients, $\left\{\phi _{\ell}\left( \cdot \right) \right\} _{\ell=1}^{p_2}$ are unknown smooth functions, and each $\phi_{\ell}\left( \cdot \right)$ is centered with ${\rm E} \phi _{\ell}\left(X_{i\ell}\right)=0$ to make model (\ref{model:aplm}) identifiable. The $\mathbf{X}_{(i)}$ is a $p_2$-dimensional vector of zero-mean covariates having density with a compact support. Without loss of generality, we assume that each covariate $\left\{X_{i\ell}\right\}_{\ell=1}^{p_2}$ can be rescaled into an interval $\chi=[a,b]$. The $\varepsilon_i$ terms are iid random errors with mean zero and variance $\sigma^2$. 

The APLM is particularly convenient when $\bs{Z}$ is a vector of categorical or discrete variables, and in this case, the components of $\bs{Z}$ enter the linear part of model (\ref{model:aplm}) automatically, and the continuous variables usually enter the model nonparametrically. In practice, we might have reasons to believe that some of the continuous variables should enter the model linearly rather than nonparametrically. A natural question is how to determine which continuous covariates have a linear effect and which continuous covariates have a nonlinear effect. If the choice of linear components is correctly specified, then the biases in the estimation of these components are eliminated and root-$n$ convergence rates can be obtained for the linear coefficients. However, such prior knowledge is rarely available, especially when the number of covariates is large. Thus, structure identification, or linear and nonlinear detection, is an important step in the process of building an APLM from high-dimensional data.

When the number of covariates in the model is fixed, structure identification in additive models (AMs) has been studied in the literature. \citet{zhang2011linear} proposed a penalization procedure to identify the linear components in AMs in the context of smoothing splines ANOVA. They demonstrated the consistency of the model structure identification and established the convergence rate of the proposed method specifically under the tensor product design. \citet{huang2012semiparametric} proposed another penalized semiparametric regression approach using a group minimax concave penalty to identify the covariates with linear effects. They showed consistency in determining the linear and nonlinear structure in covariates, and obtained the convergence rate of nonlinear function estimators and asymptotic properties of linear coefficient estimators; but they did not perform variable selection at the same time.

For high-dimensional AMs, \citet{lian2015separation} proposed a double penalization procedure to distinguish covariates that enter the nonparametric and parametric parts and to identify significant covariates simultaneously.
They demonstrated the consistency of the model structure identification, and established the convergence rate of nonlinear function estimators and asymptotic normality of linear coefficient estimators. Despite the nice theoretical properties, their method heavily relies on the local quadratic approximation in \cite{fan2001variable}, which is incapable of producing naturally sparse estimates. In addition, employing the local quadratic approximation can be extremely expensive because it requires the repeated factorization of large matrices, which becomes infeasible when the number of covariates is very large.

Note that all the aforementioned papers \citep{zhang2011linear,huang2012semiparametric,lian2015separation} about structure identification focus on the AM with continuous explanatory variables. However, in many applications, a canonical partitioning of the variables exists. In particular, if there are categorical or discrete explanatory variables, as in the case of the SAM data studies (see the details in Section \ref{sec:application}) and in many genome-wide association studies, we may want to keep discrete explanatory variables separate from the other design variables and let discrete variables enter the linear part of the model directly. In addition, if there is some prior knowledge of certain parametric forms for some specific covariates, such as a linear form, we may lose efficiency if we simply model all the covariates nonparametrically.

The above practical and theoretical concerns motivate our further investigation of the simultaneous  variable selection and structure selection problem for flexible and parsimonious APLMs, in which the features of the data suitable for parametric modeling are modeled parametrically and nonparametric components are used only where needed. We consider the setting where both the dimension of the linear components and the dimension of nonlinear components is ultra-high. We propose an efficient and stable penalization procedure for simultaneously identifying linear and nonlinear components, removing insignificant predictors, and estimating the remaining linear and nonlinear components. We prove the proposed \textit{S}parse \textit{M}odel \textit{I}dentification, \textit{L}earning and \textit{E}stimation (referred to as SMILE) procedure is consistent. We propose an iterative group coordinate descent approach to solve the penalized minimization problem efficiently. Our algorithm is very easy to implement because it only involves simple arithmetic operations with no complicated numerical optimization steps, matrix factorizations, or inversions. In one simulation example with $n=500$ and $p_1=p_2=5000$, it takes less than one minute to complete the entire model identification and variable selection process on a regular PC.

After variable selection and structure detection, we would like to provide an inferential tool for the linear and nonparametric components. The spline method is fast and easy to implement; however, the rate of convergence is only established in mean squares sense, and there is no asymptotic distribution or uniform convergence, so no measures of confidence can be assigned to the estimators. In this paper, we propose a two-step spline-backfitted local-linear smoothing (SBLL) procedure for APLM estimation, model selection and simultaneous inference for all the components. In the first stage, we approximate the nonparametric functions $\phi_{\ell} (\cdot)$, $\ell= 1,\ldots,p_2$, with undersmoothed constant spline functions. We perform model selection for the APLM using a triple penalized procedure to select important variables and identify the linear vs. nonlinear structure for the continuous covariates, which is crucial to obtain efficient estimators for the non-zero components. We show that the proposed model selection and structure identification for both parametric and nonparametric terms are consistent, and the estimators of the nonzero linear coefficients and nonzero nonparametric functions are both $L_2$-norm consistent. In the second stage, we refit the data with covariates selected in the first step using higher-order polynomial splines to achieve root-$n$ consistency of the coefficient estimators in the linear part, and apply a one-step local-linear backfitting to the projected nonparametric components obtained from the refitting. Asymptotic normality for both linear coefficient estimators and nonlinear component estimators, as well as simultaneous confidence bands (SCBs) for all nonparametric components, are provided.

The rest of the paper is organized as follows. In Section \ref{sec:method}, we describe the first-stage spline smoothing and propose a triple penalized regularization method for simultaneous model identification and variable selection. The theoretical properties of selection consistency and rates of convergence for the coefficient estimators and nonparametric estimators are developed. Section \ref{sec:nonparmetric} introduces the spline-backfitted local-linear estimators and SCBs for the nonparametric components. The performance of the estimators is assessed by simulations in Section \ref{sec:simulation} and illustrated by application to the SAM data in Section \ref{sec:application}. Some concluding remarks are given in Section \ref{sec:conclusion}. Section A of the online Supplemental Materials evaluates the effect of different smoothing parameters on the performance of the proposed method. Technical details are provided in Section B of the Supplemental Materials.

\setcounter{chapter}{2} \renewcommand{\theproposition}{{2.%
\arabic{proposition}}} \renewcommand{\thesection}{\arabic{section}}
\renewcommand{\thesubsection}{2.\arabic{subsection}}
\setcounter{lemma}{0} \setcounter{section}{1}
\setcounter{theorem}{0} \setcounter{proposition}{0} \setcounter{corollary}{0}

\section{Methodology}
\label{sec:method}

\subsection{Model Setup}

In the following, the functional form (linear vs. nonlinear) for each continuous covariate in model (\ref{model:aplm}) is assumed to be unknown.
In order to decide the form of $\phi_{\ell}$, for each $\ell=1, \ldots p_2$, we can decompose $\phi_{\ell}$ into a linear part and a nonlinear part: $\phi_{\ell}(x) = \beta_{\ell}x + g_{\ell}(x)$,
where $g_{\ell}(x)$ is some unknown smooth nonlinear function (see Assumption (A1) in Appendix \ref{SUBSEC:assump}). For model identifiability, we assume that $\mathrm{E}(X_{i\ell}) = 0$, $\mathrm{E}\{g_{\ell}(X_{i\ell})\} = 0$ and ${\rm E}\{g_{\ell}^{\prime}(X_{i\ell})\} = 0$. The first two constraints $\mathrm{E}(X_{i\ell}) = 0$ and ${\rm E}\{g_{\ell}(X_{i\ell})\} = 0$, are required to guarantee identifiability for the APLM, that is, ${\rm E}\{\phi_{\ell}(X_{i\ell})\} = 0$. The constraint $\mathrm{E}\{g_{\ell}^{\prime}(X_{i\ell})\} = 0$ ensures there is no linear form in nonlinear function $g_{\ell}(x)$. Note that these constraints are also in accordance with the definition of nonlinear contrast space in \cite{zhang2011linear}, which is a subspace of the orthogonal decomposition of RKHS. In the following, we assume $Y_i$ values are centered so that we can express the APLM in (\ref{model:aplm}) without an intercept parameter as
\begin{align}
    Y_i &= \sum_{k=1}^{p_1}Z_{ik}\alpha_k + \sum_{\ell=1}^{p_2}X_{i\ell}\beta_{\ell} + \sum_{\ell=1}^{p_2}g_{\ell}(X_{i\ell}) + \varepsilon_i.
\label{EQ:aplm}
\end{align}

In the following, we define predictor variable $Z_k$ as irrelevant in model (\ref{EQ:aplm}), if and only if $\alpha_k=0$, and $X_{\ell}$ as irrelevant if and only if $\beta_{\ell}=0$ and $g_{\ell}(x_{\ell})=0$ for all $x_{\ell}$ on its support. A predictor variable is defined as relevant if and only if it is not irrelevant. Suppose that only an unknown subset of predictor variables is relevant. We are interested in identifying  such subsets of relevant predictors consistently while simultaneously estimating their coefficients and/or functions.

For covariates $\bs{Z}$, we define
\begin{align*}
  \mbox{Active index set for $\bs{Z}$}:& ~\mathcal{S}_z=\{k = 1,\ldots, p_1: \alpha_k \neq 0\}, \\
  \mbox{Inactive index set for $\bs{Z}$}:& ~\mathcal{N}_z=\{k = 1,\ldots, p_1: \alpha_k = 0\}.
\end{align*}

For continuous covariate $X_{\ell}$, we say it is a linear covariate if $\beta_{\ell} \neq 0$ and $g_{\ell}(x_{\ell}) = 0$ for all $x_{\ell}$ on its support, and $X_{\ell}$ is a nonlinear covariate if $g_{\ell}(x_{\ell})\neq 0$. Explicitly, we define the following index sets for $\bs{X}$:
\begin{align*}
  \mbox{Active pure linear index set for $\bs{X}$}:& ~\mathcal{S}_{x, PL}=\{\ell=1,\ldots, p_2: \beta_{\ell} \neq 0,~ g_{\ell} \equiv 0\}, \\
  \mbox{Active nonlinear index set for $\bs{X}$}:& ~\mathcal{S}_{x, N}=\{\ell=1,\ldots, p_2: g_{\ell} \neq 0\}, \\
  \mbox{Inactive index set for $\bs{X}$}:& ~\mathcal{N}_x=\{\ell=1,\ldots, p_2: \beta_{\ell} = 0,~ g_{\ell} \equiv 0\}.
\end{align*}
Note that the active nonlinear index set for $\bs{X}$, $\mathcal{S}_{x, N}$, can be decomposed as $\mathcal{S}_{x, N} = \mathcal{S}_{x, LN} \cup \mathcal{S}_{x, PN}$, where $\mathcal{S}_{x, LN}=\{\ell=1,\ldots, p_2: \beta_{\ell} \neq 0, ~g_{\ell} \neq 0\}$ is the index set for covariates whose linear and nonlinear terms in (\ref{EQ:aplm}) are both nonzero, and $\mathcal{S}_{x, PN}=\{\ell=1,\ldots, p_2: \beta_{\ell} = 0, ~g_{\ell} \neq 0\}$ is the index set for active pure nonlinear index set for $\bs{X}$.

Therefore, the model selection problem for model (\ref{EQ:aplm}) is equivalent to the problem of identifying $\mathcal{S}_z$, $\mathcal{N}_z$, $\mathcal{S}_{x, PL}$, $\mathcal{S}_{x, LN}$, $\mathcal{S}_{x, PN}$ and $\mathcal{N}_x$. To achieve this, we propose to minimize
\begin{equation}
	\sum_{i=1}^n  \Bigg\{ Y_i - \sum_{k=1}^{p_1}Z_{ik}\alpha_k - \sum_{\ell=1}^{p_2}X_{i\ell}\beta_{\ell} - \sum_{\ell=1}^{p_2}g_{\ell}(X_{i\ell}) \Bigg\}^2 + \sum_{k=1}^{p_1} p_{\lambda_{n1}}(|\alpha_k|) + \sum_{\ell=1}^{p_2} p_{\lambda_{n2}}(|\beta_{\ell}|) + \sum_{\ell=1}^{p_2} p_{\lambda_{n3}}(\|g_{\ell}\|_2),
\label{EQ:loss_penalty}
\end{equation}
where $\Vert g_{\ell}\Vert _{2}^2=\mathrm{E}\{g_{\ell}^{2}(X_{\ell})\}$, and $p_{\lambda _{n1}}\left( \cdot \right)$, $p_{\lambda _{n2}}\left( \cdot \right)$ and $p_{\lambda _{n3}}\left( \cdot \right)$ are penalty functions explained in detail in Section \ref{subsec:selection}. The tuning parameters $\lambda_{n1}$, $\lambda_{n2}$ and $\lambda_{n3}$ decide the complexity of the selected model. The smoothness of predicted nonlinear functions is controlled by $\lambda_{n3}$, and $\lambda_{n1}$, $\lambda_{n2}$ and $\lambda_{n3}$ go to $\infty$ as $n$ increases to $\infty$.

\subsection{Spline Basis Approximation} \label{SUBSEC:spline}

We approximate the smooth functions $\left\{g _{\ell}\left( \cdot \right): \ell=1, \ldots, p_2 \right\}$ in (\ref{EQ:aplm}) by polynomial splines for their simplicity in computation. For example, for each $\ell=1, \ldots, p_2$, let $\upsilon _{0,\ell}, \ldots, \upsilon _{N_n+1,\ell}$ be knots that partition $[a,b]$ with
$a=\upsilon _{0,\ell}<\upsilon _{1,\ell}<\ldots <\upsilon_{N_n,\ell}<\upsilon _{N_n+1,\ell}=b$.
The space of polynomial splines of order $d\geq 1$, $\mathcal{B}^{(d)}_{\ell}[a,b]$, consisting of functions $s(\cdot)$ satisfying (i) the restriction of $s(\cdot)$ to subintervals $[\upsilon _{J,\ell},\upsilon_{J+1,\ell})$, $J = 1, \ldots, N_n + d$, and $\left[\upsilon _{N_n,\ell},\upsilon_{N_n+1,\ell}\right]$, is a polynomial of  $(d-1)$-degree (or less); (ii) for $d\geq 2$ and $0 \leq d^{\prime} \leq d-2$, $s(\cdot)$ is $d^{\prime}$ times continuously differentiable on $[a,b]$. Below we denote $b_{J,\ell}^{(d)}(\cdot)$, $J=1,\ldots, N_n+d$, the basis functions of $\mathbb{B}^{(d)}_{\ell}[a,b]$.

To ensure $\mathrm{E}\{g_{\ell}(X_{i\ell})\} = 0$ and $\mathrm{E}\{g_{\ell}^{\prime}(X_{i\ell})\} = 0$, we consider the following normalized first-order B-splines, referred to as piecewise constant splines.
We define for any $\ell =1, \ldots, p_2$ the piecewise constant B-spline function as the indicator
function $I_{J,\ell}\left( x_{\ell}\right) $ of the $\left( N_n+1\right)
$ equally-spaced subintervals of $[a,b]$
with length $H=H_{n}=(b-a) / \left( N_n+1\right)$, that is,
\begin{align*}
    I_{J,\ell}\left( x_{\ell}\right) &=\left\{
    \begin{array}{ll}
        1 & a + JH\leq x_{\ell}<a + \left( J+1\right) H, \\
        0 & \mbox{otherwise},
    \end{array}
    \right. \, J=0,1, \ldots, N_n-1, ~~\\
	I_{N_n,\ell}\left( x_{\ell}\right) &=\left\{
	\begin{array}{ll}
		1 & a + N_n H\leq x_{\ell}\leq b, \\
		0 & \mbox{otherwise}.
	\end{array}
	\right.
\end{align*}
Define the following centered spline basis
\[
b_{J,\ell}^{(1)}\left(x_{\ell}\right) =I_{J,\ell}\left( x_{\ell}\right) -(\| I_{J,\ell}\|_2 /
\|I_{J-1,\ell}\|_2)I_{J-1,\ell}\left(x_{\ell}\right) , \, \forall ~ J =1, \ldots, N_n, \,  \ell= 1, \ldots, p_2,
\]
with the standardized version given for any $\ell= 1, \ldots, p_2$,%
\begin{equation}
    B_{J,\ell}^{(1)}\left( x_{\ell}\right) =b_{J,\ell}^{(1)}\left( x_{\ell}\right)/\| b_{J,\ell}^{(1)}\| _{2}, \, \forall ~ J=1, \ldots, N_n.
\label{DEF:BJalpha}
\end{equation}
So $\mathrm{E} \{B_{J,\ell}^{(1)}(X_{i\ell})\} = 0$, $\mathrm{E} \{B_{J,\ell}^{(1)}(X_{i\ell})\}^2 = 1$.
In practice, we use the empirical distribution of $X_{1\ell},\ldots,X_{n\ell}$ to perform the centering and scaling in the definitions of $b_{J,\ell}^{(1)}(x_{\ell})$ and $B_{J,\ell}^{(1)}( x_{\ell})$.

We approximate the nonparametric function $g_{\ell}(x_{\ell})$, $\ell= 1,\ldots, p_2$, using the above normalized piecewise constant splines
\begin{equation}
    g_{\ell}(x_{\ell}) \approx g_{\ell s}(x_{\ell})= \sum_{J=1}^{N_n}\gamma _{J,\ell}B_{J,\ell}^{(1)}(x_{\ell})=\mathbf{B}_{\ell}^{(1)\top}(x_{\ell}) \bs{\gamma}_{\ell},
\label{EQ:approx}
\end{equation}
where $\mathbf{B}_{\ell}^{(1)}(x_{\ell})=(B_{1,\ell}^{(1)}(x_{\ell}),\ldots ,B_{N_n,\ell}^{(1)}(x_{\ell}))^{\top}$, and $\bs{\gamma}_{\ell}=\left( \gamma_{1,l},\ldots ,\gamma _{N_n,\ell}\right)^{\top}$ is  a vector of the spline coefficients.  By using the centered constant spline basis functions, we can guarantee that $n^{-1}\sum_{i=1}^{n} g_{\ell s}(X_{i\ell}) = 0$, and $n^{-1}\sum_{i=1}^{n}g_{\ell s}^{\prime}(X_{i\ell}) = 0$ except at the location of the knots.

Denote a length $N_n$ vector $\mathbf{B}_{i\ell}^{(1)} = (B_{1,\ell}^{(1)}(X_{i\ell}), \ldots, B_{N_n, \ell}^{(1)}(X_{i\ell}))^{\top}$.
For any vector $\bs{a} \in \mathbb{R}^p$, denote $\Vert \bs{a} \Vert = (\sum_{\ell = 1}^p a_{\ell}^2)^{1/2}$ as the $L_2$ norm of $\bs{a}$. Following from (\ref{EQ:approx}), to minimize (\ref{EQ:loss_penalty}), it is approximately equivalent to consider the problem of minimizing
\[
    \sum_{i=1}^n \Bigg\{ Y_i - \sum_{k=1}^{p_1}Z_{ik}\alpha_k - \sum_{\ell=1}^{p_2}X_{i\ell}\beta_{\ell} - \sum_{\ell=1}^{p_2}\mathbf{B}_{i\ell}^{(1)}\bs{\gamma}_{\ell} \Bigg\}^2  + \sum_{k=1}^{p_1} p_{\lambda_{n1}}(|\alpha_k|) + \sum_{\ell=1}^{p_2} p_{\lambda_{n2}}(|\beta_{\ell}|) + \sum_{\ell=1}^{p_2} p_{\lambda_{n3}}(\|\bs{\gamma}_{\ell}\|).
\]

\subsection{Adaptive Group LASSO Regularization} \label{subsec:selection}

We use adaptive LASSO \citep{zou2006adaptive} and adaptive group LASSO \citep{huang2010variable} for variable selection and estimation. Other popular choices include methods based on the Smoothly Clipped Absolute Deviation penalty  \citep{fan2001variable} or the minimax concave penalty \citep{zhang2010nearly}.
Specifically, we start with group LASSO estimators obtained from the following minimization:
\begin{align}
	(\widetilde{\bs{\alpha}}, \widetilde{\bs{\beta}}, \widetilde{\bs{\gamma}}) = \underset{{\bs{\alpha},\bs{\beta}, \bs{\gamma}}}{\arg\min} \sum_{i=1}^n \left\{ Y_i - \sum_{k=1}^{p_1}Z_{ik}\alpha_k - \sum_{\ell=1}^{p_2}X_{i\ell}\beta_{\ell} - \sum_{\ell=1}^{p_2}\mathbf{B}_{i\ell}^{(1)}\bs{\gamma}_{\ell} \right\}^2\notag\\
	+ \widetilde{\lambda}_{n1} \sum_{k=1}^{p_1} |\alpha_k| + \widetilde{\lambda}_{n2} \sum_{\ell=1}^{p_2} |\beta_{\ell}| + \widetilde{\lambda}_{n3} \sum_{\ell=1}^{p_2} \|\bs{\gamma}_{\ell}\|.
\label{EQ:loss_glasso_spline}
\end{align}
Then, let $w_k^{\alpha} = |\widetilde{\alpha}_k|^{-1} I\{|\widetilde{\alpha}_k| > 0\} + \infty \times I\{|\widetilde{\alpha}_k| = 0\}$, $w_{\ell}^{\beta} = |\widetilde{\beta}_{\ell}|^{-1} I\{|\widetilde{\beta}_{\ell}| > 0\} + \infty \times I\{|\widetilde{\beta}_{\ell}| = 0\}$, $w_{\ell}^{\bs{\gamma}} = \|\widetilde{\bs{\gamma}}_{\ell}\|^{-1} I\{\|\widetilde{\bs{\gamma}}_{\ell}\| > 0\} + \infty \times I\{\|\widetilde{\bs{\gamma}}_{\ell}\| = 0\}$, where by convention, $\infty \times 0 = 0$. The adaptive group LASSO objective function is defined as
\begin{align}
    L(\bs{\alpha}, \bs{\beta}, \bs{\gamma}; \lambda_{n1}, \lambda_{n2}, \lambda_{n3}) = \sum_{i=1}^n \left\{ Y_i - \sum_{k=1}^{p_1}Z_{ik}\alpha_k - \sum_{\ell=1}^{p_2}X_{i\ell}\beta_{\ell} - \sum_{\ell=1}^{p_2}\mathbf{B}_{i\ell}^{(1)}\bs{\gamma}_{\ell} \right\}^2  \nonumber\\
    + \lambda_{n1} \sum_{k=1}^{p_1} w_k^{\alpha} |\alpha_k| + \lambda_{n2} \sum_{\ell=1}^{p_2} w_{\ell}^{\beta} |\beta_{\ell}| + \lambda_{n3}\sum_{\ell=1}^{p_2} w_{\ell}^{\bs{\gamma}} \|\bs{\gamma}_{\ell}\|.
\label{EQ:loss_aglasso_spline}
\end{align}
The adaptive group LASSO estimators are minimizers of (\ref{EQ:loss_aglasso_spline}), denoted by
\[
    (\widehat{\bs{\alpha}}, \widehat{\bs{\beta}}, \widehat{\bs{\gamma}}) = \underset{{\bs{\alpha},\bs{\beta}, \bs{\gamma}}}{\arg\min} ~ L(\bs{\alpha}, \bs{\beta}, \bs{\gamma}; \lambda_{n1}, \lambda_{n2}, \lambda_{n3}).
\]

The model structure selected is defined by
\begin{align*}
    \widehat{\mathcal{S}}_z &= \left\{1 \leq k \leq p_1: |\widehat{\alpha}_k| > 0 \right\}, ~
    \widehat{\mathcal{S}}_{x, PL} = \left\{\ell: |\widehat{\beta}_{\ell}| > 0, \|\widehat{\bs{\gamma}}_{\ell}\| = 0, 1 \leq \ell \leq p_2 \right\} , \nonumber \\
    \widehat{\mathcal{S}}_{x, LN} &= \left\{\ell: |\widehat{\beta}_{\ell}| > 0, \|\widehat{\bs{\gamma}}_{\ell}\| > 0, 1 \leq \ell \leq p_2 \right\} , ~
    \widehat{\mathcal{S}}_{x, PN} = \left\{\ell: |\widehat{\beta}_{\ell}| = 0, \|\widehat{\bs{\gamma}}_{\ell}\| > 0, 1 \leq \ell \leq p_2 \right\}.
\end{align*}

The spline estimators of each component function are
\begin{equation*}
    \widehat{g}_{\ell}\left( x_{\ell}\right) =\sum_{J=1}^{N_n}\widehat{\gamma}%
    _{J,\ell}B_{J,\ell}^{(1)}\left( x_{\ell}\right)
    -n^{-1}\sum_{i=1}^{n}\sum_{J=1}^{N_n}\widehat{\gamma}_{J,\ell}B_{J,\ell}^{(1)}\left( X_{i\ell}\right) .
\end{equation*}
Accordingly, the spline estimators for the original component functions $\phi_{\ell}$'s are $\widehat{\phi}_{\ell}\left( x_{\ell}\right) = \widehat{\beta}_{\ell} x_{\ell} + \widehat{g}_{\ell}\left( x_{\ell}\right)$.

The following theorems establish the asymptotic properties of the adaptive group LASSO estimators. Theorem \ref{THM:selection} shows the proposed method can consistently distinguish nonzero components from zero components. Theorem \ref{THM:consistency} gives the convergence rates of the estimators. We only state the main results here. To facilitate the development of the asymptotic properties, we assume the following sparsity condition:
\begin{enumerate}
	\item[(A1)] (\textit{Sparsity}) The numbers of nonzero components $|\mathcal{S}_z|$, $|\mathcal{S}_{x,PL}|$ and $|\mathcal{S}_{x,N}|$ are fixed, and there exist positive constants $c_{\alpha}$, $c_{\beta}$ and $c_g$ such that $\min_{k \in \mathcal{S}_z} |{\alpha}_{0k} | \geq c_{\alpha}$, $\min_{\ell \in \mathcal{S}_{x, PL}} | {\beta}_{0\ell} | \geq c_{\beta}$, and $\min_{\ell \in \mathcal{S}_{x, N}} \Vert g_{0\ell} \Vert_2 \geq c_g$.
\end{enumerate}
\noindent Other regularity conditions and proofs are provided in Appendix \ref{SUBSEC:assump}-- \ref{SUBSEC:KKT}.

\begin{theorem}
\label{THM:selection} 
    Suppose that Assumptions (A1), (A2)--(A6) in Appendix \ref{SUBSEC:assump} hold. As $n\rightarrow \infty$, we have $\widehat{\mathcal{S}}_z=\mathcal{S}_z$, $\widehat{\mathcal{S}}_{x, PL}=\mathcal{S}_{x, PL}$, $\widehat{\mathcal{S}}_{x, LN}=\mathcal{S}_{x, LN}$ and $\widehat{\mathcal{S}}_{x, PN}=\mathcal{S}_{x, PN}$
with probability approaching one.
\end{theorem}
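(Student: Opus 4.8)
The plan is to follow the standard ``oracle plus convexity'' route for the adaptive group LASSO. Write $\bs\gamma_{0\ell}$ for the coefficients of the best constant-spline approximation to $g_{0\ell}$, so that the spline bias $\|\mathbf B_\ell^{(1)\top}\bs\gamma_{0\ell}-g_{0\ell}\|_2$ is controlled by the smoothness in (A1) and, because $\mathrm E\{B_{J,\ell}^{(1)}(X_{i\ell})\}^2=1$, we have $\|\bs\gamma_{0\ell}\|^2\asymp\|g_{0\ell s}\|_2^2\to\|g_{0\ell}\|_2^2\ge c_g^2$ for $\ell\in\mathcal S_{x,N}$. The first step is to record, from the properties of the initial group-LASSO fit (\ref{EQ:loss_glasso_spline}) proved in Appendix~\ref{SUBSEC:assump} under (A2)--(A6), that with probability tending to one: (i) $|\widetilde\alpha_k|\ge c_\alpha/2$ for $k\in\mathcal S_z$, $|\widetilde\beta_\ell|\ge c_\beta/2$ for $\ell\in\mathcal S_{x,PL}\cup\mathcal S_{x,LN}$, and $\|\widetilde{\bs\gamma}_\ell\|$ bounded away from $0$ for $\ell\in\mathcal S_{x,N}$, so that the weights on the truly active blocks, $\max_{k\in\mathcal S_z}w_k^{\alpha}$, $\max_{\ell}w_\ell^{\beta}$ over active linear indices, and $\max_{\ell\in\mathcal S_{x,N}}w_\ell^{\bs\gamma}$, are $O_P(1)$; and (ii) on the inactive parts, $\max_{k\in\mathcal N_z}|\widetilde\alpha_k|$, $\max_{\ell\in\mathcal N_x\cup\mathcal S_{x,PN}}|\widetilde\beta_\ell|$ and $\max_{\ell\in\mathcal N_x\cup\mathcal S_{x,PL}}\|\widetilde{\bs\gamma}_\ell\|$ are of the (small) order of the initial group-LASSO estimation rate, so the corresponding weights diverge at a rate quantified by (A2)--(A6), with the convention $\infty\times 0=0$ covering blocks set exactly to zero.

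The second step is to introduce the oracle estimator $(\check{\bs\alpha},\check{\bs\beta},\check{\bs\gamma})$ as the minimizer of $L(\bs\alpha,\bs\beta,\bs\gamma;\lambda_{n1},\lambda_{n2},\lambda_{n3})$ over the subspace in which $\alpha_k=0$ for $k\notin\mathcal S_z$, $\beta_\ell=0$ for $\ell\notin\mathcal S_{x,PL}\cup\mathcal S_{x,LN}$, and $\bs\gamma_\ell=\mathbf 0$ for $\ell\notin\mathcal S_{x,N}$. By the sparsity in (A1) this is a penalized least-squares problem with $O(N_n)$ free scalar parameters; using the restricted-eigenvalue bound for the corresponding sub-design (from (A2)--(A6), where the identifiability constraints $\mathrm E(X_{i\ell})=0$ and $\mathrm E\{g_\ell'(X_{i\ell})\}=0$ keep each $X_\ell$ asymptotically non-collinear with its centered constant-spline block $\mathbf B_\ell^{(1)}$, so the linear and nonlinear parts of an $\ell\in\mathcal S_{x,LN}$ covariate are separately estimable), together with the fact that the retained penalty terms are $O_P(1)\cdot\lambda_{nj}$, one obtains $\|\check{\bs\alpha}-\bs\alpha_0\|+\|\check{\bs\beta}-\bs\beta_0\|+\max_{\ell\in\mathcal S_{x,N}}\|\check{\bs\gamma}_\ell-\bs\gamma_{0\ell}\|=o_P(1)$. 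Combined with the signal-strength bounds in (A1), this gives, with probability tending to one, $\check\alpha_k\ne0$ for every $k\in\mathcal S_z$, $\check\beta_\ell\ne0$ for every $\ell\in\mathcal S_{x,PL}\cup\mathcal S_{x,LN}$, and $\|\check{\bs\gamma}_\ell\|\ne0$ for every $\ell\in\mathcal S_{x,N}$; all remaining components of the oracle are exactly zero by construction, so by the set-theoretic identities among $\mathcal S_{x,PL}$, $\mathcal S_{x,LN}$, $\mathcal S_{x,PN}$ the oracle already realizes the correct zero/linear/nonlinear partition.

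The third step is to verify that the zero-padded oracle satisfies the Karush--Kuhn--Tucker conditions for the full problem (\ref{EQ:loss_aglasso_spline}) (the KKT characterization of Appendix~\ref{SUBSEC:KKT}). On the true support the stationarity equations hold by definition of $(\check{\bs\alpha},\check{\bs\beta},\check{\bs\gamma})$, since the active components have well-defined signs and group directions by Step~2. For an inactive coordinate $k\in\mathcal N_z$ one must check $|2\,\mathbf Z_k^{\top}(\mathbf Y-\check{\mathbf Y})|\le\lambda_{n1}w_k^{\alpha}$; decomposing $\mathbf Y-\check{\mathbf Y}=\bs\varepsilon+(\text{spline bias})+(\text{within-oracle error})$, the left side equals $\max_{k\in\mathcal N_z}|\mathbf Z_k^{\top}\bs\varepsilon|$ up to lower-order terms, which is $O_P(\sqrt{n\log p_1})$ by a sub-Gaussian/Bernstein maximal inequality, while the right side is $\infty$ when $\widetilde\alpha_k=0$ and otherwise $\lambda_{n1}/|\widetilde\alpha_k|$, which dominates $\sqrt{n\log p_1}$ by the rate conditions of Step~1; the analogous inequalities $|2\,\mathbf X_\ell^{\top}(\mathbf Y-\check{\mathbf Y})|\le\lambda_{n2}w_\ell^{\beta}$ for $\ell\notin\mathcal S_{x,PL}\cup\mathcal S_{x,LN}$ and $\|2\,\mathbf B_\ell^{(1)\top}(\mathbf Y-\check{\mathbf Y})\|\le\lambda_{n3}w_\ell^{\bs\gamma}$ for $\ell\notin\mathcal S_{x,N}$ are handled the same way, the last one needing the chi-type bound $\max_{\ell}\|\mathbf B_\ell^{(1)\top}\bs\varepsilon\|=O_P(\sqrt n\,(\sqrt{N_n}+\sqrt{\log p_2}))$. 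Since all these inequalities hold \emph{strictly} with probability tending to one and $L$ is convex, $(\check{\bs\alpha},\check{\bs\beta},\check{\bs\gamma})$ is a global minimizer, the strict inequalities force every minimizer of $L$ to vanish on the inactive coordinates, and the minimizer restricted to the true support is then the unique penalized least-squares fit there, hence equals the oracle; therefore $(\widehat{\bs\alpha},\widehat{\bs\beta},\widehat{\bs\gamma})$ has exactly the oracle's support and linear/nonlinear pattern, giving $\widehat{\mathcal S}_z=\mathcal S_z$, $\widehat{\mathcal S}_{x,PL}=\mathcal S_{x,PL}$, $\widehat{\mathcal S}_{x,LN}=\mathcal S_{x,LN}$, $\widehat{\mathcal S}_{x,PN}=\mathcal S_{x,PN}$ with probability approaching one.

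I expect the main obstacle to be the simultaneous bookkeeping of three coupled rate comparisons in Step~3: the divergence rate of each family of adaptive weights (inherited from the initial group-LASSO bound on the inactive blocks) must, after scaling by the penalty level $\lambda_{nj}$, beat the corresponding maximal-deviation term --- $\sqrt{n\log p_1}$ for $\bs Z$, $\sqrt{n\log p_2}$ for the linear $\bs X$ terms, and the larger $\sqrt n\,(\sqrt{N_n}+\sqrt{\log p_2})$ for the spline blocks --- while $p_1,p_2$ grow ultra-fast and $N_n\to\infty$. Keeping the spline-dimension factor $N_n$ from overwhelming the $\log p_2$ factor in the group empirical-process term, and ensuring the non-collinearity of $X_\ell$ with $\mathbf B_\ell^{(1)}$ so that $\mathcal S_{x,PL}$, $\mathcal S_{x,LN}$, $\mathcal S_{x,PN}$ are truly separated, are the places where Assumptions (A2)--(A6) must be invoked most carefully.
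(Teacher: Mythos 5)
Your proposal follows essentially the same route as the paper: a primal--dual witness argument that constructs an oracle solution supported on the true index sets, verifies the KKT stationarity conditions there (the paper writes the oracle in closed form as $\hbtheta_{\mathcal{S}}^o = (\mathbf{D}_{\mathcal{S}}^{\top}\mathbf{D}_{\mathcal{S}})^{-1}(\mathbf{D}_{\mathcal{S}}^{\top}\mathbf{Y}-\sum_{j}\lambda_{nj}\bs{v}_j)$ rather than as a restricted minimizer, but these coincide), and then checks the subgradient inequalities on the inactive $\bs{Z}$, linear-$\bs{X}$, and spline blocks via maximal inequalities against the diverging adaptive weights, exactly as in the paper's Lemmas on $\Pr(\Vert\bs{\theta}_{0m}-\hbtheta^o_m\Vert\geq\Vert\bs{\theta}_{0m}\Vert)$ and the three KKT tail bounds. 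The rate comparisons you flag as the main obstacle are precisely what Assumptions (A5)--(A6) are engineered to resolve, so the argument goes through as you outline it.
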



In the following, to avoid confusion, we use $\bs{\alpha}_{0}=(\alpha_{01},\ldots, \alpha_{0p_1})^{\top}$, $\bs{\beta}_{0}=(\beta _{01},\ldots ,$ $\beta _{0p_2})^{\top}$ to denote the true parameters in model (\ref{EQ:aplm}), and $\bs{g}_{0}=(g_{01},\ldots, g_{0p_2})^{\top}$ to denote the nonlinear functions in model (\ref{EQ:aplm}). Let $\bs{\alpha}_{0}=(\bs{\alpha}_{0,\mathcal{S}_z}^{\top},\bs{\alpha}_{0,\mathcal{N}_z}^{\top})^{\top}$, where $\bs{\alpha}_{0, \mathcal{S}_z}$ consists of all nonzero components of $\bs{\alpha }_{0}$, and $\bs{\alpha}_{0, \mathcal{N}_z}=\bs{0}$ without loss of generality; similarly, let $\bs{\beta}_{0}=(\bs{\beta}_{0,\mathcal{S}_{x, L}}^{\top},\bs{\beta}_{0,\mathcal{N}_x}^{\top})^{\top}$, where $\bs{\beta}_{0, \mathcal{S}_{x, L}}$ consists of all nonzero components of $\bs{\beta}_{0}$, and $\bs{\beta}_{0, \mathcal{N}_x}=\bs{0}$ without loss of generality.

\begin{theorem}
\label{THM:consistency} 
Suppose that Assumptions (A1), (A2)--(A6) in Appendix \ref{SUBSEC:assump} hold. Then
\begin{eqnarray*}
    \sum\limits_{k \in \mathcal{S}_z} \, | \widehat{\alpha}_k - \alpha_{0k} |^2 =  O_P\left(n^{-1} N_n \right) + O\left(N_n^{-2}\right) + O_P\left(n^{-2}\sum_{j=1}^{3} \lambda_{nj}^2\right), \\
    \sum\limits_{\ell \in \mathcal{S}_{x, L}} |\widehat{\beta}_{\ell} - \beta_{0\ell}|^2 =  O_P\left( n^{-1} N_n\right) + O\left(N_n^{-2}\right) +  O_P\left(n^{-2}\sum_{j=1}^{3} \lambda_{nj}^2\right), \\
    \sum\limits_{\ell \in \mathcal{S}_{x, N}} \, \Vert \widehat{g}_{\ell} - g_{0\ell} \Vert_2^2 =  O_P\left( n^{-1} N_n\right) + O\left(N_n^{-2}\right) +  O_P\left(n^{-2}\sum_{j=1}^{3} \lambda_{nj}^2\right).
\end{eqnarray*}
\end{theorem}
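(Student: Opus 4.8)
\textbf{Proof plan for Theorem \ref{THM:consistency}.}

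The plan is to work on the event $\mathcal{E}_n$, guaranteed by Theorem \ref{THM:selection} to have probability tending to one, on which the adaptive group LASSO selects exactly the true model structure, so that $\widehat{\alpha}_k = 0$ for $k \in \mathcal{N}_z$, $\widehat{\beta}_\ell = 0$ for $\ell \in \mathcal{N}_x \cup \mathcal{S}_{x,PN}$, and $\widehat{\bs{\gamma}}_\ell = \bs{0}$ for $\ell \in \mathcal{N}_x \cup \mathcal{S}_{x,PL}$. On this event the estimator restricted to the selected coordinates solves a penalized least-squares problem in a space of dimension $|\mathcal{S}_z| + |\mathcal{S}_{x,L}| + |\mathcal{S}_{x,N}| N_n$, which is of order $N_n$ under Assumption (A1). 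First I would write down the KKT / normal equations for the restricted problem (these are derived in Appendix \ref{SUBSEC:KKT}) and use them to express $(\widehat{\bs{\alpha}}_{\mathcal{S}_z} - \bs{\alpha}_{0,\mathcal{S}_z}, \widehat{\bs{\beta}}_{\mathcal{S}_{x,L}} - \bs{\beta}_{0,\mathcal{S}_{x,L}}, \widehat{\bs{\gamma}}_{\mathcal{S}_{x,N}} - \bs{\gamma}^*_{\mathcal{S}_{x,N}})$ — where $\bs{\gamma}^*_\ell$ are the best $L_2$ spline coefficients for $g_{0\ell}$ — in terms of the inverse of the restricted Gram matrix times a vector built from (i) the noise term $\sum_i \varepsilon_i (\text{design})_i$, (ii) the spline approximation bias $\sum_\ell (g_{0\ell} - g_{\ell s})$ evaluated at the data, and (iii) the subgradient/penalty terms with weights $\lambda_{nj} w$.

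Second, I would invoke the standard spline facts: by Assumption (A1) on smoothness of $g_{0\ell}$, $\|g_{0\ell} - g_{\ell s}\|_\infty = O(N_n^{-1})$ (piecewise constant splines give first-order bias, so $\|g_{0\ell} - g_{\ell s}\|_2^2 = O(N_n^{-2})$), contributing the $O(N_n^{-2})$ term. The noise term: the restricted design has $\asymp N_n$ columns, its Gram matrix (after normalization) has eigenvalues bounded away from $0$ and $\infty$ with probability tending to one (this is the usual result that $n^{-1} \bs{B}^\top \bs{B}$ is well-conditioned for B-spline designs, plus Assumptions (A2)–(A5) controlling $\bs{Z}$ and the linear parts and their cross-correlations), so the inverse Gram matrix has bounded operator norm; the noise vector has squared Euclidean norm of order $n \cdot N_n$ in expectation (a sum of $\asymp N_n$ coordinates each $O_P(n)$), giving after dividing by $n^2$ the $O_P(n^{-1} N_n)$ term. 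The penalty term contributes $O_P(n^{-2} \sum_{j=1}^3 \lambda_{nj}^2)$: each block's penalty contribution is bounded by $\lambda_{nj}$ times the corresponding adaptive weight, and on $\mathcal{E}_n$ the weights $w_k^\alpha, w_\ell^\beta, w_\ell^{\bs{\gamma}}$ are $O_P(1)$ because the initial group-LASSO estimates $\widetilde{\bs{\alpha}}, \widetilde{\bs{\beta}}, \widetilde{\bs{\gamma}}$ are consistent for the nonzero components (this consistency of the first-stage estimator, at the slower group-LASSO rate, is established in the appendix lemmas). Combining these three contributions through the triangle inequality on the explicit KKT representation, and using $\|\widehat{g}_\ell - g_{0\ell}\|_2^2 \le 2\|\widehat{g}_\ell - g_{\ell s}\|_2^2 + 2\|g_{\ell s} - g_{0\ell}\|_2^2$ together with the norm equivalence $\|\sum_J (\widehat\gamma_{J,\ell} - \gamma^*_{J,\ell}) B_{J,\ell}^{(1)}\|_2^2 \asymp \|\widehat{\bs{\gamma}}_\ell - \bs{\gamma}^*_\ell\|^2$ for the standardized basis, yields all three displayed bounds at once.

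The main obstacle I expect is controlling the restricted Gram matrix uniformly: one needs that the $(|\mathcal{S}_z| + |\mathcal{S}_{x,L}| + |\mathcal{S}_{x,N}| N_n)$-dimensional empirical second-moment matrix of $(\bs{Z}_{\mathcal{S}_z}, \bs{X}_{\mathcal{S}_{x,L}}, \bs{B}_{\mathcal{S}_{x,N}}^{(1)})$ is, with probability tending to one, sandwiched between constant multiples of its population analogue, which in turn must be well-conditioned. The conditioning of the population matrix is an identifiability-type assumption (a partial-orthogonality or restricted-eigenvalue condition, presumably Assumptions (A2)–(A5)) ensuring the linear covariates $X_\ell$, the spline blocks, and the $\bs{Z}$'s are not too collinear — in particular that the linear term $X_{i\ell}$ is separated from the span of $\{B_{J,\ell}^{(1)}\}$ by the constraint $\mathrm{E}\{g'_{0\ell}(X_{i\ell})\} = 0$. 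The empirical-to-population step then follows from a Bernstein or matrix-concentration argument, using that the spline basis functions are bounded by $O(N_n^{1/2})$ in sup norm and that $N_n \log(n) / n \to 0$; this is the delicate place where the dimension $N_n$ growing with $n$ interacts with the concentration, and it is standard in the spline literature (e.g. along the lines of the arguments in the cited works) but must be checked under the paper's precise assumptions. Once this is in hand, everything else is the bookkeeping sketched above.
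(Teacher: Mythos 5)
Your plan is correct and follows essentially the same route as the paper: restrict to the true support via the selection consistency of Theorem \ref{THM:selection}, use the KKT representation of the restricted estimator to decompose the error into a noise term, a spline approximation bias term, and a penalty/subgradient term, control the restricted Gram matrix eigenvalues (the paper's Lemma \ref{LEM:eigen}), bound the weights via Assumption (A5) and the first-stage consistency, and transfer back to function norms via the spline norm equivalence. No substantive differences or gaps.
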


\setcounter{chapter}{3} \renewcommand{\thetheorem}{{\arabic{theorem}}}
\renewcommand{\thelemma}{{3.\arabic{lemma}}}
\renewcommand{\theproposition}{{3.\arabic{proposition}}} %
\renewcommand{\thesection}{\arabic{section}}
\renewcommand{\thesubsection}{3.\arabic{subsection}}
\setcounter{section}{2}

\section{Two-stage SBLL Estimator and Inference} \label{sec:nonparmetric}

After model selection, our next step is to conduct statistical inference for the nonparametric component functions of those important variables. Although the one-step penalized estimation in Section \ref{subsec:selection} can quickly identify the nonzero nonlinear components,  the asymptotic distribution is not available for the resulting estimators.

To obtain estimators whose asymptotic distribution can be used for inference, we first refit the data using selected model,
\begin{equation}
    Y_i = \sum_{k \in \widehat{\mathcal{S}}_z} Z_{ik}{\alpha}_k +
    \sum_{j \in \widehat{\mathcal{S}}_{x, PL}} X_{ij}{\beta}_j +
    \sum_{\ell \in {\widehat{\mathcal{S}}}_{x, N}} {\phi}_{\ell}\left(
    X_{i\ell}\right) + \epsilon_i.
\label{model:aplm_refit}
\end{equation}
We approximate the smooth functions $\left\{\phi _{\ell}\left( \cdot \right): \ell \in {\widehat{\mathcal{S}}}_{x, N} \right\}$ in (\ref{model:aplm_refit}) by polynomial splines introduced in Section \ref{SUBSEC:spline}.
Let $\mathcal{B} _{\ell}^{(d)}$ be the space of polynomial splines of order $d$, and $\mathcal{B} _{\ell}^{0}=\{b\in \mathcal{B}_{\ell}^{(d)}: \mathrm{E}\{b(X_{\ell})\} = 0, ~\mathrm{E}\{b^2(X_{\ell})\} < \infty\}$. Working with $\mathcal{B} _{\ell}^{0}$ ensures that the spline functions are centered, see for example \citet[][]{xue2006additive, wang2007spline, wang2014estimation}. Let $\left\{B_{J,\ell}^{(d)}\left( \cdot \right) \right\} _{j=1}^{M_n}$ be a set of standardized spline basis functions for $\mathcal{B} _{\ell}^{0}$ with dimension $M_n=N_{n}+d-1$, where $B_{J,\ell}^{(d)}(x_{\ell}) = b_{J,\ell}^{(d)}(x_{\ell}) / \|b_{J,\ell}^{(d)}\|_2$, $J = 1, \ldots, M_n$, so that $\mathrm{E} \{B_{J,\ell}^{(d)}(X_{\ell})\} \equiv 0$, $\mathrm{E} \{B_{J,\ell}^{(d)}(X_{\ell})\}^2 \equiv 1$. Specifically, if $d=1$, $M_n = N_n$ and $B_{J,\ell}^{(1)}(\cdot)$ is the standardized piecewise constant spline function defined in (\ref{DEF:BJalpha}).

We propose a one-step backfitting using refitted pilot spline estimators in the first stage followed by local-linear estimators. The refitted coefficients are defined as
\begin{align}
	(\widehat{\bs{\alpha}}^{\ast}, \widehat{\bs{\beta}}^{\ast}, \widehat{\bs{\gamma}}^{\ast}) = \underset{{\bs{\alpha},\bs{\beta}, \bs{\gamma}}}{\arg\min}
	\sum_{i=1}^n \left(Y_i - \sum_{k \in \widehat{\mathcal{S}}_z}Z_{ik}\alpha_k - \sum_{j \in \widehat{\mathcal{S}}_{x, PL}} X_{ij}\beta_j - \sum_{\ell \in {\widehat{\mathcal{S}}}_{x, N}}\mathbf{B}_{i\ell}^{(d)}\bs{\gamma}_{\ell} \right)^2.
\label{EQ:refit}
\end{align}
Then the refitted spline estimator for nonlinear functions $\phi_{\ell}(\cdot)$ is
\begin{equation}
    \widehat{\phi}^{\ast}_{\ell}\left( x_{\ell}\right) = \mathbf{B}_{\ell}^{(d)}\left( x_{\ell}\right)\widehat{\bs{\gamma}}^{\ast}_{\ell}, \quad \ell \in {\widehat{\mathcal{S}}}_{x, N}.
\label{DEF:phihatstar}
\end{equation}

Next we establish the asymptotic normal distribution for the parametric estimators. To make $\bs{\beta}_{0, \mathcal{S}_Z}$ estimable at the $\sqrt{n}$ rate, we need a condition to ensure $\bs{X}$ and $\bs{Z}$ are not functionally related. Define $\mathcal{F}_{+}=\left\{f(\bs{x}) =\sum_{\ell \in \mathcal{S}_{x,N}} f_{\ell}(x_{\ell}),~ \mathrm{E}\{f_{\ell}(X_{\ell})\} =0,~\left \| f_{\ell}\right\| _{2}<\infty \right\} $ as the Hilbert space of theoretically centered $L_{2}$ additive functions.  For any $k \in \mathcal{S}_z$, let $z_{k}$ be the coordinate mapping that maps $\bs{Z}$ to its $k$-th component so that $z_{k}(\bs{Z})=Z_{k}$, and let $\psi_k^z=\mathrm{argmin}_{\psi\in \mathcal{F}_{+}}\|z_{k}-\psi\|_{2}^{2}=\mathrm{argmin}_{\psi\in \mathcal{F}_{+}}\mathrm{E}\{Z_{k}-\psi(\bs{X})\}^{2}$ be the orthogonal projection of $z_{k}$ onto $\mathcal{F}_{+}$. Let $\widetilde{\bs{Z}}_{\mathcal{S}_z}=\left\{\psi_k^z(\bs{X}), k \in \mathcal{S}_z \right\}^{\top}$. Similarly, for any $\ell \in \mathcal{S}_{x,PL}$, let $x_{\ell}$ be the coordinate mapping that maps $\bs{X}$ to its $\ell$-th component so that $x_{\ell}(\bs{X})=X_{\ell}$, and let
\begin{equation}
	\psi_{\ell}^x=\mathrm{argmin}_{\psi\in \mathcal{F}_{+}}\|x_{\ell}-\psi\|_{2}^{2}=\mathrm{argmin}_{\psi\in \mathcal{F}_{+}}\mathrm{E}\{X_{\ell}-\psi(\bs{X})\}^{2}
\label{DEF:psi}
\end{equation}
be the orthogonal projection of $x_{\ell}$ onto $\mathcal{F}_{+}$. Let $\widetilde{\bs{X}}_{\mathcal{S}_{x, PL}}=\left\{\psi_{\ell}^x(\bs{X}), \ell \in \mathcal{S}_{x,PL} \right\}^{\top}$. Define $\bs{Z}_{\mathcal{S}_z} = \left( \bs{Z}_k, k \in \mathcal{S}_z \right)^{\top}$ and $\bs{X}_{\mathcal{S}_{x, PL}} = \left( \bs{X}_{\ell}, \ell \in \mathcal{S}_{x, PL} \right)^{\top}$. Denote vector $\bs{T}$ and $\widetilde{\bs{T}}$ as $\bs{T} = (\bs{Z}_{\mathcal{S}_z}, \bs{X}_{\mathcal{S}_{x, PL}})^{\top}$, $\widetilde{\bs{T}} = \left(\widetilde{\bs{Z}}_{\mathcal{S}_z}, \widetilde{\bs{X}}_{\mathcal{S}_{x, PL}} \right)$.

\begin{theorem}
\label{THM:normality}
Under the Assumptions (A1), (A2)--(A6), (A3$^{\prime}$) and (A6$^{\prime}$) in Appendix \ref{SUBSEC:assump},
\[
	(n \bs{\Sigma})^{1/2}
	\begin{pmatrix}
		\widehat{\bs{\alpha}}_{\mathcal{S}_z}^{\ast} - \bs{\alpha}_{0, \mathcal{S}_z} \\
		\hbbeta_{\mathcal{S}_{x, PL}}^{\ast} - \bs{\beta}_{0, \mathcal{S}_{x, PL}}
	\end{pmatrix}
	\stackrel{D}{%
\longrightarrow } \mathcal{N}(\bs{0}, \mathbf{I}),
\]
where  $\mathbf{I}$ is an identity matrix and $\bs{\Sigma}=\sigma^{-2}{\rm E}[(\bs{T}-\widetilde{\bs{T}}) (\bs{T}-\widetilde{\bs{T}})^{\top}]$.
\end{theorem}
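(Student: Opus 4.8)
\emph{Proof of Theorem~\ref{THM:normality} (proposal).}
The plan is to first use the selection consistency in Theorem~\ref{THM:selection} to reduce to an \emph{oracle} refitting problem on the fixed true active sets, then profile out the spline coefficients to obtain a closed form for the parametric estimator, and finally apply the classical multivariate central limit theorem after disposing of a spline approximation bias and a projection-error cross term.

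Concretely, I would argue as follows. On the event $\mathcal{E}_n=\{\widehat{\mathcal{S}}_z=\mathcal{S}_z,\ \widehat{\mathcal{S}}_{x,PL}=\mathcal{S}_{x,PL},\ \widehat{\mathcal{S}}_{x,N}=\mathcal{S}_{x,N}\}$, which has probability tending to one by Theorem~\ref{THM:selection}, the refitted estimator of (\ref{EQ:refit}) coincides with the least-squares fit using exactly the true active covariates; since these active sets have fixed cardinality by (A1), it suffices to establish the limit law for that oracle fit. I would stack the linearly-entering covariates into an $n\times q$ matrix $\mathbf{T}$ with rows $\bs{T}_{(i)}^{\top}=(\bs{Z}_{(i),\mathcal{S}_z}^{\top},\bs{X}_{(i),\mathcal{S}_{x,PL}}^{\top})$ and parameter $\bs{\theta}_0=(\bs{\alpha}_{0,\mathcal{S}_z}^{\top},\bs{\beta}_{0,\mathcal{S}_{x,PL}}^{\top})^{\top}$, let $\mathbf{B}$ be the matrix of centered order-$d$ spline bases $\{\mathbf{B}_{i\ell}^{(d)}:\ell\in\mathcal{S}_{x,N}\}$, and set $\mathbf{Q}=\mathbf{I}_n-\mathbf{P}_{\mathbf{B}}$ with $\mathbf{P}_{\mathbf{B}}=\mathbf{B}(\mathbf{B}^{\top}\mathbf{B})^{-1}\mathbf{B}^{\top}$ ($\mathbf{B}^{\top}\mathbf{B}$ invertible with probability tending to one by standard B-spline eigenvalue bounds). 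Profiling out $\bs{\gamma}$ gives $\widehat{\bs{\theta}}^{\ast}=(\mathbf{T}^{\top}\mathbf{Q}\mathbf{T})^{-1}\mathbf{T}^{\top}\mathbf{Q}\mathbf{Y}$, and writing $\mathbf{Y}=\mathbf{T}\bs{\theta}_0+\bs{\eta}_0+\bs{\varepsilon}$ with $\bs{\eta}_0=(\sum_{\ell\in\mathcal{S}_{x,N}}\phi_{0\ell}(X_{i\ell}))_{i=1}^n$,
\[
\sqrt{n}\,(\widehat{\bs{\theta}}^{\ast}-\bs{\theta}_0)=\Big(\tfrac1n\mathbf{T}^{\top}\mathbf{Q}\mathbf{T}\Big)^{-1}\Big(\tfrac1{\sqrt n}\mathbf{T}^{\top}\mathbf{Q}\bs{\varepsilon}+\tfrac1{\sqrt n}\mathbf{T}^{\top}\mathbf{Q}\bs{\eta}_0\Big).
\]

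The crux is the approximation fact that, since $\mathbf{Q}\mathbf{T}=\mathbf{T}-\mathbf{P}_{\mathbf{B}}\mathbf{T}$, the empirical spline projection $\mathbf{P}_{\mathbf{B}}\mathbf{T}$ of each column of $\mathbf{T}$ approaches the population $\mathcal{F}_{+}$-projection $\widetilde{\mathbf{T}}$; combining B-spline approximability of the additive projection functions $\psi_k^z,\psi_{\ell}^x$, concentration of the empirical Gram matrix $n^{-1}\mathbf{B}^{\top}\mathbf{B}$ around its population counterpart, and the population orthogonality of $\bs{T}-\widetilde{\bs{T}}$ to all centered single-covariate splines, I would obtain $n^{-1}\|\widetilde{\mathbf{T}}-\mathbf{P}_{\mathbf{B}}\mathbf{T}\|_F^2=O(M_n^{-2d})+O_P(M_n/n)=o_P(1)$. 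From this: (i) $n^{-1}\mathbf{T}^{\top}\mathbf{Q}\mathbf{T}=n^{-1}\sum_i(\bs{T}_{(i)}-\widetilde{\bs{T}}_{(i)})(\bs{T}_{(i)}-\widetilde{\bs{T}}_{(i)})^{\top}+o_P(1)\xrightarrow{P}\bs{\Omega}:=\mathrm{E}[(\bs{T}-\widetilde{\bs{T}})(\bs{T}-\widetilde{\bs{T}})^{\top}]$ by the law of large numbers, with $\bs{\Omega}$ nonsingular by the assumption that $\bs{X}$ and $\bs{Z}$ are not functionally related; (ii) splitting $n^{-1/2}\mathbf{T}^{\top}\mathbf{Q}\bs{\varepsilon}=n^{-1/2}\sum_i(\bs{T}_{(i)}-\widetilde{\bs{T}}_{(i)})\varepsilon_i+n^{-1/2}(\widetilde{\mathbf{T}}-\mathbf{P}_{\mathbf{B}}\mathbf{T})^{\top}\bs{\varepsilon}$, the first piece is a sum of i.i.d. mean-zero vectors with covariance $\sigma^2\bs{\Omega}$, so the multivariate central limit theorem gives the limit $\mathcal{N}(\bs{0},\sigma^2\bs{\Omega})$, while the second piece has conditional variance $\sigma^2 n^{-1}\|\widetilde{\mathbf{T}}-\mathbf{P}_{\mathbf{B}}\mathbf{T}\|_F^2=o_P(1)$ and is negligible; (iii) for the bias, $\mathbf{Q}\bs{\eta}_0=\mathbf{Q}(\bs{\eta}_0-\mathbf{B}\bs{\gamma}^{\ast})$ for the componentwise best centered-spline approximant $\mathbf{B}\bs{\gamma}^{\ast}$, and because both $\bs{\eta}_0$ and $\mathbf{B}\bs{\gamma}^{\ast}$ are centered additive functions of $\bs{X}$, the error $r=\bs{\eta}_0-\mathbf{B}\bs{\gamma}^{\ast}$ lies in $\mathcal{F}_{+}$ and is therefore population-orthogonal to $\bs{T}-\widetilde{\bs{T}}$, so $n^{-1/2}\mathbf{T}^{\top}\mathbf{Q}\bs{\eta}_0=n^{-1/2}(\mathbf{Q}\mathbf{T})^{\top}r$ reduces to a sum of approximately mean-zero terms of order $O_P(\|r\|_{\infty})=o_P(1)$ under the conditions on $M_n$ imposed by (A3$^{\prime}$), (A6$^{\prime}$). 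Combining (i)--(iii) with Slutsky's theorem yields $\sqrt{n}(\widehat{\bs{\theta}}^{\ast}-\bs{\theta}_0)\xrightarrow{D}\mathcal{N}(\bs{0},\sigma^2\bs{\Omega}^{-1})$; since $\bs{\Sigma}=\sigma^{-2}\bs{\Omega}$ we have $\sigma^2\bs{\Omega}^{-1}=\bs{\Sigma}^{-1}$, and premultiplying by $(n\bs{\Sigma})^{1/2}$ delivers the stated standard-normal limit.

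I expect the routine but delicate part to be the uniform spline bookkeeping behind $n^{-1}\|\widetilde{\mathbf{T}}-\mathbf{P}_{\mathbf{B}}\mathbf{T}\|_F^2=o_P(1)$ — the eigenvalue control of $n^{-1}\mathbf{B}^{\top}\mathbf{B}$ and its inverse, and the B-spline approximation of the $\psi$'s and the $\phi_{0\ell}$'s. The genuine obstacle is step (iii): showing the bias is $o_P(n^{-1/2})$ rather than the naive $O_P(\sqrt{n}\,M_n^{-d})$, an estimate that hinges on the centering built into the spline bases (so that the approximation error stays inside $\mathcal{F}_{+}$ and is orthogonal to the parametric design residuals) together with the precise rate requirements on $M_n$; pinning this down is exactly the role played by the refitting-stage assumptions (A3$^{\prime}$) and (A6$^{\prime}$).
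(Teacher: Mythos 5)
The paper does not give its own proof of Theorem \ref{THM:normality}; it is omitted with a pointer to \cite{liu2011estimation} and \cite{li2017ultra}, whose argument is exactly the profile least-squares route you take (reduce to the oracle fit on $\Omega_n$, profile out the spline block via $\mathbf{Q}=\mathbf{I}-\mathbf{P}_{\mathbf{B}}$, show the empirical spline projection of $\mathbf{T}$ converges to the population $\mathcal{F}_{+}$-projection $\widetilde{\mathbf{T}}$, kill the approximation bias using the orthogonality of $\bs{T}-\widetilde{\bs{T}}$ to centered additive functions under (A3$^{\prime}$) and (A6$^{\prime}$), and finish with the CLT and Slutsky). Your proposal is correct and follows essentially the same approach, including the correct identification of the limit covariance $\sigma^{2}\bs{\Omega}^{-1}=\bs{\Sigma}^{-1}$.
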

The proof of Theorem \ref{THM:normality} is similar to the proof of \cite{liu2011estimation} and \cite{li2017ultra} and thus omitted. {Let $\mathbf{Z}_{\mathcal{S}_z}=(Z_{ik},k\in \mathcal{S}_z)_{i=1}^{n}$ and $\mathbf{B}_{\mathcal{S}}^{(d)} = (B_{J,\ell}^{(d)}(X_{i\ell}), 1 \leq \ell \leq p_2$, $\ell \in \mathcal{S}_{x,N}, J = 1, \ldots, N_n)_{i=1}^{n}$.
If $\mathcal{S}_z$ and $\mathcal{S}_x$ are given, $\bs{\Sigma}$ can be consistently estimated by
$\widehat{\bs{\Sigma}}_{n}=(n\widehat{\sigma}^{2})^{-1}(\mathbf{Z}_{\mathcal{S}_z}-\widehat{\mathbf{Z}}_{\mathcal{S}_z})^{\top}(\mathbf{Z}_{\mathcal{S}_z}-\widehat{\mathbf{Z}}_{\mathcal{S}_z})$,
where $\widehat{\mathbf{Z}}_{\mathcal{S}_z}^{\top}=\mathbf{Z}_{\mathcal{S}_z}^{\top}\mathbf{B}_{\mathcal{S}}^{(d)}\mathbf{U}_{22}^{-1} \mathbf{B}_{\mathcal{S}}^{(d)\top}$ with $\mathbf{U}_{22}$ given in (\ref{EQN:us}) in the Supplemental Materials and $\widehat{\sigma}^{2}=(n-|\mathcal{S}_z|-|\mathcal{S}_x|)^{-1}\|\mathbf{Y}-\widehat{\mathbf{Y}}\|^{2}$.
In practice, we replace $\mathcal{S}_z$ and $\mathcal{S}_x$ with $\widehat{\mathcal{S}}_z$ and $\widehat{\mathcal{S}}_x$, respectively, to obtain the corresponding estimate.}

Let $\Omega_n = \{\widehat{\mathcal{S}}_z = \mathcal{S}_z, \widehat{\mathcal{S}}_{x, PL} = \mathcal{S}_{x, PL} \}$. In the selection step, we estimate $\mathcal{S}_z$ and $\mathcal{S}_{x, PL}$ consistently, that is, $P\left(\Omega_n\right) \rightarrow 1$. Within the event $\Omega_n$, that is, $\widehat{\mathcal{S}}_z = \mathcal{S}_z$ and $\widehat{\mathcal{S}}_{x, PL} = \mathcal{S}_{x, PL}$,
the estimator $(\widehat{\bs{\alpha}}_{\mathcal{S}_z}^{\ast\top}, \hbbeta_{\mathcal{S}_{x, PL}}^{\ast\top})^{\top}$ is root-$n$ consistent according to Theorem \ref{THM:normality}.
Since $\Omega_n$ is shown to have probability tending to one, we can conclude that $(\widehat{\bs{\alpha}}_{\widehat{\mathcal{S}}_z}^{\ast\top}, \hbbeta_{\widehat{\mathcal{S}}_{x, PL}}^{\ast\top})^{\top}$ is also root-$n$ consistent.

These refitted pilot estimators defined in (\ref{EQ:refit}) and (\ref{DEF:phihatstar}) are then used to define new pseudo-responses $\widehat{Y}%
_{i\ell}$, which are estimates of the unobservable ``oracle'' responses $Y_{i\ell}$%
. Specifically,
\begin{align}
    \widehat{Y}_{i\ell}=Y_{i}-\left\{
    \sum_{k \in \widehat{\mathcal{S}}_z} Z_{ik}\widehat{\alpha}^{\ast}_k +
    \sum_{\ell^{\prime} \in \widehat{\mathcal{S}}_{x, PL}} X_{i\ell^{\prime}}\widehat{\beta}_{\ell^{\prime}}^{\ast} +
    \sum_{\ell^{\prime\prime} \in \widehat{\mathcal{S}}_{x, N} \setminus \{\ell\}}  \widehat{\phi}_{\ell^{\prime\prime}}^{\ast}\left(
    X_{i\ell^{\prime\prime}}\right)\right\} , \nonumber \\
    Y_{i\ell}=Y_{i}-\left\{
    \sum_{k \in \mathcal{S}_z} Z_{ik}\alpha_{0k} +
    \sum_{\ell^{\prime} \in \mathcal{S}_{x, PL}} X_{i\ell^{\prime}}\beta_{0\ell^{\prime}} +
    \sum_{\ell^{\prime\prime} \in \mathcal{S}_{x, N} \setminus \{\ell\}} \phi_{0\ell^{\prime\prime}}\left(
    X_{i\ell^{\prime\prime}}\right)\right\} .
\label{DEF:YilhatYil}
\end{align}
Denote $K(\cdot)$ a continuous kernel function, and let $K_{h_{\ell}}(t)=K(t/h)/h$ be a rescaling of $K$, where $h$ is usually called the bandwidth.
Next, we define the spline-backfitted local-linear (SBLL) estimator of $\phi_{\ell}\left(
x_{\ell}\right) $ as $\widehat{\phi}_{\ell}^{\mathrm{SBLL}}\left( x_{\ell}\right) $ based on $\left\{
X_{i\ell},\widehat{Y}_{i\ell}\right\} _{i=1}^{n}$, which attempts to mimic the
would-be SBLL estimator $\widehat{\phi}_{\ell}^o\left( x_{\ell}\right) $
of $\phi_{\ell}\left( x_{\ell}\right) $ based on $\left\{X_{i\ell},Y_{i\ell}\right\}
_{i=1}^{n}$ if the unobservable ``oracle'' responses $\left\{ Y_{i\ell}\right\}
_{i=1}^{n}$ were available:
\begin{equation}
\left(\widehat{\phi}_{\ell}^o\left( x_{\ell}\right), \widehat{\phi}_{\ell}^{\mathrm{SBLL}}\left( x_{\ell}\right) \right)= \left(1 ~~ 0\right) \left(\mathbf{X}_{\ell}^{\ast\top}\mathbf{W}_{\ell}\mathbf{X}_{\ell}^{\ast}\right)^{-1} \mathbf{X}_{\ell}^{\ast\top}\mathbf{W}_{\ell} (\mathbf{Y}_{\ell}, \widehat{\mathbf{Y}}_{\ell}),
\label{DEF:alphahat-SBLL}
\end{equation}
where $\mathbf{Y}_{\ell} = \left(Y_{1\ell}, \ldots, Y_{n\ell}\right)^{\top}$ and $\widehat{\mathbf{Y}}_{\ell} = (\widehat{Y}_{1\ell}, \ldots, $ $\widehat{Y}_{n\ell})^{\top}$, with $\widehat{Y}_{i\ell}$ and $Y_{i\ell}$ as defined in (\ref{DEF:YilhatYil}), respectively; and the weight and ``design'' matrices are
\[
    \mathbf{W}_{\ell} = n^{-1}\text{diag} \{K_{h_{\ell}}(X_{i\ell} - x_{\ell})\}_{i = 1}^n, \quad \mathbf{X}_{\ell}^{\ast\top} =
     \begin{pmatrix}
         1 & ,\ldots  , & 1 \\
         X_{1\ell} - x_{\ell} & , \ldots , & X_{n\ell} - x_{\ell}
     \end{pmatrix}.
\]

Asymptotic properties of smoothers of $\widehat{\phi}_{\ell}^o\left( x_{\ell}\right), ~ \ell\in \mathcal{S}_{x, N}$, can be easily established. Specifically, let $\mu_{2}(K)=\int u^{2}K\left( u\right) du$, and let $f_{\ell}$ be the probability density function of $X_{\ell}$, then under Assumptions (B1) and (B2) in Appendix \ref{SUBSEC:assump},
\begin{equation}
\sqrt{nh_{\ell}}\left\{\widehat{\phi}_{\ell}^o\left( x_{\ell}\right) -\phi_{0\ell}(x_{\ell}) -b_{\ell}(x_{\ell}) h_{\ell}^{2}\right\} \stackrel{D}{%
\longrightarrow }N\left\{0,v_{\ell}^{2}(x_{\ell}) \right\} , ~ \ell\in \mathcal{S}_{x,N},
\label{EQN:alphao_normal}
\end{equation}
where
\begin{equation}
b_{\ell}(x_{\ell}) = \mu_{2}(K)
\phi_{0\ell}^{\prime \prime }(x_{\ell})/2 , \quad
v_{\ell}^{2}(x_{\ell}) = \|K\|_2^2
f_{\ell}^{-1}(x_{\ell}) \sigma^{2}.
\label{EQ:oracle}
\end{equation}

The following theorem states that the asymptotic uniform magnitude of the difference between $\widehat{\phi}_{\ell}^{\mathrm{SBLL}}\left( x_{\ell}\right) $ and $\widehat{\phi}_{\ell}^o\left( x_{\ell}\right) $ is of order $o_{P}\{(nh_{\ell})^{-1/2}\} $, which is dominated by the asymptotic uniform size of $\widehat{\phi}_{\ell}^o\left( x_{\ell}\right) -\phi_{0\ell}(x_{\ell})$. As a result, $\widehat{\phi}_{\ell}^{\mathrm{SBLL}}\left( x_{\ell}\right) $ will have the same asymptotic distribution as $\widehat{\phi}_{\ell}^o\left( x_{\ell}\right) $. We say $x_{\ell}\in\chi_{\ell}$ is a boundary point if and only if $x_{\ell}=a+ch_{\ell}$ or $x_{\ell}=b-ch_{\ell}$ for some $0\leq c<1$ and an interior point otherwise. Let $\chi_{h_{\ell}}$ be the interior of the support $\chi$.



\begin{theorem}
\label{THM:nonlinear-normality}
Suppose the assumptions in Theorem \ref{THM:normality} hold. In addition, if Assumptions (B1) and (B2) in Appendix \ref{SUBSEC:assump} are satisfied, then the SBLL estimator $\widehat{\phi}_{\ell}^{\mathrm{SBLL}}\left( x_{\ell}\right)$ given in (\ref{DEF:alphahat-SBLL}) satisfies
\begin{equation}
\label{EQN:SBLL_Order}
\sup_{x_{\ell}\in \chi_{h_{\ell}}}\left| \widehat{\phi}_{\ell}^{\mathrm{SBLL}}\left( x_{\ell}\right) -\widehat{\phi}_{\ell}^o\left( x_{\ell}\right)  \right| =o_{P}\{(nh_{\ell})^{-1/2}\}, ~ \ell\in \mathcal{S}_{x,N}.
\end{equation}
Hence with $b_{\ell}(x_{\ell}) $ and $v_{\ell}^{2}(x_{\ell})$ as defined in (\ref{EQ:oracle}), for any $x_{\ell}$ in its interior support $x_{\ell} \in \chi_{h_{\ell}}$,%
\begin{equation}
\sqrt{nh_{\ell}}\left\{\widehat{\phi}_{\ell}^{\mathrm{SBLL}}\left( x_{\ell}\right) -\phi_{0\ell}(x_{\ell}) -b_{\ell}(x_{\ell}) h_{\ell}^{2}\right\} \stackrel{D}{\longrightarrow }\mathcal{N} \left\{0,v_{\ell}^{2}(x_{\ell}) \right\} , ~ \ell\in \mathcal{S}_{x,N}.
\label{EQN:sbll_normal}
\end{equation}
In addition, the estimator $\widehat{\phi}_{\ell}^{\mathrm{SBLL}}\left( x_{\ell}\right)$ satisfies, for any $t$ and $ \ell\in \mathcal{S}_{x,N}$,
\begin{align}
    \lim_{n \rightarrow \infty} & \Pr\left\{\sqrt{\ln(h_{\ell}^{-2})} \left(\sup_{x_{\ell} \in \chi_{h_{\ell}}} \frac{\sqrt{nh_{\ell}}}{v_{\ell}(x_{\ell})} |\widehat{\phi}_{\ell}^{\mathrm{SBLL}}\left( x_{\ell}\right) -\phi_{0\ell}(x_{\ell})| - \tau_n\right) < t\right\}
    = e^{-2 e^{-t}},
\label{EQN:scb_form}
\end{align}
where $\tau_n = \sqrt{\ln(h_{\ell}^{-2})} + \ln\{\|K^{\prime}\|_2 / (2\pi \|K\|_2)\} / \sqrt{\ln(h_{\ell}^{-2})}$.
\end{theorem}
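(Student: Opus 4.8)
\medskip
\noindent\textbf{Proof sketch.}
The plan is to establish the ``oracle equivalence'' (\ref{EQN:SBLL_Order}) first and then read off (\ref{EQN:sbll_normal}) and (\ref{EQN:scb_form}) from the known behaviour of the oracle smoother $\widehat\phi_\ell^{o}$. I would work throughout on the event $\Omega_n=\{\widehat{\mathcal S}_z=\mathcal S_z,\ \widehat{\mathcal S}_{x,PL}=\mathcal S_{x,PL},\ \widehat{\mathcal S}_{x,N}=\mathcal S_{x,N}\}$, which has probability tending to one by Theorem~\ref{THM:selection}; since every estimate below is $O_P$ or $o_P$, the stated conclusions then hold unconditionally. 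On $\Omega_n$ the map in (\ref{DEF:alphahat-SBLL}) is linear in the response, so, writing $T_{n\ell}(x_\ell)[\cdot]$ for the local-linear operator $(1~~0)\bigl(\mathbf X_\ell^{\ast\top}\mathbf W_\ell\mathbf X_\ell^{\ast}\bigr)^{-1}\mathbf X_\ell^{\ast\top}\mathbf W_\ell(\cdot)$,
\[
\widehat\phi_\ell^{\mathrm{SBLL}}(x_\ell)-\widehat\phi_\ell^{o}(x_\ell)=T_{n\ell}(x_\ell)\bigl[\widehat{\mathbf Y}_\ell-\mathbf Y_\ell\bigr],
\]
and by (\ref{DEF:YilhatYil}) the $i$th entry of $\widehat{\mathbf Y}_\ell-\mathbf Y_\ell$ equals $-\sum_{k\in\mathcal S_z}Z_{ik}(\widehat\alpha_k^{\ast}-\alpha_{0k})-\sum_{\ell'\in\mathcal S_{x,PL}}X_{i\ell'}(\widehat\beta_{\ell'}^{\ast}-\beta_{0\ell'})-\sum_{\ell''\in\mathcal S_{x,N}\setminus\{\ell\}}\{\widehat\phi_{\ell''}^{\ast}(X_{i\ell''})-\phi_{0\ell''}(X_{i\ell''})\}$. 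By linearity of $T_{n\ell}$ it thus suffices to show that the \emph{parametric remainder} (from the first two sums) and the \emph{nonparametric backfitting remainder} (from the last sum) are each $o_P\{(nh_\ell)^{-1/2}\}$ uniformly on the interior $\chi_{h_\ell}$.

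\medskip
\noindent\emph{Parametric remainder.} By Theorem~\ref{THM:normality}, $\|\widehat{\bs\alpha}_{\mathcal S_z}^{\ast}-\bs\alpha_{0,\mathcal S_z}\|=O_P(n^{-1/2})$ and $\|\widehat{\bs\beta}_{\mathcal S_{x,PL}}^{\ast}-\bs\beta_{0,\mathcal S_{x,PL}}\|=O_P(n^{-1/2})$, and $|\mathcal S_z|,|\mathcal S_{x,PL}|$ are fixed by (A1). For each fixed $k$ (respectively $\ell'$), $T_{n\ell}(x_\ell)$ applied to $(Z_{1k},\dots,Z_{nk})^\top$ (respectively $(X_{1\ell'},\dots,X_{n\ell'})^\top$) is an ordinary local-linear regression estimate of $\mathrm E(Z_k\mid X_\ell=x_\ell)$ (respectively $\mathrm E(X_{\ell'}\mid X_\ell=x_\ell)$), hence $O_P(1)$ uniformly on $\chi_{h_\ell}$ under the moment and density conditions in (A2)--(A6) and (B1)--(B2). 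Therefore the parametric remainder is $O_P(n^{-1/2})=o_P\{(nh_\ell)^{-1/2}\}$ since $h_\ell\to0$.

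\medskip
\noindent\emph{Nonparametric backfitting remainder --- the main obstacle.} For $\ell''\in\mathcal S_{x,N}\setminus\{\ell\}$ I would split $\widehat\phi_{\ell''}^{\ast}-\phi_{0\ell''}$ into (a) a deterministic spline-approximation error $\widetilde b_{\ell''}(\cdot)=\mathbf B_{\ell''}^{(d)}(\cdot)^\top\bs\gamma_{0,\ell''}^{\ast}-\phi_{0\ell''}(\cdot)$, with $\bs\gamma_{0,\ell''}^{\ast}$ the $L_2$-best approximating coefficients, and (b) a stochastic part which, from the normal equations of the refitting (\ref{EQ:refit}), can be written up to lower-order terms as $\mathbf B_{\ell''}^{(d)}(\cdot)^\top\mathbf A_{\ell''}\bs\varepsilon$, where $\bs\varepsilon=(\varepsilon_1,\dots,\varepsilon_n)^\top$ and $\mathbf A_{\ell''}$ is built from the inverse of the (well-conditioned) refit Gram matrix. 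For (a): spline theory and the smoothness in (A1) give $\sup|\widetilde b_{\ell''}|=O(M_n^{-d})$, so $T_{n\ell}$ applied to $\{\widetilde b_{\ell''}(X_{i\ell''})\}_i$ is $O_P(M_n^{-d})$ uniformly, which is $o_P\{(nh_\ell)^{-1/2}\}$ under the undersmoothing condition $M_n^{-d}\sqrt{nh_\ell}\to0$ in (B1)--(B2). For (b): $T_{n\ell}(x_\ell)$ applied to $\{\mathbf B_{\ell''}^{(d)}(X_{i\ell''})^\top\mathbf A_{\ell''}\bs\varepsilon\}_i$ is a bilinear form in the kernel weights $K_{h_\ell}(X_{i\ell}-x_\ell)$ and $\bs\varepsilon$; I would control it uniformly by (i) using the exact centering $\mathrm E\{B_{J,\ell''}^{(d)}(X_{\ell''})\}=0$ together with the conditions (A2)--(A6) and (A3$^{\prime}$) on the joint distribution of the covariates (in particular ruling out functional relations among them), which force $n^{-1}\sum_iK_{h_\ell}(X_{i\ell}-x_\ell)B_{J,\ell''}^{(d)}(X_{i\ell''})$ to concentrate around $\mathrm E\{B_{J,\ell''}^{(d)}(X_{\ell''})\mid X_\ell=x_\ell\}f_\ell(x_\ell)$, a quantity of small order, producing the needed cancellation in the $X_{\ell''}$ direction; (ii) handling the $\bs\varepsilon$-randomness via the error moment conditions and the norm bound on $\mathbf A_{\ell''}$; and (iii) passing from a fixed $x_\ell$ to $\sup_{x_\ell\in\chi_{h_\ell}}$ by discretizing $\chi_{h_\ell}$ into an $O(n^{c})$-point grid (some $c>0$) and using the Lipschitz continuity of $K$ in (B1) to bound the oscillation between grid points. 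This is the oracle-equivalence argument underlying the spline-backfitted local-polynomial method, as in \citet{wang2007spline} and \citet{liu2011estimation}, adapted to the present model; combining (i)--(iii) with (a) yields (\ref{EQN:SBLL_Order}), and I expect step (b) --- the uniform-in-$x_\ell$ bilinear-form bound --- to be the bulk of the work.

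\medskip
\noindent\emph{From (\ref{EQN:SBLL_Order}) to (\ref{EQN:sbll_normal}) and (\ref{EQN:scb_form}).} Writing $\sqrt{nh_\ell}\{\widehat\phi_\ell^{\mathrm{SBLL}}(x_\ell)-\phi_{0\ell}(x_\ell)-b_\ell(x_\ell)h_\ell^2\}=\sqrt{nh_\ell}\{\widehat\phi_\ell^{o}(x_\ell)-\phi_{0\ell}(x_\ell)-b_\ell(x_\ell)h_\ell^2\}+\sqrt{nh_\ell}\{\widehat\phi_\ell^{\mathrm{SBLL}}(x_\ell)-\widehat\phi_\ell^{o}(x_\ell)\}$, the first term converges to $\mathcal N\{0,v_\ell^2(x_\ell)\}$ by (\ref{EQN:alphao_normal}) and the second is $o_P(1)$ by (\ref{EQN:SBLL_Order}); Slutsky's theorem then gives (\ref{EQN:sbll_normal}) on $\Omega_n$, hence unconditionally. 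For (\ref{EQN:scb_form}): since $\widehat\phi_\ell^{o}$ is an ordinary local-linear smoother of the i.i.d.\ sample $\{X_{i\ell},Y_{i\ell}\}_{i=1}^n$, the classical strong-approximation / Bickel--Rosenblatt-type theory for the sup-norm deviation of kernel regression estimators (as in \citet{wang2007spline,liu2011estimation}) gives the Gumbel limit with the same $\tau_n$ and $v_\ell$ for $\widehat\phi_\ell^{o}$ in place of $\widehat\phi_\ell^{\mathrm{SBLL}}$, once the deterministic bias is rendered negligible at the relevant (inflated) scale by the bandwidth conditions in (B1)--(B2). Transferring this to $\widehat\phi_\ell^{\mathrm{SBLL}}$ through the uniform bound (\ref{EQN:SBLL_Order}) --- whose proof in fact delivers the slightly sharper rate needed to survive the $\sqrt{\ln(h_\ell^{-2})}$ normalization --- yields (\ref{EQN:scb_form}).
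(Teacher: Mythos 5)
Your proposal follows essentially the same route as the paper's proof: the same event $\Omega_n$, the same decomposition of $\widehat{\mathbf Y}_{\ell}-\mathbf Y_{\ell}$ into a parametric remainder, a deterministic spline-approximation error, and a stochastic spline-coefficient part fed through the local-linear operator, the same appeal to Theorem \ref{THM:normality} for the $O_P(n^{-1/2})$ parametric piece, the same concentration-plus-uniformity control of the kernel--spline bilinear form (which the paper discharges via Lemma A.4 of \cite{wang2007spline} and Proposition 2 of \cite{wang2009efficient}), and the same Slutsky/strong-approximation transfer from $\widehat{\phi}_{\ell}^{o}$ to obtain (\ref{EQN:sbll_normal}) and (\ref{EQN:scb_form}). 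The only small corrections are that the undersmoothing condition $M_n^{-d}\sqrt{nh_{\ell}}\to 0$ you invoke comes from Assumption (A6$^{\prime}$) rather than (B1)--(B2), and you rightly flag that the oracle-equivalence bound must beat $(nh_{\ell}\ln(h_{\ell}^{-2}))^{-1/2}$ for the SCB step, which the paper's explicit rate $O_P\{\sqrt{\ln n/n}+M_n^{-d+1}(\sqrt{\ln n/(nh_{\ell})}+M_n^{-1/2})\}$ indeed does.
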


Theorem \ref{THM:nonlinear-normality} provides analytical expressions for constructing asymptotic confidence intervals and SCBs under certain conditions. Under Assumptions (A1)--(A6), (A3$^{\prime}$), (A6$^{\prime}$), (B1) and (B2) in Appendix \ref{SUBSEC:assump}, for any $\alpha\in(0,1)$, an asymptotic $100(1-\alpha)\%$ pointwise confidence interval for $\phi_{0\ell}(x_{\ell})$ over the interval $\chi_{h_{\ell}}$ is
\[
\widehat{\phi}_{\ell}^{\mathrm{SBLL}}\left( x_{\ell}\right) -\widehat{b}_{\ell}(x_{\ell})h_{\ell}^{2}\pm \widehat{v}_{\ell}(x_{\ell})(nh_{\ell})^{-1/2}, ~ \ell\in \mathcal{S}_{x,N}.
\]
Under Assumptions (A1)--(A6), (A2$^\prime$) (A3$^{\prime}$), (A6$^{\prime}$), (B1) and (B2)  in the Appendix, for any $\alpha\in(0,1)$, an asymptotic $100(1-\alpha)\%$ SCB for $\phi_{0\ell}(x_{\ell})$ over the interval $\chi_{h_{\ell}}$ is
\[
\widehat{\phi}_{\ell}^{\mathrm{SBLL}}\left( x_{\ell}\right) \pm \widehat{v}_{\ell}(x_{\ell})(nh_{\ell})^{-1/2}\left[\tau_n - \{\ln(h_{\ell}^{-2})\}^{-1/2} \ln\left\{-\frac{1}{2} \ln(1 - \alpha)\right\} \right], ~ \ell\in \mathcal{S}_{x,N}.
\]

\setcounter{chapter}{4}
\renewcommand{\thesection}{\arabic{section}}
\renewcommand{\thetable}{{\arabic{table}}} \setcounter{table}{0}
\renewcommand{\thefigure}{\arabic{figure}} \setcounter{figure}{0}
\renewcommand{\thesubsection}{4.\arabic{subsection}}
\setcounter{section}{3}

\section{Implementation and Simulation} \label{sec:simulation}

In this section we discuss practical implementations for the SMILE procedure.
To meet the zero mean requirement specified in Assumption (A4),
we use the centralized $X_{i\ell}^{\ast}$ instead of $X_{i\ell}$ directly, for each $\ell=1, \ldots, p_2$.
At the risk of abusing the notation, we still use symbol $X$ instead of $X^{\ast}$ to avoid creating too many new symbols. To implement the proposed procedure, one needs to select the penalty parameters, the knots for a spline at the selection stage and refitting stage, and the bandwidth for a kernel at the backfitting stage.

\textit{Knot selection.} For spline smoothing involved in both selection and refitting, we suggest placing knots on a grid of evenly spaced sample quantiles. Based on extensive simulation experiments in Section A of the Supplementary Materials, we find that the number of knots often has little effect on the model selection results. Therefore, we recommend using a small number of knots at the model selection stage to reduce the computing cost, especially when the sample size is too small compared to the number of covariates. In practice, $2\sim 5$ interior knots is usually adequate to identify the model structure.

At the refitting stage, Assumption (A6$^\prime$) in the Supplementary Materials suggests the number of interior knots $M_n$ for a refitting spline needs to satisfy: $\{n^{1/(2d)} \vee n^{4/(10d-5)}\} \ll M_n \ll n^{1/3}$, where $d$ is the degree of the polynomial spline basis functions used in the refitting. The widely used quadratic/cubic splines and any polynomial splines of degree $d\geq 2$ all satisfy this condition. Therefore, in practice we suggest take the following rule-of-thumb number of interior knots
\begin{equation*}
\min\{\lfloor n^{1/(2d) \vee 4/(10d-5)}\ln(n)\rfloor, \lfloor n/(4s)\rfloor\}+1,
\end{equation*}
where $s$ is the number of nonlinear components selected at the first stage, and the term $\lfloor n/(4s)\rfloor$ is to guarantee that we have at least four observations in each subinterval between two adjacent knots to avoid getting (near) singular design matrices in the spline refitting.

\textit{Bandwidth selection.} Note that Condition (B2) in the Supplementary Materials requires that the bandwidths in the backfitting are of order $n^{-1/5}$. Thus, the bandwidth selection can be done using a standard routine in the literature. In our numerical studies, we find that the rule-of-thumb bandwidth selector \citep{Fan:Gijbels:96} often works very well in both estimation and SCB construction.

Section A in the Supplementary Materials provides detailed investigations on how the smoothing parameters affect the proposed SMILE method and evaluates the practical performance in finite-sample simulation studies. Next we present our algorithm and discuss how to choose the penalty parameters.

\subsection{Algorithm} \label{subsec:algorithm}

In this section we discuss practical implementations for the SMILE procedure.
To meet the zero mean requirement specified in Assumption (A4),
we use the centralized $X_{i\ell}^{\ast}$ instead of $X_{i\ell}$ directly, for each $\ell=1, \ldots, p_2$.
At the risk of abusing the notation, we still use symbol $X$ instead of $X^{\ast}$ to avoid creating too many new symbols. To implement the proposed procedure, one needs to select the penalty parameters, the knots for a spline at the selection stage and refitting stage, and the bandwidth for a kernel at the backfitting stage.

\textit{Knot selection.} For spline smoothing involved in both selection and refitting, we suggest placing knots on a grid of evenly spaced sample quantiles. Based on extensive simulation experiments in Section A of the Appendix, we find that the number of knots often has little effect on the model selection results. Therefore, we recommend using a small number of knots at the model selection stage to reduce the computing cost, especially when the sample size is small compared to the number of covariates. In practice, $2\sim 5$ interior knots is usually adequate to identify the model structure.

At the refitting stage, Assumption (A6$^\prime$) in the Appendix suggests the number of interior knots $M_n$ for a refitting spline needs to satisfy: $\{n^{1/(2d)} \vee n^{4/(10d-5)}\} \ll M_n \ll n^{1/3}$, where $d$ is the degree of the polynomial spline basis functions used in the refitting. The widely used quadratic/cubic splines and any polynomial splines of degree $d\geq 2$ all satisfy this condition. Therefore, in practice we suggest take the following rule-of-thumb number of interior knots
\begin{equation*}
\min\{\lfloor n^{1/(2d) \vee 4/(10d-5)}\ln(n)\rfloor, \lfloor n/(4s)\rfloor\}+1,
\end{equation*}
where $s$ is the number of nonlinear components selected at the first stage, and the term $\lfloor n/(4s)\rfloor$ is to guarantee that we have at least four observations in each subinterval between two adjacent knots to avoid (near) singular design matrices in the spline refitting.

\textit{Bandwidth selection.} Note that Condition (B2) in the Appendix requires that the bandwidths in the backfitting are of order $n^{-1/5}$. Thus, the bandwidth selection can be done using a standard routine in the literature. In our numerical studies, we find that the rule-of-thumb bandwidth selector \citep{Fan:Gijbels:96} often works very well in both estimation and SCB construction.

Section A in the Appendix provides detailed investigations on how the smoothing parameters affect the proposed SMILE method and evaluates the practical performance in finite-sample simulation studies. Next we present our algorithm and discuss how to choose the penalty parameters.

\subsection{Algorithm} \label{subsec:algorithm}
The minimization of (\ref{EQ:loss_aglasso_spline}) can be solved by the group coordinate descent algorithm \citep{huang2012selective}, implemented using \textsf{R} package \texttt{grpreg} \citep{grpreg}. As for the selection of penalty parameters, we consider two criteria widely used in high-dimensional settings, modified Bayesian information criteria \citep[BIC; see][]{lee2014model} and the extended BIC \citep[EBIC; see][]{chen2008extended, chen2009tournament}:
\begin{align*}
	\text{BIC}(\bs{\lambda}) &= \ln(RSS_{\bs{\lambda}}) + df_{\bs{\lambda}} \times \frac{\ln (p_1 + p_2  + p_2N_n) \times \ln(n)}{2n}, \\
	\text{EBIC}(\bs{\lambda}) &= \ln(RSS_{\bs{\lambda}}) + df_{\bs{\lambda}} \times \frac{\ln(n)}{n} + df_{\bs{\lambda}} \times \frac{\ln (p_1 + p_2  + p_2N_n)}{n},
\end{align*}
where $RSS_{\bs{\lambda}}$ is the residual sum of squares associated with penalty parameters $\bs{\lambda} = (\lambda_1, \lambda_2,$ $\lambda_3)^{\top}$ and $df_{\bs{\lambda}}$ is the number of estimated nonzero coefficients for the given $\bs{\lambda}$. The simulation results are similar based on these two criteria, so in the following, we choose $\lambda_1$ and $\lambda_2$ by modified BIC and $\lambda_3$ by EBIC for illustration using an approach described below.

The classical coordinate descent algorithm deals with the optimization problem with one tuning parameter, and there are several ways to address the triple-penalization or multiple-penalization issue. A natural idea is to solve the optimization problem by searching over a three-dimensional grid for tuning parameters, which can be computationally expensive.
To pose a balance between computational efficiency and precision, we propose to solve the triple-penalization problem in two steps. In the first step, BIC is minimized with a common smoothing parameter $\lambda$, i.e., we set $\lambda_1=\lambda_2=\lambda_3=\lambda$, and we choose $\lambda$ by minimizing BIC($\lambda$) over a grid of $\lambda$ values. Using the selected common smoothing parameter, we obtain the initial estimators $\widehat{\bs{\alpha}}^{(0)}$, $\widehat{\bs{\beta}}^{(0)}$ and $\widehat{\bs{\gamma}}^{(0)}$. In Step 2, $\bs{\alpha}$, $\bs{\beta}$ and $\bs{\gamma}$ estimates are obtained one at a time by minimizing (\ref{EQ:loss_aglasso_spline}). More precisely, an $\bs{\alpha}$ estimate is obtained with $\bs{\beta}$, $\bs{\gamma}$ fixed at current estimates, where $\lambda_1$ is set equal to its minimum BIC value and $\lambda_2=\lambda_3=0$. One cycles in this way through $\bs{\alpha}$, $\bs{\beta}$ and $\bs{\gamma}$ estimation steps for a fixed number of iterations. Three iterations generally works well in practice. Algorithm 1 outlines the iterative group coordinate descent algorithm.
\normalem
\begin{algorithm}
\SetKwInOut{Input}{Input}
\SetKwInOut{Output}{Output}
\caption{Iterative group coordinate descent algorithm}
\Input{Data $\left\{(Y_i, Z_{i1},\ldots,Z_{ip_1}, X_{i1},\ldots,X_{ip_2},\mathbf{B}_{i1}^{(1)},\ldots,\mathbf{B}_{ip_2}^{(1)})\right\}_{i=1}^n$\\
$\widehat{\bs{\alpha}}^{(0)}$, $\widehat{\bs{\beta}}^{(0)}$ and $\widehat{\bs{\gamma}}^{(0)}$: initial parameters of interest \\
$\delta_0$: convergence criterion\\}
\Output{$\widehat{\bs{\alpha}}$, $\widehat{\bs{\beta}}$ and $\widehat{\bs{\gamma}}$: Estimates of $\bs{\alpha}$, $\bs{\beta}$ and $\bs{\gamma}$}
\BlankLine
\While{$\left\|\left(\widehat{\bs{\alpha}}^{(m+1)\top}, \widehat{\bs{\beta}}^{(m+1)\top}, \widehat{\bs{\gamma}}^{(m+1)\top}\right)^{\top}-\left(\widehat{\bs{\alpha}}^{(m)\top}, \widehat{\bs{\beta}}^{(m)\top}, \widehat{\bs{\gamma}}^{(m)\top}\right)^{\top}\right\|^2 > \delta_0$}{
(i) Given $\widehat{\bs{\beta}}^{(m)}$ and $\widehat{\bs{\gamma}}^{(m)}$, obtain $w_1^{\alpha\,(m+1)},\ldots, w_{p_1}^{\alpha\,(m+1)}$ by minimizing objective function (\ref{EQ:loss_glasso_spline}) with $\widetilde\lambda_1$ selected via the modified BIC;\\
(ii) Given $\widehat{\bs{\beta}}^{(m)}$, $\widehat{\bs{\gamma}}^{(m)}$ and $w_1^{\alpha\,(m+1)},\ldots, w_{p_1}^{\alpha\,(m+1)}$, obtain $\widehat{\bs{\alpha}}^{(m+1)}$ by minimizing objective function (\ref{EQ:loss_aglasso_spline}) with $\lambda_1$ selected via the modified BIC;\\
(iii) Given $\widehat{\bs{\alpha}}^{(m+1)}$ and $\widehat{\bs{\gamma}}^{(m)}$, obtain $w_1^{\beta\,(m+1)},\ldots, w_{p_2}^{\beta\,(m+1)}$ by minimizing objective function (\ref{EQ:loss_glasso_spline}) with $\widetilde\lambda_2$ selected via the modified BIC;\\
(iv) Given $\widehat{\bs{\alpha}}^{(m+1)}$, $\widehat{\bs{\gamma}}^{(m)}$ and $w_1^{\beta\,(m+1)},\ldots, w_{p_2}^{\beta\,(m+1)}$, obtain $\widehat{\bs{\beta}}^{(m+1)}$ by minimizing objective function (\ref{EQ:loss_aglasso_spline}) with $\lambda_2$ selected via the modified BIC;\\
(v) Given $\widehat{\bs{\alpha}}^{(m+1)}$ and $\widehat{\bs{\beta}}^{(m+1)}$, obtain $w_1^{\gamma\,(m+1)},\ldots, w_{p_2}^{\gamma\,(m+1)}$ by minimizing objective function (\ref{EQ:loss_glasso_spline}) with $\widetilde\lambda_3$ selected via EBIC;\\
(vi) Given $\widehat{\bs{\alpha}}^{(m+1)}$, $\widehat{\bs{\beta}}^{(m+1)}$ and $w_1^{\gamma\,(m+1)},\ldots, w_{p_2}^{\gamma\,(m+1)}$, obtain $\widehat{\bs{\gamma}}^{(m+1)}$ by minimizing objective function (\ref{EQ:loss_aglasso_spline}) with $\lambda_3$ selected via EBIC.}
\vskip -.1in
Set $\widehat{\bs{\alpha}}=\widehat{\bs{\alpha}}^{(m+1)}$, $\widehat{\bs{\beta}}=\widehat{\bs{\beta}}^{(m+1)}$ and $\widehat{\bs{\gamma}}=\widehat{\bs{\gamma}}^{(m+1)}$.
\label{ALGO:igcd}
\end{algorithm}
\ULforem

\subsection{Simulation Studies}
In this section, we investigate the performance of the proposed sparse model identification and learning estimator, abbreviated as SMILE, in terms of model selection, estimation accuracy and inference performance in a simulation study. We compare SMILE with the sparse APLM estimator with adaptive group LASSO penalty (SAPLM) proposed in \cite{li2017ultra}, the ordinary linear least squares estimator with the adaptive LASSO penalty (SLM), and the oracle estimator (ORACLE), which uses the same estimation techniques as SMILE except that no penalization or data-driven variable selection is used because all active and inactive index sets are treated as known. Note that SAPLM ignores the potential linear structure in covariate $\bs{X}$, and estimates the effects of each component of $\bs{X}$ with all nonparametric forms; in contrast, SLM ignores the potential nonlinear structure in covariate $\bs{X}$ and requires selected components of covariates $\bs{Z}$ and $\bs{X}$ to enter the model in a linear form. In terms of the performances of SCBs, we compare SMILE with SAPLM and ORACLE. In our simulation, ORACLE works as a benchmark for estimation comparison. It is worth pointing out that the ORACLE estimator is only computable in simulations, not real examples.

We generate simulated datasets using the APLM structure
\begin{align*}
    Y_i = \sum_{k=1}^{p_1}Z_{ik}\alpha_k + \sum_{\ell=1}^{p_2} \phi_{\ell}(X_{i\ell}) + \varepsilon_i,
\end{align*}
where $\alpha_1 = 3$, $\alpha_2 = 4$, $\alpha_3 = -2$, $\alpha_4 = \ldots = \alpha_{p_1} = 0$,
$\phi_1(x) = 9x$, $\phi_2(x) = -1.5 \cos^2(\pi x) + 3 \sin^2(\pi x) - \mathrm{E}\{-1.5 \cos^2(\pi X_2 + 3 \sin^2(\pi X_2)\}$, $\phi_3(x) = 6x + 18x^{2} - \mathrm{E}(6X_3 + 18X_3^{2})$, and $\phi_4(x) = \ldots = \phi_{p_2}(x) = 0$.
     Notice that $\phi_1(x)$ is actually a linear function. So there are three variables in the active index set for $\bs{Z}$, one variable in the active pure linear index set for $\bs{X}$, one variable in the active pure nonlinear index set for $\bs{X}$, and one variable in the active linear \& nonlinear index set for $\bs{X}$.

%
We simulate $Z_{ik}^{\ast}$ independently from the $\mathrm{Unif}[0, 1]$ and $X_{i\ell}$ independently from the $\mathrm{Unif}[-.5,$ $.5]$, and set $Z_{ik} = I(Z_{ik}^{\ast} > 0.75)$, for $i = 1, \ldots, n$, $k = 1, \ldots, p_1$, $\ell=1, \ldots, p_2$.
To make an ultra-high-dimensional scenario, we let the sample size $n = 300$ and $n = 500$, and consider three different dimensions: $p_1= p_2= p$, where $p$ is taken to be $1000$,  $2000$ and $5000$. The error term $\varepsilon_i$ is simulated from $\mathcal{N}(0, \sigma^{2})$ with $\sigma= 0.5$ and $1.0$.

To approximate the nonlinear functions, we use the constant B-spline ($d=1$) with four interior knots for selection and use the cubic B-spline ($d=4$) with four interior knots in the refitting step. For both selection and refitting, the knots are on a grid of evenly spaced sample quantiles. To construct the SCBs, in our simulation studies below,  we choose the Epanechnikov kernel function with the rule-of-thumb bandwidth described in Section 4.2 in \cite{Fan:Gijbels:96}, which usually works well in our experimental investigation. More simulation studies have been conducted with different choices for spline knots and kernel bandwidth selectors; see Section A of the Appendix.

We evaluate the methods on the accuracy of variable selection, prediction and inference. In detail, we adopt the following criteria for evaluation:
\begin{enumerate} [{\normalfont (B-i)}]
    \item Percent of covariates in $\bs{Z}$ with nonzero linear coefficients that are correctly identified (``CorrZ''); \\[-18pt]
    \item Percent of covariates in $\bs{Z}$ with zero linear coefficients that are correctly identified (``CorrZ0'');\\[-18pt]
    \item Percent of covariates in $\bs{X}$ with nonzero purely linear functions that are correctly identified (``CorrL'');\\[-18pt]
    \item Percent of covariates in $\bs{X}$ with nonzero purely nonlinear functions that are correctly identified (``CorrN''); \\[-18pt]
    \item Percent of covariates in $\bs{X}$ with nonzero linear and nonlinear functions that are correctly identified (``CorrLN'); \\[-18pt]
    \item Percent of covariates in $\bs{X}$ with zero functions that are correctly identified (``CorrX0'');
\end{enumerate}
\begin{enumerate} [{\normalfont (C-i)}]
    \item Percent of covariates in $\bs{Z}$ with nonzero linear coefficients incorrectly identified as having zero linear coefficients (``Zto0'');\\[-18pt]
    \item Percent of covariates in $\bs{X}$ with nonzero purely linear functions incorrectly identified as having nonlinear functions (``LtoN''); \\[-18pt]
    \item Percent of covariates in $\bs{X}$ with nonzero purely nonlinear functions incorrectly identified as having linear functions (``NtoL''); \\[-18pt]
    \item Percent of covariates in $\bs{X}$ with nonzero linear or nonzero nonlinear functions incorrectly identified as having both zero linear and zero nonlinear functions (``Xto0'');
\end{enumerate}
\begin{enumerate} [{\normalfont (D-i)}]
    \item Mean squared errors (MSE) for linear coefficients $\alpha_1$, $\alpha_2$, $\alpha_3$ and $\beta_1$; \\[-18pt]
    \item Average MSE (AMSE) for $\phi_1$, $\phi_2$ and $\phi_3$, defined as $n^{-1} \sum_{i = 1}^n \{\widehat{\phi}_{\ell}^{\mathrm{SBLL}}(x_{i\ell}) - \phi_{\ell}(x_{i\ell})\}^2$;\\[-18pt]
    \item 10-fold cross-validation mean squared prediction error (CV-MSPE) for the response variable, defined as $10^{-1}\sum_{m = 1}^{10} |\kappa_m|^{-1}\sum_{i \in \kappa_{m}} (\widehat{Y}_i - Y_i)^2$, where $\kappa_1, \ldots, \kappa_{10}$ comprise a random partition of the dataset into $10$ disjoint subsets of approximately equal size, and $\widehat{Y}_i$ is the prediction obtained from all data aside from the subset containing the $i$th observation;\\[-18pt]
    \item The coverage rates of the proposed 95\% SCB for functions $\phi_2$ and $\phi_3$ (Coverage).
\end{enumerate}
All these performance measures are computed based on 1000 replicates. Note that Criteria (B-i)--(B-vi) measure the frequency of getting the correct model structure; Criteria (C-i)--(C-iv) measure the frequency of getting an incorrect model structure; Criteria (D-i)--(D-iii) focus on the estimation and prediction accuracy for the model components; and Criterion (D-iv) measures the inferential performance.

The model selection results are provided in Tables \ref{TAB:SIM1-selection1} and \ref{TAB:SIM1-selection2}, respectively. SMILE can effectively identify informative linear and nonlinear components as well as correctly discover the linear and nonlinear structure in covariate $\bs{X}$, while SAPLM neglects linear structure in $\bs{X}$ and SLM fails in representing the nonlinear part of covariate $\bs{X}$. For SMILE, the numbers of correctly selected nonzero covariates in $\bs{Z}$, linear, nonlinear, linear-and-nonlinear components in $\bs{X}$, nonzero covariates are very close to ORACLE (100\% for corrZ, corrL, corrN, corrLN, corrZ0 and corrX0, respectively); and the numbers of incorrectly identified components approach to $0$ as the sample size $n$ increases, as shown in Table \ref{TAB:SIM1-selection2}. SMILE is close in the selection of covariates $\bs{Z}$ to the SAPLM estimator, and it far outperforms SAPLM in identifying the linear-and-nonlinear structure of covariate $\bs{X}$. From the results in Tables \ref{TAB:SIM1-selection1} and \ref{TAB:SIM1-selection2}, it is also evident that model misspecification leads to poor variable selection performance for SLM. Especially for the selection of covariates in $\bs{X}$, which is our main focus for real data analysis, SLM fails to select the right nonlinear components in each simulation.

\begin{table}[htbp]
\caption{Statistics (B-i)--(B-vi) comparing the SMILE, SAPLM and SLM.}
\label{TAB:SIM1-selection1}
\renewcommand{\arraystretch}{0.9}
\begin{center}
\begin{tabular}{ccclrrrrrr}
	\hline \hline
	Size & Noise & & & \multicolumn{2}{c}{Z Part} & \multicolumn{ 4}{c}{X Part} \\
	\cmidrule(r){5-6} \cmidrule(l){7-10}
	$n$ & sig & $p$ & Method & corrZ & corrZ0 & corrL & corrN & corrLN & corrX0 \\
	\hline
	300   & 0.5   & 1000  & SMILE & 100   & 99.99960 & 100   & 100   & 100   & 99.99940 \\
          &       &       & SAPLM & 100   & 100   & 0     & 100   & 0     & 100 \\
          &       &       & SLM   & 98.6  & 99.99920 & 100   & 0     & 0     & 99.99850 \\
          &       & 2000  & SMILE & 100   & 99.99995 & 100   & 100   & 100   & 99.99985 \\
          &       &       & SAPLM & 100   & 100   & 0     & 100   & 0     & 100 \\
          &       &       & SLM   & 97.3  & 99.99950 & 100   & 0     & 0     & 99.99915 \\
          &       & 5000  & SMILE & 100   & 99.99996 & 100   & 100   & 100   & 100 \\
          &       &       & SAPLM & 100   & 100   & 0     & 100   & 0     & 100 \\
          &       &       & SLM   & 96.63333 & 99.99988 & 100   & 0     & 0     & 99.99974 \\
          & 1.0   & 1000  & SMILE & 100   & 99.99920 & 100   & 100   & 100   & 99.99990 \\
          &       &       & SAPLM & 100   & 99.99920 & 0     & 100   & 0     & 100 \\
          &       &       & SLM   & 96.56667 & 99.99799 & 100   & 0     & 0     & 99.99719 \\
          &       & 2000  & SMILE & 99.93333 & 99.99995 & 100   & 99.8  & 99.8  & 99.99975 \\
          &       &       & SAPLM & 100   & 99.99970 & 0     & 100   & 0     & 100 \\
          &       &       & SLM   & 95.7  & 99.99975 & 100   & 0     & 0     & 99.99905 \\
          &       & 5000  & SMILE & 99.86667 & 99.99996 & 100   & 99.5  & 99.5  & 99.99996 \\
          &       &       & SAPLM & 100   & 99.99990 & 0     & 100   & 0     & 100 \\
          &       &       & SLM   & 93.73333 & 99.99982 & 100   & 0     & 0     & 99.99978 \\
    500   & 0.5   & 1000  & SMILE & 100   & 99.99990 & 100   & 100   & 100   & 99.99980 \\
          &       &       & SAPLM & 100   & 100   & 0     & 100   & 0     & 100 \\
          &       &       & SLM   & 100   & 99.99990 & 100   & 0     & 0     & 99.99960 \\
          &       & 2000  & SMILE & 100   & 99.99995 & 100   & 100   & 100   & 100 \\
          &       &       & SAPLM & 100   & 100   & 0     & 100   & 0     & 100 \\
          &       &       & SLM   & 100   & 99.99985 & 100   & 0     & 0     & 99.99985 \\
          &       & 5000  & SMILE & 100   & 99.99996 & 100   & 100   & 100   & 100 \\
          &       &       & SAPLM & 100   & 100   & 0     & 100   & 0     & 100 \\
          &       &       & SLM   & 99.96667 & 99.99994 & 100   & 0     & 0     & 99.99994 \\
          & 1.0   & 1000  & SMILE & 100   & 99.99950 & 100   & 100   & 100   & 99.99970 \\
          &       &       & SAPLM & 100   & 100   & 0     & 100   & 0     & 100 \\
          &       &       & SLM   & 99.96667 & 99.99940 & 100   & 0     & 0     & 99.99930 \\
          &       & 2000  & SMILE & 100   & 99.99980 & 100   & 100   & 100   & 99.99990 \\
          &       &       & SAPLM & 100   & 99.99990 & 0     & 100   & 0     & 100 \\
          &       &       & SLM   & 99.93333 & 99.99990 & 100   & 0     & 0     & 99.99960 \\
          &       & 5000  & SMILE & 100   & 99.99994 & 100   & 100   & 100   & 100 \\
          &       &       & SAPLM & 100   & 100   & 0     & 100   & 0     & 100 \\
          &       &       & SLM   & 99.76667 & 100   & 100   & 0     & 0     & 99.99994 \\
	\hline \hline
\end{tabular}
\end{center}
\end{table}

\begin{table}[htbp]
\caption{Statistics (C-i)--(C-iv) comparing the SMILE, SAPLM and SLM.}
\label{TAB:SIM1-selection2}
\renewcommand{\arraystretch}{0.9}
\begin{center}
\begin{tabular}{ccclrrrr}
	\hline \hline
	Size & Noise & & & Z Part & \multicolumn{3}{c}{X Part} \\
	\cmidrule(r){5-5} \cmidrule(l){6-8}
	  $n$ & sig & $p$ & Method & Zto0 & LtoN & NtoL & Xto0 \\
	\hline
	300   & 0.5   & 1000  & SMILE & 0     & 0     & 0     & 0 \\
          &       &       & SAPLM & 0     & 100   & 0     & 0 \\
          &       &       & SLM   & 1.4   & 0     & 100   & 33.33333 \\
          &       & 2000  & SMILE & 0     & 0     & 0     & 0 \\
          &       &       & SAPLM & 0     & 100   & 0     & 0 \\
          &       &       & SLM   & 2.7   & 0     & 100   & 33.33333 \\
          &       & 5000  & SMILE & 0     & 0     & 0     & 0 \\
          &       &       & SAPLM & 0     & 100   & 0     & 0 \\
          &       &       & SLM   & 3.36667 & 0     & 100   & 33.33333 \\
          & 1.0   & 1000  & SMILE & 0     & 0     & 0     & 0 \\
          &       &       & SAPLM & 0     & 100   & 0     & 0 \\
          &       &       & SLM   & 3.43333 & 0     & 100   & 33.33333 \\
          &       & 2000  & SMILE & 0.06667 & 0     & 0     & 0.06667 \\
          &       &       & SAPLM & 0     & 100   & 0     & 0 \\
          &       &       & SLM   & 4.3   & 0     & 100   & 33.33333 \\
          &       & 5000  & SMILE & 0.13333 & 0     & 0     & 0.16667 \\
          &       &       & SAPLM & 0     & 100   & 0     & 0 \\
          &       &       & SLM   & 6.26667 & 0     & 100   & 33.33333 \\
    500   & 0.5   & 1000  & SMILE & 0     & 0     & 0     & 0 \\
          &       &       & SAPLM & 0     & 100   & 0     & 0 \\
          &       &       & SLM   & 0     & 0     & 100   & 33.33333 \\
          &       & 2000  & SMILE & 0     & 0     & 0     & 0 \\
          &       &       & SAPLM & 0     & 100   & 0     & 0 \\
          &       &       & SLM   & 0     & 0     & 100   & 33.33333 \\
          &       & 5000  & SMILE & 0     & 0     & 0     & 0 \\
          &       &       & SAPLM & 0     & 100   & 0     & 0 \\
          &       &       & SLM   & 0.03333 & 0     & 100   & 33.33333 \\
          & 1.0   & 1000  & SMILE & 0     & 0     & 0     & 0 \\
          &       &       & SAPLM & 0     & 100   & 0     & 0 \\
          &       &       & SLM   & 0.03333 & 0     & 100   & 33.33333 \\
          &       & 2000  & SMILE & 0     & 0     & 0     & 0 \\
          &       &       & SAPLM & 0     & 100   & 0     & 0 \\
          &       &       & SLM   & 0.06667 & 0     & 100   & 33.33333 \\
          &       & 5000  & SMILE & 0     & 0     & 0     & 0 \\
          &       &       & SAPLM & 0     & 100   & 0     & 0 \\
          &       &       & SLM   & 0.23333 & 0     & 100   & 33.33333 \\
	\hline \hline
\end{tabular}
\end{center}
\end{table}

The estimation and prediction results are displayed in Table \ref{TAB:SIM1-estimation}. Specifically, we present the MSEs for linear coefficients $\alpha_1$, $\alpha_2$, $\alpha_3$ and $\beta_1$ and AMSEs for functions $\phi_1$, $\phi_2$ and $\phi_3$ and the CV-MSPEs for predicting $Y$. The case with known active covariates (ORACLE) is also reported in each setting and serves as a gold standard. SMILE performs the best in predicting $Y$ and estimating the coefficients of covariates $\bs{Z}$, as indicated by CV-MSPE and MSEs for $\alpha_1$, $\alpha_2$ and $\alpha_3$ that are closest to ORACLE in most simulation settings, while SLM is much higher (around 2 $\sim$ 18 times higher). As for the linear structure in $\bs{X}$, as shown in MSE for $\beta_1$ and AMSE for $\phi_1$, the performance of SMILE is comparable to SAPLM and SLM, even though restricted to the selection bias; as the sample size $n$ increases, the performance of SMILE is perfect and matches with ORACLE. Note that the SAPLM estimator is incapable in estimating $\beta_1$ in this case. The estimation of nonlinear functions $\phi_2$ and $\phi_3$ is also good for SMILE, and matches with ORACLE as sample size $n$ increases. The inferior performance of SAPLM and the poor performance of SLM, in both estimation and prediction, illustrates the importance and necessity of identifying correct model structure.

\begin{table}[htbp]
\small
\caption{Estimation results comparing the ORACLE, SMILE, SAPLM and SLM.}
\label{TAB:SIM1-estimation}
\begin{center}
\renewcommand{\arraystretch}{0.725}
\begin{tabular}{ccclrrrrrrrr}
	\hline \hline
	& & & & \multicolumn{4}{c}{MSE $(\times 10^{-2})$} & \multicolumn{3}{c}{AMSE $(\times 10^{-2})$} & CV- \\
	\cmidrule(r){5-8} \cmidrule(l){9-11}
	$n$ & $\sigma$ & $p$ & Method & $\alpha_1$ & $\alpha_2$ & $\alpha_3$ & $\beta_1$ & $\phi_1$ & $\phi_2$ & $\phi_3$ & MSPE \\
	\hline
          300 &\!\!\! 0.5 &\!\!\! 1000  &\!\!\! ORACLE &\!\!\! 0.47  &\!\!\! 0.48  &\!\!\! 0.50  &\!\!\! 1.06  &\!\!\! 0.09  &\!\!\! 0.98  &\!\!\! 0.83  &\!\!\! 0.28 \\
          &\!\!\! &\!\!\! &\!\!\! SMILE &\!\!\! 0.47  &\!\!\! 0.48  &\!\!\! 0.50  &\!\!\! 1.06  &\!\!\! 0.11  &\!\!\! 0.94  &\!\!\! 0.77  &\!\!\! 0.28 \\
          &\!\!\! &\!\!\! &\!\!\! SAPLM &\!\!\! 0.49  &\!\!\! 0.48  &\!\!\! 0.54  &\!\!\! -   &\!\!\! 0.41  &\!\!\! 0.99  &\!\!\! 0.83  &\!\!\! 0.28 \\
          &\!\!\! &\!\!\! &\!\!\! SLM &\!\!\! 9.13  &\!\!\! 8.86  &\!\!\! 25.99 &\!\!\! 18.65 &\!\!\! 1.63  &\!\!\! 253.82 &\!\!\! 178.85 &\!\!\! 4.79 \\
          &\!\!\! &\!\!\! 2000  &\!\!\! ORACLE &\!\!\! 0.47  &\!\!\! 0.46  &\!\!\! 0.47  &\!\!\! 1.08  &\!\!\! 0.09  &\!\!\! 0.99  &\!\!\! 0.85  &\!\!\! 0.27 \\
          &\!\!\! &\!\!\! &\!\!\! SMILE &\!\!\! 0.47  &\!\!\! 0.45  &\!\!\! 0.47  &\!\!\! 1.08  &\!\!\! 0.19  &\!\!\! 0.95  &\!\!\! 0.79  &\!\!\! 0.27 \\
          &\!\!\! &\!\!\! &\!\!\! SAPLM &\!\!\! 0.49  &\!\!\! 0.47  &\!\!\! 0.53  &\!\!\! -   &\!\!\! 0.43  &\!\!\! 1.01  &\!\!\! 0.86  &\!\!\! 0.28 \\
          &\!\!\! &\!\!\! &\!\!\! SLM &\!\!\! 10.51 &\!\!\! 8.70  &\!\!\! 41.05 &\!\!\! 21.29 &\!\!\! 1.84  &\!\!\! 252.75 &\!\!\! 180.40 &\!\!\! 4.80 \\
          &\!\!\! &\!\!\! 5000  &\!\!\! ORACLE &\!\!\! 0.45  &\!\!\! 0.44  &\!\!\! 0.53  &\!\!\! 1.06  &\!\!\! 0.09  &\!\!\! 0.97  &\!\!\! 0.81  &\!\!\! 0.27 \\
          &\!\!\! &\!\!\! &\!\!\! SMILE &\!\!\! 0.45  &\!\!\! 0.44  &\!\!\! 0.53  &\!\!\! 1.06  &\!\!\! 0.16  &\!\!\! 0.94  &\!\!\! 0.75  &\!\!\! 0.27 \\
          &\!\!\! &\!\!\! &\!\!\! SAPLM &\!\!\! 0.47  &\!\!\! 0.47  &\!\!\! 0.57  &\!\!\! -   &\!\!\! 0.42  &\!\!\! 0.98  &\!\!\! 0.81  &\!\!\! 0.28 \\
          &\!\!\! &\!\!\! &\!\!\! SLM &\!\!\! 9.29  &\!\!\! 8.66  &\!\!\! 49.64 &\!\!\! 19.90 &\!\!\! 1.73  &\!\!\! 252.44 &\!\!\! 179.41 &\!\!\! 4.83 \\
          &\!\!\! 1.0 &\!\!\! 1000  &\!\!\! ORACLE &\!\!\! 1.94  &\!\!\! 1.98  &\!\!\! 1.82  &\!\!\! 4.48  &\!\!\! 0.37  &\!\!\! 2.97  &\!\!\! 2.63  &\!\!\! 1.08 \\
          &\!\!\! &\!\!\! &\!\!\! SMILE &\!\!\! 1.94  &\!\!\! 1.98  &\!\!\! 1.82  &\!\!\! 4.48  &\!\!\! 0.56  &\!\!\! 2.80  &\!\!\! 2.29  &\!\!\! 1.09 \\
          &\!\!\! &\!\!\! &\!\!\! SAPLM &\!\!\! 1.98  &\!\!\! 2.01  &\!\!\! 1.96  &\!\!\! -   &\!\!\! 1.44  &\!\!\! 2.98  &\!\!\! 2.53  &\!\!\! 1.09 \\
          &\!\!\! &\!\!\! &\!\!\! SLM &\!\!\! 11.33 &\!\!\! 10.55 &\!\!\! 51.08 &\!\!\! 22.51 &\!\!\! 1.95  &\!\!\! 253.34 &\!\!\! 180.31 &\!\!\! 5.59 \\
          &\!\!\! &\!\!\! 2000  &\!\!\! ORACLE &\!\!\! 1.90  &\!\!\! 1.77  &\!\!\! 1.82  &\!\!\! 4.16  &\!\!\! 0.35  &\!\!\! 3.04  &\!\!\! 2.57  &\!\!\! 1.08 \\
          &\!\!\! &\!\!\! &\!\!\! SMILE &\!\!\! 1.91  &\!\!\! 1.84  &\!\!\! 2.62  &\!\!\! 4.23  &\!\!\! 0.73  &\!\!\! 3.30  &\!\!\! 3.20  &\!\!\! 1.11 \\
          &\!\!\! &\!\!\! &\!\!\! SAPLM &\!\!\! 1.98  &\!\!\! 1.85  &\!\!\! 1.98  &\!\!\! -   &\!\!\! 1.40  &\!\!\! 3.07  &\!\!\! 2.49  &\!\!\! 1.09 \\
          &\!\!\! &\!\!\! &\!\!\! SLM &\!\!\! 11.04 &\!\!\! 10.19 &\!\!\! 60.67 &\!\!\! 23.02 &\!\!\! 1.99  &\!\!\! 252.71 &\!\!\! 179.98 &\!\!\! 5.61 \\
          &\!\!\! &\!\!\! 5000  &\!\!\! ORACLE &\!\!\! 1.71  &\!\!\! 1.89  &\!\!\! 1.93  &\!\!\! 4.03  &\!\!\! 0.33  &\!\!\! 2.93  &\!\!\! 2.54  &\!\!\! 1.08 \\
          &\!\!\! &\!\!\! &\!\!\! SMILE &\!\!\! 1.82  &\!\!\! 1.92  &\!\!\! 3.52  &\!\!\! 4.05  &\!\!\! 0.39  &\!\!\! 3.97  &\!\!\! 4.67  &\!\!\! 1.28 \\
          &\!\!\! &\!\!\! &\!\!\! SAPLM &\!\!\! 1.77  &\!\!\! 1.97  &\!\!\! 2.09  &\!\!\! -   &\!\!\! 1.43  &\!\!\! 2.96  &\!\!\! 2.44  &\!\!\! 1.25 \\
          &\!\!\! &\!\!\! &\!\!\! SLM &\!\!\! 16.78 &\!\!\! 10.55 &\!\!\! 80.30 &\!\!\! 23.28 &\!\!\! 2.01  &\!\!\! 252.41 &\!\!\! 180.60 &\!\!\! 5.72 \\
          500 &\!\!\! 0.5 &\!\!\! 1000  &\!\!\! ORACLE &\!\!\! 0.27  &\!\!\! 0.28  &\!\!\! 0.28  &\!\!\! 0.67  &\!\!\! 0.06  &\!\!\! 0.67  &\!\!\! 0.58  &\!\!\! 0.27 \\
          &\!\!\! &\!\!\! &\!\!\! SMILE &\!\!\! 0.27  &\!\!\! 0.28  &\!\!\! 0.28  &\!\!\! 0.67  &\!\!\! 0.07  &\!\!\! 0.65  &\!\!\! 0.55  &\!\!\! 0.27 \\
          &\!\!\! &\!\!\! &\!\!\! SAPLM &\!\!\! 0.29  &\!\!\! 0.29  &\!\!\! 0.31  &\!\!\! -   &\!\!\! 0.27  &\!\!\! 0.67  &\!\!\! 0.58  &\!\!\! 0.27 \\
          &\!\!\! &\!\!\! &\!\!\! SLM &\!\!\! 5.08  &\!\!\! 4.91  &\!\!\! 5.88  &\!\!\! 10.70 &\!\!\! 0.97  &\!\!\! 253.20 &\!\!\! 180.13 &\!\!\! 4.66 \\
          &\!\!\! &\!\!\! 2000  &\!\!\! ORACLE &\!\!\! 0.27  &\!\!\! 0.27  &\!\!\! 0.31  &\!\!\! 0.65  &\!\!\! 0.05  &\!\!\! 0.65  &\!\!\! 0.55  &\!\!\! 0.26 \\
          &\!\!\! &\!\!\! &\!\!\! SMILE &\!\!\! 0.27  &\!\!\! 0.27  &\!\!\! 0.31  &\!\!\! 0.65  &\!\!\! 0.06  &\!\!\! 0.63  &\!\!\! 0.52  &\!\!\! 0.26 \\
          &\!\!\! &\!\!\! &\!\!\! SAPLM &\!\!\! 0.28  &\!\!\! 0.28  &\!\!\! 0.34  &\!\!\! -   &\!\!\! 0.27  &\!\!\! 0.66  &\!\!\! 0.55  &\!\!\! 0.27 \\
          &\!\!\! &\!\!\! &\!\!\! SLM &\!\!\! 5.25  &\!\!\! 4.99  &\!\!\! 5.90  &\!\!\! 11.93 &\!\!\! 1.07  &\!\!\! 252.96 &\!\!\! 179.22 &\!\!\! 4.66 \\
          &\!\!\! &\!\!\! 5000  &\!\!\! ORACLE &\!\!\! 0.29  &\!\!\! 0.25  &\!\!\! 0.29  &\!\!\! 0.62  &\!\!\! 0.05  &\!\!\! 0.67  &\!\!\! 0.57  &\!\!\! 0.26 \\
          &\!\!\! &\!\!\! &\!\!\! SMILE &\!\!\! 0.29  &\!\!\! 0.25  &\!\!\! 0.29  &\!\!\! 0.62  &\!\!\! 0.17  &\!\!\! 0.64  &\!\!\! 0.54  &\!\!\! 0.26 \\
          &\!\!\! &\!\!\! &\!\!\! SAPLM &\!\!\! 0.30  &\!\!\! 0.26  &\!\!\! 0.32  &\!\!\! -   &\!\!\! 0.28  &\!\!\! 0.67  &\!\!\! 0.57  &\!\!\! 0.27 \\
          &\!\!\! &\!\!\! &\!\!\! SLM &\!\!\! 5.30  &\!\!\! 4.87  &\!\!\! 6.35  &\!\!\! 11.96 &\!\!\! 1.07  &\!\!\! 252.99 &\!\!\! 179.99 &\!\!\! 4.66 \\
          &\!\!\! 1.0 &\!\!\! 1000  &\!\!\! ORACLE &\!\!\! 1.18  &\!\!\! 1.08  &\!\!\! 1.09  &\!\!\! 2.43  &\!\!\! 0.20  &\!\!\! 1.90  &\!\!\! 1.62  &\!\!\! 1.05 \\
          &\!\!\! &\!\!\! &\!\!\! SMILE &\!\!\! 1.18  &\!\!\! 1.08  &\!\!\! 1.09  &\!\!\! 2.43  &\!\!\! 0.56  &\!\!\! 1.83  &\!\!\! 1.47  &\!\!\! 1.05 \\
          &\!\!\! &\!\!\! &\!\!\! SAPLM &\!\!\! 1.21  &\!\!\! 1.12  &\!\!\! 1.15  &\!\!\! -   &\!\!\! 0.87  &\!\!\! 1.92  &\!\!\! 1.60  &\!\!\! 1.06 \\
          &\!\!\! &\!\!\! &\!\!\! SLM &\!\!\! 6.45  &\!\!\! 5.26  &\!\!\! 7.42  &\!\!\! 12.11 &\!\!\! 1.09  &\!\!\! 253.05 &\!\!\! 180.33 &\!\!\! 5.41 \\
          &\!\!\! &\!\!\! 2000  &\!\!\! ORACLE &\!\!\! 1.12  &\!\!\! 1.02  &\!\!\! 1.12  &\!\!\! 2.45  &\!\!\! 0.20  &\!\!\! 1.94  &\!\!\! 1.66  &\!\!\! 1.04 \\
          &\!\!\! &\!\!\! &\!\!\! SMILE &\!\!\! 1.12  &\!\!\! 1.02  &\!\!\! 1.12  &\!\!\! 2.45  &\!\!\! 0.22  &\!\!\! 1.84  &\!\!\! 1.49  &\!\!\! 1.04 \\
          &\!\!\! &\!\!\! &\!\!\! SAPLM &\!\!\! 1.15  &\!\!\! 1.05  &\!\!\! 1.21  &\!\!\! -   &\!\!\! 0.85  &\!\!\! 1.94  &\!\!\! 1.63  &\!\!\! 1.05 \\
          &\!\!\! &\!\!\! &\!\!\! SLM &\!\!\! 6.12  &\!\!\! 5.99  &\!\!\! 7.62  &\!\!\! 13.76 &\!\!\! 1.22  &\!\!\! 252.81 &\!\!\! 180.10 &\!\!\! 5.43 \\
          &\!\!\! &\!\!\! 5000  &\!\!\! ORACLE &\!\!\! 1.12  &\!\!\! 1.05  &\!\!\! 1.16  &\!\!\! 2.46  &\!\!\! 0.20  &\!\!\! 1.96  &\!\!\! 1.67  &\!\!\! 1.05 \\
          &\!\!\! &\!\!\! &\!\!\! SMILE &\!\!\! 1.12  &\!\!\! 1.05  &\!\!\! 1.16  &\!\!\! 2.46  &\!\!\! 0.22  &\!\!\! 1.87  &\!\!\! 1.48  &\!\!\! 1.05 \\
          &\!\!\! &\!\!\! &\!\!\! SAPLM &\!\!\! 1.14  &\!\!\! 1.08  &\!\!\! 1.22  &\!\!\! -   &\!\!\! 0.87  &\!\!\! 1.97  &\!\!\! 1.64  &\!\!\! 1.06 \\
          &\!\!\! &\!\!\! &\!\!\! SLM &\!\!\! 6.16  &\!\!\! 5.64  &\!\!\! 9.37  &\!\!\! 12.28 &\!\!\! 1.10  &\!\!\! 252.69 &\!\!\! 180.26 &\!\!\! 5.43 \\
	\hline \hline
\end{tabular}
\end{center}
\end{table}

Next we investigate the coverage rates of the proposed SCB. For each replication, we test whether the true functions are covered by the SCB at the simulated values of the covariate in the interval $[-0.5+h,0.5-h]$, where $h$ is the bandwidth. Table \ref{TAB:SIM1-scb} shows the empirical coverage probabilities for a nominal 95\% confidence level out of 500 replications. For comparison, we also provide the SCBs from the SAPLM and ORACLE estimators. From Table \ref{TAB:SIM1-scb}, we observe that coverage probabilities for the SMILE, SAPLM and ORACLE SCBs all approach the nominal levels as $n$ increases, which provides positive confirmation of Theorem \ref{THM:nonlinear-normality}. In most cases, SMILE performs as well as or better than SAPLM, and arrives at about the nominal coverage when $n = 500$ and $\sigma=1.0$. Figure \ref{FIG:SIM-band} depicts the true function $\phi_{\ell}$, the corresponding SMILE $\widehat{\phi}_{\ell}^{\mathrm{SBLL}}$ and the $95\%$ SCB for $\phi_{\ell}$ based on $\widehat{\phi}_{\ell}^{\mathrm{SBLL}}$, for $\ell = 2, 3$, which are based on a typical run with $n = 500$, $p = 1000$ and $\sigma = 1.0$.

\begin{table}[htbp]
\caption{Coverage rates comparing the ORACLE, SMILE and SAPLM.} \vspace*{-.3cm}
\label{TAB:SIM1-scb}
\begin{center}
\renewcommand{\arraystretch}{0.75}
\begin{tabular}{ccccccccc}
	\hline \hline
	Size & Noise & & \multicolumn{3}{c}{$\phi_2$ Coverage (\%)} & \multicolumn{3}{c}{$\phi_3$ Coverage (\%)} \\
	\cmidrule(r){4-6} \cmidrule(l){7-9}
	$n$ & $\sigma$ & $p$ & ORACLE & SMILE & SAPLM & ORACLE & SMILE & SAPLM \\
	\hline
	300   & 0.5   & 1000  & 93.7  & 94.5  & 93.9  & 92.4  & 92.6  & 91.7 \\
          &       & 2000  & 92.6  & 93.3  & 92.6  & 92.3  & 93.8  & 92.5 \\
          &       & 5000  & 92.3  & 93.0    & 92.7  & 93.3  & 92.3  & 91.7 \\
          & 1     & 1000  & 96.0    & 95.6  & 94.7  & 96.1  & 96.4  & 95.3 \\
          &       & 2000  & 95.4  & 95.7  & 94.9  & 96.1  & 96.2  & 95.5 \\
          &       & 5000  & 95.1  & 95.6  & 94.2  & 95.9  & 96.4  & 94.8 \\
	500   & 0.5   & 1000  & 92.9  & 93.8  & 93.5  & 92.7  & 90.6  & 92.0 \\
          &       & 2000  & 92.5  & 92.7  & 92.3  & 92.0    & 92.0    & 92.3 \\
          &       & 5000  & 92.5  & 92.6  & 91.8  & 91.5  & 89.9  & 90.4 \\
          & 1     & 1000  & 97.1  & 96.7  & 96.3  & 96.0    & 96.0    & 95.2 \\
          &       & 2000  & 95.2  & 95.0    & 94.5  & 95.2  & 94.6  & 94.3 \\
          &       & 5000  & 94.7  & 95.1  & 95.0    & 96.2  & 96.0    & 95.5 \\
	\hline \hline
\end{tabular}
\end{center}
\end{table}

\begin{figure}[htbp]
\begin{center}
	\subfigure 
	{\includegraphics[scale=0.62]{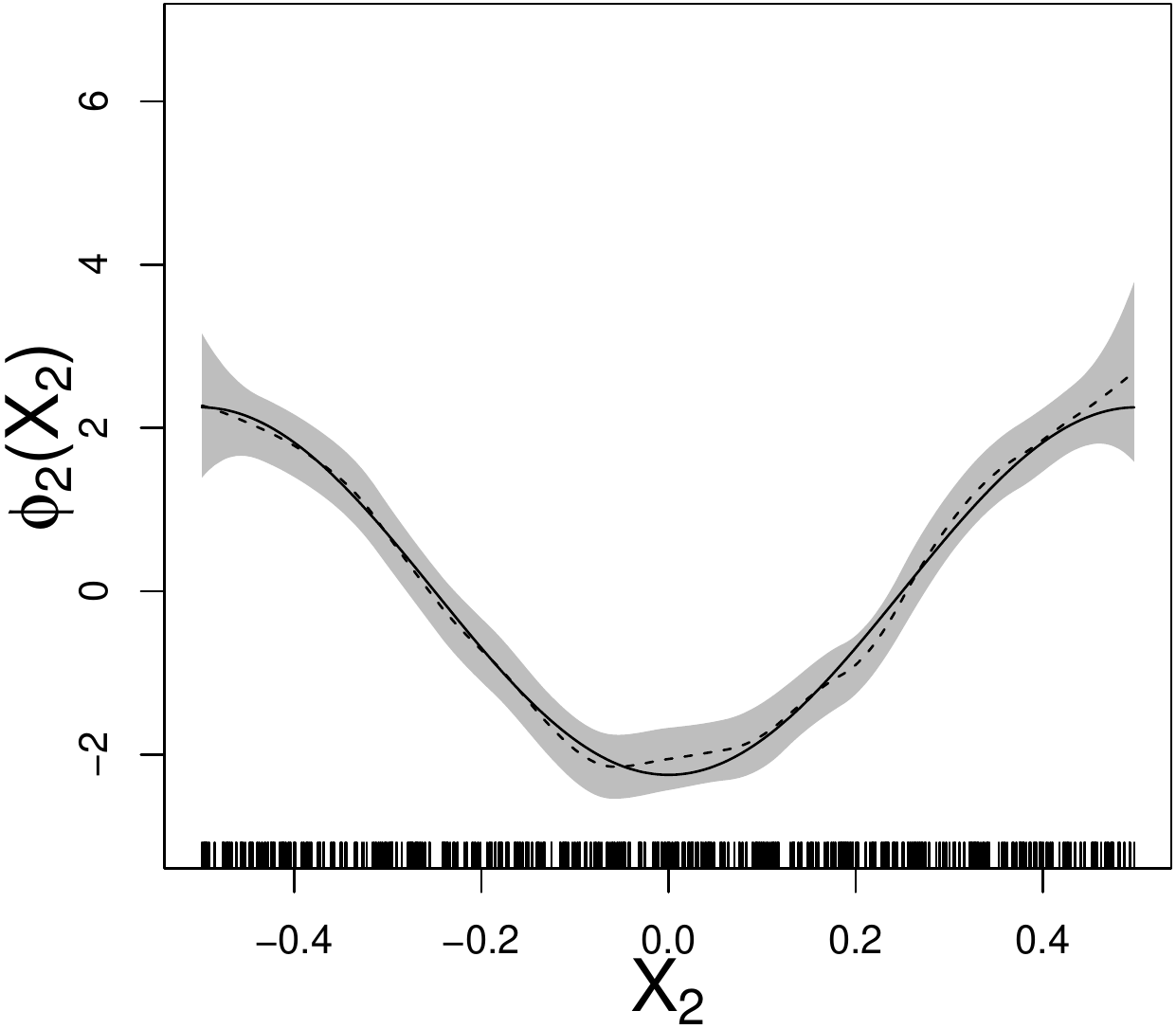}\label{FIG:cb_phi2}}
	\subfigure 
	{\includegraphics[scale=0.62]{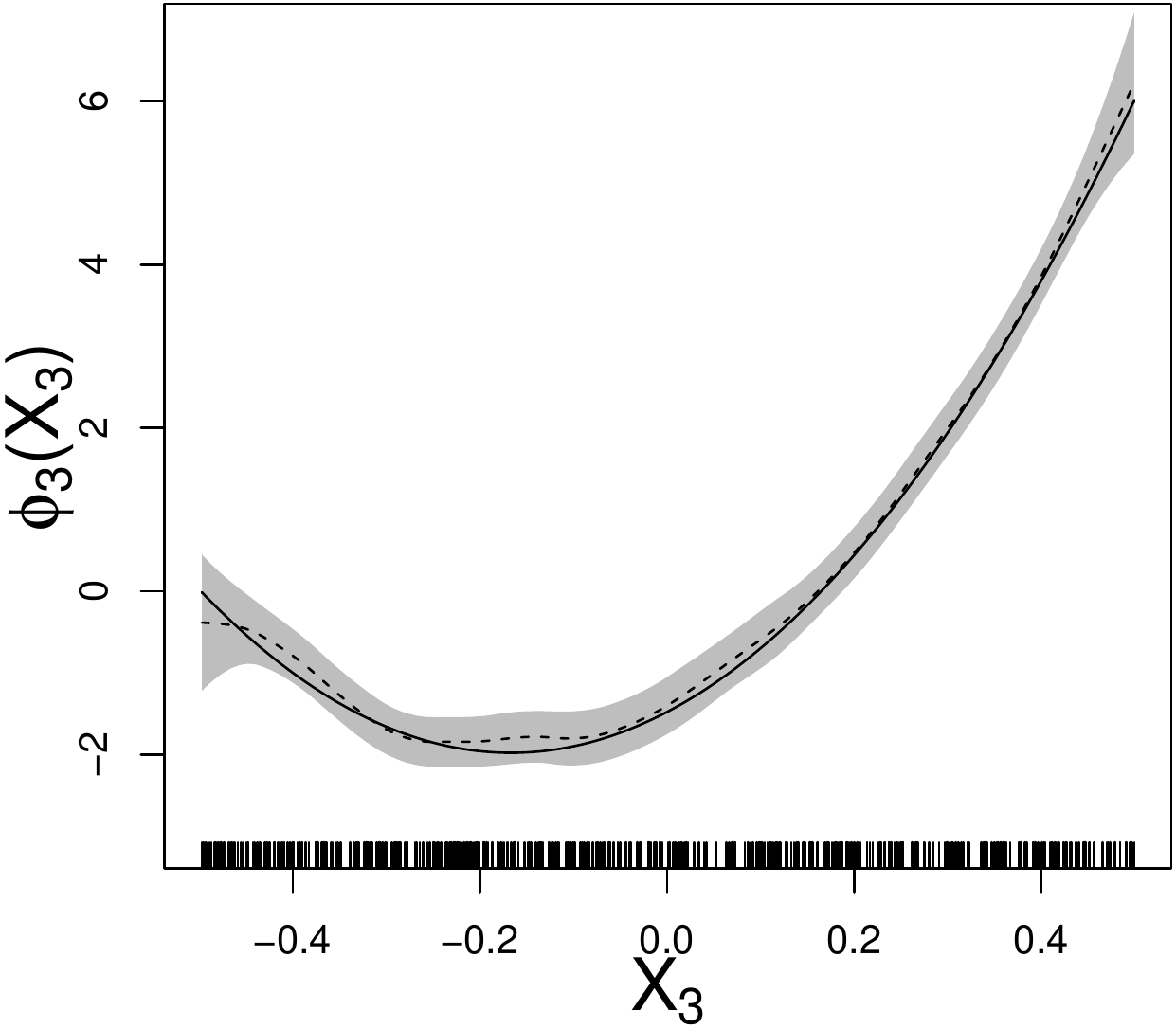}\label{FIG:cb_phi3}}
\end{center} \vspace*{-.6cm}
\caption{Plots of the SMILE (dashed curve) and the 95\% SCB (shaded area) of the nonparametric component $\phi_{\ell}(x_{\ell})$, $\ell = 2, 3$ (solid curve).}
\label{FIG:SIM-band}
\end{figure}

Appendices B--D contain the results of additional simulations which show that our proposed SMILE procedure performs well relative to competing methods under a wider range of conditions.

\setcounter{chapter}{5}
\renewcommand{\thetable}{{\arabic{table}}}
\renewcommand{\thefigure}{{\arabic{figure}}}
\renewcommand{\thesection}{\arabic{section}}
\renewcommand{\thesubsection}{5.\arabic{subsection}}
\setcounter{section}{4}

\section{Application} \label{sec:application}

We illustrate the application of our proposed method in the ultra-high-dimensional setting by using the SAM data generated by \citet[][]{leiboff2015genetic}.  The maize SAM is a small pool of stem cells located in the plant shoot that generate all the above-ground tissues of maize plants. \citet[][]{leiboff2015genetic} showed that SAM volume is correlated with a variety of agronomically important traits in adult plants.  The goal of our analysis is to model and predict SAM volume as a function of single nucleotide polymorphism (SNP) genotypes and messenger RNA transcript abundance levels using data from maize inbred lines.  Following the preprocessing steps described in Section B.5 in the Supplementary Materials in \cite{li2017ultra}, linear sure independent screening \citep{fan2008sure} for SNP genotypes, and nonlinear independent screening \citep{fan2011nonparametric} for RNA transcripts, the dataset we analyze c
onsists of log-scale SAM volume measurements, binary SNP genotypes at $p_1=5203$ markers, and log-scale measures of abundance for $p_2=1020$ transcripts for each of $n=368$ maize inbred lines.

\citet{li2017ultra} used the APLM  to model the relationship between the log SAM volume response and predictors determined by SNP genotypes and RNA transcript abundance levels. Because the SNP genotypes are binary, they naturally entered the linear part of the APLM, and for convenience all the RNA transcripts were included in the nonlinear part of the APLM in \cite{li2017ultra}. As discussed before, failing to account for exactly linear features makes the APLM less efficient statistically and computationally. In the following we apply our proposed SMILE method to distinguish among RNA transcripts entering the nonparametric and parametric parts of the APLM and to identify significant SNP genotypes and RNA transcripts simultaneously.

To compare the results of SMILE to the sparse APLM and the sparse linear regression model, we also analyze the data using the SAPLM and SLM estimators presented in \cite{li2017ultra}. Parallel to the settings in Section \ref{sec:simulation}, we use constant B-splines with four quantile knots for model structure identification, and use cubic B-splines with one quantile knot for nonlinear function approximation. We use the iterative algorithm proposed in Section \ref{subsec:algorithm} for penalty parameter selection and estimation.

As shown in Table \ref{TAB: RNA_sel}, SMILE identified 169 SNPs, 10 RNA transcripts linearly associated with log SAM size and 2 RNA transcripts that have nonlinear association with log SAM size. In contrast, SAPLM selected 177 SNPs and 3 RNA transcripts, and SLM selected 167 SNPs and 32 RNA transcripts.  To evaluate the predictive performance of the two methods, we computed 10-fold cross-validation mean squared prediction error (CV-MSPE) for each method. The SMILE-estimated nonlinear function for the selected nonlinear RNA transcript  is plotted, along with 95\% SCBs, in Figure \ref{FIG:SAM-band}.

\begin{table}[htbp]
\caption{Selected SNPs and Transcripts by SMILE, SAPLM and SLM.}
\label{TAB: RNA_sel}
\begin{center}
\begin{tabular}{lccc}
	\hline\hline
	RNA Transcripts Selected& SMILE & SAPLM & SLM \\
	\hline
	$X_{725}$ & \checkmark & \checkmark & \checkmark \\
	$X_{127}$, $X_{136}$, $X_{141}$, $X_{208}$, $X_{289}$, $X_{312}$, $X_{493}$,$X_{749}$,$X_{855}$& \checkmark & & \checkmark \\
	$X_{153}^{\ast}$, $X_{677}^{\ast}$ & \checkmark & & \\
	$X_{157}$,$X_{701}$ & & \checkmark & \\
	$X_{209}$,$X_{314}$, $X_{320}$, $X_{321}$, $X_{342}$, $X_{419}$,$X_{472}$,$X_{489}$,$X_{553}$, & & & \checkmark \\
	$X_{589}$,$X_{601}$,$X_{615}$, $X_{783}$,$X_{785}$,$X_{793}$,$X_{846}$,$X_{863}$, $X_{940}$,& & & \checkmark \\
	$X_{946}$,$X_{978}$,$X_{1002}$,$X_{1018}$& & & \checkmark \\
	\hline
	Number of SNP Genotypes & 169 & 177 & 167 \\
	Number of Linear RNA Transcripts & 10 & 0 & 32 \\
	Number of Functional RNA Transcripts & 2 & 3 & 0 \\
	\hline
	CV MSPE& 0.060 & 0.102 & 0.132 \\
	CV Mean Number of SNPs & 153.9 & 175.9 & 83.1 \\
	CV Mean Number of Linear Transcripts & 8.7 & 0 & 17.7 \\
	CV Mean Number of Nonlinear Transcripts & 1.9 & 3.8 & 0 \\
	\hline\hline
	\multicolumn{4}{l}{$\ast$ nonlinear association identified by SMILE for $X_{153}$ and $X_{677}$}
\end{tabular}
\end{center}
\end{table}

\begin{figure}[htbp]
\begin{center}
	\subfigure
	{\includegraphics[scale=0.58]{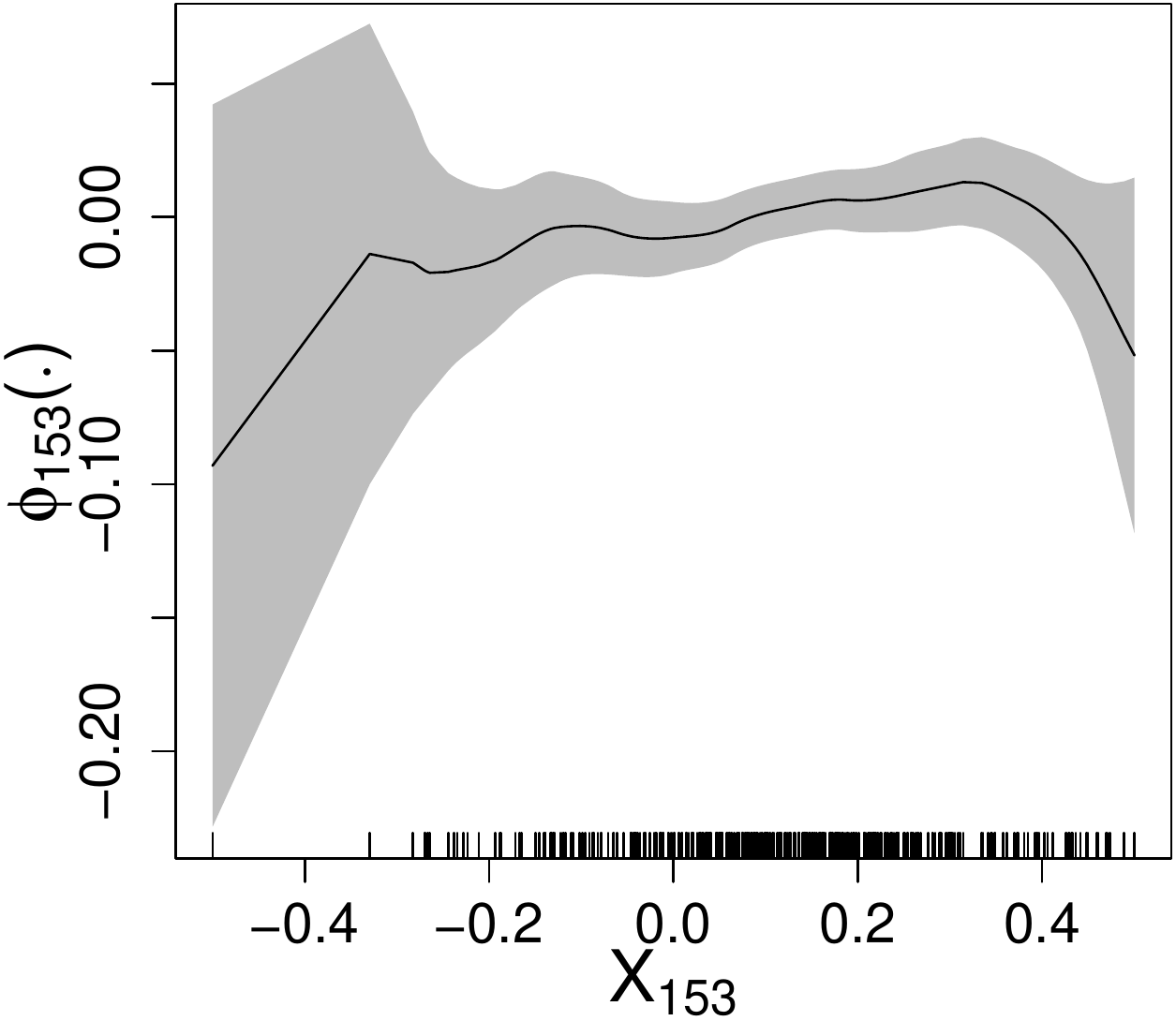}\label{X153}
	\includegraphics[scale=0.58]{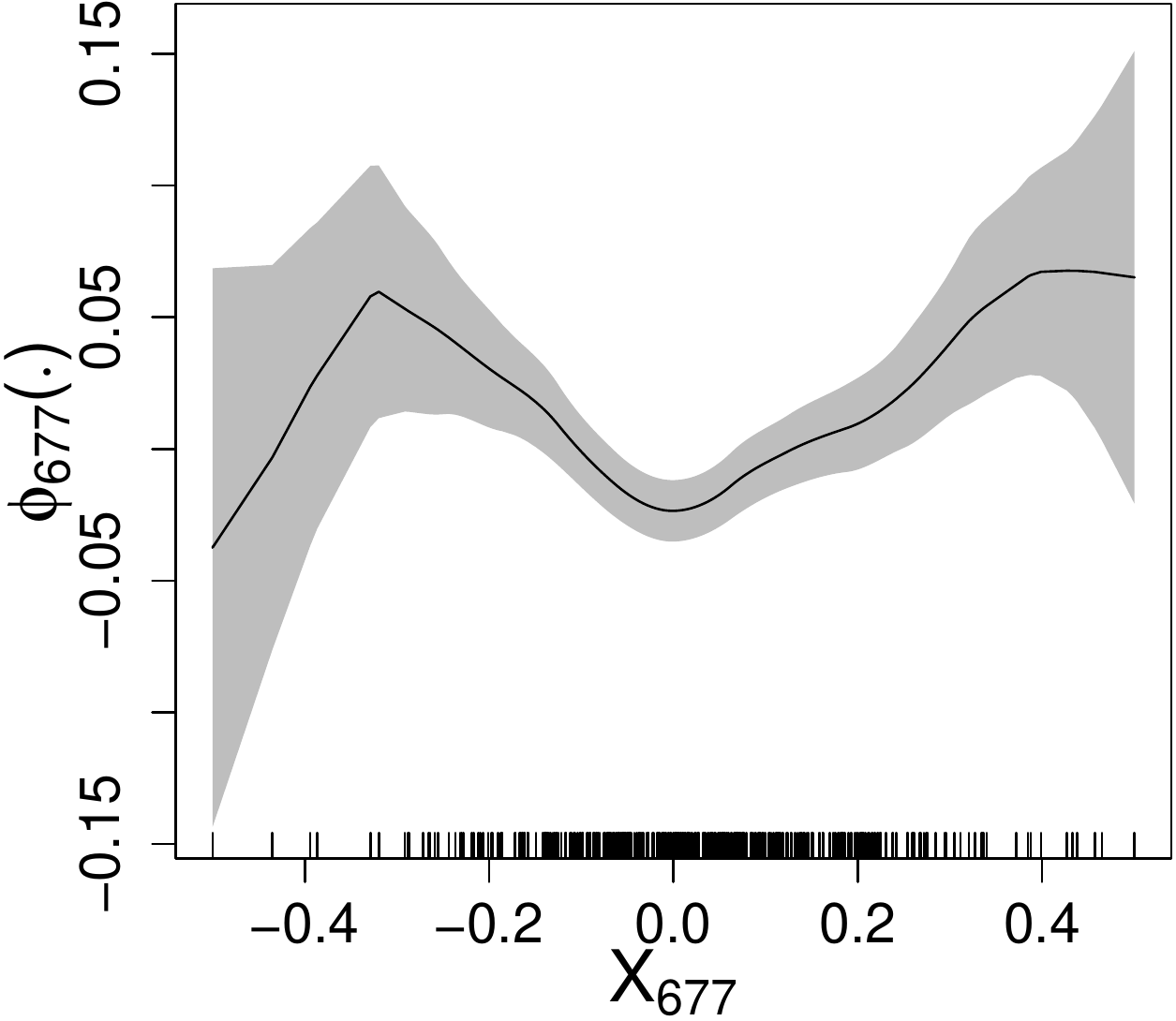}\label{X677}
	}
\end{center} \vspace*{-.8cm}
\caption{Plot of the SMILE (solid curve) and the 95\% confidence band (shaded area) for the selected RNA transcript.}
\label{FIG:SAM-band}
\end{figure}

\setcounter{chapter}{6}
\renewcommand{\thesection}{\arabic{section}}
\setcounter{section}{5}

\section{Discussion} \label{sec:conclusion}

This paper focuses on the simultaneous sparse model identification and learning for ultra-high-dimensional APLMs which strikes a delicate balance between the simplicity of the standard linear regression models and the flexibility of the additive regression models. We proposed a two-stage penalization method, called SMILE, which can efficiently select nonzero components and identify the linear-and-nonlinear structure in the functional terms, as well as simultaneously estimate and make inference for both linear coefficients and nonlinear functions. First, we have devised a groupwise penalization method in the APLM for simultaneous variable selection and structure identification. After identifying important covariates and the functional forms for the selected covariates, we have further constructed SCBs for the nonzero nonparametric functions based on refined spline-backfitted local-linear estimators. Our simulation studies and applications demonstrate the proposed SMILE procedure can be more efficient than penalized linear regression and the penalized APLM without model identification, and can improve predictions.

Our work differs from previous works in practical, theoretical and computational aspects: (i) We perform variable selection and model structure identification simultaneously, for both the linear components in $\bs{Z}$, and the linear and nonlinear forms for the components of $\bs{X}$. In contrast, existing works either performs only model structure identification or performs variable selection only for components in $\bs{X}$.
(ii) Besides the consistency of model structure identification, we also provide inference tools for both the regression coefficients and the component functions.
(iii) Compared to the local quadratic approximation approach used in \cite{lian2015separation}, which cannot  provide exactly zero solutions and is inefficient for fitting large regression problems, our proposed iterative group coordinate descent algorithm takes advantage of sparsity in computation and is able to deal with the triple penalization problem very efficiently. (See \cite{breheny2015group} for a detailed comparison of these two algorithms.) Our algorithm is easy to implement and can provide analysis results for large data sets with thousands of dimensions within seconds.

Our work deals with independent observations but can be extended to longitudinal data settings through marginal models or mixed-effects models. In addition, although we consider continuous response variables in our work, or approach can be readily extended to generalized additive partially linear models, to deal with different types of responses. Currently, the APLM assumes that the effects of all covariates are additive, which may overlook the potential interaction between covariates. Our method can be extended to models that can accommodate interactions between covariates, for example, APLMs with interaction terms. We leave such extensions to future work. Another limitation of our work is a reliance on the assumption of constant error variance. However, heteroscedasticity may be encountered in the analysis of genomic data sets. It is of interest to develop a new methodology that allows non-constant error variance for high-dimensional estimation and model selection, and this is another challenge we leave for future work.

\section*{Acknowledgment}

This work was supported by the Iowa State University Plant Sciences Institute Scholars Program. In addition, Wang's research was supported by NSF grant DMS-1542332, and Nettleton's research was supported by NSF grant IOS-1238142. We sincerely thank the Editor, the Associate Editor and the anonymous reviewers for their insightful comments that have lead to significant improvements on the paper.

\setcounter{chapter}{8}
\renewcommand{\theequation}{A.\arabic{equation}}
\renewcommand{\thesection}{A.\arabic{section}}
\renewcommand{\thesubsection}{A.\arabic{subsection}}
\renewcommand{\thetheorem}{A.\arabic{theorem}}
\renewcommand{\thelemma}{A.\arabic{lemma}}
\renewcommand{\theproposition}{A.\arabic{proposition}}
\renewcommand{\thecorollary}{A.\arabic{corollary}}
\renewcommand{\thefigure}{A.\arabic{figure}}
\renewcommand{\thetable}{A.\arabic{table}}
\setcounter{equation}{0}
\setcounter{theorem}{0}
\setcounter{lemma}{0}
\setcounter{figure}{0}
\setcounter{table}{0}
\setcounter{proposition}{0}
\setcounter{section}{0}
\setcounter{subsection}{0}
\markboth{}{}

\vskip .4in \noindent \textbf{\Large Appendices}

\section*{A. Effect of Smoothing Parameters on Performance of SMILE}

To implement the proposed SMILE procedure, one needs to select the knots for a spline at the selection stage and refitting stage, and the bandwidth for a kernel at the backfitting stage. In this section, we study how these smoothing parameters affect the proposed SMILE method and evaluate the practical performance in the finite-sample simulation studies described in Section 4.2 of the main paper. In the literature of polynomial spline smoothing, the knots for a spline are generally put on a grid of equally spaced sample quantiles \citep{ruppert2002selecting}. Therefore, we only need to investigate the effect of the number of knots on the performance of SMILE.

At the first stage (model selection), we use piecewise constant splines with the number of interior knots $N=2,3,\ldots,8$ in the simulation. Figure \ref{FIG:knots} shows the effect of $N$ on the accuracy of model selection based on the criteria defined in the main paper: (B-i)--(B-vi) and (C-i)--(C-iv). From Figure \ref{FIG:knots}, it appears that the value $N$ has little effect on the selection results. For all combinations of $n$, $p$ and $\sigma$, no matter which $N$ is used, the ``corrZ0", ``corrL", ``corrX0" are all $100\%$, and the ``LtoN" and ``Nto0" are all $0\%$. The values of ``corrZ", ``corrN", ``corrLN" and ``Zto0" and ``Xto0" are not exactly the same when using different values of $N$, but they are almost constant for $N=2,3,\ldots,8$. Especially when the sample size $n=500$, the proposed SMILE is able to identify the true model structure regardless of $p=1000,2000$ or $5000$.  When $n=300$ and $p=5000$, the selection results become slightly worse when we increase to $N\geq 6$.

\begin{figure}[htbp]
\begin{center}
\begin{tabular}{ccc}
	corrZ & corrZ0 & corrL  \\
	\includegraphics[scale=0.39]{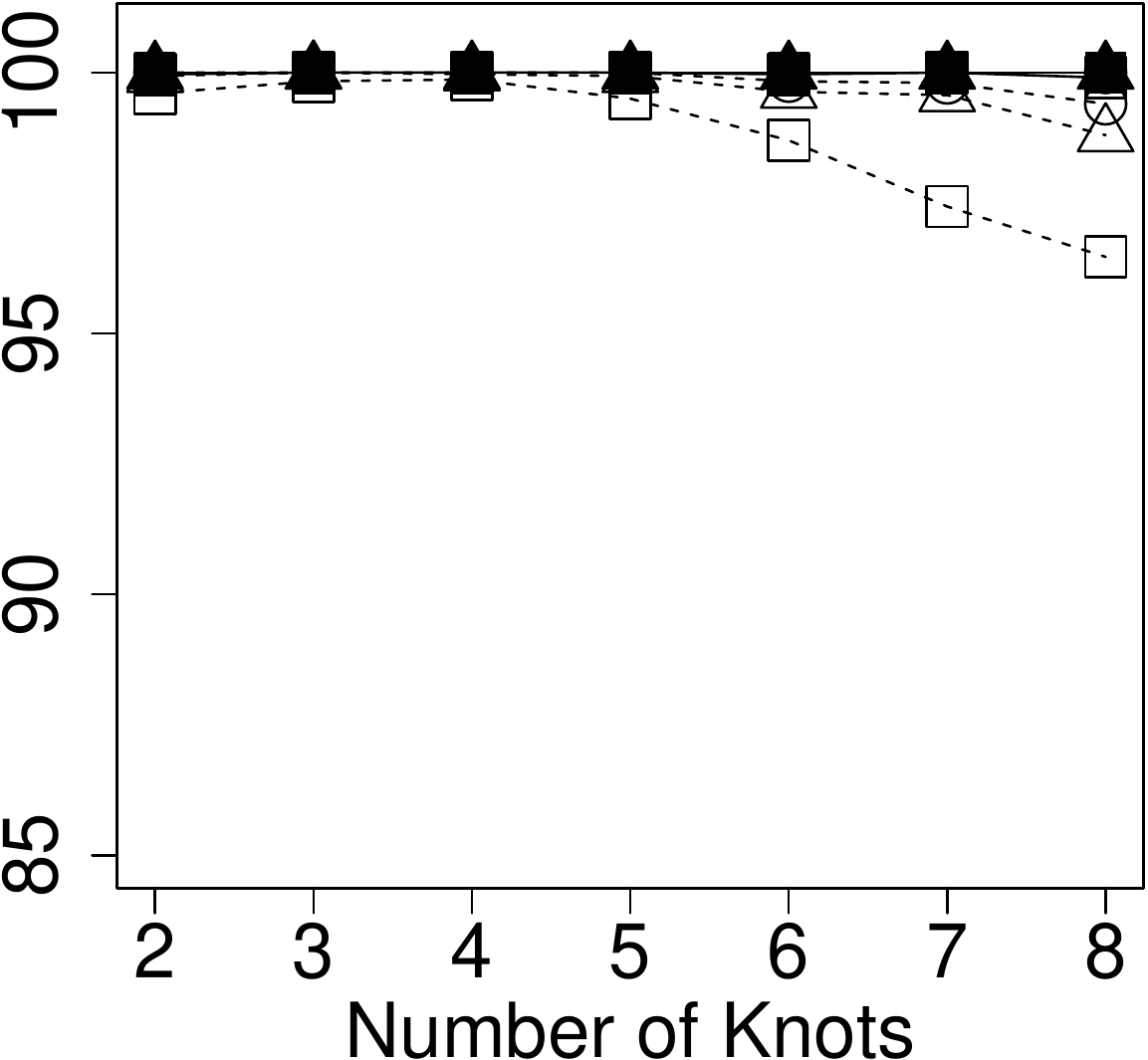}  &
	\includegraphics[scale=0.39]{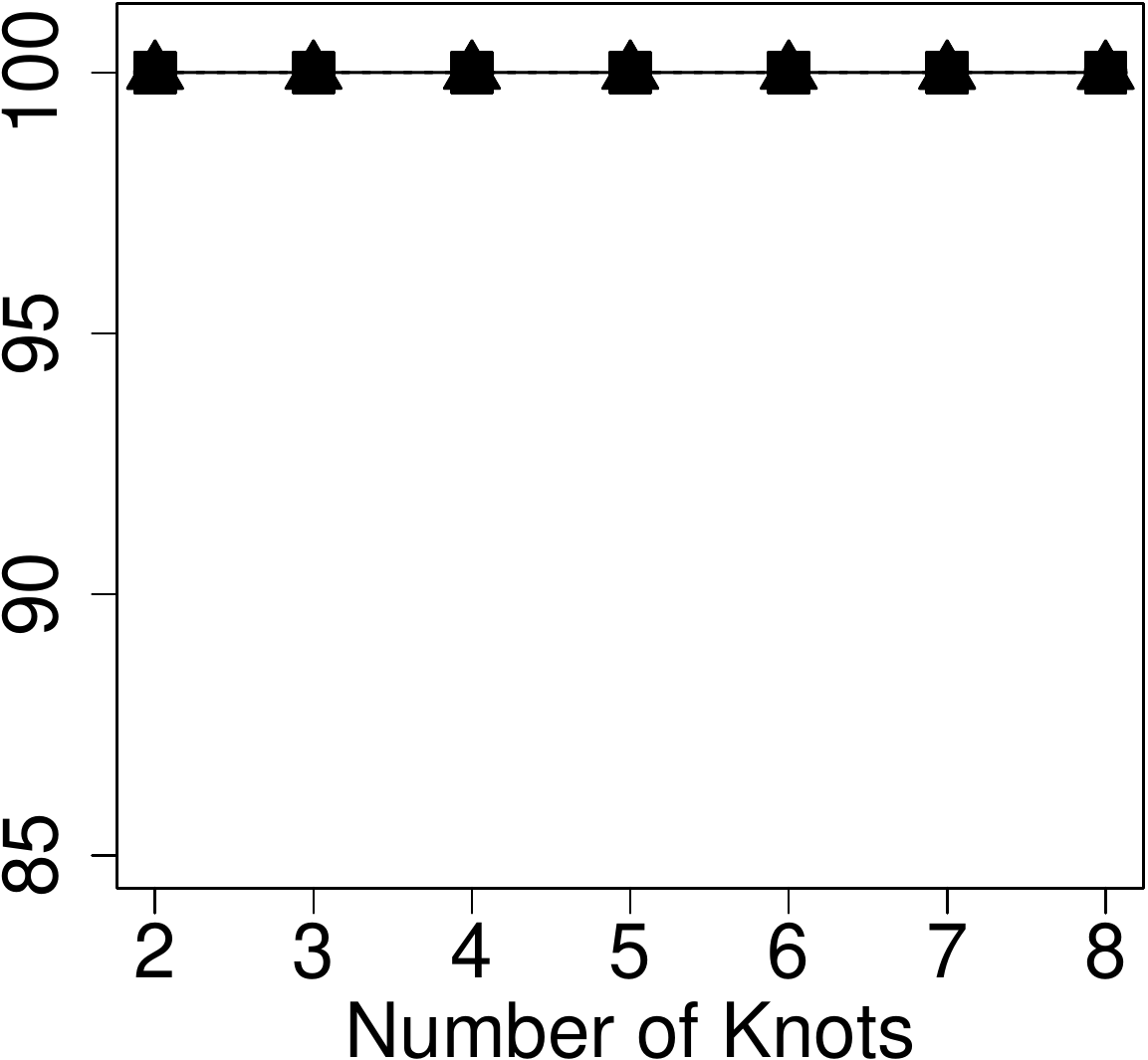} &
	\includegraphics[scale=0.39]{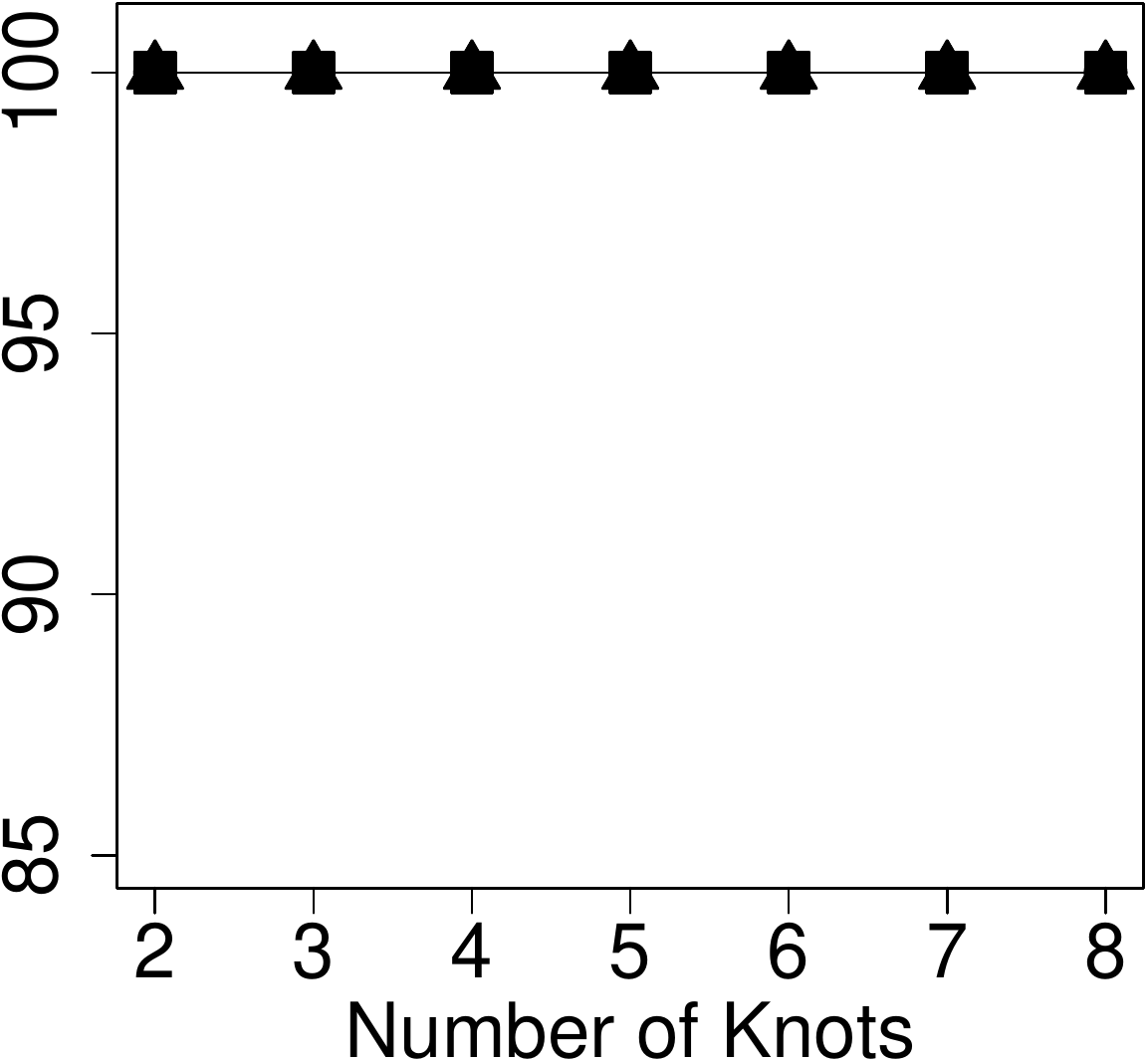} \\[6pt]
	corrN & corrLN & corrX0  \\
	\includegraphics[scale=0.39]{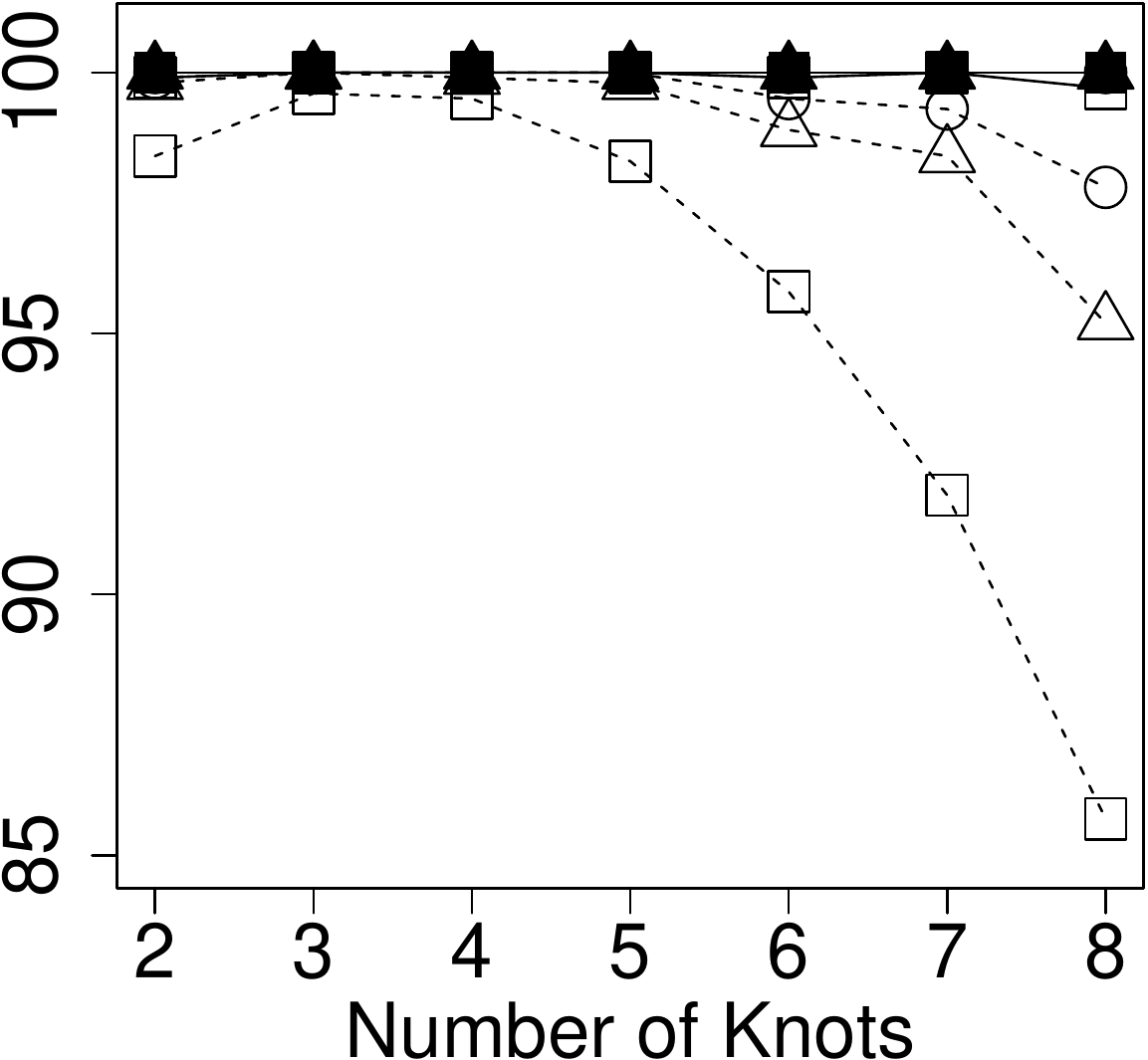}  &
	\includegraphics[scale=0.39]{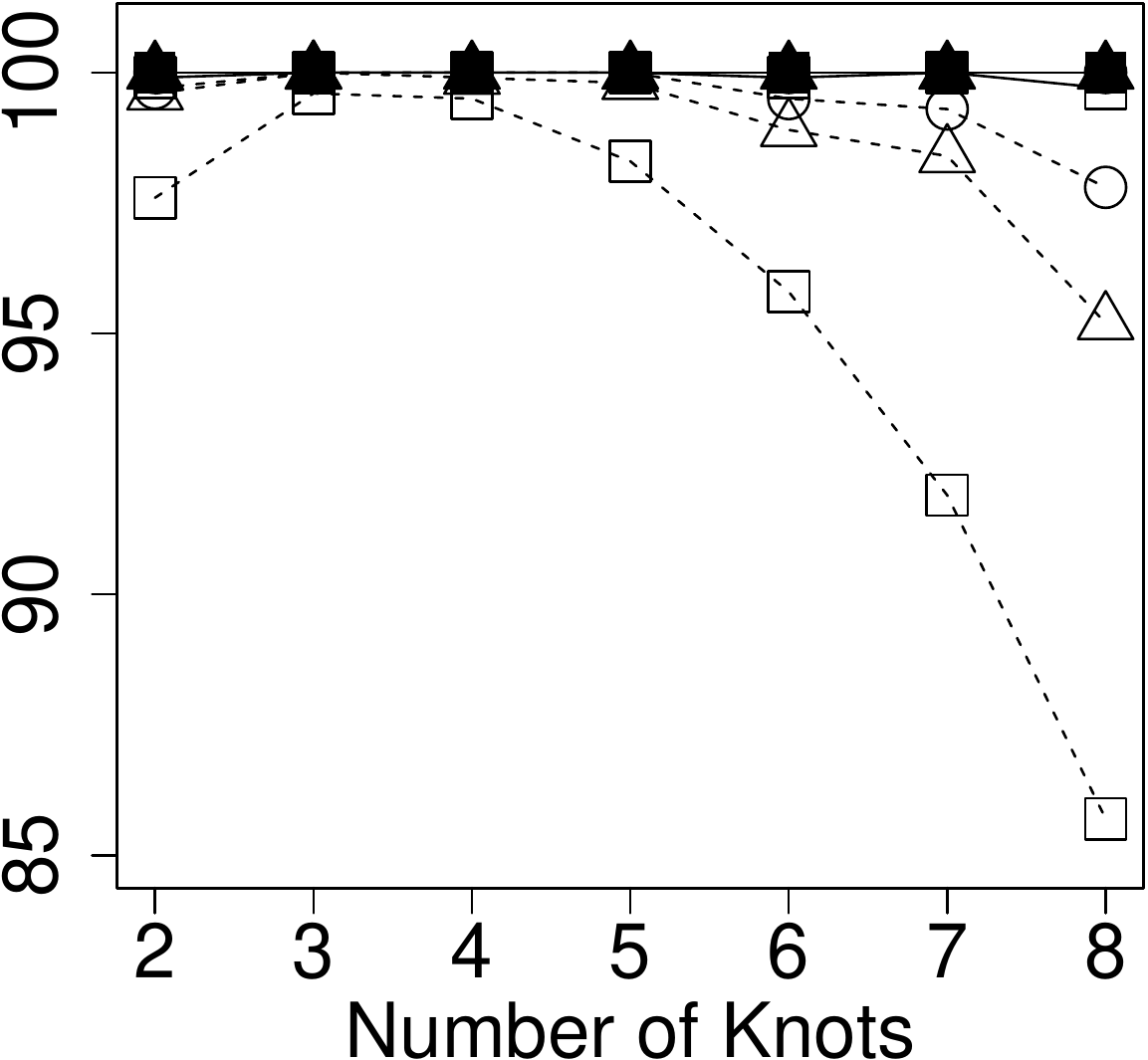} &
	\includegraphics[scale=0.39]{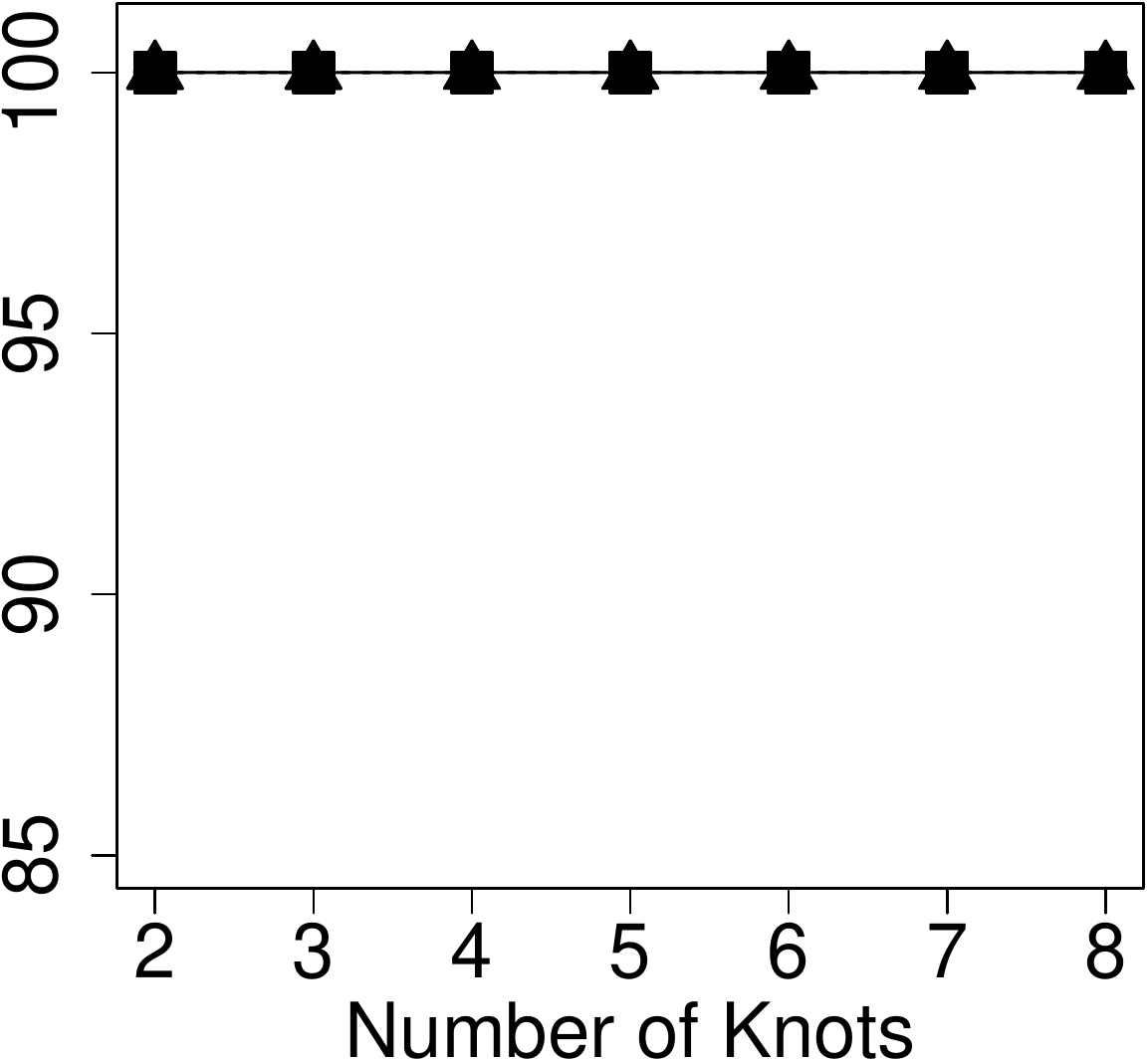} \\[6pt]
	Zto0 & LtoN & Nto0  \\	
	\includegraphics[scale=0.39]{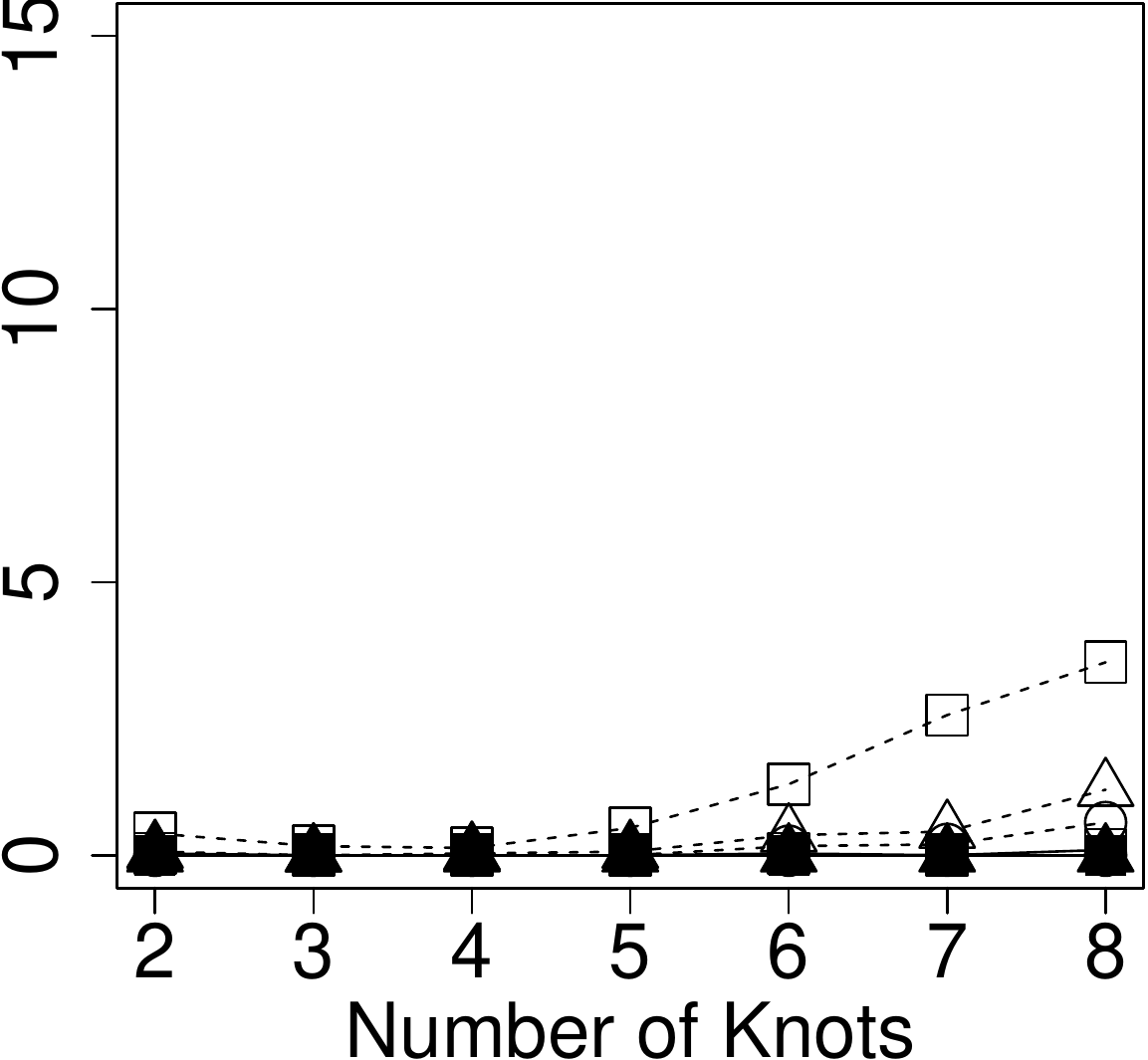}  &
	\includegraphics[scale=0.39]{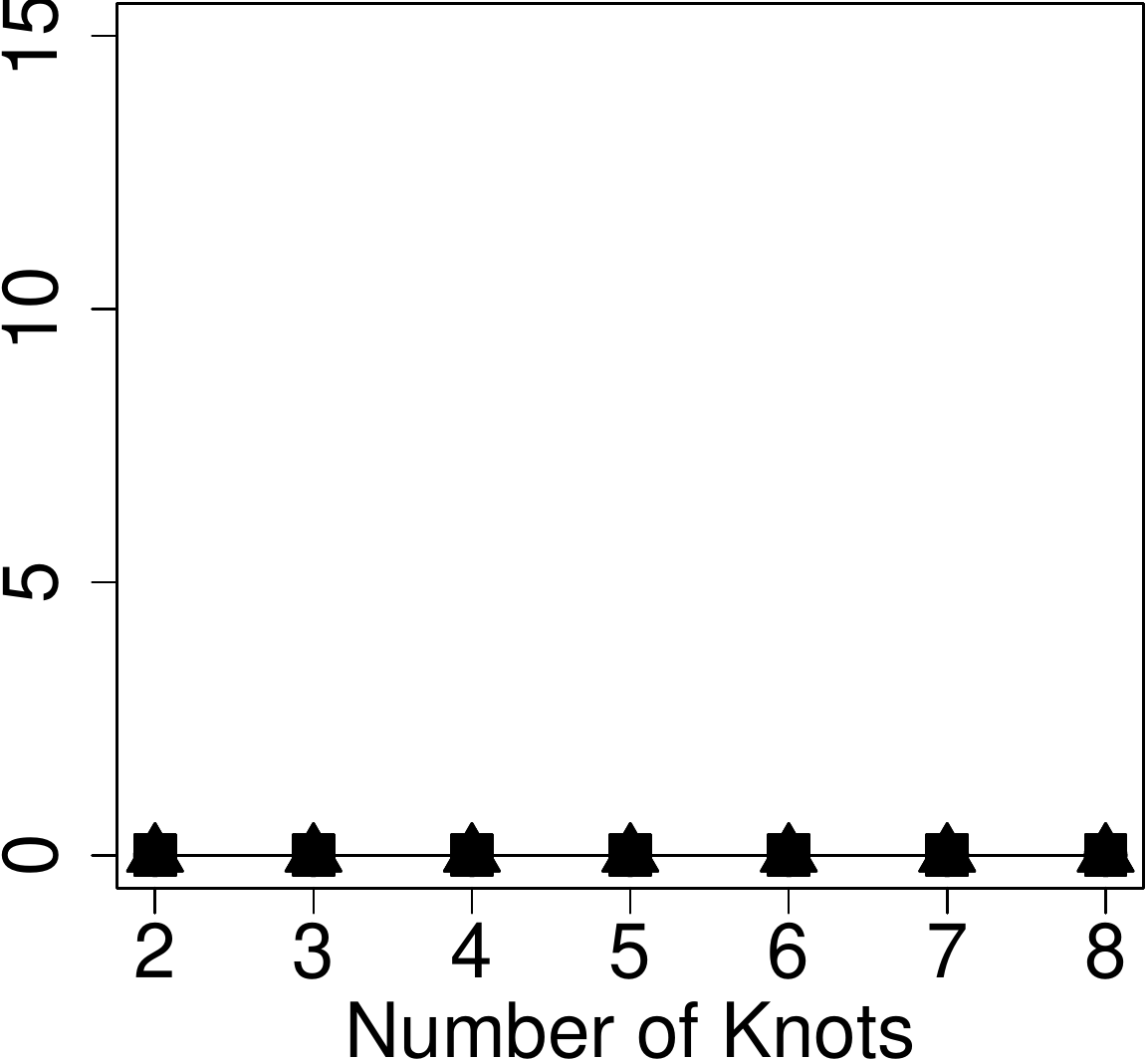}  &
	\includegraphics[scale=0.39]{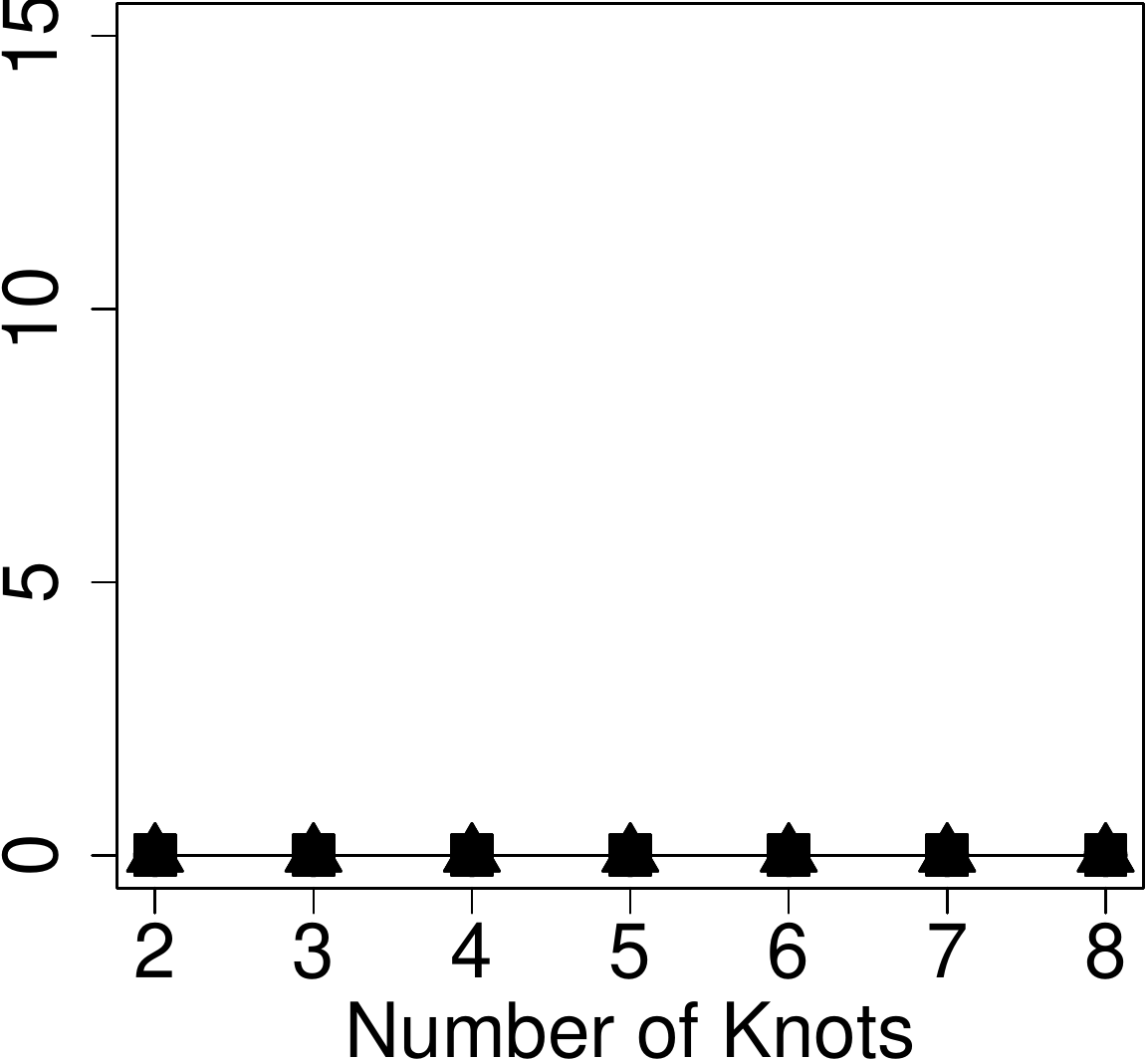} \\[6pt]
	Xto0 & Legend &   \\
	\includegraphics[scale=0.39]{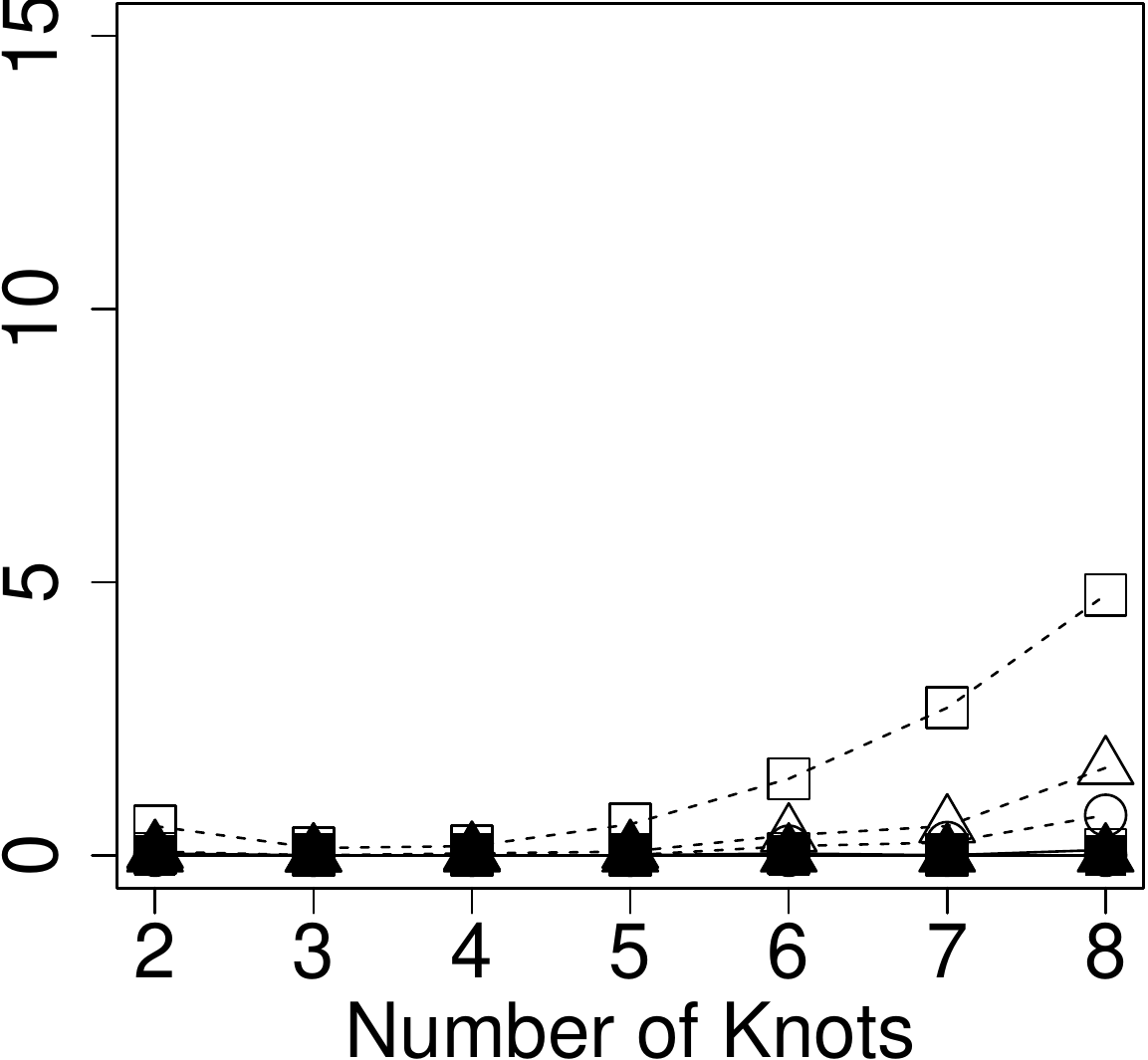} &
	\includegraphics[scale=0.39,trim=-.3in -0.55in 0 0]{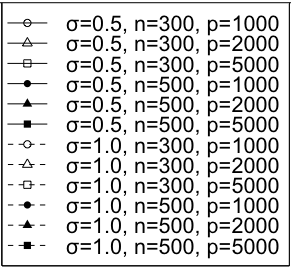} \\[6pt]
\end{tabular}
\caption{First stage selection results using different number of knots.}
\label{FIG:knots}
\end{center}
\end{figure}

In summary, the values of $N$ often have little effect on the model selection results. Choosing small values of $N$ can also help to reduce computational burden. So we recommend using fewer knots at the model selection stage, especially when the sample size is small compared to the number of predictors. In practice, $N=2\sim 5$ usually would be adequate to identify the model structure.

Next, we study the effect of the smoothing parameters at the refitting stage. For the selected model, we approximate the nonlinear functional components using higher order polynomial splines to obtain more accurate pilot estimators. Then we apply spline backfitted local-linear smoothing to obtain the final SBLL estimators and the corresponding SCBs. According to Assumption (A6$^\prime$), to obtain the SCB with the desired confidence level, the number of interior knots $M_n$ for a refitting spline needs to satisfy: $\{n^{1/(2d)} \vee n^{4/(10d-5)}\} \ll M_n \ll n^{1/3}$, where $d$ is the degree of the polynomial spline basis functions used in the refitting. The widely used quadratic/ cubic splines and any polynomial splines of degree $d\geq 2$ all satisfy this condition. Therefore, in practice we suggest choosing
\begin{equation*}
	M_n=\min\{\lfloor n^{1/(2d) \vee 4/(10d-5)}\ln(n)\rfloor, \lfloor n/(4s)\rfloor\}+1,
\end{equation*}
where $s$ is the number of nonlinear components selected at the first stage and the term $\lfloor n/(4s)\rfloor$ is to guarantee that we have at least four observations in each subinterval between two adjacent knots to avoid getting (near) singular design matrices in the spline smoothing.
A researcher with some knowledge of the shape of the nonlinear component may be able to select a more suitable number of knots. In our simulation studies, we try $4$, $6$ and $8$ interior knots to test the sensitivity of the SBLL estimators and the corresponding SCBs.

For the local-linear smoothing in the backfitting, Condition (B2) requires that the bandwidths are of order $n^{-1/5}$. Any bandwidths with this rate lead to the same limiting distribution for $\widehat{\phi}_{\ell}^{\mathrm{SBLL}}$, so the user can consider any standard routine for bandwidth selection. There have been many proposals for bandwidth selection in the literature. In our simulation, we consider three popular bandwidth selectors described in \cite{Fan:Gijbels:96} and \cite{Wand:Jones:95}: rule-of-thumb bandwidth (``thumbBw''), plug-in bandwidth selector (``pluginBw'') and leave-one-out cross-validation bandwidth selector (``regCVBwSelC''). Below we present simulation results to compare the performance of three bandwidth selectors. The kernel that we use here is the Epanechnikov kernel: $K(u)=3/4(1-u^{2})I(|u|\leq 1)$.

To see how the refitting smoothing parameters affect estimation accuracy, we report the average mean square errors (AMSEs) of the SBLL estimators based on $4$, $6$ and $8$ interior knots in the spline refitting and three different bandwidth selectors in the kernel backfitting. Figure \ref{FIG:AMSE-knots-BW} presents the AMSEs of the resulting SBLL estimators based on different combinations of the refitting smoothing parameters. For both $\phi_1$ and $\phi_2$, the AMSEs are very similar across the different combinations of knots and bandwidth selectors.

\begin{figure}[htbp]
\begin{center}
\begin{tabular}{cc}
	Refitting $\phi_2$ with $4$ interior knots & Refitting $\phi_3$ with $4$ interior knots \\
	\includegraphics[scale=0.4]{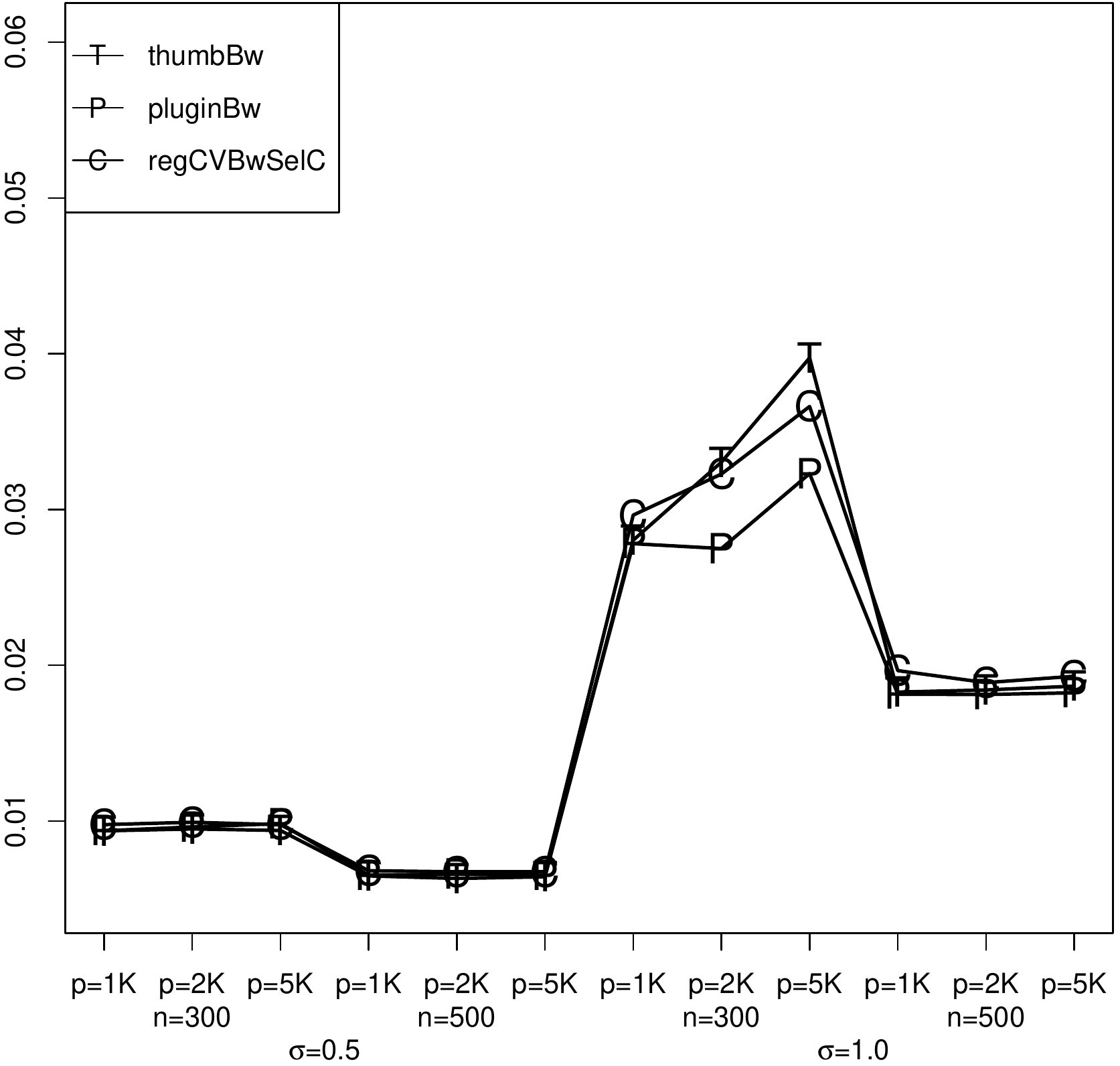}  &
	\includegraphics[scale=0.4]{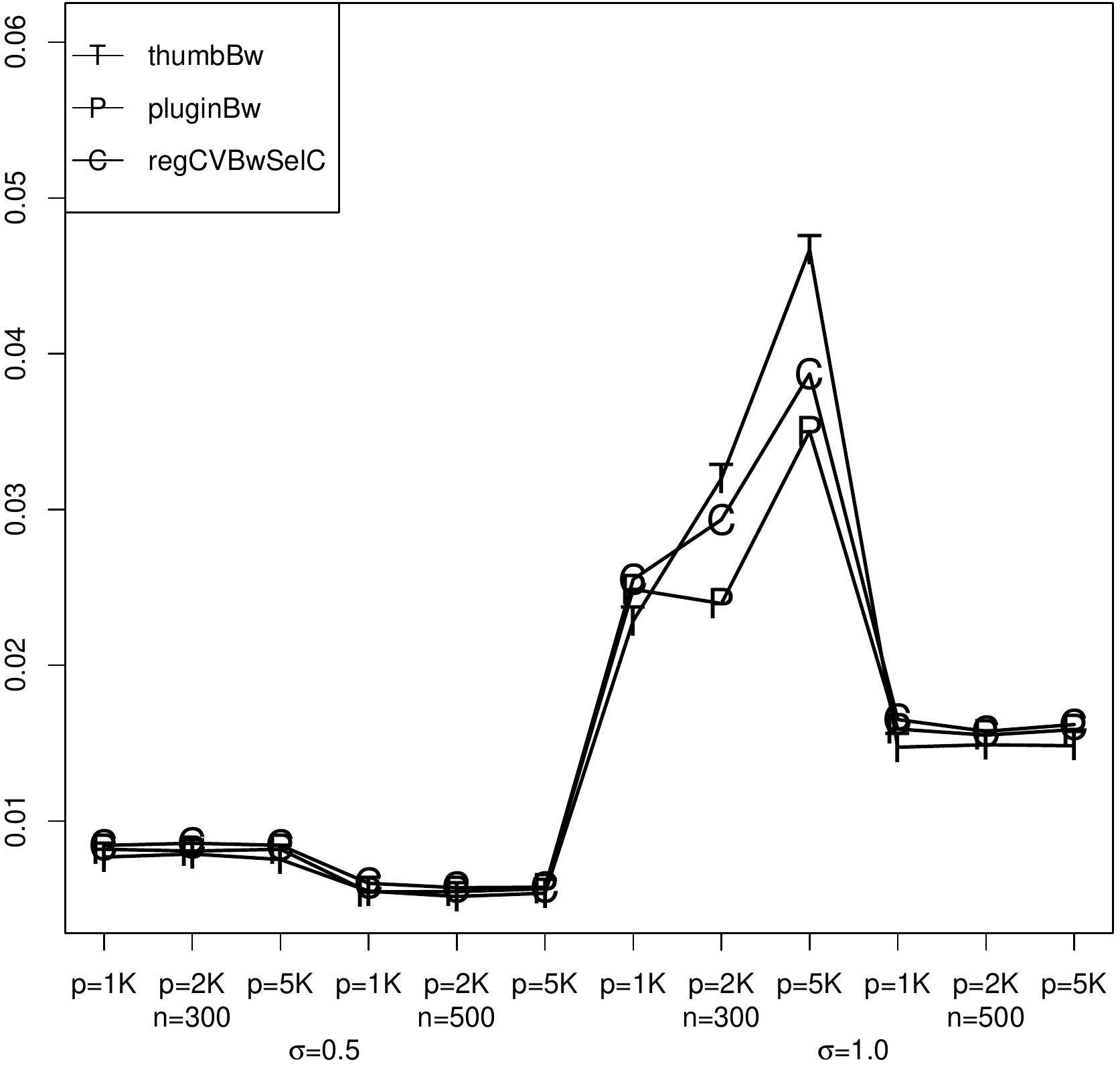} \\
	Refitting $\phi_2$ with $6$ interior knots & Refitting $\phi_3$ with $6$ interior knots \\
	\includegraphics[scale=0.4]{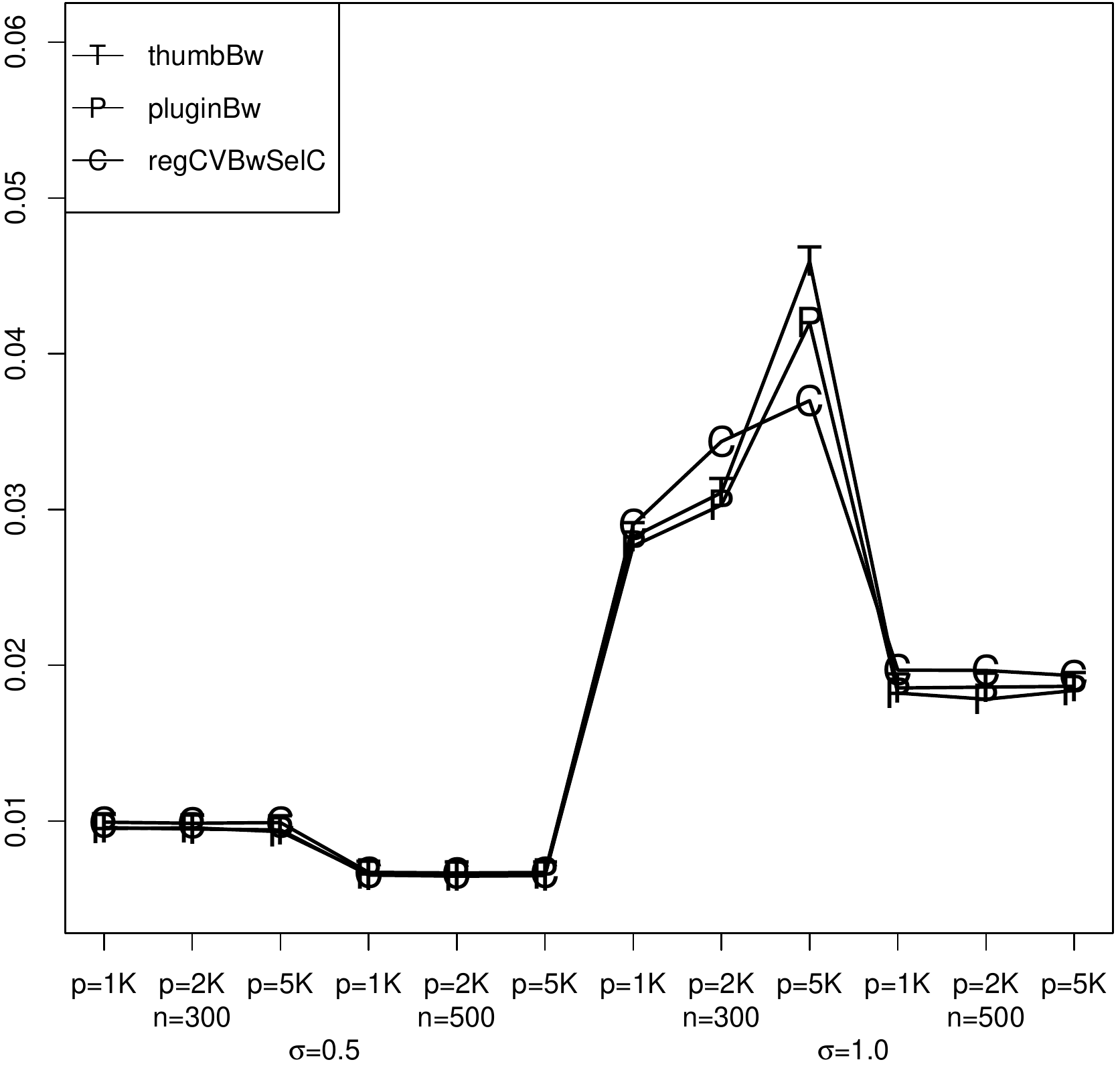}  &
	\includegraphics[scale=0.4]{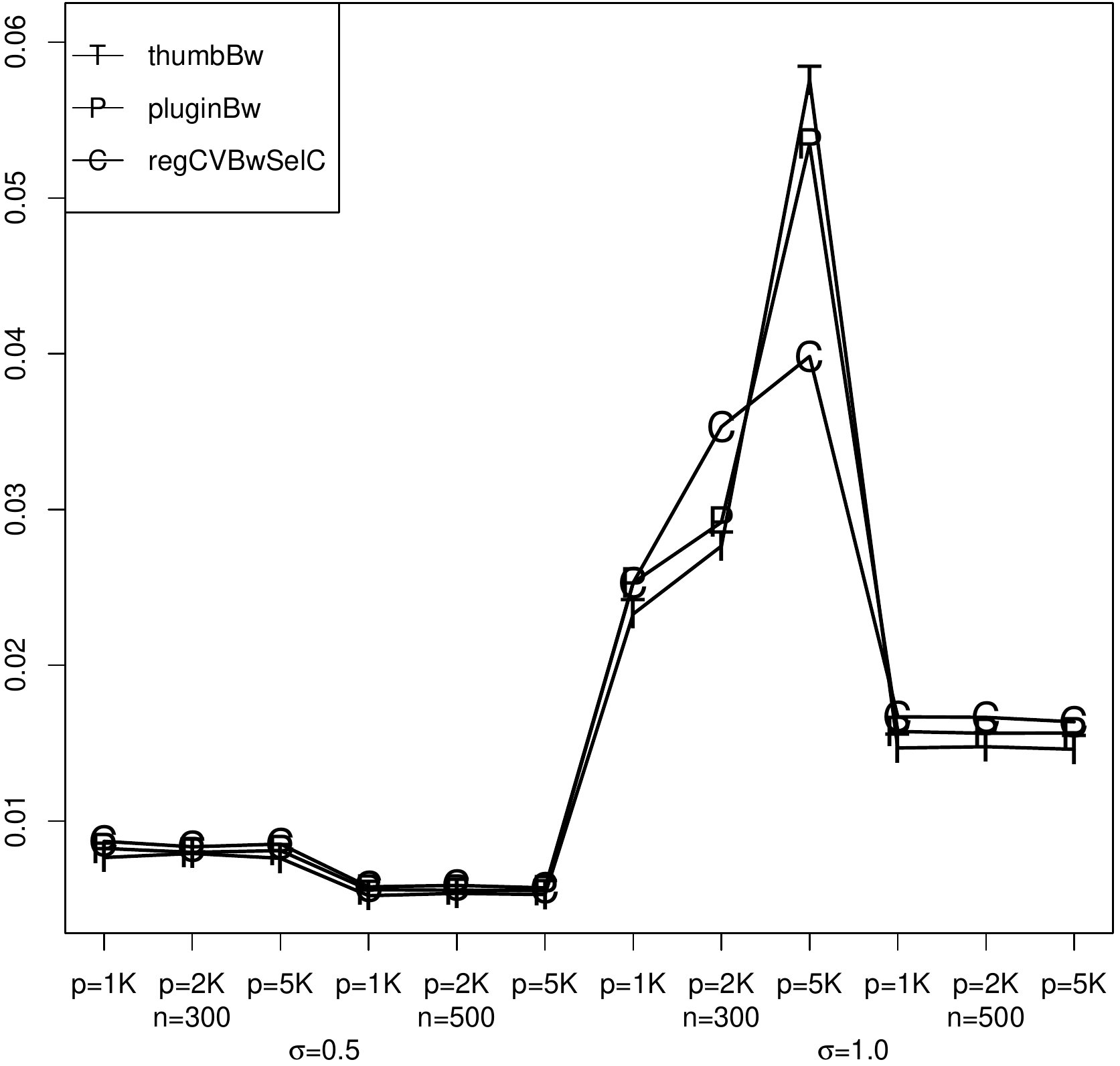} \\
	Refitting $\phi_2$ with $8$ interior knots & Refitting $\phi_3$ with $8$ interior knots \\
	\includegraphics[scale=0.4]{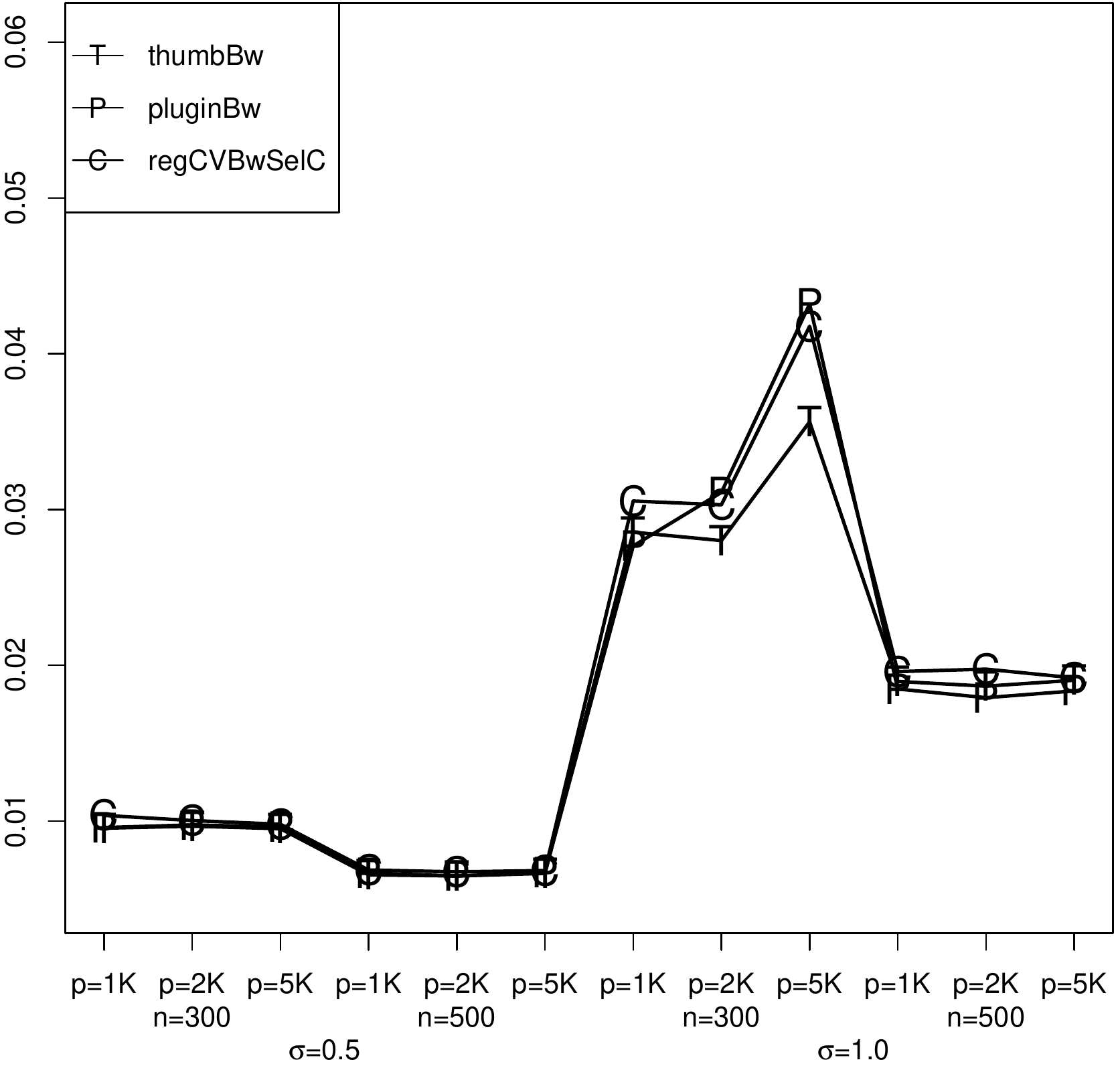}  &
	\includegraphics[scale=0.4]{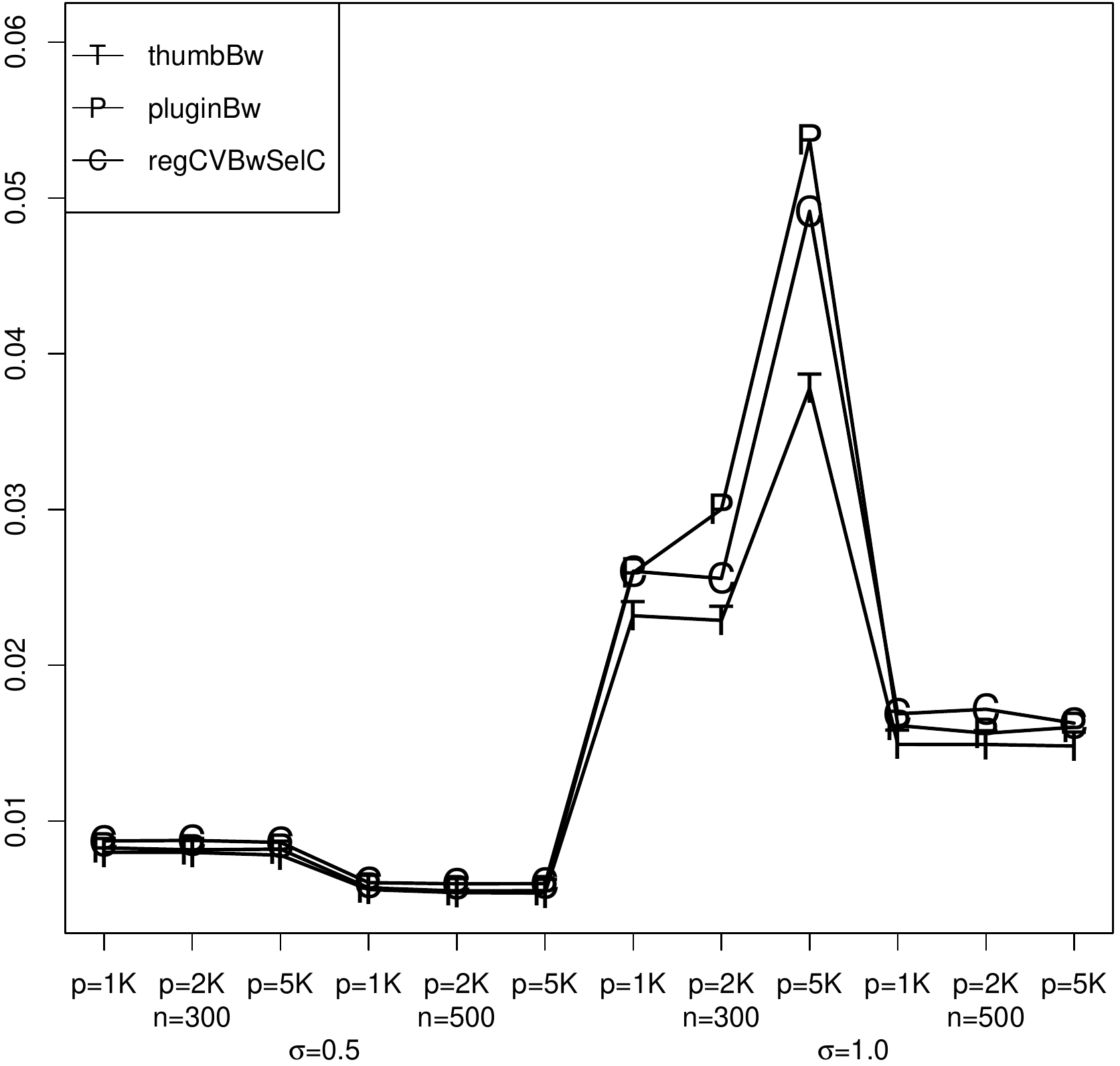} \\
\end{tabular}
\caption{Average mean squared errors (AMSEs) of the SBLL estimators of $\phi_2$ and $\phi_3$.}
\label{FIG:AMSE-knots-BW}
\end{center}
\end{figure}

Figure \ref{FIG:SCB-knots-BW} shows the coverage rates of the SCBs based on different combinations of knots and bandwidth selectors. From Figure \ref{FIG:SCB-knots-BW}, it is clear that the number of knots for a spline in the refitting has very little effect on the coverage of the SCBs. One also observes that the performances of the SCBs based on different smoothing parameters become more similar with increasing sample size, whereas the coverage rates of the SCBs using the ``thumbBw'' selector are the closest to the nominal level in all the simulation settings.
Thus we recommend the ``thumbBw'' selector, especially when the sample size is small.

\begin{figure}[htbp]
\begin{center}
\begin{tabular}{cc}
	Refitting $\phi_2$ with $4$ interior knots & Refitting $\phi_3$ with $4$ interior knots \\
	\includegraphics[scale=0.4]{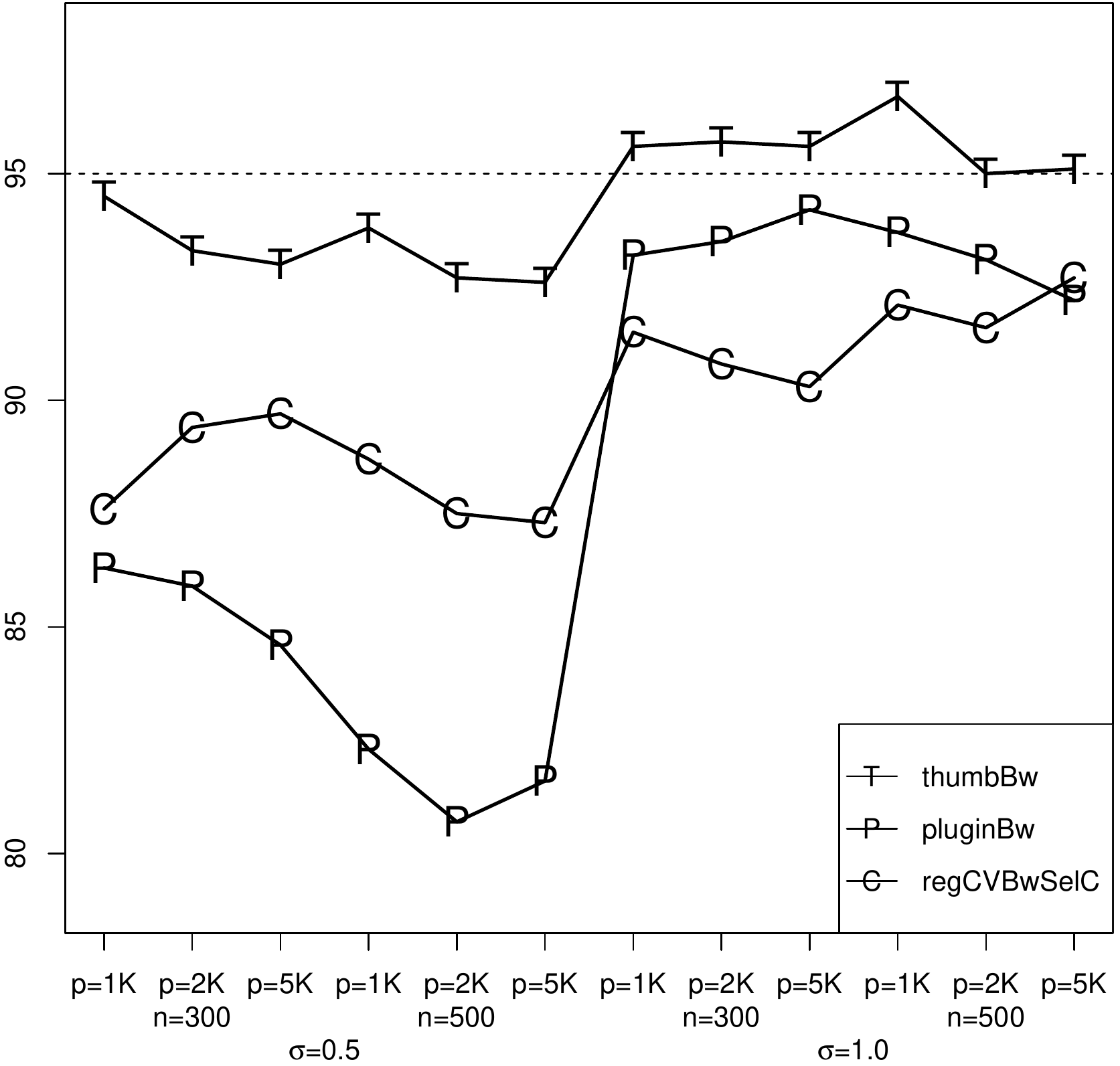}  &
	\includegraphics[scale=0.4]{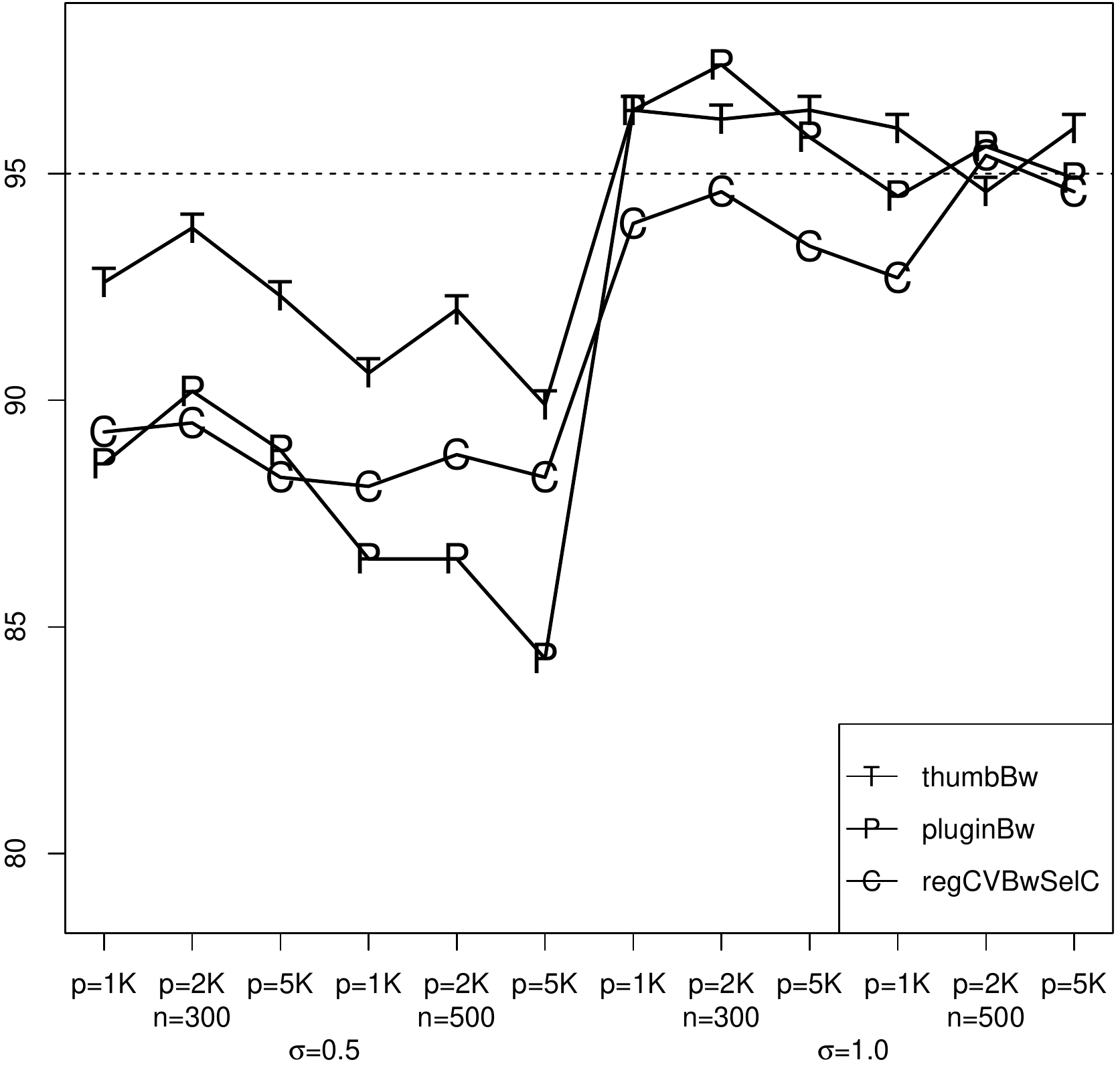} \\
	Refitting $\phi_2$ with $6$ interior knots & Refitting $\phi_3$ with $6$ interior knots \\
	\includegraphics[scale=0.4]{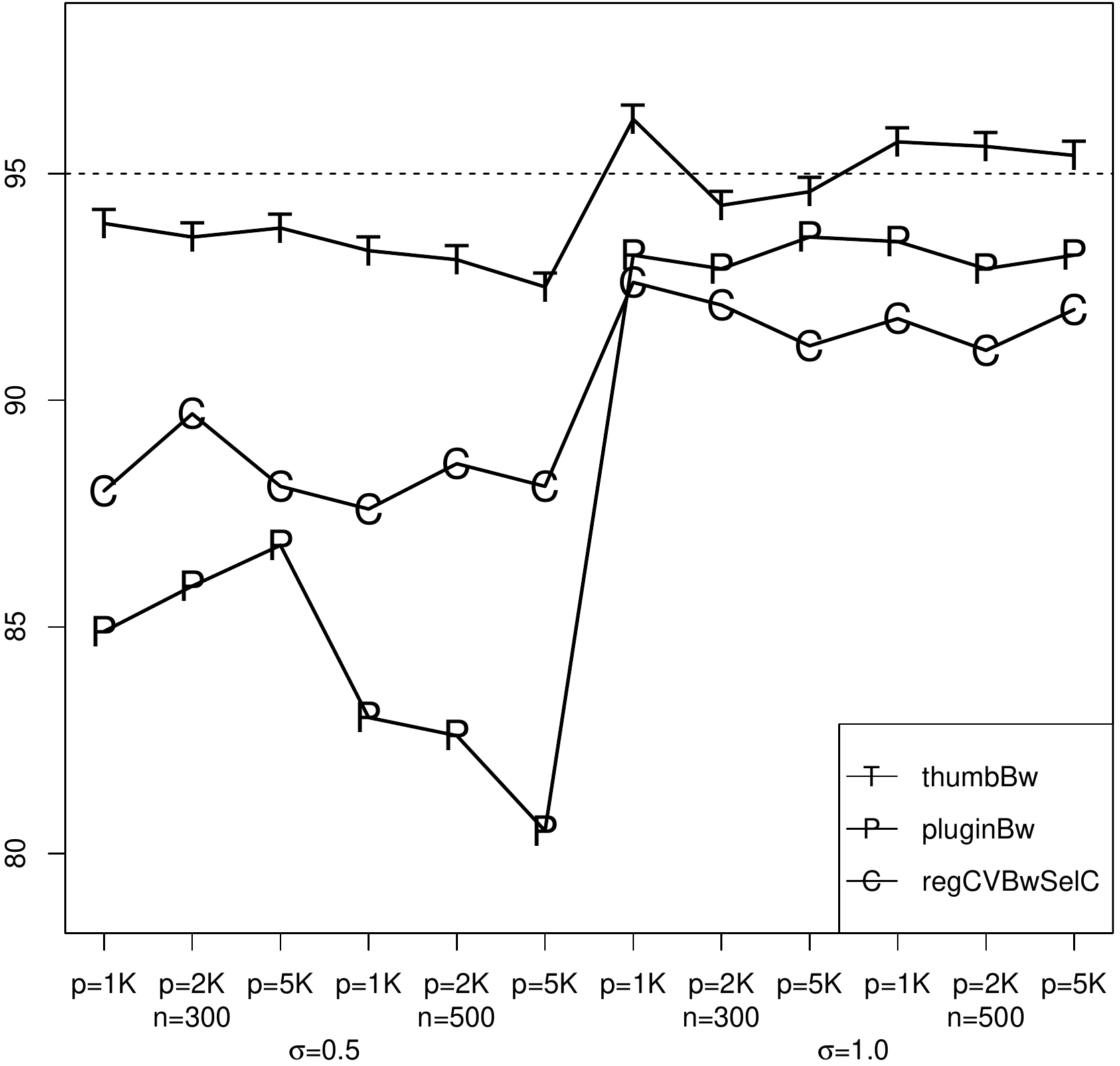}  &
	\includegraphics[scale=0.4]{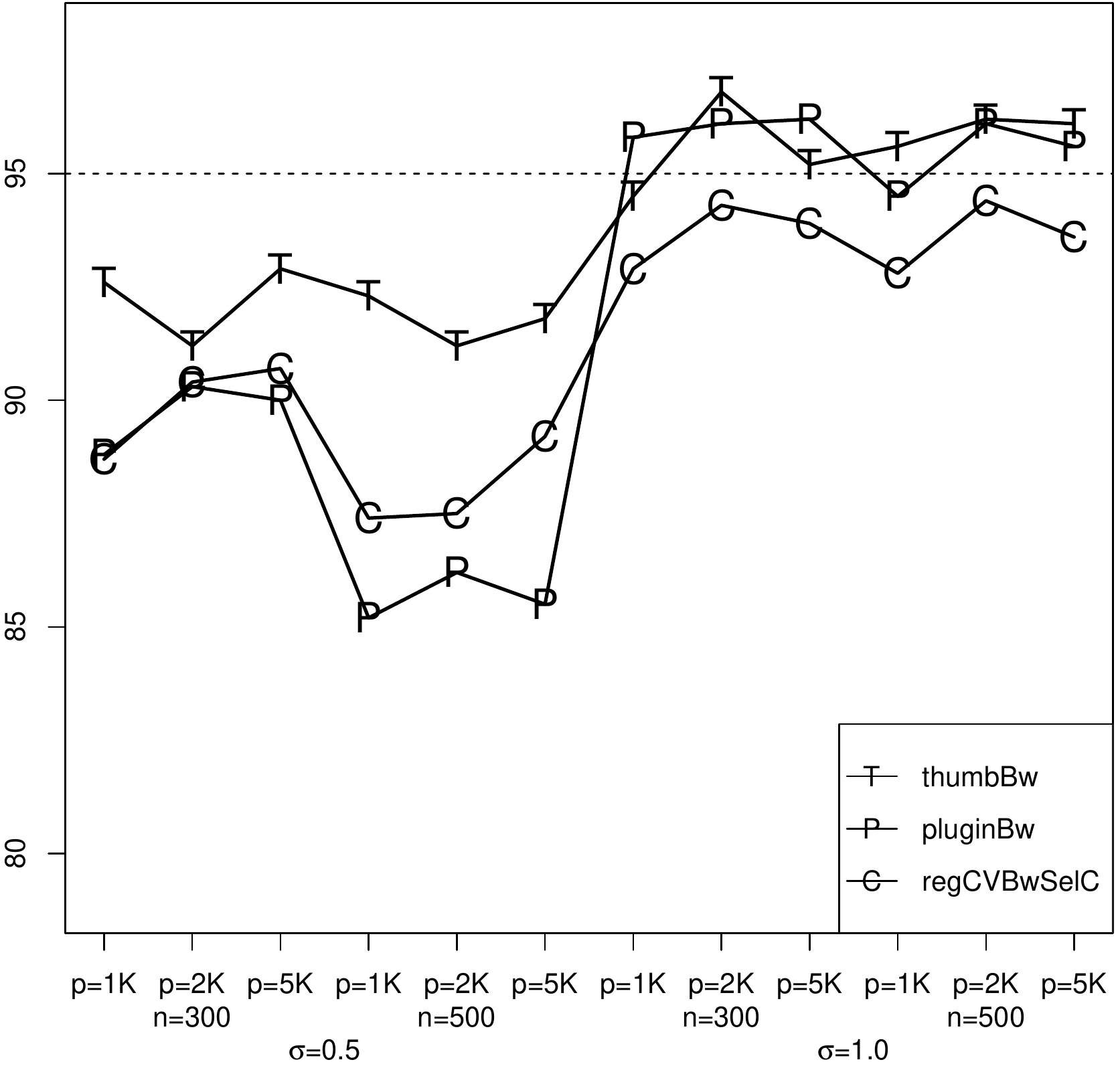} \\
	Refitting $\phi_2$ with $8$ interior knots & Refitting $\phi_3$ with $8$ interior knots \\
	\includegraphics[scale=0.4]{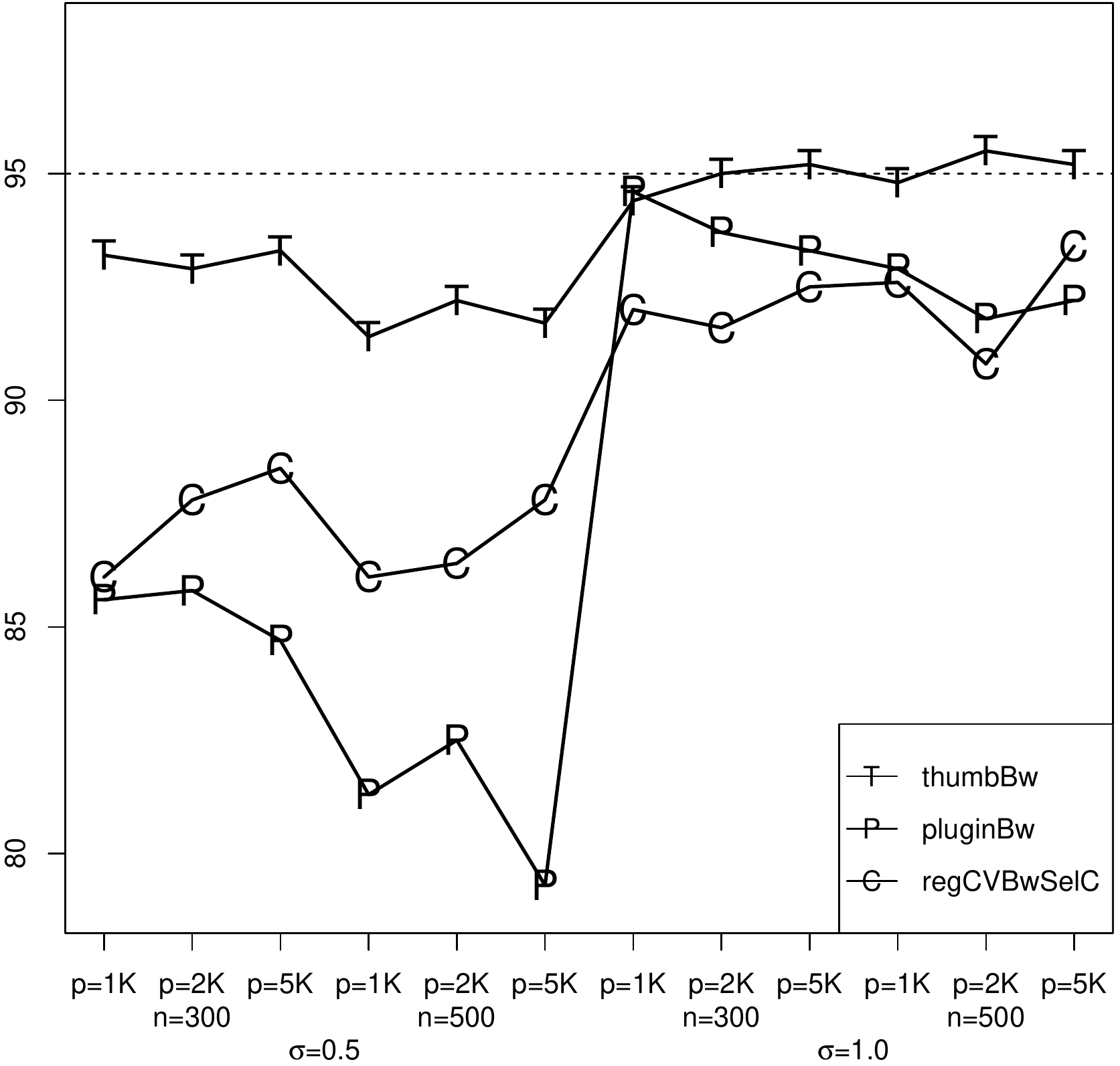}  &
	\includegraphics[scale=0.4]{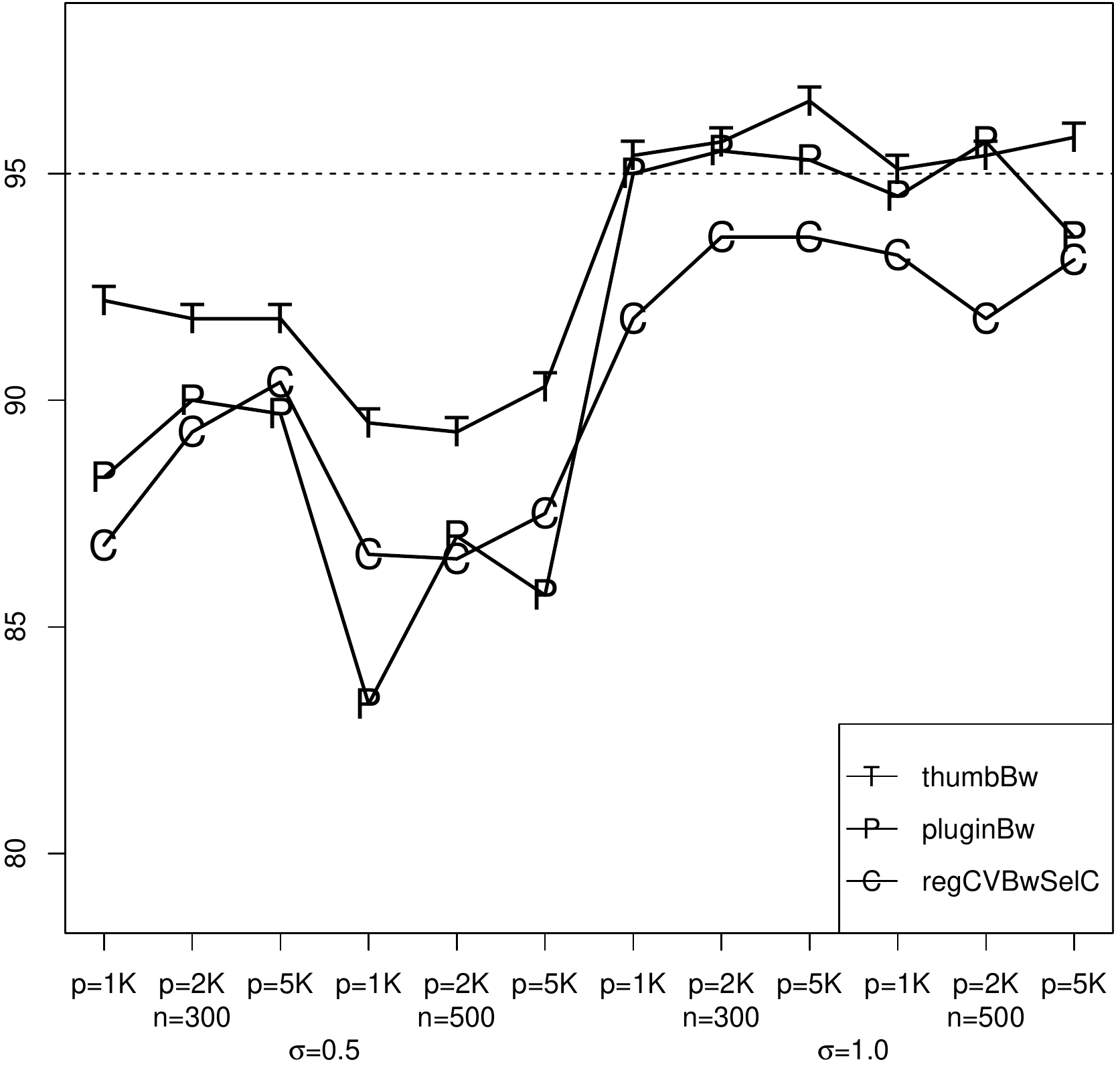} \\
\end{tabular}
\caption{Coverage rates of the SCBs for $\phi_2$ and $\phi_3$.}
\label{FIG:SCB-knots-BW}
\end{center}
\end{figure}

\setcounter{chapter}{9}
\renewcommand{\theequation}{B.\arabic{equation}}
\renewcommand{\thesection}{B.\arabic{section}}
\renewcommand{\thesubsection}{B.\arabic{subsection}}
\renewcommand{\thetheorem}{B.\arabic{theorem}}
\renewcommand{\thelemma}{B.\arabic{lemma}}
\renewcommand{\theproposition}{B.\arabic{proposition}}
\renewcommand{\thecorollary}{B.\arabic{corollary}}
\renewcommand{\thefigure}{B.\arabic{figure}}
\renewcommand{\thetable}{B.\arabic{table}}
\setcounter{table}{0}
\setcounter{figure}{0}
\setcounter{equation}{0}
\setcounter{theorem}{0}
\setcounter{lemma}{0}
\setcounter{proposition}{0}
\setcounter{section}{0}
\setcounter{subsection}{0}

\section*{B. Simulation Studies Using Purely Additive Models or Purely Linear Models}

In this section, we examine the performance the proposed method when the underlying model is either a purely additive model (AM) or a purely linear model (LM). We evaluated the selection, estimation and prediction accuracy, and inference performance of the proposed SMILE method. We also compared the performance of SMILE with the sparse APLM estimator with adaptive group LASSO penalty (SAPLM), the ordinary linear least squares estimator with the adaptive LASSO penalty (SLM), and the oracle estimator (ORACLE), which uses the same estimation techniques as the SMILE except that no penalization or data-driven variable selection is used because all active and inactive index sets are treated as known. All the performance measures were computed based on 200 replicates.

\vskip .1in
\noindent{\bf Case I. A Purely Additive Model.} We generate simulated datasets using the AM structure
\begin{align*}
	Y_i = \sum_{\ell=1}^{p} \phi_{\ell}(X_{i\ell}) + \varepsilon_i,
\end{align*}
		where
\begin{align*}
	\phi_1(x) &= \frac{8\sin(2\pi x)}{2-\sin(2\pi x)} - \mathrm{E}\left\{\frac{8\sin(2\pi X_1)}{2-\sin(2\pi X_1)}\right\}, \\
	\phi_2(x) & = -3 \cos^2(\pi x) + 6 \sin^2(\pi x) - \mathrm{E}\{-3 \cos^2(\pi X_2 + 6 \sin^2(\pi X_2)\}, \\
	\phi_3(x) & = 6x + 18x^{2} - \mathrm{E}(6X_3 + 18X_3^{2} ),
\end{align*}
and $\phi_4(x) = \ldots = \phi_{p}(x) = 0$.

\vskip .1in
\noindent{\bf Case II. A Purely Linear Model.} We generate simulated datasets using the LM structure:
\begin{align*}
	Y_i = \sum_{\ell=1}^{p} \beta_{\ell} X_{i\ell} + \varepsilon_i,
\end{align*}
where $\beta_1=3$, $\beta_2=4$, $\beta_3=-2$, and $\beta_4 = \ldots = \beta_{p} = 0$.

We use  the criteria mentioned in Section \ref{sec:simulation} to evaluate the methods on the accuracy of variable selection and prediction. The model selection results are provided in Tables \ref{TAB:AM1} and \ref{TAB:LM1}, respectively. The SMILE can correctly discover the linear or nonlinear structure in covariates $\bs{X}$, while the SAPLM neglects linear structure in $\bs{X}$ and SLM fails in presenting the nonlinear part of covariates $\bs{X}$. For the SMILE, regardless of the underlying models, the percents of nonzero covariates correctly selected are very close to ORACLE (100 for corrX and corrX0, respectively), as shown in Table \ref{TAB:AM1}; and the percents of components incorrectly identified approach to 0 as the sample size $n$ increases, as shown in Table \ref{TAB:AM1}. The SMILE is close in the selection of nonlinear covariates to the SAPLM estimator, and it overwhelms the SAPLM in identifying the linear structure of covariates $\bs{X}$. SMILE is close in the selection of linear covariates to the SLM estimator, and it overwhelms the SLM in identifying the nonlinear structure of covariates $\bs{X}$. From the results in Tables \ref{TAB:AM1} and \ref{TAB:LM1}, it is also evident that model misspecification leads to poor variable selection performance for the SLM, as the SLM fails to select the right nonlinear components in each simulation.

The estimation and prediction results are displayed in Tables \ref{TAB:AM2} and \ref{TAB:LM2}. Specifically, we present the AMSEs for functions $\phi_1$, $\phi_2$ and $\phi_3$ and the CV-MSPEs for predicting $Y$. The case with known active covariates (ORACLE) is also reported in each setting and serves as a benchmark. The SMILE performs well in predicting $Y$ regardless of the model structure for the underlying model, as indicated by results closest to ORACLE in CV-MSPE for base cases. The SLM is around 18$\sim$36 times higher than the SMILE in the AM case, and the SAPLM is around 1$\sim$3 times higher than the SMILE in the LM case. The estimation of functions $\phi_1$, $\phi_2$ and $\phi_3$ is also good for the SMILE, and matches with ORACLE as sample size $n$ increases. The inferior performance of the SAPLM in the LM case and the poor performance of SLM in the AM case, in both estimation and prediction, illustrates the importance and necessity of identifying correct model structure.

\begin{table}[htbp]
\caption{AM Case: Selection statistics comparing the SMILE, SAPLM and SLM.}
\label{TAB:AM1}
\begin{center}
\renewcommand{\arraystretch}{0.9}
\begin{tabular}{ccclrrrr}
	\hline\hline
	Size  & Noise &	   &	   & \multicolumn{2}{c}{True Selection} & \multicolumn{2}{c}{False Selection} \\
	\cmidrule(r){5-6} \cmidrule(l){7-8}
	$n$ & $\sigma$ & $p$ & Method & corrX & corrX0 & NtoL & Nto0 \\
	\hline
	300   & 0.5   & 1000  & SMILE & 100   & 99.9995 & 2.3333 & 0 \\
          &       &       & SAPLM & 100   & 100   & 0     & 0 \\
          &       &       & SLM   & 65.1667 & 99.9985 & 65.1667 & 34.8333 \\
          &       & 2000  & SMILE & 100   & 99.9998 & 3.1667 & 0 \\
          &       &       & SAPLM & 100   & 100   & 0     & 0 \\
          &       &       & SLM   & 64    & 99.9995 & 64.0003 & 36.0000 \\
          &       & 5000  & SMILE & 99.6667 & 99.9994 & 7     & 0.3333 \\
          &       &       & SAPLM & 100   & 100   & 0     & 0 \\
          &       &       & SLM   & 63.6667 & 99.9998 & 63.6667 & 36.3333 \\
          & 1.0   & 1000  & SMILE & 100   & 100   & 6.5000 & 0 \\
          &       &       & SAPLM & 100   & 100   & 0     & 0 \\
          &       &       & SLM   & 64.6667 & 99.9980 & 64.6667 & 35.3333 \\
          &       & 2000  & SMILE & 99.8333 & 99.9995 & 7.8333 & 0.16667 \\
          &       &       & SAPLM & 100   & 100   & 0     & 0 \\
          &       &       & SLM   & 63.5  & 99.9995 & 63.5003 & 36.5000 \\
          &       & 5000  & SMILE & 99.1667 & 99.9994 & 13.8333 & 0.8333 \\
          &       &       & SAPLM & 100   & 100   & 0     & 0 \\
          &       &       & SLM   & 62.1667 & 99.9999 & 62.1667 & 37.8333 \\
	500   & 0.5   & 1000  & SMILE & 100   & 99.9995 & 0     & 0 \\
          &       &       & SAPLM & 100  & 100   & 0     & 0 \\
          &       &       & SLM   & 66.66667 & 99.999 & 66.6667 & 33.3333 \\
          &       & 2000  & SMILE & 100   & 99.99975 & 0     & 0 \\
          &       &       & SAPLM & 100   & 100   & 0     & 0 \\
          &       &       & SLM   & 66.6667 & 99.99975 & 66.6667 & 33.3333 \\
          &       & 5000  & SMILE & 100   & 99.9999 & 0     & 0 \\
          &       &       & SAPLM & 100   & 100   & 0     & 0 \\
          &       &       & SLM   & 66.6667 & 99.9999 & 66.6667 & 33.3333 \\
          & 1.0   & 1000  & SMILE & 100   & 100   & 0     & 0 \\
          &       &       & SAPLM & 100   & 100   & 0     & 0 \\
          &       &       & SLM   & 66.6667 & 99.9980 & 66.6667 & 33.3333 \\
          &       & 2000  & SMILE & 100   & 100   & 0.16667 & 0 \\
          &       &       & SAPLM & 100   & 100   & 0     & 0 \\
          &       &       & SLM   & 66.6667 & 99.9995 & 66.6667 & 33.3333 \\
          &       & 5000  & SMILE & 100   & 99.9998 & 0     & 0 \\
          &       &       & SAPLM & 100   & 100   & 0     & 0 \\
          &       &       & SLM   & 66.6667 & 99.9999 & 66.6667 & 33.3333 \\
	\hline \hline
\end{tabular}
\end{center}
\end{table}

\begin{table}[htbp]
\caption{AM Case: Estimation statistics comparing the SMILE, SAPLM and SLM.}
\label{TAB:AM2}
\begin{center}
\renewcommand{\arraystretch}{0.72}
\begin{tabular}{ccclrrrr}
	\hline\hline
	\multicolumn{1}{l}{Size} & \multicolumn{1}{l}{Noise} &	   &	   & \multicolumn{3}{c}{AMSE} & \multicolumn{1}{c}{CV-} \\
	\cmidrule(lr){5-7}
	$n$ & $\sigma$ & $p$ & Method & $\phi_1(\cdot)$ & $\phi_2(\cdot)$ & $\phi_3(\cdot)$ &  MSPE \\
	\hline
	300   & 0.5   & 1000  & ORACLE & 0.0151 & 0.0136 & 0.0094 & 0.2924 \\
          &       &       & SMILE & 0.0236 & 0.0196 & 0.2484 & 0.5400 \\
          &       &       & SAPLM & 0.0152 & 0.0136 & 0.0094 & 0.2924 \\
          &       &       & SLM   & 6.0801 & 10.1935 & 1.9317 & 18.7746 \\
          &       & 2000  & ORACLE & 0.0156 & 0.0136 & 0.0093 & 0.2928 \\
          &       &       & SMILE & 0.0329 & 0.0282 & 0.5159 & 0.5893 \\
          &       &       & SAPLM & 0.0156 & 0.0137 & 0.0092 & 0.2928 \\
          &       &       & SLM   & 6.0919 & 10.1303 & 2.0038 & 19.1599 \\
          &       & 5000  & ORACLE & 0.0154 & 0.0131 & 0.0087 & 0.2922 \\
          &       &       & SMILE & 0.1042 & 0.0841 & 0.7177 & 1.1427 \\
          &       &       & SAPLM & 0.0154 & 0.0131 & 0.0086 & 0.2921 \\
          &       &       & SLM   & 6.1067 & 10.0619 & 2.1242 & 19.1425 \\
          & 1.0   & 1000  & ORACLE & 0.0446 & 0.0373 & 0.0256 & 1.1294 \\
          &       &       & SMILE & 0.0719 & 0.0535 & 0.7567 & 1.6505 \\
          &       &       & SAPLM & 0.0445 & 0.0373 & 0.0244 & 1.1272 \\
          &       &       & SLM   & 6.0704 & 10.1433 & 1.9960 & 19.5683 \\
          &       & 2000  & ORACLE & 0.0453 & 0.0357 & 0.0272 & 1.1313 \\
          &       &       & SMILE & 0.1605 & 0.0717 & 1.2918 & 2.0046 \\
          &       &       & SAPLM & 0.0452 & 0.0358 & 0.0259 & 1.1291 \\
          &       &       & SLM   & 6.2667 & 10.1503 & 2.1672 & 20.0240 \\
          &       & 5000  & ORACLE & 0.0459 & 0.0367 & 0.0250 & 1.1275 \\
          &       &       & SMILE & 0.4953 & 0.2275 & 1.5863 & 3.6949 \\
          &       &       & SAPLM & 0.0456 & 0.0367 & 0.0235 & 1.1255 \\
          &       &       & SLM   & 6.1453 & 10.0619 & 2.2305 & 20.0080 \\
	500   & 0.5   & 1000  & ORACLE & 0.0103 & 0.0087 & 0.0064 & 0.2762 \\
          &       &       & SMILE & 0.0103 & 0.0087 & 0.0064 & 0.2762 \\
          &       &       & SAPLM & 0.0103 & 0.0087 & 0.0063 & 0.2762 \\
          &       &       & SLM   & 6.0317 & 10.1296 & 1.8271 & 18.3818 \\
          &       & 2000  & ORACLE & 0.0103 & 0.0088 & 0.0060 & 0.2755 \\
          &       &       & SMILE & 0.0103 & 0.0088 & 0.0060 & 0.2755 \\
          &       &       & SAPLM & 0.0103 & 0.0088 & 0.0061 & 0.2756 \\
          &       &       & SLM   & 6.0997 & 10.1132 & 1.8425 & 18.2931 \\
          &       & 5000  & ORACLE & 0.0105 & 0.0092 & 0.0065 & 0.2775 \\
          &       &       & SMILE & 0.0105 & 0.0092 & 0.0064 & 0.2845 \\
          &       &       & SAPLM & 0.0105 & 0.0092 & 0.0065 & 0.2775 \\
          &       &       & SLM   & 6.0894 & 10.1655 & 1.8348 & 18.5435 \\
          & 1.0   & 1000  & ORACLE & 0.0297 & 0.0240 & 0.0179 & 1.0784 \\
          &       &       & SMILE & 0.0296 & 0.0240 & 0.0174 & 1.0778 \\
          &       &       & SAPLM & 0.0296 & 0.0240 & 0.0174 & 1.0778 \\
          &       &       & SLM   & 6.0328 & 10.1296 & 1.8276 & 19.1402 \\
          &       & 2000  & ORACLE & 0.0305 & 0.0245 & 0.0169 & 1.0770 \\
          &       &       & SMILE & 0.0306 & 0.0247 & 0.0396 & 1.0846 \\
          &       &       & SAPLM & 0.0304 & 0.0246 & 0.0164 & 1.0765 \\
          &       &       & SLM   & 6.1016 & 10.1132 & 1.8431 & 19.0421 \\
          &       & 5000  & ORACLE & 0.0297 & 0.0255 & 0.0173 & 1.0833 \\
          &       &       & SMILE & 0.0297 & 0.0255 & 0.0170 & 1.1096 \\
          &       &       & SAPLM & 0.0296 & 0.0255 & 0.0170 & 1.0826 \\
          &       &       & SLM   & 6.0649 & 10.1493 & 1.8395 & 19.3058 \\
	\hline\hline
\end{tabular}%
\end{center}
\end{table}

\begin{table}[htbp]
\caption{LM Case: Selection Statistics comparing the SMILE, SAPLM and SLM.}
\label{TAB:LM1}
\begin{center}
\renewcommand{\arraystretch}{0.9}
\begin{tabular}{ccclrrrr}
	\hline\hline
	Size  & Noise &       &       & \multicolumn{2}{c}{True Selection} & \multicolumn{2}{c}{False Selection} \\
	\cmidrule(r){5-6} \cmidrule(l){7-8}
	$n$ & $\sigma$ & $p$ & Method & corrX & corrX0 & NtoL & Nto0 \\
	\hline
	300   & 0.5   & 1000  & SMILE & 100   & 99.9975 & 0     & 0 \\
          &       &       & SAPLM & 100   & 100   & 100   & 0 \\
          &       &       & SLM   & 100   & 100   & 0     & 0 \\
          &       & 2000  & SMILE & 100   & 99.9998 & 0     & 0 \\
          &       &       & SAPLM & 100   & 100   & 100   & 0 \\
          &       &       & SLM   & 100   & 100   & 0     & 0 \\
          &       & 5000  & SMILE & 100   & 99.9998 & 0     & 0 \\
          &       &       & SAPLM & 100   & 100   & 100   & 0 \\
          &       &       & SLM   & 100   & 100   & 0     & 0 \\
          & 1.0   & 1000  & SMILE & 100   & 99.9890 & 0     & 0 \\
          &       &       & SAPLM & 32.8333 & 100   & 32.8450 & 67.1667 \\
          &       &       & SLM   & 100   & 99.9995 & 0     & 0 \\
          &       & 2000  & SMILE & 100   & 99.9940 & 0     & 0 \\
          &       &       & SAPLM & 17.3333 & 100   & 17.3550 & 82.6667 \\
          &       &       & SLM   & 100   & 99.9998 & 0     & 0 \\
          &       & 5000  & SMILE & 100   & 99.9982 & 0     & 0 \\
          &       &       & SAPLM & 4.5000 & 100   & 4.5050 & 95.5000 \\
          &       &       & SLM   & 100   & 100   & 0     & 0 \\
	500   & 0.5   & 1000  & SMILE & 100   & 100   & 0     & 0 \\
          &       &       & SAPLM & 100   & 100   & 100   & 0 \\
          &       &       & SLM   & 100   & 100   & 0     & 0 \\
          &       & 2000  & SMILE & 100   & 100   & 0     & 0 \\
          &       &       & SAPLM & 100   & 100   & 100   & 0 \\
          &       &       & SLM   & 100   & 100   & 0     & 0 \\
          &       & 5000  & SMILE & 100   & 100   & 0     & 0 \\
          &       &       & SAPLM & 100   & 100   & 100   & 0 \\
          &       &       & SLM   & 100   & 100   & 0     & 0 \\
          & 1.0   & 1000  & SMILE & 100   & 99.9945 & 0     & 0 \\
          &       &       & SAPLM & 100   & 100   & 100   & 0 \\
          &       &       & SLM   & 100   & 100   & 0     & 0 \\
          &       & 2000  & SMILE & 100   & 99.9983 & 0     & 0 \\
          &       &       & SAPLM & 100   & 100   & 100   & 0 \\
          &       &       & SLM   & 100   & 100   & 0     & 0 \\
          &       & 5000  & SMILE & 100   & 99.9993 & 0     & 0 \\
          &       &       & SAPLM & 99.8333 & 100   & 99.8350 & 0.1667 \\
          &       &       & SLM   & 100   & 100   & 0     & 0 \\
	\hline\hline
\end{tabular}
\end{center}
\end{table}

\begin{table}[htbp]
\caption{LM Case: Estimation Statistics comparing the SMILE, SAPLM and SLM.}
\label{TAB:LM2}
\begin{center}
\renewcommand{\arraystretch}{0.725}
\begin{tabular}{ccclrrrr}
	\hline\hline
	Size & Noise &	   &	   & \multicolumn{3}{c}{AMSE $\times 10^{-3}$} & \multicolumn{1}{c}{CV-} \\
	\cmidrule(lr){5-7}
	$n$ & $\sigma$ & $p$ & Method & $\phi_1(\cdot)$ & $\phi_2(\cdot)$ & $\phi_3(\cdot)$ &  MSPE \\
	\hline
	300   & 0.5   & 1000  & ORACLE & 0.7675 & 0.9103 & 0.9152 & 0.2548 \\
          &       &       & SMILE & 0.7737 & 0.9264 & 0.9243 & 0.2551 \\
          &       &       & SAPLM & 3.9953 & 4.1778 & 3.9635 & 0.2636 \\
          &       &       & SLM & 1.4344 & 1.6371 & 1.4934 & 0.2549 \\
          &       & 2000  & ORACLE & 0.8341 & 0.6486 & 0.8838 & 0.2532 \\
          &       &       & SMILE & 0.8335 & 0.6481 & 0.8837 & 0.2534 \\
          &       &       & SAPLM & 3.9852 & 3.9137 & 3.9935 & 0.2619 \\
          &       &       & SLM & 1.4163 & 1.3344 & 1.5203 & 0.2533 \\
          &       & 5000  & ORACLE & 0.8666 & 0.8955 & 0.6963 & 0.2526 \\
          &       &       & SMILE & 0.8685 & 0.9001 & 0.6955 & 0.2529 \\
          &       &       & SAPLM & 4.2351 & 4.1994 & 3.8121 & 0.2609 \\
          &       &       & SLM & 1.6461 & 1.5967 & 1.3410 & 0.2528 \\
          & 1.0   & 1000  & ORACLE & 3.1740 & 3.5473 & 3.5171 & 1.0206 \\
          &       &       & SMILE & 3.1932 & 3.5094 & 3.5415 & 1.0299 \\
          &       &       & SAPLM & 496.6754 & 881.6171 & 234.2496 & 3.2926 \\
          &       &       & SLM & 3.8947 & 4.2478 & 4.1987 & 1.0215 \\
          &       & 2000  & ORACLE & 2.9228 & 2.7934 & 3.5526 & 1.0079 \\
          &       &       & SMILE & 2.9376 & 2.7424 & 3.5271 & 1.0186 \\
          &       &       & SAPLM & 606.3631 & 1071.5918 & 291.8006 & 3.3790 \\
          &       &       & SLM & 3.5838 & 3.4188 & 4.4221 & 1.0090 \\
          &       & 5000  & ORACLE & 3.0392 & 3.4082 & 2.9891 & 1.0067 \\
          &       &       & SMILE & 3.0248 & 3.4140 & 3.0178 & 1.0137 \\
          &       &       & SAPLM & 711.5465 & 1265.1620 & 323.5512 & 3.4189 \\
          &       &       & SLM & 3.8804 & 4.1732 & 4.0380 & 1.0076 \\
	500   & 0.5   & 1000  & ORACLE & 0.5143 & 0.6210 & 0.4891 & 0.2532 \\
          &       &       & SMILE & 0.5143 & 0.6210 & 0.4892 & 0.2532 \\
          &       &       & SAPLM & 2.5860 & 2.7880 & 2.3860 & 0.2582 \\
          &       &       & SLM & 1.2596 & 1.2759 & 1.0586 & 0.2533 \\
          &       & 2000  & ORACLE & 0.4745 & 0.5541 & 0.5393 & 0.2518 \\
          &       &       & SMILE & 0.4745 & 0.5541 & 0.5393 & 0.2518 \\
          &       &       & SAPLM & 2.7168 & 2.5863 & 2.3419 & 0.2569 \\
          &       &       & SLM & 1.1549 & 1.2357 & 1.0850 & 0.2519 \\
          &       & 5000  & ORACLE & 0.5280 & 0.7104 & 0.6061 & 0.2514 \\
          &       &       & SMILE & 0.5280 & 0.7104 & 0.6061 & 0.2514 \\
          &       &       & SAPLM & 2.6505 & 2.8623 & 2.4300 & 0.2563 \\
          &       &       & SLM & 1.2083 & 1.3043 & 1.0216 & 0.2514 \\
          & 1.0   & 1000  & ORACLE & 1.9255 & 2.2321 & 2.0733 & 1.0141 \\
          &       &       & SMILE & 1.9323 & 2.2494 & 2.0626 & 1.0170 \\
          &       &       & SAPLM & 8.2207 & 7.9738 & 8.1074 & 1.0418 \\
          &       &       & SLM & 2.7106 & 2.8192 & 2.7374 & 1.0145 \\
          &       & 2000  & ORACLE & 1.8588 & 2.2051 & 1.9679 & 1.0110 \\
          &       &       & SMILE & 1.8588 & 2.2136 & 1.9783 & 1.0138 \\
          &       &       & SAPLM & 8.6430 & 8.1768 & 7.9037 & 1.0356 \\
          &       &       & SLM & 2.5655 & 2.7929 & 2.5935 & 1.0116 \\
          &       & 5000  & ORACLE & 2.0815 & 2.1319 & 1.9007 & 1.0055 \\
          &       &       & SMILE & 2.0905 & 2.1545 & 1.9073 & 1.0076 \\
          &       &       & SAPLM & 8.6299 & 8.9538 & 9.5719 & 1.0561 \\
          &       &       & SLM & 2.8374 & 2.7330 & 2.4121 & 1.0058 \\
	\hline\hline
\end{tabular}
\end{center}
\end{table}

Next we investigated the coverage rates of the proposed SCB. For each replication, we tested if the true functions can be covered by the SCB at the simulated values of the covariate in $[-0.5+h, 0.5-h]$, where $h$ is the bandwidth. Table \ref{TAB:AM3} shows the empirical coverage probabilities for a nominal 95\% confidence level out of 200 replications. For comparison, we also provided the SCBs from the SAPLM and ORACLE estimator. From Table \ref{TAB:AM3}, one observes that coverage probabilities for the SMILE, SAPLM and ORACLE SCBs all approach the nominal levels as the sample size $n$ increases. In most cases, the SMILE performs as well as or better than the SAPLM and arrives at about the nominal coverage when $n = 500$ and $\sigma = 1.0$.

\begin{table}[htbp]
\caption{AM Case: Coverage rates comparing the SMILE, SAPLM and SLM.}
\label{TAB:AM3}
\renewcommand{\arraystretch}{0.9}
\begin{center}
\begin{tabular}{cccccccccccc}
	\hline\hline
	Size  & Noise &	   & \multicolumn{3}{c}{ORACLE (\%)} & \multicolumn{3}{c}{SMILE (\%)} & \multicolumn{3}{c}{SAPLM (\%)} \\
	\cmidrule(lr){4-6} \cmidrule(lr){7-9}\cmidrule(lr){10-12}
	$n$ & $\sigma$ & $p$ & $\phi_1$ & $\phi_2$ & $\phi_3$ & $\phi_1$ & $\phi_2$ & $\phi_3$ & $\phi_1$ & $\phi_2$ & $\phi_3$  \\
	\hline
	300   & 0.5   & 1000  & 81.0  & 86.0  & 91.0  & 84.5  & 91.5  & 90.5  & 84.5  & 92.0  & 90.5 \\
          &       & 2000  & 80.5  & 84.5  & 88.0  & 85.5  & 92.0  & 92.2  & 83.0  & 92.0  & 91.5 \\
          &       & 5000  & 81.5  & 86.5  & 94.5  & 80.9  & 94.0  & 97.1  & 80.0  & 92.5  & 96.0 \\
          & 1     & 1000  & 92.0  & 94.0  & 96.0  & 91.0  & 95.5  & 95.3  & 92.5  & 95.5  & 95.5 \\
          &       & 2000  & 93.0  & 96.0  & 95.5  & 91.0  & 95.0  & 95.9  & 93.5  & 97.0  & 95.0 \\
          &       & 5000  & 89.0  & 93.5  & 99.0  & 92.3  & 94.4  & 97.8  & 89.0  & 94.5  & 98.0 \\
	500   & 0.5   & 1000  & 84.5  & 92.5  & 89.0  & 85.5  & 96.5  & 93.0  & 83.5  & 96.0  & 93.0 \\
          &       & 2000  & 85.0  & 86.5  & 85.5  & 86.0  & 93.5  & 88.5  & 85.5  & 93.5  & 88.5 \\
          &       & 5000  & 75.0  & 88.5  & 87.0  & 76.5  & 91.5  & 92.5  & 77.0  & 91.5  & 92.5 \\
          & 1     & 1000  & 88.5  & 96.5  & 97.0  & 88.5  & 97.0  & 96.0  & 88.0  & 98.0  & 96.0 \\
          &       & 2000  & 92.0  & 97.5  & 96.0  & 91.5  & 97.5  & 95.5  & 91.5  & 97.5  & 95.5 \\
          &       & 5000  & 91.0  & 90.5  & 94.5  & 94.5  & 94.0  & 95.0  & 94.0  & 94.0  & 95.0 \\
	\hline\hline
\end{tabular}%
\end{center}
\end{table}

\setcounter{chapter}{10}
\renewcommand{\theequation}{C.\arabic{equation}}
\renewcommand{\thesection}{C.\arabic{section}}
\renewcommand{\thesubsection}{C.\arabic{subsection}}
\renewcommand{\thetheorem}{C.\arabic{theorem}}
\renewcommand{\thelemma}{C.\arabic{lemma}}
\renewcommand{\theproposition}{C.\arabic{proposition}}
\renewcommand{\thecorollary}{C.\arabic{corollary}}
\renewcommand{\thefigure}{C.\arabic{figure}}
\renewcommand{\thetable}{C.\arabic{table}}
\setcounter{table}{0}
\setcounter{figure}{0}
\setcounter{equation}{0}
\setcounter{theorem}{0}
\setcounter{lemma}{0}
\setcounter{proposition}{0}
\setcounter{section}{0}
\setcounter{subsection}{0}

\section*{C. A Simulation Study to Explore the Impacts of Covariate Interactions}

Our model considers the APLM, which focuses on variable selection, estimation, model identification and inference for main effects. There might be scenarios where the responses (measurement of SAM tissues, or phenotypes) are affected by interactions between SNP genotypes and RNA sequences.

To explore the robustness of our method in the behavior of selection and model identification for main terms, we conduct a simulation study under an underlying model that includes interaction terms. To be specific, we simulate datasets using the model:
\[
	Y_i = \sum_{k=1}^{p_1}Z_{ik}\alpha_k + \sum_{\ell=1}^{p_2} \phi_{\ell}(X_{i\ell}) + \sum_{m=4}^5 Z_{im}\psi_m(X_{im}) + \varepsilon_i,
\]
where
\begin{align*}
	&\alpha_1 = 3, ~ \alpha_2 = 4, ~ \alpha_3 = -2, ~ \alpha_4 = \ldots = \alpha_{p_1} = 0; \\
	&\phi_1(x) = 9x,
	~~~\phi_2(x) = -1.5 \cos^2(\pi x) + 3 \sin^2(\pi x) - \mathrm{E}\{-1.5 \cos^2(\pi X_2 + 3 \sin^2(\pi X_2)\}, \\
	&\phi_3(x) = 6x + 18x^{2} - \mathrm{E}(6X_3 + 18X_3^{2}), 
	~~~\phi_4(x) = \ldots = \phi_{p_2}(x) = 0; \\
	&\psi_4(x)= 6x, ~~\psi_5(x)=\frac{6}{1+\exp(-20x)}.
\end{align*}

We adopt the following similar criteria used in Section \ref{sec:simulation}:
\begin{enumerate}
	\item[(B-i')] Percent of covariates in $\bs{Z}$ with nonzero linear coefficients (i.e., $Z_1$, $Z_2$ and $Z_3$) that are correctly identified  (``CorrZ'');
	\item[(B-ii')] Percent of covariates in $\bs{Z}$ with zero linear coefficients (all except $Z_1,\ldots,Z_5$) that are correctly identified (``CorrZ0'');
	\item[(B-iii')] Percent of covariates in $\bs{X}$ with nonzero purely linear functions (i.e., $X_1$) that are correctly identified (``CorrL'');
	\item[(B-iv')] Percent of covariates in $\bs{X}$ with nonzero purely nonlinear functions (i.e., $X_2$) that are correctly identified (``CorrN'');
	\item[(B-v')] Percent of covariates in $\bs{X}$ with nonzero linear and nonlinear functions (i.e., $X_3$) that are correctly identified (``CorrLN');
	\item[(B-vi')] Percent of in $\bs{X}$ with zero functions (all except $X_1,\ldots,X_5$) that are correctly identified covariates (``CorrX0'');
	\item[(C-i')] Percent of covariates in $\bs{Z}$ with nonzero linear coefficients (i.e., $Z_1$, $Z_2$ and $Z_3$) incorrectly identified as having zero linear coefficients (``Zto0'');
	\item[(C-ii')] Percent of covariates in $\bs{X}$ with nonzero purely linear functions (i.e., $X_1$) incorrectly identified as having nonlinear functions (``LtoN'');
	\item[(C-iii')] Percent of covariates in $\bs{X}$ with nonzero purely nonlinear functions (i.e., $X_2$) incorrectly identified as having linear functions (``NtoL'');
	\item[(C-iv')] Percent of covariates in $\bs{X}$ with nonzero linear or nonzero nonlinear functions (i.e., $X_1$, $X_2$ and $X_3$) incorrectly identified as having both zero linear and zero nonlinear functions (``Xto0'').
\end{enumerate} \medskip
Note that Criteria (B-i')--(B-vi') measure the frequency of getting the correct model structure; Criteria (C-i')--(C-v') measure the frequency of getting an incorrect model structure. All the above performance measures were computed based on 200 replicates.

The model selection results are provided in Tables \ref{TAB:Interaction1} and \ref{TAB:Interaction2}, respectively. The SMILE can effectively identify informative linear and nonlinear components as well as correctly discover the linear and nonlinear structure in covariates $\bs{X}$, while the SAPLM neglects the linear structure in $\bs{X}$ and SLM fails in presenting the nonlinear part of covariates $\bs{X}$. For the SMILE, the numbers of correctly selected nonzero covariates in $\bs{Z}$, linear, nonlinear, linear-and-nonlinear components in $\bs{X}$, nonzero covariates are very close to ORACLE (100\% for corrZ, corrL, corrN, corrLN, corrZ0 and corrX0, respectively); and the numbers of incorrectly identified components approach 0 as the sample size $n$ increases, as shown in Table \ref{TAB:Interaction2}. From the results in Tables \ref{TAB:Interaction1} and \ref{TAB:Interaction2}, it is evident that our method is robust in the sense that main effects are correctly identified in the presence of interaction effects; in contrast, neither SAPLM nor SLM performs well in this scenario. Especially for the selection of nonlinear and linear-nonlinear covariates in $\bs{X}$, which is our main focus for real data analysis, both SAPLM and SLM fail to select the right nonlinear and linear-nonlinear components in each simulation. Because SMILE, SAPLM, and SLM are based on additive models, none of these approaches are appropriate for detecting interactions.

Table \ref{TAB:Interaction3} reports the percentage of those covariates involved in the interaction ($Z_4$, $Z_5$, $X_4$ and $X_5$) selected out of 200 replications.
As shown in Table \ref{TAB:Interaction3}, SMILE can detect $Z_5$ and $X_5$ in most cases; and the percentages of selection approach 100 as the sample size $n$ increases. In contrast, SAPLM completely fails to select $Z_5$ and $X_5$;  while SLM is only slightly worse than SMILE in the detection of $Z_5$, it has poor performance in the detection of $X_5$. For the interaction terms with smaller main-effect signal, i.e., $Z_4$ and $X_5$, SMILE outperforms in the detection of $X_4$ compared to SAPLM and SLM, and the detection power increases when the sample size $n$ increases. In addition, all three methods fail to detect the relevance of $Z_4$, due to the weak main-effect signal and interaction with $X_4$.

\begin{table}[htbp]
\caption{Statistics of true selection comparing the SMILE, SAPLM and SLM.}
\label{TAB:Interaction1}
\begin{center}
\renewcommand{\arraystretch}{0.92}
\begin{tabular}{ccclrrrrrr}
	\hline \hline
	Size & Noise & & & \multicolumn{2}{c}{Z Part} & \multicolumn{ 4}{c}{X Part} \\
	\cmidrule(r){5-6} \cmidrule(l){7-10}
	$n$ & sig & $p$ & Method & corrZ & corrZ0 & corrL & corrN & corrLN & corrX0 \\
	\hline
	300   & 0.5   & 1000  & SMILE & 97.8333 & 99.9985 & 100 & 82.0000 & 82.0000 & 99.9970 \\
          &       &       & SAPLM & 10.5000 & 100 & 0 & 0 & 0 & 100 \\
          &       &       & SLM   & 90.0000 & 99.9980 & 100 & 0 & 0 & 99.9965 \\
          &       & 2000  & SMILE & 95.6667 & 99.9995 & 100 & 55.5000 & 55.5000 & 99.9992 \\
          &       &       & SAPLM & 9.1667 & 100 & 0 & 0 & 0 & 100 \\
          &       &       & SLM   & 85.5000 & 99.9995 & 100 & 0 & 0 & 99.9990 \\
          &       & 5000  & SMILE & 88.6667 & 99.9994 & 100 & 29.0000 & 28.5000 & 99.9997 \\
          &       &       & SAPLM & 6.6667 & 100 & 0 & 0 & 0 & 100 \\
          &       &       & SLM   & 80.1667 & 99.9998 & 100 & 0 & 0 & 99.9995 \\
          & 1.0   & 1000  & SMILE & 93.8333 & 99.9950 & 100 & 49.0000 & 49.0000 & 99.9990 \\
          &       &       & SAPLM & 7.3333 & 100 & 0 & 0 & 0 & 100 \\
          &       &       & SLM   & 86.0000 & 99.9990 & 99.5000 & 0 & 0 & 99.9975 \\
          &       & 2000  & SMILE & 90.3333 & 99.9967 & 100 & 19.5000 & 19.0000 & 99.9987 \\
          &       &       & SAPLM & 8.5000 & 100 & 0 & 0 & 0 & 100 \\
          &       &       & SLM   & 80.5000 & 99.9990 & 99.0000 & 0 & 0 & 99.9992 \\
          &       & 5000  & SMILE & 86.0000 & 99.9977 & 100 & 7.5000 & 7.0000 & 99.9999 \\
          &       &       & SAPLM & 7.1667 & 100 & 0 & 0 & 0 & 100 \\
          &       &       & SLM   & 74.1667 & 99.9999 & 99.0000 & 0 & 0 & 100 \\
	500   & 0.5   & 1000  & SMILE & 100 & 100 & 100 & 100 & 100 & 99.9995 \\
          &       &       & SAPLM & 64.0000 & 99.9980 & 0 & 56.0000 & 0 & 100 \\
          &       &       & SLM   & 99.6667 & 100 & 100 & 0 & 0 & 99.9985 \\
          &       & 2000  & SMILE & 100 & 100 & 100 & 100 & 100 & 99.9992 \\
          &       &       & SAPLM & 22.1667 & 100 & 0 & 11.0000 & 0 & 100 \\
          &       &       & SLM   & 98.3333 & 100 & 100 & 0 & 0 & 99.9992 \\
          &       & 5000  & SMILE & 100 & 100 & 100 & 100 & 100 & 99.9999 \\
          &       &       & SAPLM & 14.1667 & 100 & 0 & 0 & 0 & 100 \\
          &       &       & SLM   & 98.5000 & 99.9998 & 100 & 0 & 0 & 99.9999 \\
          & 1.0   & 1000  & SMILE & 100 & 100 & 100 & 100 & 100 & 99.9990 \\
          &       &       & SAPLM & 24.1667 & 99.9995 & 0 & 9.5000 & 0 & 100 \\
          &       &       & SLM   & 98.8333 & 100 & 100 & 0 & 0 & 99.9985 \\
          &       & 2000  & SMILE & 100 & 99.9992 & 100 & 100 & 100 & 99.9990 \\
          &       &       & SAPLM & 11.8333 & 100 & 0 & 0.5000 & 0 & 100 \\
          &       &       & SLM   & 97.1667 & 99.9992 & 100 & 0 & 0 & 99.9987 \\
          &       & 5000  & SMILE & 100 & 100 & 100 & 99.0000 & 99.0000 & 100 \\
          &       &       & SAPLM & 13.5000 & 100 & 0 & 0 & 0 & 100 \\
          &       &       & SLM   & 97.1667 & 100 & 100 & 0 & 0 & 99.9998 \\
	\hline \hline
\end{tabular}
\end{center}
\end{table}

\begin{table}[htbp]
\caption{Statistics of false selection comparing the SMILE, SAPLM and SLM.}
\label{TAB:Interaction2}
\begin{center}
\renewcommand{\arraystretch}{0.92}
\begin{tabular}{ccclrrrr}
	\hline \hline
	Size & Noise & & & Z Part & \multicolumn{3}{c}{X Part} \\
	\cmidrule(r){5-5} \cmidrule(l){6-8}
	  $n$ & sig & $p$ & Method & Zto0 & LtoN & NtoL & Xto0 \\
	\hline
	300   & 0.5   & 1000  & SMILE & 2.1667 & 0 & 0 & 6.0000 \\
          &       &       & SAPLM & 89.5000 & 0 & 0 & 100 \\
          &       &       & SLM   & 10.0000 & 0 & 0 & 33.3333 \\
          &       & 2000  & SMILE & 4.3333 & 0 & 0 & 14.8333 \\
          &       &       & SAPLM & 90.8333 & 0 & 0 & 100 \\
          &       &       & SLM   & 14.5000 & 0 & 0 & 33.6667 \\
          &       & 5000  & SMILE & 11.3333 & 0 & 0 & 23.6667 \\
          &       &       & SAPLM & 93.3333 & 0 & 0 & 100 \\
          &       &       & SLM   & 19.8333 & 0 & 0 & 33.8333 \\
          & 1.0   & 1000  & SMILE & 6.1667 & 0 & 0 & 17.0000 \\
          &       &       & SAPLM & 92.6667 & 0 & 0 & 100 \\
          &       &       & SLM   & 14.0000 & 0 & 0 & 33.6667 \\
          &       & 2000  & SMILE & 9.6667 & 0 & 0 & 26.8333 \\
          &       &       & SAPLM & 91.5000 & 0 & 0 & 100 \\
          &       &       & SLM   & 19.5000 & 0 & 0 & 34.8333 \\
          &       & 5000  & SMILE & 14.0000 & 0 & 0 & 30.8333 \\
          &       &       & SAPLM & 92.8333 & 0 & 0 & 100 \\
          &       &       & SLM   & 25.8333 & 0 & 0 & 35.0000 \\
	500   & 0.5   & 1000  & SMILE & 0 & 0 & 0 & 0 \\
          &       &       & SAPLM & 36.0000 & 56.0000 & 0 & 44.0000 \\
          &       &       & SLM   & 0.3333 & 0 & 0 & 33.3333 \\
          &       & 2000  & SMILE & 0 & 0 & 0 & 0 \\
          &       &       & SAPLM & 77.8333 & 11.0000 & 0 & 89.0000 \\
          &       &       & SLM   & 1.6667 & 0 & 0 & 33.3333 \\
          &       & 5000  & SMILE & 0 & 0 & 0 & 0 \\
          &       &       & SAPLM & 85.8333 & 0 & 0 & 100 \\
          &       &       & SLM   & 1.5000 & 0 & 0 & 33.3333 \\
          & 1.0   & 1000  & SMILE & 0 & 0 & 0 & 0 \\
          &       &       & SAPLM & 75.8333 & 9.5000 & 0 & 90.5000 \\
          &       &       & SLM   & 1.1667 & 0 & 0 & 33.3333 \\
          &       & 2000  & SMILE & 0 & 0 & 0 & 0 \\
          &       &       & SAPLM & 88.1667 & 0.5000 & 0 & 99.5000 \\
          &       &       & SLM   & 2.8333 & 0 & 0 & 33.3333 \\
          &       & 5000  & SMILE & 0 & 0 & 0 & 0.3333 \\
          &       &       & SAPLM & 86.5000 & 0 & 0 & 100 \\
          &       &       & SLM   & 2.8333 & 0 & 0 & 33.3333 \\
	\hline \hline
\end{tabular}
\end{center}
\end{table}

\begin{table}[htbp]
\caption{Percents of $Z_4$, $Z_5$, $X_4$ and $X_5$ are selected comparing the SMILE, SAPLM and SLM.}
\label{TAB:Interaction3}
\begin{center}
\renewcommand{\arraystretch}{0.92}
\begin{tabular}{ccclrrrr}
	\hline \hline
	Size  & Noise &       &       & \multicolumn{2}{c}{Z Part} & \multicolumn{2}{c}{X Part} \\
	\cmidrule(r){5-6} \cmidrule(l){7-8}
	  $n$ & sig & $p$ & Method & \multicolumn{1}{c}{Z4} & \multicolumn{1}{c}{Z5} & \multicolumn{1}{c}{X4} & \multicolumn{1}{c}{X5} \\
	\hline
	300   & 0.5   & 1000  & SMILE & 0     & 100   & 33    & 79.5 \\
          &       &       & SAPLM & 0     & 8.5   & 0     & 0 \\
          &       &       & SLM   & 0     & 98.5  & 5.5   & 21 \\
          &       & 2000  & SMILE & 0     & 99    & 15.5  & 52 \\
          &       &       & SAPLM & 0     & 6     & 0     & 0 \\
          &       &       & SLM   & 0     & 98    & 3     & 20 \\
          &       & 5000  & SMILE & 0     & 96.5  & 4.5   & 26.5 \\
          &       &       & SAPLM & 0     & 3     & 0     & 0 \\
          &       &       & SLM   & 0     & 92    & 2     & 13 \\
          & 1.0   & 1000  & SMILE & 0     & 99.5  & 11    & 46 \\
          &       &       & SAPLM & 0     & 7     & 0     & 0 \\
          &       &       & SLM   & 0     & 94.5  & 4     & 15 \\
          &       & 2000  & SMILE & 0     & 98.5  & 4     & 20.5 \\
          &       &       & SAPLM & 0     & 2.5   & 0     & 0 \\
          &       &       & SLM   & 0     & 95.5  & 1.5   & 13 \\
          &       & 5000  & SMILE & 0     & 97    & 1     & 13 \\
          &       &       & SAPLM & 0     & 2.5   & 0     & 0 \\
          &       &       & SLM   & 0     & 86.5  & 1     & 7 \\
	500   & 0.5   & 1000  & SMILE & 0     & 100   & 85    & 100 \\
          &       &       & SAPLM & 0     & 61.5  & 0     & 0 \\
          &       &       & SLM   & 0     & 100   & 21    & 70 \\
          &       & 2000  & SMILE & 0     & 100   & 75.5  & 100 \\
          &       &       & SAPLM & 0     & 21    & 0     & 0 \\
          &       &       & SLM   & 0     & 100   & 12.5  & 60 \\
          &       & 5000  & SMILE & 0     & 100   & 66    & 99 \\
          &       &       & SAPLM & 0     & 12.5  & 0     & 0 \\
          &       &       & SLM   & 0     & 100   & 6.5   & 50 \\
          & 1.0   & 1000  & SMILE & 0     & 100   & 73.5  & 98 \\
          &       &       & SAPLM & 0     & 22    & 0     & 0 \\
          &       &       & SLM   & 0     & 100   & 14.5  & 58 \\
          &       & 2000  & SMILE & 0     & 100   & 55    & 94 \\
          &       &       & SAPLM & 0     & 11    & 0     & 0 \\
          &       &       & SLM   & 0     & 100   & 7     & 49 \\
          &       & 5000  & SMILE & 0     & 100   & 41.5  & 91.5 \\
          &       &       & SAPLM & 0     & 11    & 0     & 0 \\
          &       &       & SLM   & 0     & 99.5  & 4     & 42 \\
	\hline \hline
\end{tabular}
\end{center}
\end{table}

\setcounter{chapter}{11}
\renewcommand{\theequation}{D.\arabic{equation}}
\renewcommand{\thesection}{D.\arabic{section}}
\renewcommand{\thesubsection}{D.\arabic{subsection}}
\renewcommand{\thetheorem}{D.\arabic{theorem}}
\renewcommand{\thelemma}{D.\arabic{lemma}}
\renewcommand{\theproposition}{D.\arabic{proposition}}
\renewcommand{\thecorollary}{D.\arabic{corollary}}
\renewcommand{\thefigure}{D.\arabic{figure}}
\renewcommand{\thetable}{D.\arabic{table}}
\setcounter{table}{0}
\setcounter{figure}{0}
\setcounter{equation}{0}
\setcounter{theorem}{0}
\setcounter{lemma}{0}
\setcounter{proposition}{0}
\setcounter{section}{0}
\setcounter{subsection}{0}

\section*{D. A Simulation Study Based on the SAM Data}

In this section, we conduct a simulation study using the SNPs and RNA transcripts selected in real data analysis as the active covariates in our data-generating model. This  demonstrates the performance of our method when there are many true nonzero components in both linear and nonlinear parts.

With the true linear coefficients and nonlinear functions set to be the same as the estimates obtained in real data analysis in Section 6, we choose the noise level $\sigma$ as 0.01, 0.02, 0.04 and 0.05, in accordance with the errors in real data analysis ($\widehat{\sigma}\approx 0.04$). We compare SMILE with SAPLM and SLM. We still summarize the simulation results by using the statistics described in Section \ref{sec:simulation}.  All the performance measures were computed based on 200 replicates.

The model selection results are provided in Tables \ref{SAM:SIM:1} and \ref{SAM:SIM:2}, respectively. The SMILE can effectively identify informative linear and nonlinear components as well as correctly discover the linear and nonlinear structure in covariate $\bs{X}$, while the SAPLM neglects linear structure in $\bs{X}$ and SLM fails in presenting the nonlinear part of covariate $\bs{X}$. For the SMILE, the selection performance in $\bs{Z}$, including the numbers of correctly selected nonzero and zero covariates in $\bs{Z}$ (corrZ and corrZ0) and the numbers of incorrectly identified components in  $\bs{Z}$ (Zto0), is very close to the performance of SLM; while SMILE outperforms SLM in the selection of important components of $\bs{X}$, indicated by the much higher percents of correctly selected components in $\bs{X}$ (corrL, corrN, and corrX0) and much lower percents in the incorrectly selected components in $\bs{X}$ (Xto0). The SMILE outperforms SAPLM in almost all statistics. From the results in Tables \ref{SAM:SIM:1} and \ref{SAM:SIM:2}, it is evident that model misspecification leads to poor variable selection performance for the SAPLM and SLM.

\begin{table}[htbp]
\caption{True selection statistics comparing the SMILE, SAPLM and SLM.}
\label{SAM:SIM:1}
\begin{center}
\begin{tabular}{clrrrrr}
	\hline\hline
	Noise &       & \multicolumn{2}{c}{Z Part} & \multicolumn{3}{c}{X Part} \\
	\cmidrule(r){3-4}\cmidrule(l){5-7}
	$\sigma$   & Method & \multicolumn{1}{l}{corrZ} & \multicolumn{1}{l}{corrZ0} & \multicolumn{1}{l}{corrL} & \multicolumn{1}{l}{corrN} & \multicolumn{1}{l}{corrX0} \\
	\hline
	0.01  & SMILE & 71.62 & 99.81 & 81.9  & 99.5  & 98.61 \\
          & SAPLM & 38.57 & 99.66 & 0     & 0     & 99.43 \\
          & SLM   & 78.83 & 99.72 & 25.8  & 0     & 99.59 \\
	0.02  & SMILE & 69.31 & 99.80 & 75.55 & 96.75 & 98.50 \\
          & SAPLM & 38.36 & 99.62 & 0     & 0     & 99.37 \\
          & SLM   & 77.60 & 99.69 & 26.3  & 0     & 99.37 \\
	0.04  & SMILE & 65.80 & 99.74 & 65.45 & 87.5  & 98.07 \\
          & SAPLM & 36.95 & 99.56 & 0     & 0     & 99.25 \\
          & SLM   & 74.19 & 99.57 & 23.35 & 0     & 98.89 \\
	0.05  & SMILE & 64.35 & 99.72 & 57.75 & 85.5  & 97.79 \\
          & SAPLM & 36.06 & 99.53 & 0     & 0     & 99.20 \\
          & SLM   & 72.49 & 99.49 & 22.35 & 0     & 98.65 \\
	\hline\hline
\end{tabular}
\end{center}
\end{table}

\begin{table}[htbp]
\caption{False selection statistics comparing the SMILE, SAPLM and SLM.}
\label{SAM:SIM:2}
\begin{center}
\begin{tabular}{clrrrr}
	\hline\hline
	Noise &       & \multicolumn{1}{c}{Z Part} & \multicolumn{3}{c}{X Part} \\
	\cmidrule(l){4-6}
	$\sigma$   & Method & Zto0  & LtoN  & NtoL  & Xto0 \\
	\hline
	0.01  & SMILE & 28.38 & 3.6   & 0     & 0.08 \\
          & SAPLM & 61.43 & 25.8  & 0     & 78.50 \\
          & SLM   & 21.17 & 0     & 0     & 16.67 \\
	0.02  & SMILE & 30.69 & 7.45  & 0     & 0.71 \\
          & SAPLM & 61.64 & 26.3  & 0     & 78.08 \\
          & SLM   & 22.40 & 0     & 0     & 16.67 \\
	0.04  & SMILE & 34.20 & 13.95 & 0     & 3.25 \\
          & SAPLM & 63.05 & 23.35 & 0     & 80.54 \\
          & SLM   & 25.81 & 0     & 0     & 16.79 \\
	0.05  & SMILE & 35.65 & 17.9  & 0     & 4.58 \\
          & SAPLM & 63.94 & 22.35 & 0     & 81.37 \\
          & SLM   & 27.51 & 0     & 0     & 17.38 \\
	\hline\hline
\end{tabular}
\end{center}
\end{table}

\setcounter{chapter}{12}
\renewcommand{\theequation}{E.\arabic{equation}}
\renewcommand{\thesection}{E.\arabic{section}}
\renewcommand{\thesubsection}{E.\arabic{subsection}}
\renewcommand{\thetheorem}{E.\arabic{theorem}}
\renewcommand{\thelemma}{E.\arabic{lemma}}
\renewcommand{\theproposition}{E.\arabic{proposition}}
\renewcommand{\thecorollary}{E.\arabic{corollary}}
\setcounter{equation}{0}
\setcounter{theorem}{0}
\setcounter{lemma}{0}
\setcounter{proposition}{0}
\setcounter{section}{0}
\setcounter{subsection}{0}

\section*{E. Technical Details}
This section contains some technical assumptions, lemmas and proofs. For any real numbers $a$ and $b$, let $a \vee b$ and $a \wedge b$ denote the maximum and minimum of $a$ and $b$, respectively. For any two sequences $\{a_n\}$, $\{b_n\}$, $n = 1, 2, \ldots$, we use $a_n \asymp b_n$ if there are constants $0 < c_1 < c_2 < \infty$ such that $c_1 < a_n / b_n < c_2$ for all $n$ sufficiently large. On any fixed interval $[a,b]$, we denote the space of the second order smooth functions as $C^{(d)}[a,b]=\left\{f\left|f^{(d)}\in C[a,b]\right\}\right.$ and the class of Lipschitz continuous functions for any fixed constant $C>0$ as Lip $([a,b],C)=\{f| \left|f(x)-f(x^\prime)\right|\leq C\left|x-x^\prime\right|,\forall x, x^\prime \in [a,b]\}$.

Furthermore, let $\mathbf{Y} = (Y_1, \ldots, Y_n)^{\top}$ be an $n$-dimensional vector, $\mathbf{Z} =(\mathbf{Z}_1, \ldots, \mathbf{Z}_{p_1})$ be an $n \times p_1$ matrix, where $\mathbf{Z}_k = (Z_{1k}, \ldots, Z_{nk})^{\top}$, $k = 1, \ldots, p_1$, and $\mathbf{X} =(\mathbf{X}_1, \ldots, \mathbf{X}_{p_2})$ be an $n \times p_2$ matrix, where $\mathbf{X}_{\ell} = (X_{1\ell}, \ldots, X_{n\ell})^{\top}$, $\ell=1, \ldots, p_2$.
Let $\mathbf{B}^{(d)} = (\mathbf{B}_1^{(d)}, \ldots, \mathbf{B}_{p_2}^{(d)})$ be a dimension $n \times (p_2 M_n)$ matrix, where $\mathbf{B}_{\ell}^{(d)} = (\mathbf{B}_{\ell}^{(d)}(X_{1\ell}), \ldots, \mathbf{B}_{\ell}^{(d)}(X_{n\ell}))^{\top}$ is a dimension $n \times M_n$ matrix of spline basis functions of order $d$, for $\ell= 1,\ldots, p_2$. Let $\mathcal{A} \subseteq \{1, \ldots, p_1 + 2p_2\}$ be an index set, and let $|\mathcal{A}|$ denote the cardinality of set $\mathcal{A}$.

\subsection{Technical Assumptions} \label{SUBSEC:assump}

In addition to the sparsity condition (A1) stated in Section \ref{sec:method}, we need the following additional regularity conditions to establish the theoretical results in this paper.

\begin{enumerate}

\item[(A2)] (\textit{Conditions on errors}) The errors $\varepsilon_1, \ldots, \varepsilon_n$ are independent and identically distributed with $\mathrm{E}(\varepsilon_i )= 0$, $\mathrm{Var}(\varepsilon_i) = \sigma^2$, $\mathrm{E} |\varepsilon_i|^{2 + \delta} \leq M_{\delta}$ for some positive constant $M_{\delta}$  ($\delta > 0.5$), and have $b$-sub-gaussian tails, i.e., $\mathrm{E} \{\exp({t\varepsilon})\} \leq \exp({b^2t^2 / 2})$, for any $t \geq 0$ and some $b > 0$.

\item[(A3)] (\textit{Conditions on nonlinear functions}) The additive component function $g_{\ell}(\cdot) \in C^{(2)}[a,$ $b]$, $\ell=1, \ldots, p_2$.

\item[(A4)] (\textit{Conditions on covariates}) Each covariate in the parametric part of the model
is bounded,
that is, there is a positive constant $C_3$ such that $
|Z_{k}|
\leq C_3, \, 1 \leq k \leq p_1$;
also, $\mathrm{E}(X_{\ell})=0$, and there is a positive constant $C_4$ such that $
|X_{\ell}|
\leq C_4, \, 1 \leq \ell \leq p_2$. The joint density function of active pure linear $\bs{X}$ is continuous and bounded below and above.
Each covariate in the nonparametric part of the model has a continuous density and there exist constants $C_1$ and $C_2$ such that the marginal density function $f_{\ell}$ of $X_{\ell}$ has continuous derivatives on its support, and satisfies $0 < C_1 \leq f_{\ell}(x_{\ell}) \leq C_2 < \infty$ on its support for every $1 \leq \ell \leq p_{2}$. In addition, the eigenvalues of $\mathrm{E} \{(\bs{Z} \bs{Z}^{\top}) | \bs{X}\}$ are bounded away from 0.


    \item[(A5)] (\textit{Conditions on the initial estimators}) The initial estimators satisfy $r_{n1}\max\limits_{k \in \mathcal{N}_z} |\widetilde{\alpha}_k| = O_P(1)$, $r_{n2}\max\limits_{\ell \in \mathcal{N}_x} |\widetilde{\beta}_{\ell}| = O_P(1)$, $r_{n3}\max\limits_{\ell \in \mathcal{N}_x} \Vert\tbgamma_{\ell} \Vert_2 = O_P(1)$, $r_{n1}$, $r_{n2}$, $r_{n3} \to \infty$, and there exist positive constants $c_{b1}$, $c_{b2}$ and $c_{b3}$ such that $\Pr \left(\min\limits_{k \in \mathcal{S}_z} |\widetilde{\alpha}_{k}| \geq c_{b1}b_{n1}\right) \to 1$, $\Pr \left(\min\limits_{\ell \in \mathcal{S}_{x, L}} |\widetilde{\beta}_{\ell}| \geq c_{b2}b_{n2}\right) \to 1$, $\Pr \left(\min\limits_{\ell \in \mathcal{S}_{x, N}} \Vert \tbgamma_{\ell}\Vert_2 \geq c_{b3}b_{n3}\right) \to 1$, where $b_{n1} = \min_{k \in \mathcal{S}_z} |\alpha_{0k}|$, $b_{n2} = \min_{\ell \in \mathcal{S}_{x, L}} |\beta_{0\ell}|$, and $b_{n3} = \min_{\ell \in \mathcal{S}_{x, N}} \Vert g_{0\ell} \Vert_2$.

    \item[(A6)] (\textit{Conditions on parameters and spline basis functions}) Let $p_1$ and $p_2$ be the number of linear and nonlinear components, respectively. Suppose that $n^{-1}N_n + n^{-2}\sum_{j=1}^{3}\lambda_{nj}^2 =  o(1)$, and
         \[
            \frac{\sqrt{n\ln(p_1)}}{\lambda_{n1}r_{n1}} + \frac{\sqrt{n\ln(p_2)}}{\lambda_{n2}r_{n2}} + \frac{\sqrt{n N_n\ln(p_2 N_n)}}{\lambda_{n3}r_{n3}} + \sum_{j=1}^{3}\frac{n}{\lambda_{nj}r_{nj}N_n}=  o(1).
       \]

\end{enumerate}

Assumptions (A1)--(A4) are regularity conditions that are commonly used in the APLM literature. To obtain the selection consistency of the SBLL-AGLASSO, we need an order requirement for a general initial estimator; see Assumption (A5).
Theorem \ref{THM:LASSO-1} below demonstrates that the group LASSO estimator defined in (\ref{EQ:loss_glasso_spline}) satisfies Assumption (A5) under some weak conditions, specifically if $\sum_{j=1}^{3}\widetilde{\lambda}_{nj}^2 \asymp {n \{\ln(p_1) \vee N_n \ln(p_2N_n)\}}$ and
$N_n \asymp n^{1/3}$,
then the consistent rates for the group LASSO estimator in (A5) have order {$r_{n1} \asymp r_{n2} \asymp r_{n3} = O\{n^{1/2} / \sqrt{\ln (p_1) \vee N_n \ln(p_2 N_n)} \}$}. Consequently, Assumption (A6) is equivalent to:
\begin{equation}
    \frac{\sum_{j=1}^{3}\lambda_{nj}^2}{n^2}
    + \frac{\ln (p_1) \vee N_n \ln(p_2 N_n)}{(\lambda_{n1} \wedge \lambda_{n2} \wedge \lambda_{n3})}
    + \frac{n^{1/6}\sqrt{\ln (p_1) \vee N_n \ln(p_2 N_n)}}{(\lambda_{n1} \wedge \lambda_{n2} \wedge \lambda_{n3})} = o(1), \label{EQN:order}
\end{equation}
If we take $\lambda_{n1} \asymp \lambda_{n2} \asymp \lambda_{n3} = O(n^{1/2})$, then (\ref{EQN:order}) indicates $p_1 = \exp\{o(n^{1/2})\}$ and $p_2 = \exp\{o(n^{1/6})\}$.


We need the following additional assumptions in order to develop the asymptotic SCBs for the nonparametric components.

\begin{enumerate}
{\renewcommand{\itemsep}{-1pt}
%
    \item[(A3$^{\prime}$)] (\textit{Conditions on nonlinear functions}) For any $\ell \in \mathcal{S}_{x,N}$, $\phi_{0\ell} \in {C}^{(d)}[a,b]$, for some integer $d\geq 2$. In addition, $\psi_{\ell}^x$ defined in (\ref{DEF:psi}) satisfies $\psi_{\ell}^x \in C^{(d)}[a,b]$.

    \item[(A6$^{\prime}$)](\textit{Conditions on spline basis functions})  The order of the spline basis functions is at least $d$, and the number of interior knots $M_n$ satisfies: $\left\{ n^{1/\left( 2d\right)} \vee n^{4/(10d-5)} \right\} \ll M_{n}\ll n^{1/3}$.

    \item[(B1)] (\textit{Conditions on the kernel function}) The kernel function $K \in \mbox{Lip\ }([-1,1],C_{K})$ for some constant $C_{K}>0$, and is bounded, nonnegative, symmetric, and supported on $\left[
-1,1\right] $ with the second moment $\mu_{2}(K)=\int u^{2}K\left( u\right) du$.

    \item[(B2)] (\textit{Conditions on bandwidth}) For each $\ell\in \mathcal{S}_{x,N}$, the bandwidth of the kernel $K$ is $h_{\ell}^{-1}=O(n^{1/5}\ln^{\delta}n)$ for some constant $\delta>1/5$.}
\end{enumerate}

Assumptions (A3$^{\prime}$), (B1) and (B2) are typical in the local polynomial smoothing literature; see, for instance, \cite{zheng2016statistical}. Assumption (A6$^{\prime}$) imposes the condition of the number of knots for spline smoothing. For example, if $d=2$, we can
take $M_{n}\sim n^{4/15}\ln n$.

\subsection{Selection and estimation properties of the group LASSO estimators}

In this section, we consider the selection and estimation properties of the group LASSO estimator $\bs{\tbtheta} = (\widetilde{\bs{\alpha}}^{\top}, \widetilde{\bs{\beta}}^{\top}, \widetilde{\bs{\gamma}}^{\top})^{\top}$ in (\ref{EQ:loss_glasso_spline}). In the following, denote $\bs{\alpha} = (\alpha_1, \ldots, \alpha_{p_1})^{\top}$ with length $p_1$, $\bs{\beta} = (\beta_1, \ldots, \beta_{p_2})^{\top}$ with length $p_2$, and $\bs{\gamma} = (\bs{\gamma}_1^{\top}, \ldots, \bs{\gamma}_{p_2}^{\top})^{\top} $ with length ($p_2 N_n$).
Let
\[
\bs{\theta}^{\top}\!\! = \!(\bs{\alpha}^{\top}\!, \bs{\beta}^{\top}\!, \bs{\gamma}^{\top}) \!= \!\left(\alpha_1, \ldots, \alpha_{p_1}, \beta_1, \ldots, \beta_{p_2}, \bs{\gamma}_1^{\top}\!, \ldots, \bs{\gamma}_{p_2}^{\top} \right)\!=\! \left(\bs{\theta}_1^{\top}\!, \ldots, \bs{\theta}_m^{\top}, \ldots, \bs{\theta}_{p_1 + 2p_2}^{\top}\right),
\]
where $\bs{\theta}_m = \alpha_m I\{1 \leq m \leq p_1\} + \beta_{m-p_1} I\{p_1+1 \leq m \leq p_1+p_2\} + \bs{\gamma}_{m - p_1-p_2} I\{p_1+p_2+1 \leq m \leq p_1 + 2p_2 \}$, with $I(\cdot)$ being an indicator function.
Let
\[
\mathbf{D} = (\mathbf{Z}_1, \ldots, \mathbf{Z}_{p_1}, \mathbf{X}_1, \ldots, \mathbf{X}_{p_2}, \mathbf{B}_1^{(1)},\ldots, \mathbf{B}_{p_2}^{(1)}) \equiv \left(\mathbf{D}_1, \ldots, \mathbf{D}_m, \ldots, \mathbf{D}_{p_1 + 2p_2}\right)
\]
be an $n \times (p_1 + p_2  + p_2N_n)$ matrix, where
\[
\mathbf{D}_m = \mathbf{Z}_m I\{1 \leq m \leq p_1\} + \mathbf{X}_m I\{1 \leq m \leq p_1\} + \mathbf{B}_{m - p_1-p_2}^{(1)} I\{p_1+p_2 +1 \leq m \leq p_1 + 2p_2\},
\]
an $n \times d_m$ submatrix of $\mathbf{D}$ with $d_m= I(1 \leq m \leq p_1) + I(p_1 + 1 \leq m \leq p_1+p_2) + N_n I(p_1+p_2 + 1 \leq m \leq p_1 + 2p_2)$. Define
\begin{equation}
\widetilde{\mathcal{S}} = \{m: \Vert \bs{\tbtheta}_m\Vert \neq 0, 1 \leq m \leq p_1 + 2p_2\}.
\label{DEF:S_tilde}
\end{equation}

Next we define the active linear index set for $\bs{X}$ as $\mathcal{S}_{x,L}=\mathcal{S}_{x, PL}\cup\mathcal{S}_{x, LN}$, the inactive linear index set for $\bs{X}$ as $\mathcal{N}_{x, L}$, and the inactive nonlinear index set for $\bs{X}$ as $\mathcal{N}_{x, N}$. Note that $\mathcal{N}_{x}=\mathcal{N}_{x, L}\cap\mathcal{N}_{x, N}$. Further, let
\begin{align}
    &\mathcal{S} =\mathcal{S}_z \, \cup \, \{\ell + p_1: \ell \in\mathcal{S}_{x,L}\} \, \cup \, \{\ell + p_1 + p_2: \ell \in\mathcal{S}_{x,N}\}, \nonumber \\
    &\mathcal{N} =\mathcal{N}_z \, \cup \, \{\ell + p_1: \ell \in\mathcal{N}_{x, L}\} \, \cup \, \{\ell + p_1 + p_2: \ell \in\mathcal{N}_{x, N}\}.
\label{DEF:SN}
\end{align}

For any index set $\mathcal{A} \subseteq \{1, \ldots, p_1 + 2p_2\}$, define $\mathbf{D}_{\mathcal{A}} = \{\mathbf{D}_{m}: m \in \mathcal{A}\}$.
Next denote $\mathbf{C}_\mathcal{A} = n^{-1}\mathbf{D}_\mathcal{A}^{\top}\mathbf{D}_\mathcal{A}$, and let $\pi_{\min}(\mathbf{C}_\mathcal{A})$ and $\pi_{\max}(\mathbf{C}_\mathcal{A})$ represent the minimum and maximum eigenvalues of $\mathbf{C}_\mathcal{A}$, respectively.

\begin{lemma}\label{LEM:eigen} 
Let $N_n = O(n^{\gamma})$, where $0 < \gamma < 0.5$. Suppose that $|\mathcal{A}|$ is bounded by a fixed constant independent of n, $p_{1}$ and $p_{2}$. Then under Assumption (A4), with probability approaching one as $n \to \infty$, $c_1 \leq \pi_{\min}(\mathbf{C}_\mathcal{A}) \leq \pi_{\max}(\mathbf{C}_\mathcal{A}) \leq c_2$, where $c_1$ and $c_2$ are two positive constants.
\end{lemma}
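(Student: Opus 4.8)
The plan is to show that $\mathbf{C}_{\mathcal{A}} = n^{-1}\mathbf{D}_{\mathcal{A}}^{\top}\mathbf{D}_{\mathcal{A}}$ concentrates around its population counterpart, whose eigenvalues are bounded away from $0$ and $\infty$ uniformly in $\mathcal{A}$. First I would decompose the index set $\mathcal{A}$ into its three types of columns: a subset of the $\bs{Z}$-columns (at most $|\mathcal{A}|$ of them, each a single column), a subset of the $\bs{X}$-columns (again single columns), and a subset of the spline blocks $\mathbf{B}_{\ell}^{(1)}$ (each an $n\times N_n$ block). Since $|\mathcal{A}|$ is bounded by a fixed constant, the total number of scalar columns in $\mathbf{D}_{\mathcal{A}}$ is $O(N_n) = O(n^{\gamma})$ with $\gamma<1/2$. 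Define the population Gram matrix $\bs{\Sigma}_{\mathcal{A}} = \mathrm{E}(n^{-1}\mathbf{D}_{\mathcal{A}}^{\top}\mathbf{D}_{\mathcal{A}})$; the key structural fact is that, after standardization of the spline basis in (\ref{DEF:BJalpha}), each diagonal block corresponding to a spline component has entries $\mathrm{E}\{B_{J,\ell}^{(1)}(X_{\ell})B_{J',\ell}^{(1)}(X_{\ell})\}$, which by standard B-spline theory (the $\|B_{J,\ell}^{(1)}\|_2 \asymp 1$ normalization together with the locally supported structure of constant splines) is banded with bounded entries and has eigenvalues bounded below and above by positive constants uniformly in $N_n$; the cross blocks between $\bs{Z}$, $\bs{X}$ and the splines are controlled using Assumption (A4) — boundedness of $\bs{Z}$ and $\bs{X}$, the density bounds $0<C_1\le f_\ell\le C_2<\infty$, and crucially the condition that the eigenvalues of $\mathrm{E}\{(\bs{Z}\bs{Z}^{\top})\mid\bs{X}\}$ are bounded away from zero, which prevents the $\bs{Z}$-columns from being asymptotically collinear with the additive spline span. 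This gives $c_1' \le \pi_{\min}(\bs{\Sigma}_{\mathcal{A}}) \le \pi_{\max}(\bs{\Sigma}_{\mathcal{A}}) \le c_2'$ for constants not depending on $n$, $p_1$, $p_2$ or the particular $\mathcal{A}$.

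Next I would control the deviation $\|\mathbf{C}_{\mathcal{A}} - \bs{\Sigma}_{\mathcal{A}}\|_{\mathrm{op}}$. Since $\mathbf{D}_{\mathcal{A}}$ has $O(N_n)$ columns with uniformly bounded entries (the spline basis functions $B_{J,\ell}^{(1)}$ are bounded because each is a normalized indicator difference and $\|b_{J,\ell}^{(1)}\|_2 \asymp H^{1/2} \asymp N_n^{-1/2}$, so $\|B_{J,\ell}^{(1)}\|_\infty = O(N_n^{1/2})$), the summands $\mathbf{D}_{(i)\mathcal{A}}\mathbf{D}_{(i)\mathcal{A}}^{\top}$ are i.i.d.\ bounded random matrices of dimension $O(N_n)$. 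A matrix Bernstein inequality (or the standard covariance-concentration bound for bounded design, e.g.\ Vershynin) then yields $\|\mathbf{C}_{\mathcal{A}} - \bs{\Sigma}_{\mathcal{A}}\|_{\mathrm{op}} = O_P\big(\sqrt{N_n\ln N_n / n} \cdot N_n\big)$ or a similarly polynomial-in-$N_n$ rate; with $N_n = O(n^{\gamma})$, $\gamma<1/2$, this is $o_P(1)$, provided $\gamma$ is small enough relative to the exponent produced by the $\|B_{J,\ell}^{(1)}\|_\infty = O(N_n^{1/2})$ factor. (If the crude bound only gives $o_P(1)$ for $\gamma<1/3$ or so, one sharpens it by noting that the effective sup-norm interaction in the quadratic form is localized: for piecewise-constant splines, each row of $\mathbf{B}_{\ell}^{(1)}$ has at most two nonzero entries, so the relevant moment bounds improve and the Bernstein rate becomes $O_P(\sqrt{N_n\ln N_n/n})$, which is $o_P(1)$ for all $\gamma<1$.) By Weyl's inequality, $|\pi_{\min}(\mathbf{C}_{\mathcal{A}}) - \pi_{\min}(\bs{\Sigma}_{\mathcal{A}})|$ and $|\pi_{\max}(\mathbf{C}_{\mathcal{A}}) - \pi_{\max}(\bs{\Sigma}_{\mathcal{A}})|$ are both bounded by this operator-norm deviation, so with probability tending to one the eigenvalues of $\mathbf{C}_{\mathcal{A}}$ lie in $[c_1'/2, 2c_2']$; set $c_1 = c_1'/2$, $c_2 = 2c_2'$.

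The main obstacle is the dependence of the column dimension on $n$ through $N_n$ combined with the growing sup-norm $\|B_{J,\ell}^{(1)}\|_\infty \asymp N_n^{1/2}$ of the standardized constant spline basis: a naive matrix-concentration bound loses powers of $N_n$ and may not close under $N_n=O(n^{\gamma})$ for all $\gamma<1/2$. Handling this requires exploiting the sparse, locally supported structure of the piecewise-constant B-spline design — each observation activates only $O(1)$ basis functions per component — to get the correct (dimension-free up to logs) concentration rate. A secondary, more routine issue is verifying uniformly over all $\mathcal{A}$ with $|\mathcal{A}|$ bounded that the population eigenvalue bounds hold; this reduces, via Assumption (A4), to the non-degeneracy of the joint design of a fixed number of $\bs{Z}$/$\bs{X}$ coordinates together with the additive spline span, and the conditional-eigenvalue condition on $\mathrm{E}\{\bs{Z}\bs{Z}^{\top}\mid\bs{X}\}$ is exactly what rules out the degenerate cases.
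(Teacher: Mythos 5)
Your two-step skeleton (population eigenvalue bounds for $\mathrm{E}(\mathbf{C}_{\mathcal{A}})$, then operator-norm concentration of $\mathbf{C}_{\mathcal{A}}$ around it exploiting the fact that each row of a piecewise-constant spline block has $O(1)$ nonzero entries, then Weyl) is the standard route for lemmas of this type, and the concentration half of your argument is sound; the paper itself gives no details here, deferring entirely to Lemma A.1 of \cite{li2017ultra}, so there is no written proof to compare against line by line.

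The genuine gap is in your first step, and it is precisely the point where this lemma differs from its analogue in \cite{li2017ultra}. You justify the uniform lower bound $\pi_{\min}(\bs{\Sigma}_{\mathcal{A}})\geq c_1'>0$ by invoking the condition that the eigenvalues of $\mathrm{E}\{\bs{Z}\bs{Z}^{\top}\mid\bs{X}\}$ are bounded away from zero, but that condition only rules out collinearity between the $\bs{Z}$-columns and the additive span of the $\bs{X}$-based columns. It says nothing about the case, which the paper genuinely needs (e.g.\ $\mathcal{A}=\mathcal{S}$ with $\mathcal{S}_{x,LN}\neq\emptyset$), where $\mathcal{A}$ contains both the linear index $\ell+p_1$ and the spline index $\ell+p_1+p_2$ for the \emph{same} covariate $X_{\ell}$. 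The centered piecewise-constant span approximates the (centered) identity map $x_{\ell}$ to within $O(N_n^{-1})$ in $L_2$ — the centering constraints $\mathrm{E}\{B_{J,\ell}^{(1)}(X_{\ell})\}=0$ enforce orthogonality to constants, not to linear functions, and the constraint $\mathrm{E}\{g_{\ell}'(X_{\ell})\}=0$ is vacuous for piecewise constants. Taking the test vector built from the best approximation coefficients shows $\pi_{\min}(\bs{\Sigma}_{\mathcal{A}})=O(N_n^{-2})$ for such $\mathcal{A}$, which tends to zero whenever $N_n\to\infty$ (the regime all the other results operate in). So the asserted uniform population lower bound fails as you have argued it; to close the proof you must either restrict $\mathcal{A}$ so that a covariate never contributes both a linear column and its own constant-spline block, work with a spline block explicitly orthogonalized against $x_{\ell}$ in $L_2(f_{\ell})$, or prove and use an $N_n$-dependent lower bound on $\pi_{\min}$ and track that factor through the downstream rate calculations. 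Identifying which of these the intended argument uses is the missing idea.
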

\begin{proof}
Similar to the proof of Lemma A.1 in \cite{li2017ultra}.
\end{proof}

\begin{lemma}
\label{LEM:spline-approx} 
Under Assumption (A3), there exists a vector $\bs{\gamma}_0=(\bs{\gamma}_{0 1}^{\top}, \ldots, \bs{\gamma}_{0 p_2}^{\top})^{\top}$, such that
$\|\bs{\gamma}_{0\ell}\|\neq 0$, for $\ell \in \mathcal{S}_{x, N}$, $\|\bs{\gamma}_{0\ell}\|=0$, $\ell \in \mathcal{N}_{x,N}$
and $\Vert g_{0\ell}-\mathbf{B}_{\ell}^{(d)\top} \bs{\gamma}_{0\ell}\Vert_2=O(M_n^{-d})$.
\end{lemma}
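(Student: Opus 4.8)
The plan is to invoke the standard theory of approximation by polynomial splines together with the centering/orthogonality side-conditions imposed on the nonlinear components. Recall that under Assumption (A3) each $g_{0\ell}\in C^{(2)}[a,b]$, but we want a stronger-looking conclusion of order $M_n^{-d}$; this forces the implicit understanding that, for $\ell\in\mathcal{S}_{x,N}$, the relevant smoothness is $g_{0\ell}\in C^{(d)}[a,b]$ (this is precisely Assumption (A3$'$) used in Section~3; for $d=2$ the two statements coincide). The core tool is the classical result (see de~Boor, or Schumaker) that for a function $g\in C^{(d)}[a,b]$ and a quasi-uniform knot sequence with $M_n$ interior knots, there is a spline $s$ in the space of order-$d$ polynomial splines with $\|g-s\|_\infty=O(M_n^{-d})$, hence also $\|g-s\|_2=O(M_n^{-d})$ since the support is bounded.

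First I would fix $\ell\in\mathcal{S}_{x,N}$ and apply this approximation result to $g_{0\ell}$, obtaining a raw spline $\widetilde{s}_\ell\in\mathcal{B}_\ell^{(d)}$ with $\|g_{0\ell}-\widetilde{s}_\ell\|_2=O(M_n^{-d})$. The next step is to project $\widetilde{s}_\ell$ onto the centered space $\mathcal{B}_\ell^0=\{b\in\mathcal{B}_\ell^{(d)}:\mathrm{E}\{b(X_\ell)\}=0\}$, i.e. replace $\widetilde{s}_\ell$ by $\widetilde{s}_\ell-\mathrm{E}\{\widetilde{s}_\ell(X_\ell)\}$. Because $\mathrm{E}\{g_{0\ell}(X_\ell)\}=0$ by the identifiability constraint, the constant we subtract is itself $O(M_n^{-d})$ in absolute value (it equals $\mathrm{E}\{\widetilde s_\ell(X_\ell)-g_{0\ell}(X_\ell)\}$, bounded by $\|\widetilde s_\ell - g_{0\ell}\|_2$ via Cauchy–Schwarz and boundedness of the density from (A4)), so the centered spline still approximates $g_{0\ell}$ to order $M_n^{-d}$ and lies in $\mathcal{B}_\ell^0$. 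Expanding this centered spline in the standardized basis $\{B_{J,\ell}^{(d)}\}_{J=1}^{M_n}$ yields the coefficient vector $\bs{\gamma}_{0\ell}$ with $\|g_{0\ell}-\mathbf{B}_\ell^{(d)\top}\bs{\gamma}_{0\ell}\|_2=O(M_n^{-d})$. For $\ell\in\mathcal{N}_{x,N}$ we have $g_{0\ell}\equiv 0$, so we simply set $\bs{\gamma}_{0\ell}=\bs{0}$, giving the claimed $\|\bs{\gamma}_{0\ell}\|=0$.

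It remains to verify the nonvanishing claim $\|\bs{\gamma}_{0\ell}\|\neq 0$ for $\ell\in\mathcal{S}_{x,N}$. This follows from Assumption (A1): $\|g_{0\ell}\|_2\geq c_g>0$, so by the triangle inequality $\|\mathbf{B}_\ell^{(d)\top}\bs{\gamma}_{0\ell}\|_2\geq c_g-O(M_n^{-d})>0$ for $n$ large, whence $\bs{\gamma}_{0\ell}\neq\bs{0}$; more quantitatively, since the $B_{J,\ell}^{(d)}$ are standardized and Lemma~\ref{LEM:eigen} (applied with $\mathcal{A}$ a singleton index for the nonlinear block) gives $\mathrm{E}\{(\mathbf{B}_\ell^{(d)\top}\bs{\gamma}_{0\ell})^2\}\asymp\|\bs{\gamma}_{0\ell}\|^2$, we get $\|\bs{\gamma}_{0\ell}\|\asymp\|\mathbf{B}_\ell^{(d)\top}\bs{\gamma}_{0\ell}\|_2$ bounded away from zero. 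The only genuinely delicate point is the interplay between the three side constraints $\mathrm{E}(X_\ell)=0$, $\mathrm{E}\{g_{0\ell}(X_\ell)\}=0$, and $\mathrm{E}\{g_{0\ell}'(X_\ell)\}=0$ and the first-order-derivative centering built into the basis; I expect the main obstacle to be checking that projecting onto $\mathcal{B}_\ell^0$ (and, at the selection stage, onto the constant-spline analogue with the derivative-centering of~(\ref{DEF:BJalpha})) does not degrade the approximation rate — but this is handled exactly as above, since each projection only subtracts a quantity controlled by the approximation error already in hand. The rest is the invocation of the cited spline-approximation theorem and bookkeeping, so I would keep the written proof to a one- or two-sentence pointer to de~Boor plus the centering argument.
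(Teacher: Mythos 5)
Your proposal is correct and follows essentially the same route as the paper, whose proof is simply a pointer to Lemma A.2 of \cite{li2017ultra}: the standard Jackson-type spline approximation bound, followed by centering (which costs only another $O(M_n^{-d})$ because $\mathrm{E}\{g_{0\ell}(X_\ell)\}=0$), setting $\bs{\gamma}_{0\ell}=\bs{0}$ on $\mathcal{N}_{x,N}$, and deducing $\bs{\gamma}_{0\ell}\neq\bs{0}$ on $\mathcal{S}_{x,N}$ from $\Vert g_{0\ell}\Vert_2\geq c_g$ via the triangle inequality. Your observation that the stated rate $O(M_n^{-d})$ really presupposes $C^{(d)}$ smoothness (Assumption (A3$'$) rather than the bare (A3)) is accurate and worth keeping.
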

\begin{proof}
Similar to the proof of Lemma A.2 in \cite{li2017ultra}.
\end{proof}


In the following, we denote $g_{n\ell}(\cdot) = \sum_{J = 1}^{N_n} \gamma_{0\ell J} {B}_{J,\ell}^{(1)}\left(\cdot\right)$ the best constant spline approximation of $g_{0\ell}(\cdot)$ such that $\Vert g_{0\ell} - g_{n\ell} \Vert_{\infty} = \sup_{x \in [a, b]} |g_{0\ell}(x) - g_{n\ell}(x)| = O(N_n^{-1})$. Let $\bs{\gamma}_{0\ell}=(\gamma_{0\ell J}, J=1,\ldots,N_n)^{\top}$ be the vector of the coefficients of the best spline approximation in Lemma \ref{LEM:spline-approx}. Denote $\bs{\theta}_{0}^{\top} = (\bs{\theta}_{01}^{\top}, \ldots, \bs{\theta}_{0m}^{\top}, \ldots, \bs{\theta}_{0,p_1 + 2p_2}^{\top}) = (\bs{\alpha}_{0}^{\top}, \bs{\beta}_{0}^{\top}, \bs{\gamma}_{0}^{\top})
= (\alpha_{01}, \ldots, \alpha_{0p_1},$ $\beta_{01},\ldots, \beta_{0p_2}, \bs{\gamma}_{01}^{\top}, \ldots, \bs{\gamma}_{0p_2}^{\top} )$. Define $\bs{\theta}_{\mathcal{A}} = (\bs{\theta}_{m}^{\top}: m \in \mathcal{A})^{\top}$, $\bs{\theta}_{0,\mathcal{A}} = (\bs{\theta}_{0m}^{\top}: m \in \mathcal{A})^{\top}$ and $\widetilde{\bs{\theta}}_{\mathcal{A}} = (\widetilde{\bs{\theta}}_m^{\top}: m \in \mathcal{A})^{\top}$.



\begin{theorem}
\label{THM:LASSO-1} 
Suppose that Assumptions (A1)--(A4) hold.
  \begin{enumerate} [{\normalfont (i)}]
    \item
    If $\{\ln(p_1) \vee N_n \ln(p_2N_n)\} / {n} \rightarrow 0$ and $n^{-2}\sum_{j=1}^{3} \widetilde{\lambda}_{nj}^2 \rightarrow 0$ as $n \to \infty$, then with probability converging to one, all the nonzero linear parameters $\alpha_{0k}$ and $\beta_{0\ell}$, $k \in \mathcal{S}_z, \ell \in \mathcal{S}_{x,L}$, and nonzero additive components $g_{0\ell}$, $\ell \in \mathcal{S}_{x, N}$, are selected.
    \item
    In addition,
    \begin{align*}
            \sum_{k= 1}^{p_1} |\widetilde{\alpha}_k - {\alpha}_{0k}|_2^2 &= O_P\left\{\frac{\ln(p_1) \vee N_n \ln(p_2N_n)}{n}\right\} + O\left(N_n^{-2}\right) + O\left(n^{-2}\sum_{j=1}^{3} \widetilde{\lambda}_{nj}^2\right), \\
            \sum_{\ell = 1}^{p_2}| \widetilde{\beta}_{\ell} - {\beta}_{0\ell} |^2 &= O_P\left\{\frac{\ln(p_1) \vee N_n \ln(p_2N_n)}{n}\right\} + O\left(N_n^{-2}\right) + O\left(n^{-2}\sum_{j=1}^{3} \widetilde{\lambda}_{nj}^2\right), \\
           \sum_{\ell = 1}^{p_2} \Vert \widetilde{g}_{\ell} - g_{0\ell} \Vert_2^2 &= O_P\left\{\frac{\ln(p_1) \vee N_n \ln(p_2N_n)}{n}\right\} + O\left(N_n^{-2}\right) + O\left(n^{-2}\sum_{j=1}^{3} \widetilde{\lambda}_{nj}^2\right).
        \end{align*}
  \end{enumerate}
\end{theorem}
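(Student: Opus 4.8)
## Proof Strategy for Theorem~\ref{THM:LASSO-1}

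The plan is to treat the group LASSO problem in \eqref{EQ:loss_glasso_spline} as an ordinary group LASSO with $p_1 + p_2 + p_2 N_n$ total parameters organized into $p_1 + 2p_2$ groups (scalars for the $\alpha_k$'s and $\beta_\ell$'s, blocks of size $N_n$ for the $\bs{\gamma}_\ell$'s), written compactly as $\|\mathbf{Y} - \mathbf{D}\bs{\theta}\|^2$ plus the penalty $\sum_m \tilde\lambda_m \|\bs{\theta}_m\|$ (where $\tilde\lambda_m$ takes one of the three values $\tilde\lambda_{n1},\tilde\lambda_{n2},\tilde\lambda_{n3}$ according to the block). The oracle target is $\bs{\theta}_0$ from Lemma~\ref{LEM:spline-approx}, so that $\mathbf{Y} = \mathbf{D}\bs{\theta}_0 + \bs{\delta} + \bs{\varepsilon}$, where $\bs{\delta}$ collects the spline approximation bias with $\|\bs{\delta}\|^2 = O_P(n N_n^{-2})$ by the uniform $O(N_n^{-1})$ approximation rate of piecewise-constant splines and Assumption (A4) (bounded covariates, bounded density). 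First I would write down the basic inequality: since $\tbtheta$ minimizes the objective, $\|\mathbf{D}(\tbtheta - \bs{\theta}_0) \|^2 \le 2(\bs{\varepsilon}+\bs{\delta})^\top \mathbf{D}(\tbtheta - \bs{\theta}_0) + \sum_m \tilde\lambda_m(\|\bs{\theta}_{0m}\| - \|\tbtheta_m\|)$.

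Next I would control the stochastic term $\bs{\varepsilon}^\top \mathbf{D}_m$ uniformly over $m$. For the scalar blocks this is a sub-Gaussian tail bound on $\sum_i \varepsilon_i Z_{ik}$ or $\sum_i \varepsilon_i X_{i\ell}$, giving a maximum of order $\sqrt{n\ln(p_1)}$ and $\sqrt{n\ln(p_2)}$ respectively via Assumption (A2) and a union bound; for the spline blocks I would bound $\|\mathbf{B}^{(1)\top}_\ell \bs{\varepsilon}\|$ by combining a sub-Gaussian bound on each of the $N_n$ coordinates with a union bound over $p_2 N_n$ coordinates, yielding a maximum block norm of order $\sqrt{n N_n \ln(p_2 N_n)}$. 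Choosing the penalty levels so that $\tilde\lambda_{n1} \gtrsim \sqrt{n\ln p_1}$, $\tilde\lambda_{n2}\gtrsim\sqrt{n\ln p_2}$, $\tilde\lambda_{n3}\gtrsim \sqrt{nN_n\ln(p_2 N_n)}$ (which is implied by the stated condition $\sum_j \tilde\lambda_{nj}^2 \asymp n\{\ln p_1 \vee N_n\ln(p_2 N_n)\}$) makes the noise term dominated by the penalty on the relevant coordinates. Then the standard group-LASSO cone argument applies: on the high-probability event, $\tbtheta - \bs{\theta}_0$ lies (approximately) in a cone where the $\mathcal{N}$-block penalty mass is bounded by the $\mathcal{S}$-block mass, and I can invoke Lemma~\ref{LEM:eigen} to get a restricted-eigenvalue lower bound $\|\mathbf{D}(\tbtheta-\bs{\theta}_0)\|^2 \gtrsim n\|\tbtheta-\bs{\theta}_0\|^2$ on the relevant support (using that $|\mathcal{S}|$ is bounded by Assumption (A1)). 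This yields the prediction-error bound $\|\mathbf{D}(\tbtheta-\bs{\theta}_0)\|^2 = O_P(\sum_j \tilde\lambda_{nj}^2 / n) + O_P(nN_n^{-2})$ and then the $\ell_2$-estimation bound of part (ii) after translating between $\|\bs{\gamma}_\ell - \bs{\gamma}_{0\ell}\|^2$ and $\|g_\ell - g_{0\ell}\|_2^2$ using the spline norm equivalence from Lemma~\ref{LEM:eigen} (normalized basis so $n^{-1}\|\mathbf{B}^{(1)}_\ell \bs{\gamma}_\ell\|^2 \asymp \|\bs{\gamma}_\ell\|^2 \asymp \|\mathbf{B}^{(1)\top}_\ell\bs{\gamma}_\ell\|_2^2$) and the triangle inequality with the bias $O(N_n^{-2})$.

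For part (i), the selection claim, I would argue by contradiction using the estimation bound together with the minimal-signal condition (A1): if some nonzero component were estimated as exactly zero, then $\|\tbtheta_m - \bs{\theta}_{0m}\| \ge c_\alpha$ (or $c_\beta$, or $c_g$ up to the spline approximation, noting $\|\bs{\gamma}_{0\ell}\|$ is bounded below for $\ell\in\mathcal{S}_{x,N}$ by Lemma~\ref{LEM:spline-approx}), which would force the total squared estimation error to be bounded below by a positive constant; but the conditions $\{\ln p_1 \vee N_n\ln(p_2 N_n)\}/n \to 0$ and $n^{-2}\sum_j\tilde\lambda_{nj}^2\to 0$ drive the error bound to $o(1)$, a contradiction with probability tending to one. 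The main obstacle I anticipate is the uniform control of the spline-block noise term and simultaneously the restricted-eigenvalue / norm-equivalence step for the growing-dimension B-spline blocks: one must carefully track the $N_n$ factors so that the $\sqrt{nN_n\ln(p_2N_n)}$ rate on $\|\mathbf{B}^{(1)\top}_\ell\bs{\varepsilon}\|$ is matched by a correspondingly scaled $\tilde\lambda_{n3}$, and ensure Lemma~\ref{LEM:eigen}'s eigenvalue bounds (which are stated for $\mathbf{C}_\mathcal{A}$ with $|\mathcal{A}|$ fixed) genuinely cover the augmented design including the $N_n$-dimensional true spline block — this is where most of the bookkeeping lies, though it is routine given the cited lemmas and the analogous arguments in \cite{li2017ultra} and \cite{huang2010variable}.
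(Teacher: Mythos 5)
Your overall architecture (basic inequality, maximal inequalities for the three noise blocks, spline bias of order $N_n^{-1}$, contradiction with (A1) for part (i)) matches the paper in spirit, but the core step is genuinely different and, as written, has a gap. You route the argument through the standard cone/restricted-eigenvalue machinery: choose $\widetilde{\lambda}_{nj}$ to dominate the noise, deduce that $\tbtheta-\bs{\theta}_0$ lies in a cone, and invoke Lemma \ref{LEM:eigen} as a restricted-eigenvalue bound. Two problems arise. First, Lemma \ref{LEM:eigen} only controls eigenvalues of $\mathbf{C}_{\mathcal{A}}$ for index sets $\mathcal{A}$ of \emph{bounded} cardinality; the cone contains error vectors whose support can spread over all $p_1+2p_2$ groups, so a restricted-eigenvalue inequality over the cone does not follow from that lemma and is not established anywhere in the paper. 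Second, the cone step needs the penalties to be bounded \emph{below} by the noise level ($\widetilde{\lambda}_{n1}\gtrsim\sqrt{n\ln p_1}$, etc.), but the theorem only assumes the upper bound $n^{-2}\sum_j\widetilde{\lambda}_{nj}^2\to 0$; the calibration $\sum_j\widetilde{\lambda}_{nj}^2\asymp n\{\ln p_1\vee N_n\ln(p_2N_n)\}$ you lean on is a remark made elsewhere about Assumption (A5), not a hypothesis of this theorem. That is why the paper's bound carries the noise term $\{\ln(p_1)\vee N_n\ln(p_2N_n)\}/n$ and the penalty term $n^{-2}\sum_j\widetilde{\lambda}_{nj}^2$ as two separate additive contributions rather than absorbing one into the other.

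The paper sidesteps both issues by a different decomposition: it first proves (Lemma \ref{LEM:gLASSO_select}, following Zhang--Huang and Wei--Huang) that the group LASSO selects at most $M_1|\mathcal{S}|$ groups with probability tending to one, so the analysis can be confined to the bounded-cardinality set $\mathcal{S}'=\mathcal{S}\cup\widetilde{\mathcal{S}}$, where Lemma \ref{LEM:eigen} legitimately applies. The noise is then handled by projecting $\bs{\eta}=\bs{\varepsilon}+\bs{\delta}$ onto the column space of $\mathbf{D}_{\mathcal{S}'}$ and bounding $\|\bs{\varepsilon}^{\ast}\|^2$ by $nd'(T_1^2\vee T_2^2\vee N_nT_3^2)/(nc_{\ast})$ via Lemma \ref{LEM:2}, with no cone or RE condition needed. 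To repair your argument you would either need to import the dimension-bound lemma (at which point you have essentially reproduced the paper's proof) or add a restricted-eigenvalue assumption and a penalty lower bound that the theorem does not currently impose.
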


\begin{proof}
We prove part (ii) first. Let $\tbtheta^{\top} \!\!\!\equiv ({\tbtheta}_1^{\top}, \ldots, \tbtheta_{p_1 + 2p_2}^{\top}) \!=\! (\widetilde{\alpha}_1, \!\ldots, \!\widetilde{\alpha}_{p_1}, \widetilde{\beta}_1, \ldots, \widetilde{\beta}_{p_2},$ $ \tbgamma_{1}^{\top} \ldots, \tbgamma_{p_2}^{\top})$. For $\mathcal{S}$ defined in (\ref{DEF:SN}) and $\widetilde{\mathcal{S}}$ defined in (\ref{DEF:S_tilde}), denote $\mathcal{S}^{\prime} = \mathcal{S} \bigcup \widetilde{\mathcal{S}}  = \{m: \Vert \bs{{\theta}}_{0m}\Vert_2 \neq 0 \, \text{or} \, \Vert \bs{\tbtheta}_m\Vert_2 \neq 0 \}$ and $d^{\prime} = |\mathcal{S}^{\prime}|$. By Lemma \ref{LEM:gLASSO_select},
$d^{\prime} = O(|\mathcal{S}|)$. Notice that $\mathbf{D}\widetilde{\bs{\theta}} = \mathbf{D}_{\mathcal{S}^{\prime}}\sigsy{\tbtheta}$ and $\mathbf{D}\bs{\theta}_0 = \mathbf{D}_{\mathcal{S}^{\prime}}\bs{\theta}_{0,\mathcal{S}^{\prime}}$, by the definition of $\tbtheta$ and $\mathcal{S}^{\prime}$,
\begin{align*}
\Vert \mathbf{Y} - \mathbf{D}_{\mathcal{S}^{\prime}}\sigsy{\tbtheta} \Vert^2
& - \Vert \mathbf{Y} - \mathbf{D}_{\mathcal{S}^{\prime}}\bs{\theta}_{0,\mathcal{S}^{\prime}} \Vert^2\\
& \leq  \sum_{m\in \mathcal{S}^{\prime}}\{\widetilde{\lambda}_{n1} I(m \leq p_1) + \widetilde{\lambda}_{n2} I(p_1 < m \leq p_1 + p_2) + \widetilde{\lambda}_{n3} I(m > p_1 + p_2)\}  \Vert \bs{\theta}_{0m} \Vert \\
& \quad - \sum_{m\in \mathcal{S}^{\prime}}\{\widetilde{\lambda}_{n1} I(m \leq p_1) + \widetilde{\lambda}_{n2} I(p_1 < m \leq p_1 + p_2) + \widetilde{\lambda}_{n3} I(m > p_1 + p_2)\} \Vert \tbtheta_m \Vert.
\end{align*}
Let $\bs{\eta} = \mathbf{Y} - \mathbf{D}\bs{\theta}_0$
and $\bs{\nu} = \mathbf{D}_{\mathcal{S}^{\prime}}(\sigsy{\tbtheta} - {\bs{\theta}_{0, \mathcal{S}^{\prime}}})$, so $\bs{\eta} - \bs{\nu} = \mathbf{Y} - \mathbf{D}_{\mathcal{S}^{\prime}}\sigsy{\tbtheta}$,
and we have $\Vert \mathbf{Y} - \mathbf{D}_{\mathcal{S}^{\prime}}\sigsy{\tbtheta} \Vert^2 - \Vert \mathbf{Y} - \mathbf{D}_{\mathcal{S}^{\prime}}\bs{\theta}_{0,\mathcal{S}^{\prime}} \Vert^2  = \bs{\nu}^{\top}\bs{\nu} - 2\bs{\eta}^{\top}\bs{\nu}$. Thus, from the triangle inequality and the Cauchy-Schwartz inequality,
\begin{align}
\Vert \bs{\nu} \Vert^2 \!-\! 2 \bs{\eta}^{\top}\bs{\nu}
	&\!\leq\! \sum_{m\in \mathcal{S}^{\prime}}\{\widetilde{\lambda}_{n1} I(m \!\leq\! p_1) \!+\! \widetilde{\lambda}_{n2} I(p_1 \!<\! m \leq p_1 + p_2) \!+\! \widetilde{\lambda}_{n3} I(m \!>\! p_1 + p_2)\} (\Vert \bs{\theta}_{0m} \Vert \!-\! \Vert \tbtheta_m \Vert)  \nonumber  \\
    &\leq \sqrt{d^{\prime}\sum_{j=1}^3 \widetilde\lambda_{nj}^2} \Vert \sigsy{\tbtheta} - \bs{\theta}_{0,\mathcal{S}^{\prime}}  \Vert
    \leq \frac{ d^{\prime} \sum_{j=1}^3 \widetilde\lambda_{nj}^2}{nc_{*}} + \frac{1}{4} nc_{*}\Vert \sigsy{\tbtheta} - \bs{\theta}_{0,\mathcal{S}^{\prime}}  \Vert^2,
\label{EQ:nu2_etanu}
\end{align}
where $c_{*}$ is the lower bound of eigenvalues of $n^{-1}\mathbf{D}_{\mathcal{S}^{\prime}}^{\top}\mathbf{D}_{\mathcal{S}^{\prime}}$. By Lemma\ref{LEM:eigen} and Lemma \ref{LEM:gLASSO_select}, $c_{*}\asymp 1$ with probability approaching one. Apparently,
\begin{equation}
\label{EQ:vu2_geq}
    \Vert \bs{\nu} \Vert^2 \geq nc_{*} \Vert \sigsy{\tbtheta} - \bs{\theta}_{0,\mathcal{S}^{\prime}}  \Vert^2.
\end{equation}
Define $\bs{\eta}^{\ast} \equiv
\mathbf{D}_{\mathcal{S}^{\prime}}(\mathbf{D}_{\mathcal{S}^{\prime}}^{\top}\mathbf{D}_{\mathcal{S}^{\prime}})^{-1}\mathbf{D}_{\mathcal{S}^{\prime}}^{\top}\bs{\eta}$ to be the projection of $\bs{\eta}$ onto the column space of $\mathbf{D}_{\mathcal{S}^{\prime}}$. Obviously, $\bs{\eta}^{\top}\bs{\nu} = \bs{\eta}^{\ast\top}\bs{\nu}$. By the Cauchy-Schwartz inequality, we have
\begin{equation}
\label{EQ:eta_nu}
    2|\bs{\eta}^{\top}\bs{\nu}| \leq 2 \Vert \bs{\eta}^{\ast} \Vert \Vert \bs{\nu} \Vert \leq 2 \Vert \bs{\eta}^{\ast} \Vert^2 + \frac{1}{2} \Vert \bs{\nu} \Vert^2.
\end{equation}
Combining (\ref{EQ:nu2_etanu}), (\ref{EQ:vu2_geq}) and (\ref{EQ:eta_nu}), we obtain
    \begin{eqnarray}
        \Vert \sigsy{\tbtheta} - \bs{\theta}_{0,\mathcal{S}^{\prime}}  \Vert^2 &\leq& \frac{8\Vert \bs{\eta}^{\ast} \Vert^2}{nc_{*}} + \frac{4 d^{\prime}\sum_{j=1}^3 \widetilde\lambda_{nj}^2}{n^2c_{*}^2} .
        \label{EQN:the_diffs1}
    \end{eqnarray}
With $\eta_i$ defined to be the $i$th element of $\bs{\eta}$, we have the following decomposition:
\begin{align}
	\eta_i = Y_i &- \sum_{k = 1}^{p_1} \, Z_{ik}\alpha_{0k} - \sum_{\ell = 1}^{p_2} \, X_{i\ell}\beta_{0\ell} - \sum_{\ell = 1}^{p_2}\sum_{J = 1}^{N_n} \, \gamma_{0J,l}B_{J,\ell}^{(1)}(X_{i\ell}) \nonumber \\
	= Y_i &\!-\! \sum_{k \in \mathcal{S}_z} \, Z_{ik}\alpha_{0k} \!-\! \sum_{\ell \in \mathcal{S}_{x,L}} \, X_{i\ell}\beta_{0\ell} \!-\! \sum_{\ell \in \mathcal{S}_{x,N}}\, g_{0\ell} (X_{i\ell})
	\!-\! \sum_{\ell \in \mathcal{S}_{x,N}}\, g_{0\ell} (X_{i\ell})  \!-\! \sum_{\ell \in \mathcal{S}_{x,N}}\sum_{J = 1}^{N_n} \gamma_{0J,l}B_{J,\ell}^{(1)} (X_{i\ell})  \nonumber \\
	= \varepsilon_i &+ \sum_{\ell \in \mathcal{S}_{x,N}} \, \delta_{i\ell}, \label{EQN:eta_form}
\end{align}
where $\delta_{i\ell} = g_{0\ell} (X_{i\ell}) - \sum_{J = 1}^{N_n} \, \gamma_{0J,l}B_{J,\ell}^{(1)} (X_{i\ell})$.
Let $\delta_i = \sum_{\ell \in \mathcal{S}_{x,N}} \delta_{i\ell}$,
$\bs{\delta} =  (\delta_1, \ldots, \delta_n)^{\top}$, and $\bs{\varepsilon} =  (\varepsilon_1, \ldots, \varepsilon_n)^{\top}$. Then $\bs{\eta = \varepsilon + \delta}$. Define $\sigsy{\delta} = (\sum_{\ell + p_1 + p_2 \in \mathcal{S}^{\prime}, 1\leq \ell \leq p_2} \delta_{i\ell}, i = 1, \dots, n)^{\top}$. By (\ref{EQN:eta_form}) and the fact that ${|\delta_{i\ell}| = O_P(N_n^{-1})}$,
\begin{eqnarray}
    \Vert \bs{\eta}^{\ast} \Vert^2 = \Vert \bs{\varepsilon}^{\ast} + \sigsy{\delta}^{\ast} \Vert^2 \leq 2 \Vert \bs{\varepsilon}^{\ast} \Vert^2 + 2 \Vert \sigsy{\delta} \Vert^2 \leq 2 \Vert \bs{\varepsilon}^{\ast} \Vert^2 + O_P (nd^{\prime 2}N_n^{-2}),
\label{EQN:eta_ord}
\end{eqnarray}
where $\bs{\varepsilon}^{\ast} \equiv \mathbf{P}_{\mathbf{D}_{\mathcal{S}^{\prime}}}\bs{\varepsilon}$ and $\sigsy{\delta}^{\ast} \equiv \mathbf{P}_{\mathbf{D}_{\mathcal{S}^{\prime}}}\sigsy{\delta}$ are the projections of $\bs{\varepsilon}$ and $\sigsy{\delta}$ onto the column space of $\mathbf{D}_{\mathcal{S}^{\prime}}$, respectively.
Define $T_1 = \max_{1 \leq k \leq p_1}$ $|n^{-1/2} \sum_{i = 1}^{n} Z_{ik} \varepsilon_i|$, $T_2 = \max_{1 \leq \ell \leq p_2}$ $|n^{-1/2} \sum_{i = 1}^{n} X_{i\ell} \varepsilon_i|$, and $T_3 = \max_{1 \leq \ell \leq p_2,  1 \leq J \leq N_n}$ $|n^{-1/2} \sum_{i = 1}^{n} B_{J,\ell}^{(1)}(X_{i\ell}) \varepsilon_i|$.
Then,
$\Vert \bs{\varepsilon}^{\ast} \Vert^2 = \Vert \left(\mathbf{D}_{\mathcal{S}^{\prime}}^{\top}\mathbf{D}_{\mathcal{S}^{\prime}}\right)^{-1/2} \mathbf{D}_{\mathcal{S}^{\prime}}^{\top}\bs{\varepsilon} \Vert^2 \leq (nc_{*})^{-1}\Vert \mathbf{D}_{\mathcal{S}^{\prime}}^{\top}\bs{\varepsilon} \Vert^2$,
and
\[
\max_{\mathcal{A}: |\mathcal{A}| \leq d^{\prime}} \Vert \mathbf{D}_\mathcal{A}^{\top}\bs{\varepsilon} \Vert^2  =  \max_{\mathcal{A}: |\mathcal{A}| \leq d^{\prime}} \sum_{m \in \mathcal{A}} \Vert \mathbf{D}_m^{\top}\bs{\varepsilon} \Vert^2 \leq nd^{\prime}(T_1^2 \vee T_2^2 \vee N_nT_3^2).
\]
By Lemma \ref{LEM:2}, $\max_{\mathcal{A}: |\mathcal{A}| \leq d^{\prime}} \Vert \mathbf{D}_\mathcal{A}^{\top}\bs{\varepsilon} \Vert^2 = O_P[n d^{\prime}\{\ln(p_1) \vee N_n \ln(p_2N_n)\}]$. Therefore,
    \begin{eqnarray}
        \Vert \bs{\varepsilon}^{\ast} \Vert^2 = O_P[d^{\prime}c_{*}^{-1}\{\ln(p_1) \vee N_n \ln(p_2N_n)\}].
    \label{EQN:eps_ord}
    \end{eqnarray}
Combing (\ref{EQN:the_diffs1}), (\ref{EQN:eta_ord}) and (\ref{EQN:eps_ord}), we conclude that
    \begin{align*}
        \Vert \sigsy{\tbtheta} - \bs{\theta}_{0,\mathcal{S}^{\prime}}  \Vert^2
 & =  O_P\left[\frac{d^{\prime}\left\{\ln(p_1) \vee N_n \ln(p_2N_n)\right\}}{nc_{*}}\right]  + O\left(\frac{d^{\prime}}{N^2 c_{*}}\right) + \frac{4 d^{\prime} \sum_{j=1}^3 \widetilde\lambda_{nj}^2}{n^{2}c_{*}^{2}} \\
 & =   O_P\left[n^{-1}\{\ln(p_1) \vee N_n \ln(p_2N_n)\}\right] + O\left(N_n^{-2}\right) + O\left(n^{-2}\sum_{j=1}^{3}\widetilde{\lambda}_{nj}^2\right),
    \end{align*}
where the last inequality follows by $d^{\prime} = O(|\mathcal{S}_z| + |\mathcal{S}_{x,L}| + |\mathcal{S}_{x,N}|)$ and $c_{*} \asymp 1$ with probability approaching one.
By the properties of splines \citep{de2001practical},
  $\Vert \widetilde{g}_{\ell} - g_{n\ell} \Vert_2^2 \asymp \Vert \tbgamma_{\ell} - \bs{\gamma}_{0\ell}  \Vert^2$,
where $g_{n\ell}, \, \ell=1, \ldots, p_2$, is the best approximation for function $g_{\ell}$.
Hence, part (ii) follows from $\sum_{k = 1}^{p_1}| \widetilde{\alpha}_k - {\alpha}_{0k} |^2 = O(\Vert \sigsy{\tbtheta} - \bs{\theta}_{0,\mathcal{S}^{\prime}} \Vert^2)$, $\sum_{\ell = 1}^{p_2}| \widetilde{{\beta}}_{\ell} - {\beta}_{0\ell}|^2 = O(\Vert \sigsy{\tbtheta} - \bs{\theta}_{0,\mathcal{S}^{\prime}} \Vert^2)$ and $\sum_{\ell = 1}^{p_2}\| \widetilde{\bs{\gamma}}_{\ell} - {\bs{\gamma}}_{0\ell} \|^2 = O(\Vert \sigsy{\tbtheta} - \bs{\theta}_{0,\mathcal{S}^{\prime}}  \Vert^2)$.

We now prove part (i). Under Assumption (A1), if $\Vert \bs{\theta}_{0m} \Vert \neq 0$ but $\Vert \tbtheta_{m} \Vert = 0$, then $\Vert \bs{\theta}_{0m} - \tbtheta_{m}  \Vert \geq c_{\alpha} \vee c_{\beta} \vee c_g$, which contradicts part (ii) when ${\ln(p_1) \vee N_n \ln(p_2N_n)} / {n} \rightarrow 0$, ${\widetilde{\lambda}_{n1}^2} / {n^2} \rightarrow 0$, ${\widetilde{\lambda}_{n2}^2} / {n^2} \rightarrow 0$ and ${\widetilde{\lambda}_{n3}^2} / {n^2} \rightarrow 0$.
The results follow by
\begin{align*}
\widetilde{\bs{\alpha}} - \bs{\alpha}_{0} &= \left(\mathbf{I}_{|\mathcal{S}_z|}  ~~ \bs{0}_{|\mathcal{S}_z| \times |\mathcal{S}_{x,L}|}  ~~ \bs{0}_{|\mathcal{S}_z| \times (|\mathcal{S}_{x,N}|N_n)}\right)(\tbtheta - \bs{\theta}_{0}), \\
\tbbeta - \bs{\beta}_{0} &= \left(\bs{0}_{|\mathcal{S}_{x,L}| \times |\mathcal{S}_z|} ~~ \mathbf{I}_{|\mathcal{S}_{x,L}|}  ~~ \bs{0}_{|\mathcal{S}_{x,L}|\times (|\mathcal{S}_{x,N}|N_n)}\right)(\tbtheta - \bs{\theta}_{0}),\\
\tbgamma - \bs{\gamma}_{0} &= \left(\bs{0}_{(|\mathcal{S}_{x,N}| N_n) \times |\mathcal{S}_z|} ~~ \bs{0}_{(|\mathcal{S}_{x,N}| N_n) \times |\mathcal{S}_{x,L}|} ~~ \mathbf{I}_{|\mathcal{S}_{x,N}| N_n}\right)(\tbtheta - \bs{\theta}_{0})
\end{align*}
and the definition of $\widetilde{g}_{\ell}$, $1 \leq \ell \leq p_2$.
\end{proof}

\subsection{Selection and estimation properties of the adaptive group LASSO estimators} \label{SUBSEC:KKT}
In this section, we establish the selection and estimation properties of the adaptive group LASSO estimators as stated in Theorems \ref{THM:selection} and \ref{THM:consistency}.

\begin{proof}[Proof of Theorem \ref{THM:selection}]
By the Karush-Kuhn-Tucker (KKT) condition \citep{boyd2004convex}, if $(\widehat{\bs{\alpha}}, \widehat{\bs{\beta}}, \widehat{\bs{\gamma}})$ is the unique minimizer of $L(\bs{\alpha}, \bs{\beta}, \bs{\gamma}; \lambda_1, \lambda_2, \lambda_3)$, it is equivalent to satisfy
\begin{enumerate}
{\renewcommand{\itemsep}{-1pt}
    \item[(C1-1)] $\mathbf{Z}_k^{\top} \left(\mathbf{Y}-\mathbf{Z}\bs{\alpha} - \mathbf{X}\bs{\beta}-\sum_{{\ell^{\prime}} = 1}^{p_2}\mathbf{B}_{\ell^{\prime}}^{(1)}\bs{\gamma}_{\ell^{\prime}}\right) = \lambda_{n1} w_k^{\alpha}\alpha_k / |\alpha_k|$, for any $k\in \mathcal{S}_z$,

    \item[(C1-2)] $\mathbf{X}_{\ell}^{\top} \left(\mathbf{Y}-\mathbf{Z}\bs{\alpha} - \mathbf{X}\bs{\beta}-\sum_{{\ell^{\prime}} = 1}^{p_2}\mathbf{B}_{\ell^{\prime}}^{(1)}\bs{\gamma}_{\ell^{\prime}}\right) = \lambda_{n2} w_{\ell}^{\beta}  \beta_{\ell} / |\beta_{\ell}|$, for any $\ell \in \mathcal{S}_{x, L}$,

    \item[(C1-3)] $\mathbf{B}_{\ell}^{(1)\top} \left(\mathbf{Y}-\mathbf{Z}\bs{\alpha} - \mathbf{X}\bs{\beta}-\sum_{{\ell^{\prime}} = 1}^{p_2}\mathbf{B}_{\ell^{\prime}}^{(1)}\bs{\gamma}_{\ell^{\prime}}\right) = \lambda_{n3} w_{\ell}^{\gamma}  \bs{\gamma}_{\ell} / \Vert\bs{\gamma}_{\ell}\Vert$, for any $\ell \in \mathcal{S}_{x,N}$,

    \item[(C2)] $\left|\mathbf{Z}_k^{\top} \left(\mathbf{Y}-\mathbf{Z}\bs{\alpha} - \mathbf{X}\bs{\beta}-\sum_{{\ell^{\prime}} = 1}^{p_2}\mathbf{B}_{\ell^{\prime}}^{(1)}\bs{\gamma}_{\ell^{\prime}}\right)\right| \leq\lambda_{n1} w_k^{\alpha}$, for any $k\in \mathcal{N}_z$,

    \item[(C3)] $\left|\mathbf{X}_{\ell}^{\top} \left(\mathbf{Y}-\mathbf{Z}\bs{\alpha} - \mathbf{X}\bs{\beta}-\sum_{{\ell^{\prime}} = 1}^{p_2}\mathbf{B}_{\ell^{\prime}}^{(1)}\bs{\gamma}_{\ell^{\prime}}\right)\right|\leq \lambda_{n2} w_{\ell}^{\beta}$, for any $\ell \in \mathcal{N}_x \cup \mathcal{S}_{x, PN}$,

    \item[(C4)] $\Vert\mathbf{B}_{\ell}^{(1)\top} \left(\mathbf{Y}-\mathbf{Z}\bs{\alpha} - \mathbf{X}\bs{\beta}-\sum_{{\ell^{\prime}} = 1}^{p_2}\mathbf{B}_{\ell^{\prime}}^{(1)}\bs{\gamma}_{\ell^{\prime}}\right)\Vert \leq \lambda_{n3} w_{\ell}^{\gamma}$, for any $\ell \in \mathcal{N}_x\cup \mathcal{S}_{x, PL}$,
}
\end{enumerate}

Define $ \overline{\bs{\theta}}^o = \left(\mathbf{D}_{\mathcal{S}}^{\top} \, \mathbf{D}_{\mathcal{S}}\right)^{-1} \mathbf{D}_{\mathcal{S}}^{\top}\mathbf{Y}$, a vector with length $|\mathcal{S}_z| + |\mathcal{S}_{x,L}| + |\mathcal{S}_{x,N}|N_n$.
Denote three vectors, $\bs{v}_1$, $\bs{v}_2$ and $\bs{v}_3$, whose elements are in the form:
\begin{align}
 & \bs{v}_{1m} = \frac{\omega_m^{\alpha}\overline{\theta}_{0m}}{|\overline{\theta}_{0m}|} I\{m \in \mathcal{S}_z\} + \bs{0}_{N} I\{m - |\mathcal{S}_z| - |\mathcal{S}_{x,LN}| \in \mathcal{S}_{x,N}\}, \label{DEF:vj}\\
 & \bs{v}_{2m} = \frac{\omega_{m - |\mathcal{S}_z|}^{\beta}\overline{\theta}_{0m}}{|\overline{\theta}_{0m}|} I\{m - |\mathcal{S}_z| \in \mathcal{S}_{x, L}\} + \bs{0}_{N} I\{m - |\mathcal{S}_z| - |\mathcal{S}_{x,LN}| \in \mathcal{S}_{x,N}\}, \notag \\
 & \bs{v}_{3m} = \frac{\omega_{m - |\mathcal{S}_z| - |\mathcal{S}_{x,L}|}^{\gamma}\overline{\bs{\theta}}_{0m}}{\Vert \overline{\bs{\theta}}_{0m} \Vert} I\{m - |\mathcal{S}_z| - |\mathcal{S}_{x,LN}| \in \mathcal{S}_{x,N}\},
    ~ \forall m \in \mathcal{S}. \notag
\end{align}

Next  define $\hbtheta^o = (\widehat{\bs{\theta}}_m^o, 1 \leq m \leq p_1 + 2p_2)^{\top}$,
where $\hbtheta_{\mathcal{S}}^o \equiv (\widehat{\bs{\theta}}_m^o,  m \in \mathcal{S})^{\top} = \left(\mathbf{D}_{\mathcal{S}}^{\top} \, \mathbf{D}_{\mathcal{S}}\right)^{-1} (\mathbf{D}_{\mathcal{S}}^{\top}\mathbf{Y} - \sum_{j=1}^{3}\lambda_{nj} \bs{v}_j)$,
$\widehat{\theta}_m^o = 0$ for $m \in \mathcal{N}_z$ and $m - p_1 \in \mathcal{N}_{x, L}$, and $\hbtheta_m^o = \bs{0}_N$ for $m - p_1 - p_2 \in \mathcal{N}_{x,N}$.
So we can represent $\hbtheta^o \equiv (\widehat{\bs{\theta}}_{\mathcal{S}_z}^{o\top}, \widehat{\bs{\theta}}_{\mathcal{N}_z}^{o\top}, \widehat{\bs{\theta}}_{\mathcal{S}_{x, L}}^{o\top}, \widehat{\bs{\theta}}_{\mathcal{N}_{x, L}}^{o\top}, \widehat{\bs{\theta}}_{\mathcal{S}_{x, N}}^{o\top}, \widehat{\bs{\theta}}_{\mathcal{N}_{x, N}}^{o\top})^{\top}$,
and $\hbtheta_{\mathcal{S}}^o \equiv (\widehat{\bs{\theta}}_{\mathcal{S}_z}^{o\top}, \widehat{\bs{\theta}}_{\mathcal{S}_{x, L}}^{o\top}, \widehat{\bs{\theta}}_{\mathcal{S}_{x, N}}^{o\top})^{\top}$.
Denote $\widehat{\mathcal{S}}^o =\{1 \leq m \leq p_1 + 2p_2: \|\hbtheta_m^o\| > 0\}$. Apparently, $\widehat{\mathcal{S}}^o \subseteq \mathcal{S}$.
Notice that $\mathbf{D}\hbtheta^o = \mathbf{D}_{\mathcal{S}}\hbtheta_{\mathcal{S}}^o$ and $\{\mathbf{D}_m, \, m \in \mathcal{S} \}$ are linearly independent,
so by the definition of $\hbtheta^o$, (C1-1), (C1-2) and (C1-3) hold for $\hbtheta^{o}$ if $\widehat{\mathcal{S}}^o \supseteq \mathcal{S}$.
Therefore, if $\hbtheta^o$ satisfies
\begin{enumerate}
{\renewcommand{\itemsep}{-1pt}
    \item[(C1$^{\prime}$)] $\widehat{\mathcal{S}}^o \supseteq \mathcal{S}$,

    \item[(C2$^{\prime}$)] $\left|\mathbf{Z}_k^{\top} \left(\mathbf{Y} - \mathbf{D}\widehat{\bs{\theta}}^o\right)\right| \leq\lambda_{n1}\omega_k^{\alpha}$, for any $k\in \mathcal{N}_z$,

    \item[(C3$^{\prime}$)] $\Vert\mathbf{X}_{\ell}^{\top} \left(\mathbf{Y} - \mathbf{D}\widehat{\bs{\theta}}^o\right)\Vert \leq \lambda_{n2} \omega_{\ell}^{\beta}$, for any $\ell \in \mathcal{N}_{x, L}$,

    \item[(C4$^{\prime}$)] $\Vert\mathbf{B}_{\ell}^{(1)\top} \left(\mathbf{Y} - \mathbf{D}\widehat{\bs{\theta}}^o\right)\Vert \leq \lambda_{n3} \omega_{\ell}^{\gamma}$, for any $\ell \in \mathcal{N}_{x, N}$,
}
\end{enumerate}
then $\hbtheta^o$ is the unique minimizer of $L_n(\bs{\theta};\lambda_{n1},\lambda_{n2},\lambda_{n3})$, in other words, $\hbtheta^o = \hbtheta$ with probability approaching one.
Therefore, in order to show $\Pr(\widehat{\mathcal{S}}=\mathcal{S}) \rightarrow 1$, it is equivalent to show $\hbtheta^o$ satisfies (C1$^{\prime}$)--(C3$^{\prime}$) with probability approaching one, as $n \rightarrow \infty$.

Further notice that
\begin{enumerate}
{\renewcommand{\itemsep}{-1pt}
    \item[(C1$^{\prime\prime}$)] $\Vert  \bs{\theta}_{0m} \Vert - \Vert  \hbtheta_m^o \Vert < \Vert  \bs{\theta}_{0m} \Vert$, $\forall \, m \in \mathcal{S}$
}
\end{enumerate}
 implies Condition (C1$^{\prime}$).
 Therefore, to show $\hbtheta^o$ is the unique minimizer of $L_n(\bs{\theta};\lambda_{n1},\lambda_{n2},$ $\lambda_{n3})$, and consequently, $\Pr(\widehat{\mathcal{S}}=\mathcal{S}) \rightarrow 1$, it suffices to show that $\hbtheta^o$ satisfies Conditions (C1$^{\prime\prime}$), (C2$^{\prime}$) and (C3$^{\prime}$) with probability approaching one, as $n \rightarrow \infty$.

According to Lemma \ref{LEM:sel_cons_S} and Lemma \ref{LEM:sel_cons_N} below, we obtain that
\begin{align*}
	\Pr(\widehat{\mathcal{S}} \neq \mathcal{S}) \leq &\Pr(\Vert  \bs{\theta}_{0m} - \hbtheta_m^o  \Vert \geq \Vert  \bs{\theta}_{0m} \Vert, \, \exists  \, m \in \mathcal{S})  + \Pr(|\mathbf{Z}_k^{\top} (\mathbf{Y} - \mathbf{D}\hbtheta^o)| > \lambda_{n1}\omega_k^{\alpha} , \, \exists \, k \in \mathcal{N}_z) \\
	& + \Pr(\Vert\mathbf{X}_{\ell}^{\top} (\mathbf{Y} - \mathbf{D}\hbtheta^o)\Vert >  \lambda_{n2} \omega_{\ell}^{\beta}, \, \exists \, \ell \in \mathcal{N}_{x, L})\\
    & + \Pr(\Vert\mathbf{B}_{\ell}^{(1)\top} (\mathbf{Y} - \mathbf{D}\hbtheta^o)\Vert >  \lambda_{n3} \omega_{\ell}^{\gamma} , \, \exists \, \ell \in \mathcal{N}_{x, N}) \rightarrow 0,
\end{align*}
as $n \rightarrow \infty$. This completes the proof.
\end{proof}

The following Lemma \ref{LEM:sel_cons_S} and Lemma \ref{LEM:sel_cons_N} are used in the proof of Theorem \ref{THM:selection}.
\begin{lemma}
\label{LEM:sel_cons_S} 
Under Assumptions (A3)--(A6), as $n \rightarrow \infty$,
    \[
        \Pr \left(\Vert  \bs{\theta}_{0m} - \hbtheta_m^o  \Vert \geq \Vert  \bs{\theta}_{0m} \Vert, \, \exists  \, m \in \mathcal{S}\right) \to 0.
    \]
\end{lemma}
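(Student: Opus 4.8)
\textbf{Proof proposal for Lemma \ref{LEM:sel_cons_S}.} The plan is to show that the oracle adaptive group LASSO estimator $\hbtheta^{o}$ is so close to $\bs{\theta}_0$ on the blocks indexed by $\mathcal{S}$ that none of those blocks can be shrunk to zero. Since $\hbtheta^{o}$ vanishes off $\mathcal{S}$ and equals $\hbtheta_{\mathcal{S}}^{o}=(\mathbf{D}_{\mathcal{S}}^{\top}\mathbf{D}_{\mathcal{S}})^{-1}(\mathbf{D}_{\mathcal{S}}^{\top}\mathbf{Y}-\sum_{j=1}^{3}\lambda_{nj}\bs{v}_j)$ on $\mathcal{S}$, and since $\|\bs{\theta}_{0m}-\hbtheta_m^{o}\|\le\|\hbtheta_{\mathcal{S}}^{o}-\bs{\theta}_{0,\mathcal{S}}\|$ for each fixed $m$, it suffices to prove $\|\hbtheta_{\mathcal{S}}^{o}-\bs{\theta}_{0,\mathcal{S}}\|=o_P(1)$: indeed, under (A1) $\|\bs{\theta}_{0m}\|\ge c_{\alpha}$ (resp.\ $c_{\beta}$) for a linear block, while for a nonlinear block $m\leftrightarrow\ell\in\mathcal{S}_{x,N}$, the B-spline norm equivalence gives $\|\bs{\theta}_{0m}\|=\|\bs{\gamma}_{0\ell}\|\asymp\|g_{n\ell}\|_2\ge\|g_{0\ell}\|_2-O(N_n^{-1})\ge c_g/2$ for $n$ large by Lemma \ref{LEM:spline-approx}; so with $c_0=\tfrac12\min\{c_{\alpha},c_{\beta},c_g\}>0$ one has $\|\bs{\theta}_{0m}\|\ge c_0$ uniformly over $m\in\mathcal{S}$ and hence $\Pr(\exists\,m\in\mathcal{S}:\|\bs{\theta}_{0m}-\hbtheta_m^{o}\|\ge\|\bs{\theta}_{0m}\|)\le\Pr(\|\hbtheta_{\mathcal{S}}^{o}-\bs{\theta}_{0,\mathcal{S}}\|\ge c_0)\to 0$. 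Note also that $\mathcal{S}$ consists of $|\mathcal{S}_z|+|\mathcal{S}_{x,L}|+|\mathcal{S}_{x,N}|$ blocks, a number bounded by (A1), so Lemma \ref{LEM:eigen} applies to $\mathbf{C}_{\mathcal{S}}=n^{-1}\mathbf{D}_{\mathcal{S}}^{\top}\mathbf{D}_{\mathcal{S}}$ and yields, with probability tending to one, $c_1\le\pi_{\min}(\mathbf{C}_{\mathcal{S}})\le\pi_{\max}(\mathbf{C}_{\mathcal{S}})\le c_2$; set $c_{*}=\pi_{\min}(\mathbf{C}_{\mathcal{S}})\asymp 1$ on this event.

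Now write $\hbtheta_{\mathcal{S}}^{o}-\bs{\theta}_{0,\mathcal{S}}=\mathbf{A}_1-\mathbf{A}_2$ with $\mathbf{A}_1=(\mathbf{D}_{\mathcal{S}}^{\top}\mathbf{D}_{\mathcal{S}})^{-1}\mathbf{D}_{\mathcal{S}}^{\top}(\mathbf{Y}-\mathbf{D}_{\mathcal{S}}\bs{\theta}_{0,\mathcal{S}})$ and $\mathbf{A}_2=(\mathbf{D}_{\mathcal{S}}^{\top}\mathbf{D}_{\mathcal{S}})^{-1}\sum_{j=1}^{3}\lambda_{nj}\bs{v}_j$. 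For the oracle least-squares error $\mathbf{A}_1$, the decomposition (\ref{EQN:eta_form}) gives $\mathbf{Y}-\mathbf{D}_{\mathcal{S}}\bs{\theta}_{0,\mathcal{S}}=\bs{\varepsilon}+\bs{\delta}$ with $\delta_i=\sum_{\ell\in\mathcal{S}_{x,N}}\{g_{0\ell}(X_{i\ell})-\mathbf{B}_{i\ell}^{(1)\top}\bs{\gamma}_{0\ell}\}$, where $|\delta_i|=O(N_n^{-1})$ by Lemma \ref{LEM:spline-approx} since $|\mathcal{S}_{x,N}|$ is fixed; using $\|(\mathbf{D}_{\mathcal{S}}^{\top}\mathbf{D}_{\mathcal{S}})^{-1}\mathbf{D}_{\mathcal{S}}^{\top}\mathbf{u}\|^2\le(nc_{*})^{-1}\|\mathbf{P}_{\mathbf{D}_{\mathcal{S}}}\mathbf{u}\|^2$, the moment bound $E\{\|\mathbf{P}_{\mathbf{D}_{\mathcal{S}}}\bs{\varepsilon}\|^2\mid\mathbf{D}\}=\sigma^2\,\mathrm{rank}(\mathbf{D}_{\mathcal{S}})=O(N_n)$, and $\|\bs{\delta}\|^2=O(nN_n^{-2})$, we obtain
\[
\|\mathbf{A}_1\|^2\le 2(nc_{*})^{-1}\bigl(\|\mathbf{P}_{\mathbf{D}_{\mathcal{S}}}\bs{\varepsilon}\|^2+\|\bs{\delta}\|^2\bigr)=O_P(N_n/n)+O(N_n^{-2}).
\]
For the penalty bias $\mathbf{A}_2$, observe from (\ref{DEF:vj}) that the factors multiplying the adaptive weights are unit vectors, so $\|\bs{v}_1\|^2=\sum_{k\in\mathcal{S}_z}|\widetilde{\alpha}_k|^{-2}$, $\|\bs{v}_2\|^2=\sum_{\ell\in\mathcal{S}_{x,L}}|\widetilde{\beta}_\ell|^{-2}$, $\|\bs{v}_3\|^2=\sum_{\ell\in\mathcal{S}_{x,N}}\|\tbgamma_\ell\|^{-2}$. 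The lower-bound clause of (A5) together with (A1) (so $b_{n1}=\min_{k\in\mathcal{S}_z}|\alpha_{0k}|\ge c_{\alpha}$, and similarly for the other two) forces each inverse weight restricted to $\mathcal{S}$ to be $O_P(1)$; as $|\mathcal{S}_z|,|\mathcal{S}_{x,L}|,|\mathcal{S}_{x,N}|$ are fixed, $\|\bs{v}_j\|=O_P(1)$. Hence
\[
\|\mathbf{A}_2\|\le(nc_{*})^{-1}\sum_{j=1}^{3}\lambda_{nj}\|\bs{v}_j\|=O_P\Bigl(n^{-1}\textstyle\sum_{j=1}^{3}\lambda_{nj}\Bigr),
\qquad\text{so}\qquad
\|\mathbf{A}_2\|^2=O_P\Bigl(n^{-2}\textstyle\sum_{j=1}^{3}\lambda_{nj}^2\Bigr).
\]

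Combining the two bounds, $\|\hbtheta_{\mathcal{S}}^{o}-\bs{\theta}_{0,\mathcal{S}}\|^2=O_P(N_n/n)+O(N_n^{-2})+O_P(n^{-2}\sum_{j=1}^{3}\lambda_{nj}^2)$, which is $o_P(1)$ because (A6) stipulates $n^{-1}N_n+n^{-2}\sum_{j=1}^{3}\lambda_{nj}^2=o(1)$. As explained in the first paragraph, this gives $\Pr(\exists\,m\in\mathcal{S}:\|\bs{\theta}_{0m}-\hbtheta_m^{o}\|\ge\|\bs{\theta}_{0m}\|)\to 0$, which is the claim. I expect the main obstacle to be the control of $\mathbf{A}_2$: one must carefully couple the lower-bound clause of Assumption (A5) with the signal-strength bounds of (A1) to certify that the \emph{active-set} adaptive weights are uniformly $O_P(1)$, and one must track the $N_n$-dependence throughout (in particular that $\mathrm{rank}(\mathbf{D}_{\mathcal{S}})\asymp N_n$ and that $\mathbf{C}_{\mathcal{S}}$ stays well conditioned even though its size grows), so that everything collapses to $o_P(1)$ under (A6).
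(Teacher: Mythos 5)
Your proof is correct and follows essentially the same route as the paper's: both start from the closed form $\hbtheta_{\mathcal{S}}^{o}=(\mathbf{D}_{\mathcal{S}}^{\top}\mathbf{D}_{\mathcal{S}})^{-1}(\mathbf{D}_{\mathcal{S}}^{\top}\mathbf{Y}-\sum_{j}\lambda_{nj}\bs{v}_j)$, split the error into a noise term, a spline-approximation term, and a penalty-bias term, show each is $o_P(1)$ using the eigenvalue bounds of Lemma \ref{LEM:eigen} and the $O_P(1)$ control of the active-set weights, and conclude via (A6). The only differences are cosmetic — you bound the full vector norm with a trace/rank identity for $\|\mathbf{P}_{\mathbf{D}_{\mathcal{S}}}\bs{\varepsilon}\|^2$ where the paper works block-by-block through the $\mathbf{Q}_m$ matrices and the maximal inequality of Lemma \ref{LEM:2}, and you make explicit the appeal to (A1) for the lower bound $\|\bs{\theta}_{0m}\|\ge c_0$ (including the B-spline norm equivalence for the nonlinear blocks), which the paper uses implicitly.
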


\begin{proof}
Let $\mathbf{Q}_{m}$ be an $d_m \times (|\mathcal{S}_z|+|\mathcal{S}_{x,L}|+|\mathcal{S}_{x,N}|N_n\}$ matrix, $d_m = 1$ for $m \in \mathcal{S}_z$ or $m\in\mathcal{S}_{x,L}$, and $d_m = N_n$ for $m\in\mathcal{S}_{x,N}$, with the form
\begin{align*}
  \mathbf{Q}_m = \begin{pmatrix}
    \mathbf{Q}_{1m} & \bs{0}_{(|\mathcal{S}_{x,N}|N_n)\times (|\mathcal{S}_{x,N}|N_n)}
  \end{pmatrix} & I(m \in \mathcal{S}_z \cup \mathcal{S}_{x,L})\\
    & + \begin{pmatrix}
    \bs{0}_{N_n\times (|\mathcal{S}_z|+|\mathcal{S}_{x,L}|)} & \mathbf{Q}_{2, m - |\mathcal{S}_z| - |\mathcal{S}_{x,L}|}
  \end{pmatrix} I(m \in \mathcal{S}_{x,N})
\end{align*}
with $\mathbf{Q}_{1m} = (0, \ldots, 0, 1, 0, \ldots, 0)$ and $\mathbf{Q}_{2m} = (\bs{0}_{N_n\times N_n}, \ldots, \bs{0}_{N_n\times N_n}, \mathbf{I}_{N_n}, \bs{0}_{N_n\times N_n}, \ldots,$ $\bs{0}_{N_n\times N_n})$,
where scalar 1 is the $m$-th element of vector $\mathbf{Q}_{1m} $ with length $|\mathcal{S}_z|$, and an $N_n \times N_n$ identity matrix $\mathbf{I}_{N_n}$ is at the $m$-th block of the $N_n \times (|\mathcal{S}_{x,N}|N_n)$ matrix $\mathbf{Q}_{2m}$ with rest $N_n \times N_n$ matrices of zeros $\bs{0}_{N_n \times N_n}$.

Then from (\ref{EQN:theta_dfc}), $\hbtheta_m^o - \bs{\theta}_{0m} = n^{-1} \mathbf{Q}_{m} \, \mathbf{C}_{\mathcal{S}}^{-1} \left(\mathbf{D}_{\mathcal{S}}^{\top} \bs{\varepsilon}  + \mathbf{D}_{\mathcal{S}}^{\top} \bs{\delta} - \sum_{j=1}^3 \lambda_{nj}\bs{v}_j\right )$.
By the triangle inequality,
\[
	\left\Vert \hbtheta_m^o - \bs{\theta}_{0m} \right\Vert \leq
	 n^{-1} \left\Vert \mathbf{Q}_{m} \mathbf{C}_{\mathcal{S}}^{-1} \mathbf{D}_{\mathcal{S}}^{\top} \bs{\varepsilon}  \right\Vert
	 + \, n^{-1} \left\Vert \mathbf{Q}_{m} \mathbf{C}_{\mathcal{S}}^{-1} \mathbf{D}_{\mathcal{S}}^{\top} \bs{\delta} \right\Vert
	 + \, n^{-1} \left\Vert \mathbf{Q}_{m} \mathbf{C}_{\mathcal{S}}^{-1} \left(\sum_{j=1}^3 \lambda_{nj}\bs{v}_j\right)\right\Vert .
\]
Recall that $\pi_1$ and $\pi_2$ are the minimum and maximum eigenvalues of $\mathbf{C}_{\mathcal{S}}$, respectively. By Lemmas \ref{LEM:2} and \ref{LEM:eigen}, the first term on the right-hand side
\begin{align*}
	\max\limits_{m \in \mathcal{S}} \, & n^{-1} \Vert \mathbf{Q}_{m} \mathbf{C}_{\mathcal{S}}^{-1} \mathbf{D}_{\mathcal{S}}^{\top} \bs{\varepsilon}  \Vert_2
	\leq \max\limits_{m \in \mathcal{S}} \,  n^{-1} \pi_1^{-1} \Vert \, \mathbf{Q}_{m} \mathbf{D}_{\mathcal{S}}^{\top} \bs{\varepsilon}  \Vert_2  \\
	& = n^{-1} \pi_1^{-1} \max\limits_{k,l \in \mathcal{S}} \,  \left\{\sum_{i=1}^n |Z_{ik} \varepsilon_i|, \sum_{i=1}^n |X_{i\ell} \varepsilon_i|, \sum_{i=1}^n \|\mathbf{B}_{i\ell} \varepsilon_i\|\right\}
	= {O_P \left(n^{-1/2}N_n^{1/2} \right) }.
\end{align*}
By Lemma \ref{LEM:eigen}, the second term
\begin{align*}
	\max\limits_{m \in \mathcal{S}} \, & n^{-1} \Vert \mathbf{Q}_{m} \mathbf{C}_{\mathcal{S}}^{-1} \mathbf{D}_{\mathcal{S}}^{\top} \bs{\delta} \, \Vert_2
	\leq \max\limits_{m \in \mathcal{S}} \,  n^{-1} \pi_1^{-1} \Vert \, \mathbf{Q}_{m} \mathbf{D}_{\mathcal{S}}^{\top} \bs{\delta}  \Vert_2  \\
	& = n^{-1} \pi_1^{-1} \max\limits_{k,l \in \mathcal{S}} \,  \left\{\sum_{i=1}^n |Z_{ik} \delta_i|, \sum_{i=1}^n |X_{i\ell} \delta_i|, \sum_{i=1}^n \|\mathbf{B}_{i\ell} \delta_i\|\right\}
	 = O_P (N_n^{-1}).
\end{align*}
By Lemma \ref{LEM:eigen} and Lemma \ref{LEM:4}, the third term
\begin{align*}
	\max\limits_{m \in \mathcal{S}} n^{-1} & \Vert \mathbf{Q}_{m} \mathbf{C}_{\mathcal{S}}^{-1} \left(\sum_{j=1}^3 \lambda_{nj}\bs{v}_j\right)\Vert_2
	\leq n^{-1} \pi_1^{-1} \left\Vert \sum_{j=1}^3 \lambda_{nj}\bs{v}_j\right\Vert
	= O_P \left\{n^{-1} \left(\sum_{j=1}^3 \lambda_{nj}h_{nj}\right) \right\}.
\end{align*}
Thus, the claim follows by Assumption (A6).
\end{proof}

\begin{lemma}
\label{LEM:sel_cons_N} 
Under Assumptions (A3)--(A6), as $n \rightarrow \infty$,
    \begin{eqnarray*}
        &&\Pr \left(|\mathbf{Z}_k^{\top} (\mathbf{Y} - \mathbf{D}\hbtheta^o)| > \lambda_{n1}\omega_k^{\alpha} , \, \exists \, k \in \mathcal{N}_z \right) \to 0, \\
        &&\Pr \left(\Vert\mathbf{X}_{\ell}^{\top} (\mathbf{Y} - \mathbf{D}\hbtheta^o)\Vert >  \lambda_{n2} \omega_{\ell}^{\beta}, \, \exists \, \ell \in \mathcal{N}_{x, L} \right)  \to 0, \\
        &&\Pr \left(\Vert\mathbf{B}_{\ell}^{(1)\top} (\mathbf{Y} - \mathbf{D}\hbtheta^o)\Vert >  \lambda_{n3} \omega_{\ell}^{\gamma} , \, \exists \, \ell \in \mathcal{N}_{x, N} \right) \to 0.
    \end{eqnarray*}
\end{lemma}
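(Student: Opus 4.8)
The plan is to verify the three inequalities simultaneously, by producing a single decomposition of the oracle residual $\mathbf{Y}-\mathbf{D}\hbtheta^o$ and then comparing the (slowly growing) left‑hand sides against the (rapidly growing) adaptive thresholds on the right. Since $\hbtheta^o$ is supported on $\mathcal{S}$, we have $\mathbf{D}\hbtheta^o=\mathbf{D}_{\mathcal{S}}\hbtheta_{\mathcal{S}}^o$ with $\hbtheta_{\mathcal{S}}^o=(\mathbf{D}_{\mathcal{S}}^{\top}\mathbf{D}_{\mathcal{S}})^{-1}(\mathbf{D}_{\mathcal{S}}^{\top}\mathbf{Y}-\sum_{j=1}^{3}\lambda_{nj}\bs{v}_j)$; combining this with $\mathbf{Y}=\mathbf{D}_{\mathcal{S}}\bs{\theta}_{0,\mathcal{S}}+\bs{\varepsilon}+\bs{\delta}$ from \eqref{EQN:eta_form} and $(\mathbf{I}_n-\mathbf{P}_{\mathbf{D}_{\mathcal{S}}})\mathbf{D}_{\mathcal{S}}=\bs{0}$ gives
\[
\mathbf{Y}-\mathbf{D}\hbtheta^o=(\mathbf{I}_n-\mathbf{P}_{\mathbf{D}_{\mathcal{S}}})(\bs{\varepsilon}+\bs{\delta})+\mathbf{D}_{\mathcal{S}}(\mathbf{D}_{\mathcal{S}}^{\top}\mathbf{D}_{\mathcal{S}})^{-1}\sum_{j=1}^{3}\lambda_{nj}\bs{v}_j .
\]
Left‑multiplying by $\mathbf{Z}_k^{\top}$ (for $k\in\mathcal{N}_z$), $\mathbf{X}_{\ell}^{\top}$ (for $\ell\in\mathcal{N}_{x,L}$), or $\mathbf{B}_{\ell}^{(1)\top}$ (for $\ell\in\mathcal{N}_{x,N}$), and applying the triangle and Cauchy--Schwarz inequalities, reduces everything to a noise piece, a spline‑bias piece and a penalty piece; only coordinates with $\widetilde\alpha_k\neq0$ (resp.\ $\widetilde\beta_\ell\neq0$, $\widetilde{\bs\gamma}_\ell\neq\bs0$) need treatment, since the inequalities are vacuous otherwise.

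For the noise piece, each of $\mathbf{Z}_k^{\top}(\mathbf{I}_n-\mathbf{P}_{\mathbf{D}_{\mathcal{S}}})\bs{\varepsilon}$ and the $N_n$ coordinates of $\mathbf{B}_\ell^{(1)\top}(\mathbf{I}_n-\mathbf{P}_{\mathbf{D}_{\mathcal{S}}})\bs{\varepsilon}$ is a linear contrast of the sub‑Gaussian errors whose coefficient vector has squared norm $O(n)$ by Assumption (A4) (the standardized spline columns satisfy $\|\mathbf{B}_J\|^2\asymp n$), so the sub‑Gaussian maximal bound of Lemma \ref{LEM:2}, taken over the at most $p_1$, $p_2$, and $p_2N_n$ inactive coordinates, gives maxima of orders $\sqrt{n\ln p_1}$, $\sqrt{n\ln p_2}$, $\sqrt{n\ln(p_2N_n)}$, whence $\|\mathbf{B}_\ell^{(1)\top}(\mathbf{I}_n-\mathbf{P}_{\mathbf{D}_{\mathcal{S}}})\bs{\varepsilon}\|=O_P(\sqrt{nN_n\ln(p_2N_n)})$. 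For the bias piece, $\|\bs\delta\|^2=O_P(nN_n^{-2})$ since $|\mathcal{S}_{x,N}|$ is fixed and $|\delta_{i\ell}|=O(N_n^{-1})$ by Lemma \ref{LEM:spline-approx}, so $|\mathbf{Z}_k^{\top}(\mathbf{I}_n-\mathbf{P}_{\mathbf{D}_{\mathcal{S}}})\bs\delta|\le\|\mathbf{Z}_k\|\,\|\bs\delta\|=O_P(nN_n^{-1})$ uniformly in $k$, and likewise $\|\mathbf{B}_\ell^{(1)\top}(\mathbf{I}_n-\mathbf{P}_{\mathbf{D}_{\mathcal{S}}})\bs\delta\|\le\|\mathbf{B}_\ell^{(1)}\|_{\mathrm{op}}\|\bs\delta\|=O_P(nN_n^{-1})$ because $\|\mathbf{B}_\ell^{(1)}\|_{\mathrm{op}}^2\asymp n$ by the spline/eigenvalue property behind Lemma \ref{LEM:eigen}. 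For the penalty piece, writing $(\mathbf{D}_{\mathcal{S}}^{\top}\mathbf{D}_{\mathcal{S}})^{-1}=(\mathbf{D}_{\mathcal{S}}^{\top}\mathbf{D}_{\mathcal{S}})^{-1/2}(\mathbf{D}_{\mathcal{S}}^{\top}\mathbf{D}_{\mathcal{S}})^{-1/2}$, using $\|(\mathbf{D}_{\mathcal{S}}^{\top}\mathbf{D}_{\mathcal{S}})^{-1/2}\mathbf{D}_{\mathcal{S}}^{\top}\mathbf{Z}_k\|^2=\mathbf{Z}_k^{\top}\mathbf{P}_{\mathbf{D}_{\mathcal{S}}}\mathbf{Z}_k\le\|\mathbf{Z}_k\|^2=O(n)$, the lower eigenvalue bound of Lemma \ref{LEM:eigen}, and $\|\bs v_j\|=O_P(h_{nj})=O_P(1)$ (as in the proof of Lemma \ref{LEM:sel_cons_S}, the adaptive weights being $O_P(1)$ on the active sets by Assumption (A5)), we obtain a contribution of order $\sum_j\lambda_{nj}$, which is $o(\lambda_{nj}r_{nj})$ because each $r_{nj}\to\infty$.

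Finally, on the right the adaptive weights are large on the inactive sets: by Assumption (A5), $\max_{k\in\mathcal{N}_z}|\widetilde\alpha_k|=O_P(r_{n1}^{-1})$, so $\omega_k^{\alpha}=|\widetilde\alpha_k|^{-1}\gtrsim r_{n1}$ for every relevant $k$ with probability tending to one, and analogously $\omega_{\ell}^{\beta}\gtrsim r_{n2}$ for $\ell\in\mathcal{N}_{x,L}$, $\omega_{\ell}^{\gamma}\gtrsim r_{n3}$ for $\ell\in\mathcal{N}_{x,N}$. Dividing the bounds of the previous paragraph by $\lambda_{n1}r_{n1}$, $\lambda_{n2}r_{n2}$, $\lambda_{n3}r_{n3}$ respectively produces exactly
\[
\frac{\sqrt{n\ln(p_1)}}{\lambda_{n1}r_{n1}}+\frac{\sqrt{n\ln(p_2)}}{\lambda_{n2}r_{n2}}+\frac{\sqrt{nN_n\ln(p_2N_n)}}{\lambda_{n3}r_{n3}}+\sum_{j=1}^{3}\frac{n}{\lambda_{nj}r_{nj}N_n}+o(1),
\]
which is $o(1)$ by Assumption (A6); hence each of the three probabilities tends to $0$. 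The step I expect to be most delicate is the $N_n$‑dependent bookkeeping in the case $\ell\in\mathcal{N}_{x,N}$: one must carry the $\sqrt{N_n}$ inflation of the $N_n$‑dimensional test block $\mathbf{B}_\ell^{(1)}$ through the Gaussian maximal inequality, check that it is absorbed by writing $\ln(p_2N_n)$ rather than $\ln p_2$ in Assumption (A6), verify that no such inflation appears in the bias term (the operator‑norm bound $\|\mathbf{B}_\ell^{(1)}\|_{\mathrm{op}}\asymp\sqrt n$, not $\sqrt{nN_n}$), and make sure the simultaneous control over the $O(|\mathcal{S}|N_n)$ columns of $\mathbf{D}_{\mathcal{S}}$ hidden in $\mathbf{P}_{\mathbf{D}_{\mathcal{S}}}$ is provided by Lemma \ref{LEM:eigen}; the cases $\mathcal{N}_z$ and $\mathcal{N}_{x,L}$ are then the same argument with $N_n=1$.
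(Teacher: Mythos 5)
Your proposal is correct and follows essentially the same route as the paper's own proof: the identical decomposition of $\mathbf{Y}-\mathbf{D}\hbtheta^o$ into $\mathbf{H}(\bs{\varepsilon}+\bs{\delta})+n^{-1}\mathbf{D}_{\mathcal{S}}\mathbf{C}_{\mathcal{S}}^{-1}\sum_j\lambda_{nj}\bs{v}_j$ with $\mathbf{H}=\mathbf{I}_n-\mathbf{P}_{\mathbf{D}_{\mathcal{S}}}$, the same maximal-inequality control of the noise term via Lemma \ref{LEM:2}, the same $O_P(nN_n^{-1})$ bias bound and $O_P(\sum_j\lambda_{nj}h_{nj})$ penalty bound via Lemmas \ref{LEM:eigen} and \ref{LEM:4}, and the same comparison against $\lambda_{nj}r_{nj}$ using Assumptions (A5)--(A6). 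Your explicit handling of the projection inside the sub-Gaussian contrast and of the vacuous coordinates where the adaptive weight is infinite is, if anything, slightly more careful than the paper's write-up.
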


\begin{proof}
Note that
\begin{align}
	\mathbf{Y} - \mathbf{D}_{\mathcal{S}}\sigs{\hbtheta}^o
	&= \mathbf{Z}_{\mathcal{S}_z} \bs{\alpha}_{0,\mathcal{S}_z} + \mathbf{X}_{\mathcal{S}_{x,L}} \bs{\beta}_{0,\mathcal{S}_{x,L}} \!+\! \sum_{\ell \in \mathcal{S}_{x,N}} \, {\alpha}_{0\ell}\left(\mathbf{X}_{\ell}\right) \!+\! \bs{\varepsilon} - \mathbf{D}_{\mathcal{S}}\sigs{\hbtheta} = \mathbf{D}_{\mathcal{S}} \bs{\theta}_{0,\mathcal{S}} + \bs{\delta} + \bs{\varepsilon}\!-\! \mathbf{D}_{\mathcal{S}}\sigs{\hbtheta} \nonumber \\
	&= \left\{\mathbf{I} - \mathbf{D}_{\mathcal{S}}\left(\mathbf{D}_{\mathcal{S}}^{\top} \mathbf{D}_{\mathcal{S}}\right)^{-1}\mathbf{D}_{\mathcal{S}}^{\top} \right\} \left(\bs{\delta}+\bs{\varepsilon}\right) + \mathbf{D}_{\mathcal{S}}\left(\mathbf{D}_{\mathcal{S}}^{\top} \mathbf{D}_{\mathcal{S}}\right)^{-1}\left(\sum_{j=1}^3 \lambda_{nj}\bs{v}_j\right) \nonumber \\
	&= \mathbf{H}\bs{\varepsilon} + \mathbf{H}\bs{\delta} + n^{-1}\mathbf{D}_{\mathcal{S}}\mathbf{C}_{\mathcal{S}}^{-1}\left(\sum_{j=1}^3 \lambda_{nj}\bs{v}_j\right). \label{EQN:yds}
\end{align}
For $1 \leq m \leq p_{1} + 2p_{2}$, by (\ref{EQN:yds}), we have
\[
	\mathbf{D}_m^{\top} (\mathbf{Y} - \mathbf{D}_{\mathcal{S}}\sigs{\hbtheta}^o) = \mathbf{D}_m^{\top} \mathbf{H}\bs{\varepsilon} + \mathbf{D}_m^{\top} \mathbf{H}\bs{\delta} + n^{-1}\mathbf{D}_m^{\top}  \mathbf{D}_{\mathcal{S}}\mathbf{C}_{\mathcal{S}}^{-1}\left(\sum_{j=1}^3 \lambda_{nj}\bs{v}_j\right).
\]
By Lemma \ref{LEM:2},
\begin{align*}
	\mathrm{E}\left( \max\limits_{m \in \mathcal{N}_z} \Vert n^{-1/2} \mathbf{D}_m^{\top} \mathbf{H}\bs{\varepsilon}\Vert_2 \right)
	&= \mathrm{E}\left( \max\limits_{k \in \mathcal{N}_z} \Vert n^{-1/2} \mathbf{Z}_k^{\top} \mathbf{H}\bs{\varepsilon}\Vert_2 \right)
    = O \{\sqrt{\ln (p_{1})}\}, \\
	\mathrm{E}\left( \max\limits_{m - p_1 \in \mathcal{N}_{x,L}} \Vert n^{-1/2} \mathbf{D}_m^{\top} \mathbf{H}\bs{\varepsilon}\Vert_2 \right)
	&= \mathrm{E}\left( \max\limits_{\ell \in \mathcal{N}_{x,L}} \Vert n^{-1/2} \mathbf{X}_{\ell}^{\top} \mathbf{H}\bs{\varepsilon}\Vert_2 \right)
	= O \{\sqrt{\ln\left(p_{2}\right)} \},, \\
	\mathrm{E}\left( \max\limits_{m - p_1 - p_2 \in \mathcal{N}_{x,N}} \Vert n^{-1/2} \mathbf{D}_m^{\top} \mathbf{H}\bs{\varepsilon}\Vert_2 \right)
	&= \mathrm{E}\left( \max\limits_{\ell \in \mathcal{N}_{x,N}} \Vert n^{-1/2} \mathbf{B}_{\ell}^{(1)\top} \mathbf{H}\bs{\varepsilon}\Vert_2 \right)
	= O \{\sqrt{\ln\left(p_{2} N_n \right)} \},
\end{align*}
then for the $\mathbf{D}_m^{\top} \mathbf{H}\bs{\varepsilon}$ part, from  Condition (A5), for all $k \in \mathcal{N}_z, \, \omega_k^{\alpha} = |\widetilde{\alpha}_k|^{-1} = O_P(r_{n1})$, there exists a positive constant $c_{1}$, such that
    \begin{eqnarray*}
        && \Pr\left(|\mathbf{Z}_k^{\top} \mathbf{H}\bs{\varepsilon} | > \lambda_{n1}\omega_k^{\alpha} / 3, \, \exists \, k \in \mathcal{N}_z \right)
        \leq \Pr\left(|\mathbf{Z}_k^{\top} \mathbf{H}\bs{\varepsilon} | > c_1\lambda_{n1}r_{n1}, \, \exists \, k \in \mathcal{N}_z\right) + o(1) \\
        &=& \Pr \left( \max \limits_{k \in \mathcal{N}_z} | n^{-1/2} \mathbf{Z}_k^{\top} \mathbf{H}\bs{\varepsilon}| >  c_1 n^{-1 / 2} \lambda_{n3}r_{n1} \right) + o(1) \\
        &\leq& n^{1 / 2}(c_1 \lambda_{n3}r_{n1})^{-1}\mathrm{E}\left( \max\limits_{k \in \mathcal{N}_z} | n^{-1/2} \mathbf{Z}_k^{\top} \mathbf{H}\bs{\varepsilon}| \right)  + o(1)
        \leq n^{1 / 2} O\{\ln (p_{1})^{1 / 2} \}{c_1^{-1} \lambda_{n3}^{-1}r_{n1}^{-1} } + o(1) \\
        &=& O\{n^{1 / 2}\sqrt{\ln (p_{1})}\lambda_{n3}^{-1}r_{n1}^{-1}\} + o(1) .
    \end{eqnarray*}
Similarly,
\[
\Pr\left(\Vert \mathbf{X}_{\ell}^{\top} \mathbf{H}\bs{\varepsilon} \Vert_2 > \lambda_{n2}\omega_{\ell}^{\beta} / 3, \, \exists \, \ell \in \mathcal{N}_{x,L} \right)= O\left(n^{1 / 2}\sqrt{\ln(p_2)}{\lambda_{n2}^{-1} r_{n2}^{-1}}\right) + o(1),
\]
and
\[
 \Pr\left(\Vert \mathbf{B}_{\ell}^{(1)\top} \mathbf{H}\bs{\varepsilon} \Vert_2 > \lambda_{n3}\omega_{\ell}^{\gamma} / 3, \, \exists \,l \in \mathcal{N}_{x,N} \right)= O\left(n^{1 / 2}\sqrt{N_n\ln(p_{2} N_n) }{\lambda_{n3}^{-1} r_{n3}^{-1}}\right) + o(1).
\]
Recall the definition of $\mathcal{N}$ in (\ref{DEF:SN}) and $\pi_3 = \max_{m \notin \mathcal{S}} \Vert n^{-1}\mathbf{D}_m^{\top}\mathbf{D}_m \Vert_2$, by the properties of spline \citep{de2001practical}, the $\mathbf{D}_m^{\top} \mathbf{H}\bs{\delta}$ term has
\begin{align*}
\max\limits_{m \in \mathcal{N}}  \Vert \mathbf{D}_m^{\top} \mathbf{H}\bs{\delta} \Vert_2
&\leq n^{1/2}  \max\limits_{m \notin \mathcal{S}} \Vert \frac{1}{n} \mathbf{D}_m^{\top} \mathbf{D}_m \Vert_2^{1 / 2} \Vert \mathbf{H} \Vert_2 \Vert \bs{\delta} \Vert_2 \!
= O_{P}\left(n\pi_3^{1 / 2}N_n^{-1}|\mathcal{S}_{x,N}|^{1 / 2}\right) =O_{P}\left(nN_n^{-1}\right).
\end{align*}
and the last term follows by Lemma \ref{LEM:eigen} and (\ref{EQN:v1_order}) and (\ref{EQN:v2_order}) in Lemma \ref{LEM:4} that
\begin{align*}
        \max\limits_{m \in \mathcal{N}} &\Vert n^{-1}\mathbf{D}_m^{\top}  \mathbf{D}_{\mathcal{S}}\mathbf{C}_{\mathcal{S}}^{-1}\left(\sum_{j=1}^3 \lambda_{nj}\bs{v}_j\right) \Vert_2
\leq \max\limits_{m \notin \mathcal{S}} \Vert n^{-1 / 2} \mathbf{D}_m \Vert_2 \times \Vert n^{- 1 / 2} \mathbf{D}_{\mathcal{S}}\mathbf{C}_{\mathcal{S}}^{-1 / 2} \Vert_2 \times  \Vert \mathbf{C}_{\mathcal{S}}^{-1 / 2} \Vert_2 \\
        &\times \Vert \lambda_{n1}\bs{v}_1 + \lambda_{n2}\bs{v}_2 + \lambda_{n3}\bs{v}_3\Vert_2  \leq \pi_3^{1 / 2} \pi_1^{-1 / 2} {O_P\left(\sum_{j=1}^3 \lambda_{nj}h_{nj}\right)}
        = {O_P\left(\sum_{j=1}^3 \lambda_{nj}h_{nj}\right)}.
\end{align*}
\end{proof}

\begin{proof} [Proof of Theorem \ref{THM:consistency}]
The idea of the proof is similar to the proof of part (ii) in Theorem \ref{THM:LASSO-1}, but we look at index set $\mathcal{S}$ instead of $\mathcal{S}^{\prime}$. Let $\pi_1$ and $\pi_2$ be the minimum and maximum eigenvalues of $\mathbf{C}_{\mathcal{S}}$, respectively, and let $\pi_3 = \max_{m \notin \mathcal{S}} \Vert n^{-1}\mathbf{D}_m^{\top}\mathbf{D}_m \Vert$. By Lemma \ref{LEM:eigen}, $\pi_1 \asymp 1$, $\pi_2 \asymp 1$ and $\pi_3 \asymp 1$.
For any $\ell\in\mathcal{S}_{x,N}$, let $g_{0\ell} (\mathbf{X}_{\ell})=\left(g_{0\ell} (X_{1\ell}),\ldots, g_{0\ell} (X_{n\ell})\right)^{\top}$, $\bs{\delta}_{\ell}=g_{0\ell} (\mathbf{X}_{\ell})- \mathbf{B}_{\ell}^{(1)} \bs{\gamma}_{\ell}$ and $\bs{\delta}=\sum_{\ell \in \mathcal{S}_{x,N}} \bs{\delta}_{\ell}$. According to the proof of Theorem \ref{THM:selection}, with probability approaching one, we have
\begin{align*}
 {\hbtheta}_{\mathcal{S}}   &= \widehat{\bs{\theta}}_{\mathcal{S}}^o
  = \left(\mathbf{D}_{\mathcal{S}}^{\top}\mathbf{D}_{\mathcal{S}}\right)^{-1} \left(\mathbf{D}_{\mathcal{S}}^{\top}\mathbf{Y} - \sum_{j=1}^{3}\lambda_{nj} \bs{v}_j \right)
  = \left(\mathbf{D}_{\mathcal{S}}^{\top} \mathbf{D}_{\mathcal{S}}\right)^{-1} \times\\
  &\left[\mathbf{D}_{\mathcal{S}}^{\top} \left\{\mathbf{Z}_{\mathcal{S}_z}\bs{\alpha}_{0, \mathcal{S}_z} + \mathbf{X}_{\mathcal{S}_{x, L}}\bs{\beta}_{0, \mathcal{S}_{x, L}} + \sum_{\ell \in \mathcal{S}_{x,N}}   \left(\mathbf{B}_{\ell}^{(1)} \bs{\gamma}_{0\ell} + \bs{\delta}_{\ell}\right) + \bs{\varepsilon}\right\} - \sum_{j=1}^{3}\lambda_{nj} \bs{v}_j  \right] \\
  &= \bs{\theta}_{0, \mathcal{S}} + \left(\mathbf{D}_{\mathcal{S}}^{\top} \mathbf{D}_{\mathcal{S}}\right)^{-1} \left\{\mathbf{D}_{\mathcal{S}}^{\top} \left(\bs{\delta}+\bs{\varepsilon}\right) - \sum_{j=1}^{3}\lambda_{nj} \bs{v}_j \right \}.
\end{align*}
Let $\mathbf{C}_{\mathcal{S}} = n^{-1} \mathbf{D}_{\mathcal{S}}^{\top} \mathbf{D}_{\mathcal{S}}$ be an $(|\mathcal{S}_z| + |\mathcal{S}_{x,L}| + |\mathcal{S}_{x,N}|N_n)\times(|\mathcal{S}_z| + |\mathcal{S}_{x,L}| + |\mathcal{S}_{x,N}|N_n)$ matrix, and let $\mathbf{H = I} - \mathbf{D}_{\mathcal{S}} \left(\mathbf{D}_{\mathcal{S}}^{\top} \mathbf{D}_{\mathcal{S}}\right)^{-1}\mathbf{D}_{\mathcal{S}}^{\top}$ be an $n \times n$ matrix, then
\begin{equation}
    \hbtheta_{\mathcal{S}} - \bs{\theta}_{\mathcal{S}}^o = n^{-1} \mathbf{C}_{\mathcal{S}}^{-1} \left\{\mathbf{D}_{\mathcal{S}}^{\top} \left(\bs{\delta} + \bs{\varepsilon}\right) - \sum_{j=1}^{3}\lambda_{nj} \bs{v}_j \right\}.
\label{EQN:theta_dfc}
\end{equation}
For $\bs{\eta} = \mathbf{Y} - \mathbf{D}\bs{\theta}$, define $\bs{\eta}_{*}$ as the projection of $\bs{\eta}$ to the column space of $\mathbf{D}_{\mathcal{S}}$, that is, $\bs{\eta}_{*} \equiv \mathbf{P}_{\mathbf{D}_{\mathcal{S}}}\bs{\eta} = \mathbf{D}_{\mathcal{S}}(\mathbf{D}_{\mathcal{S}}^{\top}\mathbf{D}_{\mathcal{S}})^{-1}\mathbf{D}_{\mathcal{S}}^{\top}\bs{\eta}$. Then for $\bs{\varepsilon}_{*} \equiv \mathbf{P}_{\mathbf{D}_{\mathcal{S}}}\bs{\varepsilon}$, similar to (\ref{EQN:the_diffs1}), (\ref{EQN:eta_ord}) and (\ref{EQN:eps_ord}), and by Lemma \ref{LEM:eigen},
\begin{align*}
    \Vert \bs{\varepsilon}_{*} \Vert^2
    = \Vert \left(\mathbf{D}_{\mathcal{S}}^{\top}\mathbf{D}_{\mathcal{S}}\right)^{-1/2} \mathbf{D}_{\mathcal{S}}^{\top}\bs{\varepsilon} \Vert^2
    \leq (n\pi_1)^{-1} \Vert \mathbf{D}_{\mathcal{S}}^{\top}\bs{\varepsilon} \Vert^2
    = O_P\{\pi_1^{-1}(|\mathcal{S}_z|+|\mathcal{S}_{x,L}|+|\mathcal{S}_{x,N}|N_n)\},
\end{align*}
\begin{align*}
    \Vert \bs{\eta}_{*} \Vert^2
    \leq 2 \Vert \bs{\varepsilon}_{*} \Vert^2 + O_P(n|\mathcal{S}_{x,N}|N_n^{-2})
    = O_P\{\pi_1^{-1}(|\mathcal{S}_z|+|\mathcal{S}_{x,L}|+|\mathcal{S}_{x,N}|N_n)\}
    + O_P(nN_n^{-2}),
\end{align*}
\begin{align*}
    \Vert \hbtheta_{\mathcal{S}} - \bs{\theta}_{0, \mathcal{S}} \Vert^2
& \leq \frac{8\Vert \bs{\eta}_{*} \Vert^2}{n\pi_{1}} + \frac{4 \{\lambda_{n1}^2|\mathcal{S}_z| + \lambda_{n2}^2|\mathcal{S}_{x,L}|+ \lambda_{n3}^2|\mathcal{S}_{x,N}|\}}{n^2\pi_{1}^2} \\
 & = O_P\left\{ \frac{|\mathcal{S}_z|+|\mathcal{S}_{x,L}|+|\mathcal{S}_{x,N}|N_n}{n \pi_1^2}\right\} + O\left(\frac{|\mathcal{S}_{x,N}|}{\pi_1 N_n^2}\right) \\
 & \quad +  O_P\left\{\frac{\lambda_{n1}^2|\mathcal{S}_z| + \lambda_{n2}^2|\mathcal{S}_{x,L}|+ \lambda_{n3}^2|\mathcal{S}_{x,N}|}{n^2 \pi_1^2}\right\}.
\end{align*}
Therefore, the results follow by the facts that
\begin{align}
    \widehat{\bs{\alpha}}_{\mathcal{S}_z} - \bs{\alpha}_{0,\mathcal{S}_z} &= \left(\mathbf{I}_{|\mathcal{S}_z|}  ~~ \bs{0}_{|\mathcal{S}_z| \times |\mathcal{S}_{x,L}|}  ~~ \bs{0}_{|\mathcal{S}_z| \times (|\mathcal{S}_{x,N}|N_n)}\right)(\sigs{\hbtheta} - \bs{\theta}_{0,\mathcal{S}} ), \nonumber \\
    \widehat{\bs{\beta}}_{\mathcal{S}_{x, L}} - \bs{\beta}_{0,\mathcal{S}_{x, L}} &= \left(\bs{0}_{|\mathcal{S}_{x,L}| \times |\mathcal{S}_z|} ~~ \mathbf{I}_{|\mathcal{S}_{x,L}|}  ~~ \bs{0}_{|\mathcal{S}_{x,L}|\times (|\mathcal{S}_{x,N}|N_n)}\right)(\sigs{\hbtheta} - \bs{\theta}_{0,\mathcal{S}} ), \nonumber \\
    \widehat{\bs{\gamma}}_{\mathcal{S}_{x, N}} - \bs{\gamma}_{0,\mathcal{S}_{x, N}} &=\left(\bs{0}_{(|\mathcal{S}_{x,N}| N_n) \times |\mathcal{S}_z|} ~~ \bs{0}_{(|\mathcal{S}_{x,N}| N_n) \times |\mathcal{S}_{x,L}|} ~~ \mathbf{I}_{|\mathcal{S}_{x,N}| N_n}\right)(\sigs{\hbtheta} - \bs{\theta}_{0,\mathcal{S}}),
\label{EQN:diffs_betagamma}
\end{align}
and $\Vert \widehat{g}_{\ell} - g_{n\ell} \Vert_2^2 \asymp \Vert \hbgamma_{\ell} - \bs{\gamma}_{0\ell}  \Vert^2$,
where $\widehat{\bs{\beta}}_{\mathcal{S}_z} = (\widehat{\beta}_k, k \in \mathcal{S}_z)^{\top}$, $\bs{\beta}_{0,\mathcal{S}_z} = ({\beta}_{0k}, k \in \mathcal{S}_z)^{\top}$, $\widehat{\bs{\gamma}}_{\mathcal{S}_{x,N}} = (\widehat{\bs{\gamma}}_{\ell}, \ell \in \mathcal{S}_{x,N})^{\top}$ and $\bs{\gamma}_{0,\mathcal{S}_{x,N}} = (\bs{\gamma}_{\ell}, \ell \in \mathcal{S}_{x,N})^{\top}$.
\end{proof}

\subsection{Proof of Theorem \ref{THM:nonlinear-normality}}

In this section, the spline basis functions considered are of order $d$.
For any index set $\mathcal{A} \subseteq \{1, \ldots, p_1+p_2\}$, denote $\bs{\beta}_{\mathcal{A}} = (\beta_k, 1 \leq k \leq p_1, k \in \mathcal{A})^{\top}$, $\widehat{\bs{\beta}}_{\mathcal{A}} = (\widehat{\beta}_k, 1 \leq k \leq p_1, k \in \mathcal{A})^{\top}$, $\bs{\gamma}_{\mathcal{A}} = (\bs{\gamma}_{\ell}, 1 \leq \ell \leq p_2$, $\ell + p_1 \in \mathcal{A})^{\top}$ and $\widehat{\bs{\gamma}}_{\mathcal{A}} = (\widehat{\bs{\gamma}}_{\ell}, 1 \leq \ell \leq p_2$, $\ell + p_1 \in \mathcal{A})^{\top}$.
Next, denote $\mathbf{Z}_{\mathcal{A}} = (\mathbf{Z}_{i, \mathcal{A}} ^{\top}, i = 1, \ldots, n)^{\top}$, where $\mathbf{Z}_{i, \mathcal{A}} = (Z_{ik}, 1 \leq k \leq p_1, k \in \mathcal{A})^{\top}$, $\mathbf{X}_{i, \mathcal{A}} = (X_{i\ell}, 1 \leq \ell \leq p_2, \ell+p_1 \in \mathcal{A})^{\top}$. Similarly, denote $\mathbf{B}_{\mathcal{A}}^{(d)} = (\mathbf{B}_{i, \mathcal{A}}^{(d)\top}, i = 1, \ldots, n)^{\top}$, where $\mathbf{B}_{i, \mathcal{A}}^{(d)} = (B_{J,\ell}^{(d)}(X_{i\ell}), 1 \leq \ell \leq p_2$, $\ell + p_1 + p_2 \in \mathcal{A}, J = 1, \ldots, N_n)^{\top}$.
Define $\mathbf{T}_{\mathcal{S}} = (\mathbf{Z}_{\mathcal{S}_z}, \mathbf{X}_{\mathcal{S}_{x,PL}})$.
By an abuse of notation, let $\mathbf{D}_{\mathcal{S}} = \left(\mathbf{T}_{\mathcal{S}}, \mathbf{B}_{\mathcal{S}}^{(d)}\right)$,
and we define
\begin{equation}
    \mathbf{C}_{\mathcal{S}} = n^{-1}\mathbf{D}_{\mathcal{S}}^{\top} \mathbf{D}_{\mathcal{S}} =
    \begin{pmatrix}
       n^{-1} \mathbf{T}_{\mathcal{S}}^{\top}\mathbf{T}_{\mathcal{S}} & n^{-1}\mathbf{T}_{\mathcal{S}}^{\top}\mathbf{B}_{\mathcal{S}}^{(d)} \\
        n^{-1}\mathbf{B}_{\mathcal{S}}^{(d)\top}\mathbf{T}_{\mathcal{S}} & n^{-1}\mathbf{B}_{\mathcal{S}}^{(d)\top}\mathbf{B}_{\mathcal{S}}^{(d)}
    \end{pmatrix} =
    \begin{pmatrix}
        \mathbf{C}_{11} & \mathbf{C}_{12} \\
        \mathbf{C}_{21} & \mathbf{C}_{22}
    \end{pmatrix} ,
\label{EQN:cs}
\end{equation}
\begin{equation}
    \mathbf{U}_{\mathcal{S}} = \mathbf{C}_{\mathcal{S}} ^{-1} =
    \begin{pmatrix}
        \mathbf{U}_{11} & - \mathbf{U}_{11} \mathbf{C}_{12} \mathbf{C}_{22}^{-1} \\
        - \mathbf{U}_{22} \mathbf{C}_{21} \mathbf{C}_{11}^{-1} & \mathbf{U}_{22}
    \end{pmatrix} =
    \begin{pmatrix}
        \mathbf{U}_{11} & \mathbf{U}_{12} \\
        \mathbf{U}_{21} & \mathbf{U}_{22}
    \end{pmatrix},
\label{EQN:us}
\end{equation}
where $\mathbf{U}_{11}^{-1} = \mathbf{C}_{11} - \mathbf{C}_{12} \mathbf{C}_{22}^{-1} \mathbf{C}_{21} = n^{-1} \mathbf{T}_{\mathcal{S}}^{\top} \left(\mathbf{I}_n - \mathbf{P}_{\mathbf{B}_{\mathcal{S}}^{(d)}}\right) \mathbf{T}_{\mathcal{S}}$ and $\mathbf{U}_{22}^{-1} = \mathbf{C}_{22} - \mathbf{C}_{21} \mathbf{C}_{11}^{-1} \mathbf{C}_{12}$ $= n^{-1} \mathbf{B}_{\mathcal{S}}^{(d)\top} \left(\mathbf{I}_n - \mathbf{P}_{\mathbf{T}_{\mathcal{S}}}\right) \mathbf{B}_{\mathcal{S}}^{(d)}$,
with $\mathbf{P}_{\mathbf{B}_{\mathcal{S}}^{(d)}}$ and $\mathbf{P}_{\mathbf{T}_{\mathcal{S}}}$ being projection matrices for $\mathbf{B}_{\mathcal{S}}^{(d)}$ and $\mathbf{T}_{\mathcal{S}}$, respectively.

In the following, we give the proof of Theorem \ref{THM:nonlinear-normality}.

\begin{proof}[Proof of Theorem \ref{THM:nonlinear-normality}]
The structure of the proof is consisted of two parts: (i) we show the oracle efficiency of $\widehat{\phi}_{\ell}^{\mathrm{SBLL}}$; (ii) we show the uniform asymptotic normality for the ``oracle'' estimator $\widehat{\phi}_{\ell}^o$.

For part (i), note that for $\ell \in \mathcal{S}_{x,N}$
\[
\widehat{\phi}_{\ell}^{\mathrm{SBLL}}\left( x_{\ell}\right)-\widehat{\phi}_{\ell}^o\left( x_{\ell}\right) = \left(1, \, 0\right) (\mathbf{X}_{\ell}^{\ast\top}\mathbf{W}_{\ell}\mathbf{X}_{\ell}^{\ast})^{-1} \mathbf{X}_{\ell}^{\ast\top}\mathbf{W}_{\ell} (\widehat{\mathbf{Y}}_{\ell} - \mathbf{Y}_{\ell}),  ~~\mathrm{where}
\]
\begin{align*}
    \widehat{\mathbf{Y}}_{\ell} - \mathbf{Y}_{\ell}
    &= \mathbf{Z}_{\mathcal{S}_z} (\bs{\alpha}_{0, \mathcal{S}_z} - \widehat{\bs{\alpha}}_{\mathcal{S}_z}^{\ast})
    \!\!\!\!+ \mathbf{X}_{\mathcal{S}_{x,PL}} (\bs{\beta}_{0, \mathcal{S}_{x,PL}} - \widehat{\bs{\beta}}_{\mathcal{S}_{x,PL}}^{\ast})
+ \sum_{\ell^{\prime} \in \mathcal{S}_{x,N}\setminus \{\ell\} } \left\{\phi_{0\ell^{\prime}} (\mathbf{X}_{\ell^{\prime}}) - \widehat{\phi}_{\ell^{\prime}}^{\ast}(\mathbf{X}_{\ell^{\prime}}) \right\} \\
    &=\mathbf{Z}_{\mathcal{S}_z} (\bs{\alpha}_{0, \mathcal{S}_z} - \widehat{\bs{\alpha}}_{\mathcal{S}_z}^{\ast})
    + \mathbf{X}_{\mathcal{S}_{x,PL}} (\bs{\beta}_{0, \mathcal{S}_{x,PL}} - \widehat{\bs{\beta}}_{\mathcal{S}_{x,PL}}^{\ast}) \\
 & \quad \quad + \mathbf{B}_{\mathcal{S}_{x,N} \setminus \{\ell\}}^{(d)} (\bs{\gamma}_{0, \mathcal{S} _{x,N}\setminus \{\ell\}} - {\hbgamma}_{\mathcal{S}_{x,N} \setminus \{\ell\}}^{\ast})
    + \sum_{\ell^{\prime} \in \mathcal{S}_{x,N}\setminus \{\ell\} } \left\{\phi_{0\ell^{\prime}} (\mathbf{X}_{\ell^{\prime}}) - \phi_{nl^{\prime}} (\mathbf{X}_{\ell^{\prime}}) \right\} ,
\end{align*}
and
\begin{align*}
    \!\!\!\text{diag} (1, h_{\ell}^{-1}) \, &\mathbf{X}_{\ell}^{\ast\top}\mathbf{W}_{\ell}\mathbf{X}_{\ell}^{\ast} \, \text{diag} (1, h_{\ell}^{-1}) \\
    = &
        \begin{pmatrix}
            n^{-1}\sum_{i = 1}^n K_{h_{\ell}}(X_{i\ell} - x_{\ell}) & n^{-1} \sum_{i = 1}^n \left(\frac{X_{i\ell} - x_{\ell}}{h_{\ell}}\right)K_{h_{\ell}}(X_{i\ell} - x_{\ell}) \\
            n^{-1} \sum_{i = 1}^n \left(\frac{X_{i\ell} - x_{\ell}}{h_{\ell}}\right) K_{h_{\ell}}(X_{i\ell} - x_{\ell}) & n^{-1}\sum_{i = 1}^n \left(\frac{X_{i\ell} - x_{\ell}}{h_{\ell}}\right)^{2} K_{h_{\ell}}(X_{i\ell} - x_{\ell})
        \end{pmatrix} \\
    = &  f_{\ell}(x_{\ell})
        \begin{pmatrix}
            1 & 0 \\
            0 & \mu_2(K)
        \end{pmatrix}
     + u_P(1),
\end{align*}
with $u_P(\cdot) = o_P(\cdot)$ uniformly for all $x_{\ell} \in [a, b]$. So
\[
    \left(\mathbf{X}_{\ell}^{\ast\top}\mathbf{W}_{\ell}\mathbf{X}_{\ell}^{\ast}\right)^{-1} =  \text{diag} (1, h_{\ell}^{-1}) \, f_{\ell}^{-1}(x_{\ell}) \left\{
        \begin{pmatrix}
            1 & 0 \\
            0 & \mu_2(K)
        \end{pmatrix}
     + u_P(1) \right\} \, \text{diag} (1, h_{\ell}^{-1}),
\]
\[
    \text{diag} (1, h_{\ell}^{-1}) \, \mathbf{X}_{\ell}^{\ast\top}\mathbf{W}_{\ell} =  \frac{1}{n} \times
        \begin{pmatrix}
            K_{h_{\ell}} (X_{1\ell} - x_{\ell}) & ,\ldots, & K_{h_{\ell}} (X_{n\ell} - x_{\ell}) \\
            \left(\frac{X_{1\ell} - x_{\ell}}{h_{\ell}}\right) K_{h_{\ell}} (X_{1\ell} - x_{\ell}) & ,\ldots, & \left(\frac{X_{n\ell} - x_{\ell}}{h_{\ell}}\right) K_{h_{\ell}} (X_{n\ell} - x_{\ell})
        \end{pmatrix}.
\]
Thus,
\begin{align}
    \widehat{\phi}_{\ell}^{\mathrm{SBLL}}\left( x_{\ell}\right) & -\widehat{\phi}_{\ell}^o\left( x_{\ell}\right)  =   f_{\ell}^{-1} (x_{\ell}) \left[
    \frac{1}{n} \sum_{i = 1}^n K_{h_{\ell}}(X_{i\ell} - x_{\ell}) \mathbf{Z}_{i, \mathcal{S}_z}^{\top} (\bs{\alpha}_{0, \mathcal{S}_z} - \widehat{\bs{\alpha}}_{\mathcal{S}_z}^{\ast}) \right . \nonumber \\
 & + \frac{1}{n} \sum_{i = 1}^n K_{h_{\ell}}(X_{i\ell} - x_{\ell}) \mathbf{X}_{i, \mathcal{S}_{x,PL}}^{\top} (\bs{\beta}_{0, \mathcal{S}_{x,PL}} - \widehat{\bs{\beta}}_{\mathcal{S}_{x,PL}}^{\ast}) \nonumber \\
 & +  \frac{1}{n} \sum_{i = 1}^n K_{h_{\ell}}(X_{i\ell} - x_{\ell}) \mathbf{B}_{i, \mathcal{S}_{x,N} \setminus \{\ell\}}^{(d)\top} (\bs{\gamma}_{0, \mathcal{S}_{x,N} \setminus \{\ell\} } - {\hbgamma}_{\mathcal{S}_{x,N} \setminus \{\ell\} }^{\ast})  \nonumber  \\
 & \left . + \, \frac{1}{n} \sum_{i = 1}^n \sum_{\ell^{\prime} \in \mathcal{S}_{x,N} \setminus \{\ell\} } K_{h_{\ell}}(X_{i\ell} - x_{\ell}) \left\{\phi_{0\ell^{\prime}} ({X}_{i\ell^{\prime}}) - \phi_{nl^{\prime}} ({X}_{i\ell^{\prime}}) \right\} + u_P(1) \right].
\label{eqn:alpha_diff}
\end{align}

For the first and second summation terms in the right hand side of (\ref{eqn:alpha_diff}), by Theorem \ref{THM:normality}, we have $ n^{-1}\sum_{i = 1}^n K_{h_{\ell}}(X_{i\ell} - x_{\ell}) \mathbf{Z}_{i, \mathcal{S}_z}^{\top} (\bs{\alpha}_{0, \mathcal{S}_z} - \widehat{\bs{\alpha}}_{\mathcal{S}_z}^{\ast}) = u_P\left( n^{-1/2} \right) $ and $ n^{-1}\sum_{i = 1}^n K_{h_{\ell}}(X_{i\ell} - x_{\ell}) \mathbf{X}_{i, \mathcal{S}_{x,PL}}^{\top} (\bs{\beta}_{0, \mathcal{S}_{x,PL}} \!- \widehat{\bs{\beta}}_{\mathcal{S}_{x,PL}}^{\ast}) \!=\! u_P\left( n^{-1/2} \right)$;
and by Lemma \ref{LEM:spline-approx}, $n^{-1} \!\sum_{i = 1}^n \!\sum_{\ell^{\prime} \in \mathcal{S}_{x,N} \setminus \{\ell\}}K_{h_{\ell}}(X_{i\ell} - x_{\ell}) \left\{\phi_{0\ell^{\prime}} ({X}_{i\ell^{\prime}}) - \phi_{nl^{\prime}} ({X}_{i\ell^{\prime}}) \right\}  = u_P\left\{|\mathcal{S}_{x,N}|M_n^{-d}\right\}$.
As for the third terms, define $\zeta_{i\ell} = \phi_{0\ell}(X_{i\ell}) - \sum_{J=1}^{M_n} \gamma_{0J,l}^{\ast} B_{J,\ell}^{(d)}(X_{i\ell})$, $\zeta_{i} = \sum_{\ell \in \mathcal{S}_{x,N}}\zeta_{i\ell}$, and $\bs{\zeta} = (\zeta_1, \ldots, \zeta_n)^{\top}$, similar to the induction with (\ref{EQN:theta_dfc}), we have
\begin{equation}
    \hbtheta_{\mathcal{S}}^{\ast} - \bs{\theta}_{\mathcal{S}}^o
    = n^{-1} \mathbf{C}_{\mathcal{S}}^{-1} \{\mathbf{D}_{\mathcal{S}}^{\top} \left(\bs{\zeta} + \bs{\epsilon}\right)\},
\label{EQN:theta_dfc_star}
\end{equation}
then $\widehat{\bs{\gamma}}_{\mathcal{S}_{x,N}}^{\ast} - \bs{\gamma}_{0,\mathcal{S}_{x,N}}
    = \left(\bs{0}_{\{|\mathcal{S}_{x,N}|M_n\} \times (|\mathcal{S}_z|+|\mathcal{S}_{x,L}|)} ~ \mathbf{I}_{(|\mathcal{S}_{x,N}|M_n)}\right) \mathbf{C}_{\mathcal{S}}^{-1} n^{-1} \left\{\mathbf{D}_{\mathcal{S}}^{\top} \left(\bs{\zeta} +\bs{\epsilon}\right)\right \}$.
Define a diagonal matrix $\mathbf{I}_{\ell}^0 = \mathrm{diag}\{\bs{1}_{(l-1)M_n}, \bs{0}_{M_n}, \bs{1}_{(|\mathcal{S}_{x,N}| - l)M_n}\}$, $\ell \in \mathcal{S}_{x,N}$. Then
\begin{align*}
    \mathbf{B}_{i, \mathcal{S}_{x,N} \setminus \{\ell\}}^{(d)\top} (\bs{\gamma}_{0, \mathcal{S}_{x,N} \setminus \{\ell\} } - {\hbgamma}_{\mathcal{S}_{x,N}\setminus \{\ell\} }^{\ast} )
    = \, \mathbf{B}_{i, \mathcal{S}}^{(d)\top} \mathbf{I}_{\ell}^0
    \left(\widehat{\bs{\gamma}}_{\mathcal{S}_{x,N}}^{\ast} - \bs{\gamma}_{0,\mathcal{S}_{x,N}}\right).
\end{align*}
Next by Lemma \ref{LEM:spline-approx},  (\ref{EQN:cs}) and (\ref{EQN:us}), for any $\ell \in \mathcal{S}_{x,N}$, we have
\begin{align*}
   \frac{1}{n} \sum_{i = 1}^{n} & K_{h_{\ell}}(X_{i\ell} - x_{\ell}) \, \mathbf{B}_{i, \mathcal{S}}^{(d)\top} \mathbf{I}_{\ell}^0
    \left(\widehat{\bs{\gamma}}_{\mathcal{S}_{x,N}}^{\ast} - \bs{\gamma}_{0,\mathcal{S}_{x,N}}\right)\\
    & =  \frac{1}{n} \sum_{i = 1}^{n} K_{h_{\ell}}(X_{i\ell} - x_{\ell}) \mathbf{B}_{i, \mathcal{S}}^{(d)\top} \mathbf{I}_{\ell}^0 \mathbf{U}_{22} \left(- \mathbf{C}_{21} \mathbf{C}_{11}^{-1} ~~ \mathbf{I}_{(|\mathcal{S}_{x,N}|M_n)} \right)
    \frac{1}{n}\mathbf{B}_{\mathcal{S}}^{(d)\top} \left(\bs{\zeta}+\bs{\epsilon}\right) \\
    = & \frac{1}{n} \sum_{i = 1}^{n} K_{h_{\ell}}(X_{i\ell} - x_{\ell})  \mathbf{B}_{i, \mathcal{S}}^{(d)\top} \mathbf{I}_{\ell}^0 \mathbf{U}_{22} \frac{1}{n}\mathbf{B}_{\mathcal{S}}^{(d)\top} \left(\mathbf{I}_n - \mathbf{P}_{\mathbf{T}_{\mathcal{S}}}\right) \left(\bs{\zeta}+\bs{\epsilon}\right) .
\end{align*}
Following the same idea in the proof of Lemma \ref{LEM:eigen}, we have that
there exist constants $0<c_{U_{2}}<C_{U_{2}}<\infty $, such that with
probability approaching one, $c_{U_{2}}\mathbf{I}%
_{|\mathcal{S}_{x,N}|M_n}\leq \mathbf{U}_{22}\leq C_{U_{2}}\mathbf{I}_{|\mathcal{S}_{x,N}|M_n}$.
Similar to Lemma A.4 in \citet[][]{wang2007spline}, for any $\ell,\ell^{\prime} \in \mathcal{S}_{x,N}$ and $\ell\neq \ell^{\prime}$, we have
\begin{align*}
\sup_{x_{\ell} \in \chi_{h_{\ell}}}\max_{1\leq J\leq M_n}&
\left|\frac{1}{n}\!\sum_{i=1}^{n}\!\left[ K_{h_{\ell}}\left(
X_{i\ell}-x_{\ell}\right) B_{J\ell^{\prime}}^{(d)}\left( X_{i\ell^{\prime}}\right)
\!-\!E\{K_{h_{\ell}}\left(
X_{i\ell}-x_{\ell}\right) B_{J\ell^{\prime}}^{(d)}\left( X_{i\ell^{\prime}}\right)\} \right] \right| \!=\! O_P\!\left(\sqrt{\frac{\ln n}{nh_{\ell}}}\right),
\end{align*}
\begin{equation*}
\sup_{x_{\ell} \in \chi_{h_{\ell}}}\max_{1\leq J\leq M_n}\left|
n^{-1}\sum_{i=1}^{n}\mathrm{E}\{K_{h_{\ell}}\left(
X_{i\ell}-x_{\ell}\right) B_{J\ell}^{(d)}\left( X_{i\ell}\right)\} \right|
=O_{P}\left(M_n^{-1/2}\right).  \label{EQ:supxi}
\end{equation*}
We can show that
\begin{align*}
    \sup_{x_{\ell} \in \chi_{h_{\ell}}}\!\frac{1}{n} \!\sum_{i = 1}^{n} & K_{h_{\ell}}(X_{i\ell} - x_{\ell})   \mathbf{B}_{i, \mathcal{S}}^{(d)\top} \mathbf{I}_{\ell}^0 \mathbf{U}_{22} \frac{1}{n}\mathbf{B}_{\mathcal{S}}^{(d)\top} \left(\mathbf{I}_n - \mathbf{P}_{\mathbf{T}_{\mathcal{S}}}\right) \bs{\delta}
    \!=\!O_{P}\{M_n^{-d + 1} (\sqrt{\ln n / (nh_{\ell})}+M_n^{-1/2})\},
\end{align*}
and by Proposition 2 in \citet[][]{wang2009efficient},
\[
    \sup_{x_{\ell} \in \chi_{h_{\ell}}} \frac{1}{n} \sum_{i = 1}^{n} K_{h_{\ell}}(X_{i\ell} - x_{\ell})  \mathbf{B}_{i, \mathcal{S}}^{(d)\top} \mathbf{I}_{\ell}^0 \mathbf{U}_{22} \frac{1}{n}\mathbf{B}_{\mathcal{S}}^{(d)\top} \left(\mathbf{I}_n - \mathbf{P}_{\mathbf{T}_{\mathcal{S}}}\right) \bs{\epsilon} =O_{P}(\sqrt{\ln(n) / n}).
\]
Therefore,
\[
	\sup_{x_{\ell} \in \chi_{h_{\ell}}}|\widehat{\phi}_{\ell}^{\mathrm{SBLL}}\left( x_{\ell}\right)-\widehat{\phi}_{\ell}^o\left( x_{\ell}\right) |
	= O_{P}\left\{\sqrt{\frac{\ln n} {n}} + M_n^{-d+1} \left(\sqrt{\frac{\ln n} {nh_{\ell}}}+\sqrt{\frac{1}{M_n}}\right)\right\}.
\]

For part (ii), below we show that for any $t$ and $ \ell\in \mathcal{S}_{x,N}$,
\[
	\lim_{n \rightarrow \infty} \Pr\left\{\sqrt{\ln(h_{\ell}^{-2})} \left(\sup_{x_{\ell} \in \chi_{h_{\ell}}} \frac{\sqrt{nh_{\ell}}}{v_{\ell}(x_{\ell})} |\widehat{\phi}_{\ell}^{o}\left( x_{\ell}\right) -\phi_{0\ell}(x_{\ell})| - \tau_n\right) < t\right\}
	= e^{-2 e^{-t}},
\]
where 
$v_{\ell}^{2}(x_{\ell}) = \|K\|_2^2 f_{\ell}^{-1}(x_{\ell}) \sigma^{2}$,
$\tau_n = \sqrt{\ln(h_{\ell}^{-2})} + \ln\{\|K^{\prime}\|_2 / (2\pi \|K\|_2)\} / \sqrt{\ln(h_{\ell}^{-2})}$.

Define $M_{h}(x)=h_{\ell}^{-1/2}\int K\{(x^{\prime}-x)/h_{\ell}\}dW(x^{\prime})$, where $W(x)$ is a Wiener process defined on $(0,\infty)$. By the Lemma 1 in \cite{zheng2016statistical}, one has
\begin{equation}
\label{EQU:Wiener}
	\lim_{n \rightarrow \infty} \Pr\left[ \sqrt{\ln(h_{\ell}^{-2})}
		\bigg\{\sup_{x \in \chi_{h_{\ell}}}|M_{h_{\ell}}(x)|/\|K\|_{L_2}^2-\tau_n\bigg\} < t \right] = e^{-2e^{-t}}.
\end{equation}
Recall the definition of $\widehat{\phi}_{\ell}^{o}(x_{\ell})$ in (15), we have
\begin{align*}
	\widehat{\phi}_{\ell}^{o}(x_{\ell}) -\phi_{0\ell}(x_{\ell})
	=&\left(1 ~~ 0\right) \left(\mathbf{X}_{\ell}^{\ast\top}\mathbf{W}_{\ell}\mathbf{X}_{\ell}^{\ast}\right)^{-1} \mathbf{X}_{\ell}^{\ast\top}\mathbf{W}_{\ell} \mathbf{Y}_{\ell}  - \phi_{0\ell}(x_{\ell}) \\
	=&f_{\ell}^{-1} (x_{\ell})\frac{1}{n} \sum_{i = 1}^n K_{h_{\ell}}(X_{i\ell} - x_{\ell}) \varepsilon_i + O_P(h_{\ell}^2).
\end{align*}
According to the proof of Theorem 1 in \cite{zheng2016statistical}, we have
\begin{equation}
\label{SCB2}
	\sup_{x_{\ell} \in \chi_{h_{\ell}}}\left| \frac{\sqrt{nh_{\ell}}}{v_{\ell}(x_{\ell})} \{\widehat{\phi}_{\ell}^{o}( x_{\ell}) -\phi_{0\ell}(x_{\ell})\}-M_{h_{\ell}}(x_{\ell})/\|K\|_{L_2}^2\right|=o_P(\ln^{-1/2} n).
\end{equation}
Consequently, we have
\[
\sup_{x_{\ell} \in \chi_{h_{\ell}}}\sqrt{\ln(h_{\ell}^{-2})}\bigg| \frac{\sqrt{nh_{\ell}}}{v_{\ell}(x_{\ell})} \{\widehat{\phi}_{\ell}^{o}( x_{\ell}) -\phi_{0\ell}(x_{\ell})\}-M_{h_{\ell}}(x_{\ell})/\|K\|_{L_2}^2\bigg|=o_P(1),
\]
as $\sqrt{\ln(h_{\ell}^{-2})}/\sqrt{\ln(n)}=O(1)$.
The uniformly asymptotic normality of the ``oracle" estimator $\widehat{\phi}_{\ell}^{o}(x_{\ell})$ follows from (\ref{EQU:Wiener}) and Slusky's Theorem.

Hence, the result in (\ref{EQN:SBLL_Order}) is established. Consequently, the result in (\ref{EQN:sbll_normal}) follows from (\ref{EQN:alphao_normal}), and the result in (\ref{EQN:scb_form}) follows from \cite{claeskens2003bootstrap}.
\end{proof}

\subsection{Technical Lemmas}

The following lemmas are used in the proofs of Theorem \ref{THM:LASSO-1} and Theorem \ref{THM:selection}.

\begin{lemma}
\label{LEM:gLASSO_select} 
Suppose that Assumptions (A1)--(A4) hold. Recall the definition of $\mathcal{S}$ and $\widetilde{\mathcal{S}}$ in (\ref{DEF:SN}) and (\ref{DEF:S_tilde}),
with probability approaching one, $|\widetilde{\mathcal{S}}| \leq M_1|{\mathcal{S}}| = M_1(|\mathcal{S}_z| + |\mathcal{S}_{x,L}| + |\mathcal{S}_{x,N}|)$ for a finite constant $M_1 > 1$.
\end{lemma}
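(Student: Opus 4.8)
\textbf{Proof proposal for Lemma \ref{LEM:gLASSO_select}.}

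The plan is to control the size of the selected set $\widetilde{\mathcal{S}}$ for the group LASSO estimator by combining the KKT conditions for \eqref{EQ:loss_glasso_spline} with the restricted eigenvalue bound of Lemma \ref{LEM:eigen} and the prediction-error bound that will come out of Theorem \ref{THM:LASSO-1}(ii). First I would write down the subgradient (KKT) optimality conditions for the group LASSO objective in \eqref{EQ:loss_glasso_spline}: for each block $m \in \widetilde{\mathcal{S}}$ (i.e.\ $\Vert \widetilde{\bs{\theta}}_m \Vert \neq 0$) we have $\mathbf{D}_m^{\top}(\mathbf{Y} - \mathbf{D}\widetilde{\bs{\theta}}) = \widetilde{\lambda}_{nj}\, \widetilde{\bs{\theta}}_m / \Vert \widetilde{\bs{\theta}}_m \Vert$ with the appropriate $j\in\{1,2,3\}$ depending on whether $m$ indexes an $\alpha$, $\beta$, or $\gamma$ block; consequently $\Vert \mathbf{D}_m^{\top}(\mathbf{Y} - \mathbf{D}\widetilde{\bs{\theta}}) \Vert = \widetilde{\lambda}_{nj}$ (or $\sqrt{N_n}$-scaled, tracked carefully) for each selected block. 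Summing $\Vert \mathbf{D}_m^{\top}(\mathbf{Y} - \mathbf{D}\widetilde{\bs{\theta}}) \Vert^2$ over $m \in \widetilde{\mathcal{S}}$ then gives a lower bound of order $|\widetilde{\mathcal{S}}| \cdot (\widetilde{\lambda}_{n1} \wedge \widetilde{\lambda}_{n2} \wedge \widetilde{\lambda}_{n3})^2$ (again up to the spline-dimension bookkeeping for the $\gamma$-blocks).

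Next I would upper bound the same quantity $\sum_{m \in \widetilde{\mathcal{S}}} \Vert \mathbf{D}_m^{\top}(\mathbf{Y} - \mathbf{D}\widetilde{\bs{\theta}}) \Vert^2 = \Vert \mathbf{D}_{\widetilde{\mathcal{S}}}^{\top}(\mathbf{Y} - \mathbf{D}\widetilde{\bs{\theta}}) \Vert^2$. Writing $\mathbf{Y} - \mathbf{D}\widetilde{\bs{\theta}} = \bs{\varepsilon} + \bs{\delta} - \mathbf{D}(\widetilde{\bs{\theta}} - \bs{\theta}_0)$ via the decomposition already used in the proof of Theorem \ref{THM:LASSO-1} (see \eqref{EQN:eta_form}), this splits into: (a) a noise term $\Vert \mathbf{D}_{\widetilde{\mathcal{S}}}^{\top}\bs{\varepsilon} \Vert^2$, which by Lemma \ref{LEM:2} is $O_P[n|\widetilde{\mathcal{S}}|\{\ln(p_1)\vee N_n\ln(p_2N_n)\}]$ (or more precisely one bounds $\max_{|\mathcal{A}|\le d'}\Vert \mathbf{D}_{\mathcal{A}}^{\top}\bs{\varepsilon}\Vert^2$); (b) a spline-bias term $\Vert \mathbf{D}_{\widetilde{\mathcal{S}}}^{\top}\bs{\delta} \Vert^2$ of order $O_P(n|\widetilde{\mathcal{S}}|\,N_n^{-2})$ using $|\delta_{i\ell}| = O_P(N_n^{-1})$; and (c) the term involving $\mathbf{D}(\widetilde{\bs{\theta}} - \bs{\theta}_0)$, which by Lemma \ref{LEM:eigen} applied on the set $\mathcal{S}' = \mathcal{S} \cup \widetilde{\mathcal{S}}$ is $O_P(n \cdot n^{-1}\{\cdots\})$ — exactly the estimation rate from Theorem \ref{THM:LASSO-1}(ii) times $n$. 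The point is that each of (a), (b), (c) carries a factor $n \cdot |\widetilde{\mathcal{S}}|$ (times a slowly-growing factor), while the lower bound carries $|\widetilde{\mathcal{S}}| \cdot \widetilde{\lambda}^2$. Cancelling $|\widetilde{\mathcal{S}}|$ and using the hypothesis that the $\widetilde{\lambda}_{nj}$ are calibrated so that $\sum_j \widetilde{\lambda}_{nj}^2 \asymp n\{\ln(p_1)\vee N_n\ln(p_2N_n)\}$ — as noted right after Assumption (A6) in the excerpt — one sees the upper and lower bounds on $\Vert \mathbf{D}_{\widetilde{\mathcal{S}}}^{\top}(\mathbf{Y}-\mathbf{D}\widetilde{\bs{\theta}})\Vert^2$ are of the same order, so no constraint on $|\widetilde{\mathcal{S}}|$ emerges directly that way; instead the cleanest route is to note that the inequality chain in \eqref{EQ:nu2_etanu}–\eqref{EQN:the_diffs1}, when carried out on $\mathcal{S}'$ with $d' = |\mathcal{S}'|$ appearing on both sides, forces $d' = O(|\mathcal{S}|)$ as soon as the restricted eigenvalue $c_*$ of $n^{-1}\mathbf{D}_{\mathcal{S}'}^{\top}\mathbf{D}_{\mathcal{S}'}$ stays bounded below — and here is where care is needed, since Lemma \ref{LEM:eigen} only guarantees a bounded-below eigenvalue for index sets of \emph{a priori fixed} cardinality.

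The main obstacle, then, is exactly this circularity: to invoke Lemma \ref{LEM:eigen} on $\mathcal{S}' = \mathcal{S}\cup\widetilde{\mathcal{S}}$ one needs $|\widetilde{\mathcal{S}}|$ bounded, but that is what we are trying to prove. I would resolve this by a sparse-eigenvalue / peeling argument: establish first (using a union bound over all subsets of size up to $K|\mathcal{S}|$ together with the sub-Gaussian tail condition (A2) and the covariate boundedness (A4), exactly as in the proof of Lemma~A.1 of \cite{li2017ultra}) that with probability tending to one, \emph{every} submatrix $n^{-1}\mathbf{D}_{\mathcal{A}}^{\top}\mathbf{D}_{\mathcal{A}}$ with $|\mathcal{A}| \le K|\mathcal{S}|$ has minimum eigenvalue bounded below by some $c_1 > 0$ — this is a genuine restricted-eigenvalue statement, valid for $K$ as large as we like since $\ln\binom{p_1+2p_2}{K|\mathcal{S}|} = o(n)$ under (A6). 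Then a standard argument shows: if $|\widetilde{\mathcal{S}}|$ were to exceed $M_1|\mathcal{S}|$ for a suitably large $M_1$, the basic inequality forces $\Vert \widetilde{\bs{\theta}} - \bs{\theta}_0 \Vert^2$ to be so large (because too many blocks at level $\widetilde{\lambda}$ each contribute) that it contradicts the estimation bound $\Vert \widetilde{\bs{\theta}} - \bs{\theta}_0 \Vert^2 = o(1)$ that follows once $c_*$ is under control on the bounded-size set. Iterating (peeling) gives the claimed $|\widetilde{\mathcal{S}}| \le M_1|\mathcal{S}|$ with probability approaching one. Throughout I would be careful that the $\gamma$-blocks have dimension $N_n$ rather than $1$, so the relevant "$\lambda^2$" comparisons involve $\widetilde{\lambda}_{n3}^2$ against $N_n$-scaled quantities, but under the stated calibration the bookkeeping closes.
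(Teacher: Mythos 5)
Your proposal is correct in substance and is, at bottom, the same argument the paper relies on: the group-LASSO dimension bound in the style of Theorem 1 of \cite{zhang2008sparsity} and Theorem 2.1 of \cite{wei2010consistent}. The difference is in what is made explicit. The paper does not carry out the KKT counting itself; it defines the high-probability event $\Omega_{|\mathcal{S}_z|,|\mathcal{S}_{x,L}|,|\mathcal{S}_{x,N}|}$ on which the normalized correlations $\chi_{s_1,s_2,s_3}$ between $\bs{\eta}=\bs{\varepsilon}+\bs{\delta}$ and the directions $\mathbf{V}_{\mathcal{A}}$ are uniformly $O\bigl(\sqrt{s_1\ln p_1\vee s_2\ln p_2\vee s_3N_n\ln(p_2N_n)}\bigr)$, verifies that this event has probability tending to one by splitting off the spline bias $\Vert\bs{\delta}\Vert_2=O(|\mathcal{S}_{x,N}|^{1/2}n^{1/2}N_n^{-1})$ exactly as you do, and then cites those two references for the implication that on this event $|\widetilde{\mathcal{S}}|\le M_1|\mathcal{S}|$. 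You reconstruct that implication from scratch: the KKT identity $\Vert\mathbf{D}_m^{\top}(\mathbf{Y}-\mathbf{D}\widetilde{\bs{\theta}})\Vert=\widetilde{\lambda}_{nj}$ on each selected block, combined with a restricted-eigenvalue bound uniform over all index sets of size at most $K|\mathcal{S}|$ and a peeling step, yields the cardinality bound. Your explicit flagging of the circularity is a genuine point: Lemma \ref{LEM:eigen} is stated only for index sets of a priori fixed cardinality, so it cannot be applied to $\mathcal{S}\cup\widetilde{\mathcal{S}}$ before the present lemma is established, and what is really needed is a sparse-Riesz-type eigenvalue condition over variable-size sets — something the paper's written proof absorbs silently into the citation. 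The only bookkeeping caveat in your route is that the union-bound verification of the uniform eigenvalue bound must account for the $N_n^{1/2}$ scaling of the constant-spline basis entries, which is precisely where the condition $N_n\ln(p_2N_n)/n\to 0$ is consumed; with that in place your argument closes.
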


\begin{proof} [Proof]
The basic idea of the proof is similar to the proofs of Theorem 1 of \cite{zhang2008sparsity} and Theorem 2.1 of \cite{wei2010consistent}. The main differences are the error term shown in (\ref{EQN:eta_form}) and we have a more complex data structure. By Lemma \ref{LEM:spline-approx}, for some constant $C_{4} > 0$, we have $\Vert \bs{\delta} \Vert_2 \leq C_{4} \sqrt{n |\mathcal{S}_{x,N}| N_n^{-2}} = C_{4} |\mathcal{S}_{x,N}|^{1/2} n^{1 / 2} N_n^{-1}$.
For any positive integers $s_1$, $s_2$ and $s_3$, pick some index sets $\mathcal{A}_1 \subseteq \{1, \ldots, p_{1}\}$, $\mathcal{A}_2 \subseteq \{1, \ldots, p_{2}\}$ and $\mathcal{A}_3 \subseteq \{1, \ldots, p_{2}\}$ such that the cardinalities of $\mathcal{A}_1$, $\mathcal{A}_2$ and $\mathcal{A}_3$ are $|\mathcal{A}_1| = s_1$, $|\mathcal{A}_2| = s_2$ and $|\mathcal{A}_3| = s_3$, respectively. Denote $\mathcal{A} = \mathcal{A}_1 \cup \mathcal{A}_2 \cup \mathcal{A}_3$. Define an $(s_1 + s_2 + s_3 N_n) \times 1$ vector $\mathbf{S}_\mathcal{A} = \left(\widetilde{\lambda}_{n1}\mathbf{u}_{s_1}^{\top}, \widetilde{\lambda}_{n2}\mathbf{u}_{s_2}^{\top}, \widetilde{\lambda}_{n3}\sqrt{N_n}\mathbf{U}_{1}^{\top}, \ldots, \widetilde{\lambda}_{n3}\sqrt{N_n}\mathbf{U}_{s_3}^{\top}\right)^{\top}$, where $\mathbf{u}_{s_1} \in \{\pm 1\}^{s_1}$, $\mathbf{u}_{s_2} \in \{\pm 1\}^{s_2}$ and $\mathbf{U}_j$ is in a unit ball with dimension $N_n$, that is, $\mathbf{U}_j \in \mathbb{R}^N$ and $\Vert \mathbf{U}_j \Vert_2 = 1$, $j = 1, \ldots, s_3$. Let $\mathbf{P}_\mathcal{A} = \mathbf{D}_\mathcal{A} (\mathbf{D}_\mathcal{A}^{\top}\mathbf{D}_\mathcal{A})^{-1}\mathbf{D}_\mathcal{A}^{\top}$ be the projection matrix of $\mathbf{D}_\mathcal{A}$. Define
\[
    \chi_{s_1, s_2, s_3} = \max_{\substack{|\mathcal{A}_1| = s_1, |\mathcal{A}_2| = s_2, \\ |\mathcal{A}_3| = s_3 \\ \mathcal{A} = \mathcal{A}_1 \cup \mathcal{A}_2 \cup \mathcal{A}_3}}
    \max\limits_{\substack{\mathbf{u}_{s_1} \in \{\pm 1\}^{s_1} \\ \mathbf{u}_{s_2} \in \{\pm 1\}^{s_2} \\ \Vert \mathbf{U}_j \Vert_2 = 1, \, 1 \leq j \leq s_3}}
    \frac{|\bs{\eta}^{\top} \{\mathbf{D}_\mathcal{A}\left(\mathbf{D}_\mathcal{A}^{\top}\mathbf{D}_\mathcal{A}\right)^{-1}\mathbf{S}_\mathcal{A} - (\mathbf{I} - \mathbf{P}_\mathcal{A})\mathbf{D}\bs{\theta}_0\}|}
    {\Vert \mathbf{D}_\mathcal{A}\left(\mathbf{D}_\mathcal{A}^{\top}\mathbf{D}_\mathcal{A}\right)^{-1}\mathbf{S}_\mathcal{A} - (\mathbf{I} - \mathbf{P}_\mathcal{A})\mathbf{D}\bs{\theta}_0 \Vert_2},
\]
\begin{eqnarray*}
    \Omega_{|\mathcal{S}_z|, |\mathcal{S}_{x,L}|, |\mathcal{S}_{x,N}|} = \left\{(\mathbf{D}, \bs{\eta}): {\chi_{s_1, s_2, s_3} \leq \sigma C_{2} \sqrt{s_1\ln(p_1) \vee s_2\ln(p_2) \vee s_3 N_n \ln(p_3N_n)}}, \right. \\
        \left. \forall \, s_1 \geq |\mathcal{S}_z|, \, s_2 \geq |\mathcal{S}_{x,L}|, \, s_3 \geq |\mathcal{S}_{x,N}| \right\},
\end{eqnarray*}
where $C_{2} > 0$ is some sufficiently large constant.
As shown in the proof of Theorem 1 of \citet[][]{zhang2008sparsity} and Theorem 2.1 of \citet[][]{wei2010consistent}, there exists a constant $M_1 > 1$, such that if $\left(\mathbf{D}, \bs{\eta}\right) \in \Omega_{|\mathcal{S}_z|, |\mathcal{S}_{x,L}|, |\mathcal{S}_{x,N}|}$, then $|\widetilde{\mathcal{S}}| \leq M_1|{\mathcal{S}}| = M_1(|\mathcal{S}_z| + |\mathcal{S}_{x,L}| + |\mathcal{S}_{x,N}|)$.

So it suffices to show that $\left(\mathbf{D}, \bs{\eta} \right) \in \Omega_{|\mathcal{S}_z|, |\mathcal{S}_{x,L}|, |\mathcal{S}_{x,N}|}$. Denote $\mathbf{V}_\mathcal{A} = \mathbf{D}_\mathcal{A}\left(\mathbf{D}_\mathcal{A}^{\top}\mathbf{D}_\mathcal{A}\right)^{-1}\mathbf{S}_\mathcal{A} - (\mathbf{I} - \mathbf{P}_\mathcal{A})\mathbf{D}\bs{\theta}_0$, then by the triangle and Cauchy-Schwarz inequalities,
  \[
    \frac{|\bs{\eta}^{\top} \, \mathbf{V}_\mathcal{A}|}{\Vert \mathbf{V}_\mathcal{A} \Vert_2} = \frac{|\bs{\varepsilon}^{\top} \, \mathbf{V}_\mathcal{A} + \bs{\delta}^{\top} \, \mathbf{V}_\mathcal{A}|}{\Vert \mathbf{V}_\mathcal{A} \Vert_2} \leq \frac{|\bs{\varepsilon}^{\top} \, \mathbf{V}_\mathcal{A}|}{\Vert \mathbf{V}_\mathcal{A}\Vert_2} + \Vert \bs{\delta} \Vert_2 .
  \]
  For the ${|\bs{\varepsilon}^{\top} \, \mathbf{V}_\mathcal{A}|} / {\Vert \mathbf{V}_\mathcal{A} \Vert_2}$ part, define
    \[
        \chi_{s_1, s_2, s_3} = \max_{\substack{|\mathcal{A}_1| = s_1, |\mathcal{A}_2| = s_2, \\ |\mathcal{A}_3| = s_3 \\ \mathcal{A} = \mathcal{A}_1 \cup \mathcal{A}_2 \cup \mathcal{A}_3}}
    \max\limits_{\substack{\mathbf{u}_{s_1} \in \{\pm 1\}^{s_1} \\ \mathbf{u}_{s_2} \in \{\pm 1\}^{s_2} \\ \Vert \mathbf{U}_j \Vert_2 = 1, \, 1 \leq j \leq s_3}}
        \frac{|\bs{\varepsilon}^{\top} \mathbf{V}_\mathcal{A}|}{\Vert \mathbf{V}_\mathcal{A} \Vert_2},
    \]
\begin{eqnarray*}
        \Omega_{|\mathcal{S}_z|, |\mathcal{S}_{x,L}|, |\mathcal{S}_{x,N}|}^{\ast} = \left\{(\mathbf{D}, \bs{\varepsilon}): \chi_{s_1, s_2, s_3}^{\ast} \leq \sigma C_{3} \sqrt{s_1\ln(p_1) \vee s_2\ln(p_2) \vee s_3 N_n \ln(p_3N_n)}, \right. \\
        \left. \forall \, s_1 \geq |\mathcal{S}_z|, \, s_2 \geq |\mathcal{S}_{x,L}|, \, s_3 \geq |\mathcal{S}_{x,N}| \right\},
\end{eqnarray*}
where $C_3 > 0$ is some sufficiently large constant. As shown in the proof of Theorem 1 of \citet[][]{zhang2008sparsity} and Theorem 2.1 of \citet[][]{wei2010consistent},
 $\Pr(\Omega_{|\mathcal{S}_z|, |\mathcal{S}_{x,L}|, |\mathcal{S}_{x,N}|}^{\ast}) \rightarrow 1$. And for $\Vert \bs{\delta} \Vert_2 $ part, for $n$ sufficiently large and $N_n \asymp n^{1/3}$,
  \[
	\Vert \bs{\delta} \Vert_2 \leq C_{4}|\mathcal{S}_{x,N}|^{1/2} n^{1 / 2} N_n^{-1}
	\leq \sigma C_{5} \sqrt{s_1\ln(p_1) \vee s_2\ln(p_2) \vee s_3 N_n \ln(p_2N_n)}.
  \]
  It follows that $\Pr(\Omega_{|\mathcal{S}_z|, |\mathcal{S}_{x,L}|, |\mathcal{S}_{x,N}|}) \rightarrow 1$. This completes the proof.
\end{proof}

For any random variable $X$, denote $\Vert X \Vert_p = (\mathrm{E}|X|^p)^{1 / p}$ as the $L_p$ norm for random variable $X$; and denote $\Vert X \Vert_{\varphi} = \inf \left\{C > 0: \mathrm{E}\{\varphi\left({|X|}/{C}\right)\} \leq 1 \right\}$ as the \textit{Orlicz} norm for random variable $X$, where $\varphi$ is required as a non-decreasing, convex function with $\varphi(0) = 0$.

\begin{lemma}
\label{LEM:2}
Suppose that Assumptions (A2) and (A4) hold. Let
\begin{align*}
	T_{1k} & = n^{-1/2} \sum_{i = 1}^{n} Z_{ik} \varepsilon_i , ~1 \leq k \leq p_1, \quad
	T_{2\ell} = n^{-1/2} \sum_{i = 1}^{n} X_{i\ell} \varepsilon_i , ~1 \leq \ell \leq p_2, \\
	T_{3J\ell} & = n^{-1/2} \sum_{i = 1}^{n} B_{J,\ell}^{(d)}(X_{i\ell}) \varepsilon_i,  ~1 \leq \ell \leq p_2, ~ 1 \leq J \leq N_n,
\end{align*}
and $T_1 = \max\limits_{1 \leq k \leq p_1} |T_{1k}|$, $T_2 = \max\limits_{1 \leq \ell \leq p_2} |T_{2\ell}|$ and $T_3 = \max\limits_{1 \leq \ell \leq p_2,  1 \leq J \leq N_n} |T_{3J\ell}|$. Then we have
\begin{eqnarray*}
	&\mathrm{E}(T_1) \leq {C_1\sqrt{\ln(p_1)}}, \quad \mathrm{E}(T_2) \leq {C_2\sqrt{\ln(p_2)}}, \\
	&\mathrm{E}(T_3) \leq {C_3 n^{-1/2} \sqrt{\ln(p_2 N_n)} \left(\sqrt{2 C_4\,n N_n \ln(2p_2N_n)} + C_5N_n^{1/2}\ln(2p_2N_n) + n\right)^{1/2} },
\end{eqnarray*}
where $C_1$, $C_2$, $C_3$, $C_4$ and $C_5$ are positive constants.

In particular, when $ {N_n \ln(p_{2}N_n) / n \rightarrow 0}$, we have
\[
	\mathrm{E}(T_1) = O\{\sqrt{\ln(p_1)}\}, \quad
	\mathrm{E}(T_2) = O\{\sqrt{\ln(p_2)}\}, \quad
	\mathrm{E}(T_3) = O\{\sqrt{\ln(p_2N_n)}\}.
\]
\end{lemma}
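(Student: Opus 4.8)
The plan is to handle all three statistics with the same two-step recipe — a Hoeffding- or Bernstein-type tail bound for each individual coordinate, followed by a maximal inequality over the (possibly exponentially many) coordinates — the only genuine difference being that the spline basis functions are not uniformly bounded, which forces a sub-exponential rather than a sub-Gaussian argument for $T_3$. Throughout I would work conditionally on the design $(\mathbf{Z},\mathbf{X})$, using that the $\varepsilon_i$ are i.i.d.\ with $b$-sub-Gaussian tails and independent of the covariates (Assumption (A2)), and invoke the Orlicz-norm maximal inequality (van der Vaart and Wellner, Lemma 2.2.2): $\|\max_{k\le m}|Y_k|\|_{\psi}\le K\psi^{-1}(m)\max_k\|Y_k\|_{\psi}$, together with the elementary bound $\mathrm{E}|Y|\le\psi^{-1}(1)\|Y\|_{\psi}$ from Jensen's inequality.

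For $T_1$ and $T_2$: first I would note that, given the design, $Z_{ik}\varepsilon_i$ is sub-Gaussian with proxy $|Z_{ik}|^2 b^2\le C_3^2 b^2$ by Assumption (A4); since the $n$ summands are independent and centered, $\sum_{i=1}^n Z_{ik}\varepsilon_i$ is sub-Gaussian with proxy at most $nC_3^2b^2$, so $T_{1k}=n^{-1/2}\sum_i Z_{ik}\varepsilon_i$ has $\psi_2$-Orlicz norm bounded by an absolute multiple of $C_3 b$, uniformly in $k$ and in the design. Applying the maximal inequality with $\psi_2(x)=e^{x^2}-1$, $\psi_2^{-1}(m)=\sqrt{\ln(1+m)}$, gives $\|\,T_1\,\|_{\psi_2}\le K\sqrt{\ln(1+p_1)}\max_k\|T_{1k}\|_{\psi_2}$; since the bound is design-free I can take expectations to get $\mathrm{E}(T_1)\le C_1\sqrt{\ln p_1}$. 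The argument for $T_2$ is identical with $|X_{i\ell}|\le C_4$ in place of $|Z_{ik}|\le C_3$.

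The term $T_3$ is the crux, because by the B-spline construction in Section \ref{SUBSEC:spline} the normalized basis functions satisfy $\|B_{J,\ell}^{(d)}\|_\infty=O(N_n^{1/2})$ (the normalizing constants $\|b_{J,\ell}^{(d)}\|_2$ are of order $N_n^{-1/2}$; see \citep{de2001practical}) while $\mathrm{E}\{B_{J,\ell}^{(d)}(X_\ell)^2\}=1$. Conditioning on $\mathbf{X}$, each $T_{3J\ell}=n^{-1/2}\sum_i B_{J,\ell}^{(d)}(X_{i\ell})\varepsilon_i$ is sub-Gaussian with proxy $v_{J\ell}=n^{-1}b^2\sum_{i=1}^n B_{J,\ell}^{(d)}(X_{i\ell})^2$, so the sub-Gaussian maximal inequality over the $p_2N_n$ coordinates yields, conditionally, $\mathrm{E}(T_3\mid\mathbf{X})\le C\sqrt{\ln(p_2N_n)}\,(\max_{J,\ell}v_{J\ell})^{1/2}$. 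It then remains to control $\max_{J,\ell}\sum_i B_{J,\ell}^{(d)}(X_{i\ell})^2$: each summand $B_{J,\ell}^{(d)}(X_{i\ell})^2-1$ is centered, bounded in absolute value and in second moment by $O(N_n)$, so Bernstein's inequality plus a union bound over the $p_2N_n$ pairs bounds this maximum, with high probability, by $n+C\{\sqrt{2C_4\,nN_n\ln(2p_2N_n)}+C_5 N_n^{1/2}\ln(2p_2N_n)\}$. Combining this with $\mathrm{E}\sqrt{\cdot}\le\sqrt{\mathrm{E}\cdot}$ and integrating out $\mathbf{X}$ gives $\mathrm{E}(T_3)\le C_3 n^{-1/2}\sqrt{\ln(p_2N_n)}\,R^{1/2}$ with $R=\sqrt{2C_4\,nN_n\ln(2p_2N_n)}+C_5 N_n^{1/2}\ln(2p_2N_n)+n$.

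Finally, the displayed simplifications follow because $\ln(1+p_j)\asymp\ln p_j$ as $n\to\infty$, and because when $N_n\ln(p_2N_n)/n\to0$ we have $\sqrt{nN_n\ln(p_2N_n)}=o(n)$ and $N_n^{1/2}\ln(2p_2N_n)\le N_n\ln(2p_2N_n)=o(n)$, so $R=n\{1+o(1)\}$ and hence $n^{-1/2}R^{1/2}=O(1)$, giving $\mathrm{E}(T_3)=O(\sqrt{\ln(p_2N_n)})$. The main obstacle is exactly this $T_3$ estimate: the unbounded spline summands cannot be controlled by pure sub-Gaussian concentration, so the Bernstein bound — and its coupling with the union bound over $p_2N_n$ coordinates through the random variance proxy $\max_{J,\ell}v_{J\ell}$ — must be executed with the $N_n$-dependence tracked explicitly, which is where the somewhat unwieldy factor $R$ originates.
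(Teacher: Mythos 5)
Your proposal is correct and follows essentially the same route as the paper: conditioning on the design, sub-Gaussian Orlicz-norm maximal inequalities over the $p_1$, $p_2$, and $p_2N_n$ coordinates, a Bernstein-type maximal inequality exploiting $\|B_{J,\ell}^{(d)}\|_\infty=O(N_n^{1/2})$ and $\mathrm{E}\{B_{J,\ell}^{(d)}(X_{\ell})\}^2=1$ to control $\max_{J,\ell}\sum_i B_{J,\ell}^{(d)}(X_{i\ell})^2$, and finally $\mathrm{E}(s_{3n})\le(\mathrm{E}\,s_{3n}^2)^{1/2}$. The one point to tighten is that the Bernstein step must be invoked in expectation form (the paper cites Lemma A.1 of van de Geer (2008), which bounds $\mathrm{E}\bigl[\max_{J,\ell}\bigl|\sum_i\{B_{J,\ell}^{(d)}(X_{i\ell})^2-\mathrm{E}\,B_{J,\ell}^{(d)}(X_{i\ell})^2\}\bigr|\bigr]$ directly) rather than as a high-probability statement, since the subsequent Jensen step requires a bound on $\mathrm{E}(s_{3n}^2)$ itself.
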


\begin{proof}
Denote $s_{1nk}^2 = \sum_{i = 1}^n Z_{ik}^2$, $1 \leq k \leq p_1$, $s_{2nl}^2 = \sum_{i = 1}^n X_{i\ell}^2$, $1 \leq \ell \leq p_2$, $s_{3nJ\ell}^2 = \sum_{i = 1}^n \{B_{J,\ell}^{(d)}(X_{i\ell})\}^2$, $1 \leq \ell \leq p_2$, $1 \leq J \leq N_n$.
Next let $s_{1n}^2 = \max_{1 \leq k \leq p_1} s_{1nk}^2$, $s_{2n}^2 = \max_{1 \leq \ell \leq p_2} s_{2nl}^2$ and $s_{3n}^2 = \max_{1 \leq \ell \leq p_2, 1 \leq J \leq N_n} s_{3nJ\ell}^2$.
By Assumption (A2), conditional on $\mathbb{Z}=\{Z_{ik}, \, 1 \leq i \leq n, \, 1 \leq k \leq p_1\}$, $\sqrt{n} \, T_{1k}$ is $b{\left(\sum_{i = 1}^n Z_{ik}^2\right)}^{1/2}$--subgaussian; and conditional on $\mathbb{X}=\left\{X_{i\ell}, \, 1 \leq i \leq n, \, 1 \leq \ell \leq p_2\right\}$, $\sqrt{n} \, T_{2\ell}$ is $b{\left(\sum_{i = 1}^n X_{i\ell}^2\right)}^{1/2}$-subgaussian, and $\sqrt{n} \, T_{3J\ell}$ is $b\left[\sum_{i = 1}^n \{B_{J,\ell}^{(d)} (X_{i\ell})\}^2\right]^{1 / 2}$-subgaussian.

    Define $\varphi_p(x) = \exp({x^p}) - 1, \, p \geq 1$. Then $\varphi_p^{-1}(m) = {\left\{\ln(1 + m)\right\}}^{1/p}$. By Assumption (A2) and the maximal inequality for sub-Gaussian random variables (as stated in Lemmas 2.2.1 and 2.2.2 of \citet[][]{van1996weak}),
\begin{align*}
         \mathrm{E}\left(T_{1} | \mathbb{Z}\right) & \!=\!
        \mathrm{E}\left(\max\limits_{1 \leq k \leq p_1} \big|T_{1k}| | \mathbb{Z}\right)
         \!=\!  \left\Vert \max\limits_{1 \leq k \leq p_1} |T_{1k}| \big| \mathbb{Z} \right\Vert_1
        \!\!\!\leq\! \left\Vert \max\limits_{1 \leq k \leq p_1} |T_{1k}| \big| \mathbb{Z} \right\Vert_{\varphi_1} \!\!\!\!\!\!\leq\! \sqrt{\ln (2)} \left\Vert \max\limits_{1 \leq k \leq p_1} |T_{1k}| \big| \mathbb{Z} \right\Vert_{\varphi_2} \\
 & \leq K_1 \sqrt{\ln 2} \sqrt{\ln(1 + p_1)} \, {n}^{-1/2} \max\limits_{1 \leq k \leq p_1} \left\Vert \sqrt{n} \, T_{1k} \big| \left\{Z_{ik}, 1 \leq i \leq n, \, 1 \leq k \leq p_1 \right\} \right\Vert_{\varphi_2} \\
 & \leq K_1 \sqrt{\ln 2} \sqrt{\ln(1 + p_1)} \, {n}^{-1/2} \max\limits_{1 \leq k \leq p_1} \left(6b^2 \sum_{i = 1}^n Z_{ik}^2\right)^{1/2} \\
        &\leq {C_{5} {n}^{-1/2}s_{1n} \sqrt{\ln(p_1)}}.
\end{align*}
Next,
\begin{align*}
         \mathrm{E}\left(T_{2} | \mathbb{X}\right) & =
        \mathrm{E}\left(\max\limits_{1 \leq \ell \leq p_2} \big|T_{2\ell}| | \mathbb{X}\right)
         =  \left\Vert \max\limits_{1 \leq \ell \leq p_2} |T_{2\ell}| \big| \mathbb{X} \right\Vert_1
        \leq \sqrt{\ln (2)} \left\Vert \max\limits_{1 \leq \ell \leq p_2} |T_{2\ell}| \big| \mathbb{X} \right\Vert_{\varphi_2} \\
 & \leq K_2 \sqrt{\ln 2} \sqrt{\ln(1 + p_2)} \, {n}^{-1/2} \max\limits_{1 \leq \ell \leq p_2} \left\Vert \sqrt{n} \, T_{2\ell} \big| \left\{X_{i\ell}, 1 \leq i \leq n, \, 1 \leq \ell \leq p_2 \right\} \right\Vert_{\varphi_2} \\
 & \leq K_2 \sqrt{\ln 2} \sqrt{\ln(1 + p_2)} \, {n}^{-1/2} \max\limits_{1 \leq \ell \leq p_2} \left(6b^2 \sum_{i = 1}^n X_{i\ell}^2\right)^{1/2} \\
        &\leq {C_{5} {n}^{-1/2}s_{2n} \sqrt{\ln(p_2)}}.
\end{align*}
\begin{align*}
        \mathrm{E}\left(T_{3} | \mathbb{X}\right) & =  \mathrm{E}\left(\max\limits_{1 \leq \ell \leq p_2, \, 1 \leq J \leq N_n} |T_{3J\ell}| \big| \mathbb{X}\right) = \left\Vert \max\limits_{1 \leq \ell \leq p_2, \, 1 \leq J \leq N_n} |T_{3J\ell}| \big| \mathbb{X} \right\Vert_1\\
        &\leq  \sqrt{\ln (2)} \Vert \max\limits_{1 \leq \ell \leq p_2, \, 1 \leq J \leq N_n} |T_{3J\ell}| \big| \mathbb{X} \Vert_{\varphi_2} \\
 & \leq  K_3 \sqrt{\ln 2} \sqrt{\ln(1 + p_2 N_n)} \, {n}^{-1/2} \max\limits_{1 \leq \ell \leq p_2, \, 1 \leq J \leq N_n} \left[6b^2 \sum_{i = 1}^n \{B_{J,\ell}^{(d)}(X_{i\ell})\}^2 \right]^{1/2} \\
        &= {C_{21}{n}^{-1/2} s_{3n} \sqrt{\ln(p_2 N_n)}}.
\end{align*}
Thus,
\begin{align*}
	\mathrm{E}\left(T_{1} \right) &\leq {C_{11} {n}^{-1/2} \sqrt{\ln(p_1)}  \mathrm{E} (s_{1n})},  ~
	\mathrm{E}\left(T_{2} \right)  \leq {C_{21} {n}^{-1/2} \sqrt{\ln(p_2)}  \mathrm{E} (s_{2n})},  \\
	\mathrm{E}\left(T_{3} \right)  &\leq {C_{31} {n}^{-1/2} \sqrt{\ln(p_2 N_n)} \, \mathrm{E}(s_{3n})},
\end{align*}
    where $K_{1}$, $K_{2}$, $K_3$, $C_{11}$, $C_{21}$ and $C_{31}$ are positive constants. By Assumption (A4), we have $\mathrm{E} (Z_{ik})^2 \leq  C_{13}^2 $ and $\mathrm{E} (X_{i\ell})^2 \leq  C_{23}^2 $. The properties of normalized B-splines imply that, for every $l, J$, there exist positive constants $C_{13}$, and $C_{4}$, such that $|B_{J,\ell}^{(d)}(X_{i\ell})| \leq C_4 N_n^{1 / 2}$ and $\mathrm{E} \left(B_{J,\ell}^{(d)}(X_{i\ell})\right)^2 = 1$. Therefore,
$\mathrm{E}(s_{1n}^2) = \max\limits_{1 \leq k \leq p_1} \mathrm{E}(s_{1nk}^2 )= \max\limits_{1 \leq k \leq p_1}  \sum\limits_{i = 1}^n   \mathrm{E} (Z_{ik}^2) \leq n C_{13}^2$, $\mathrm{E}(s_{2n}^2) = \max\limits_{1 \leq \ell \leq p_2} \mathrm{E}(s_{2nl}^2 )= \max\limits_{1 \leq \ell \leq p_2}  \sum\limits_{i = 1}^n   \mathrm{E} (X_{i\ell}^2)\leq n C_{23}^2$, and $\sum\limits_{i = 1}^n \mathrm{E} \left[\{B_{J,\ell}^{(d)}(X_{i\ell})\}^2 - \mathrm{E}^2 \{B_{J,\ell}^{(d)}(X_{i\ell})\}\right]^2$ $\leq n N_n C_3$.
Thus, by Lemma A.1 of \citet[][]{van2008high}, we have
\[
\mathrm{E} \Bigg[\max_{\substack{1 \leq \ell \leq p_2, \\ 1 \leq J \leq N_n}} \Bigg| \sum\limits_{i = 1}^n \Bigg\{B_{J,\ell}^{(d)}(X_{i\ell})\Bigg\} ^2 - \mathrm{E}^2 \left\{B_{J,\ell}^{(d)}(X_{i\ell})\right\} \Bigg| \Bigg] \leq \sqrt{2C_4 n N_n\ln(2p_2N_n)} + C_5 N_n^{1 / 2} \ln(2p_2N_n).
\]
Therefore, by triangle inequality, $\mathrm{E}(s_{3n}^2) \leq  \sqrt{2C_4\,n N_n \ln(2p_2N_n)} + C_5 N_n^{1 / 2} \ln(2p_2N_n) + n$.
Thus, $\mathrm{E}(s_{1n})\leq (\mathrm{E}s_{1n}^2)^{1/2} \leq {\left(C_{13}^2n\right)^{1/2}}$, $\mathrm{E}(s_{2n})\leq (\mathrm{E}s_{2n}^2)^{1/2} \leq {\left(C_{23}^2n\right)^{1/2}}$, and
    \[
        \mathrm{E}(s_{3n}) \leq  (\mathrm{E}s_{3n}^2)^{1/2} \leq {\left\{\sqrt{2C_4 n N_n\ln(2p_2N_n)} + C_5 N_n^{1 / 2} \ln(2p_2N_n) + n\right\}^{1/2}}.
    \]
    The lemma follows.
\end{proof}

\begin{lemma}
\label{LEM:4} 
For
\begin{align*}
	\bs{v}_1 &= \left\{\left(\frac{\omega_k^{\alpha} \overline{\theta}_{0,k}}{|\overline{\theta}_{0,k}|}, k\in\mathcal{S}_z\right)^{\top}, \bs{0}_{|\mathcal{S}_{x,L}|}^{\top} , \bs{0}_{|\mathcal{S}_{x,N}|N_n}^{\top} \right\}^{\top},\\
	\bs{v}_2 &= \left\{\bs{0}_{|\mathcal{S}_z|}^{\top}, \left(\frac{\omega_{\ell}^{\beta} \overline{\theta}_{0,|\mathcal{S}_z|+\ell}}{|\overline{\theta}_{0,|\mathcal{S}_z|+\ell}|}, \ell\in\mathcal{S}_{x,L}\right)^{\top}, \bs{0}_{|\mathcal{S}_{x,N}|N_n}^{\top} \right\}^{\top}, \\
	\bs{v}_3 &= \left\{\bs{0}_{|\mathcal{S}_z|}^{\top}, \bs{0}_{|\mathcal{S}_{x,L}|}^{\top}, \left(\frac{\omega_{\ell}^{\gamma}\overline{\bs{\theta}}_{0,|\mathcal{S}_z|+|\mathcal{S}_{x,L}|+\ell}^{\top}}{\Vert \overline{\bs{\theta}}_{0,|\mathcal{S}_z|+|\mathcal{S}_{x,L}|+\ell}\Vert_2}, \ell\in\mathcal{S}_{x,N}\right)^{\top} \right\}^{\top},
\end{align*}
under Assumption (A5),
\begin{align}
	\Vert \bs{v}_1 \Vert_2^2 &= O_P \left( h_{n1}^2 \right) = O_P \left(b_{n1}^{-4}c_{b1}^{-2}r_{n1}^{-2} + |\mathcal{S}_z|b_{n1}^{-2}\right), \label{EQN:v1_order} \\
	\Vert \bs{v}_2 \Vert_2^2 &= O_P \left( h_{n2}^2 \right) = O_P \left\{b_{n2}^{-4}c_{b2}^{-2}r_{n2}^{-2} + |\mathcal{S}_{x,L}|b_{n2}^{-2}\right\}, \label{EQN:v2_order} \\
	\Vert \bs{v}_3 \Vert_2^2 &= O_P \left( h_{n3}^2 \right) = O_P \left\{b_{n3}^{-4}c_{b3}^{-2}r_{n3}^{-2} + |\mathcal{S}_{x,N}|b_{n3}^{-2}\right\}. \label{EQN:v3_order}
\end{align}
\end{lemma}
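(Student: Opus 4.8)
The plan is first to compute $\|\bs v_j\|_2^2$ in closed form. Each of $\bs v_1,\bs v_2,\bs v_3$ is supported only on the block of coordinates attached to $\mathcal S_z$, $\mathcal S_{x,L}$, $\mathcal S_{x,N}$ respectively, and on those coordinates the directional factor $\overline{\theta}_{0m}/|\overline{\theta}_{0m}|$ (a sign) has modulus one while $\overline{\bs{\theta}}_{0m}/\|\overline{\bs{\theta}}_{0m}\|$ (a unit vector) has Euclidean norm one. Hence, on the event $E_n$ that every active coordinate/block of $\overline{\bs{\theta}}^{o}$ is nonzero,
$\|\bs v_1\|_2^2=\sum_{k\in\mathcal S_z}(\omega_k^{\alpha})^2=\sum_{k\in\mathcal S_z}|\widetilde\alpha_k|^{-2}$,
$\|\bs v_2\|_2^2=\sum_{\ell\in\mathcal S_{x,L}}|\widetilde\beta_\ell|^{-2}$,
$\|\bs v_3\|_2^2=\sum_{\ell\in\mathcal S_{x,N}}\|\widetilde{\bs\gamma}_\ell\|^{-2}$.
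I would then argue $\Pr(E_n)\to1$: writing $\mathbf Y=\mathbf D_{\mathcal S}\bs\theta_{0,\mathcal S}+\bs\delta+\bs\varepsilon$ as in the proof of Theorem \ref{THM:consistency}, one has $\overline{\bs{\theta}}^{o}=\bs\theta_{0,\mathcal S}+(\mathbf D_{\mathcal S}^{\top}\mathbf D_{\mathcal S})^{-1}\mathbf D_{\mathcal S}^{\top}(\bs\delta+\bs\varepsilon)$, whose perturbation term is $o_P(1)$ in $\ell_2$ norm by Lemmas \ref{LEM:eigen}, \ref{LEM:2} and \ref{LEM:spline-approx}, while the nonzero pieces of $\bs\theta_{0,\mathcal S}$ are bounded away from $0$: $|\alpha_{0k}|\ge c_\alpha$, $|\beta_{0\ell}|\ge c_\beta$, and $\|\bs\gamma_{0\ell}\|\asymp\|\mathbf B_\ell^{(1)\top}\bs\gamma_{0\ell}\|_2\ge\|g_{0\ell}\|_2-O(N_n^{-1})\gtrsim c_g$ by Lemma \ref{LEM:spline-approx} and (A1).

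For $\bs v_1$, Assumption (A5) gives, on an event of probability tending to one, $|\widetilde\alpha_k|\ge c_{b1}b_{n1}$ for all $k\in\mathcal S_z$, so $\|\bs v_1\|_2^2\le|\mathcal S_z|c_{b1}^{-2}b_{n1}^{-2}$; since $c_{b1}$ is a fixed positive constant this is already $O_P(b_{n1}^{-4}c_{b1}^{-2}r_{n1}^{-2}+|\mathcal S_z|b_{n1}^{-2})$. If one wants the two summands to emerge genuinely, decompose $|\widetilde\alpha_k|^{-1}=|\alpha_{0k}|^{-1}+\big(|\widetilde\alpha_k|^{-1}-|\alpha_{0k}|^{-1}\big)$ and bound $\big||\widetilde\alpha_k|^{-1}-|\alpha_{0k}|^{-1}\big|=\big||\widetilde\alpha_k|-|\alpha_{0k}|\big|\big/(|\widetilde\alpha_k||\alpha_{0k}|)\le|\widetilde\alpha_k-\alpha_{0k}|\big/(c_{b1}b_{n1}^2)$ on the same event, whence $|\widetilde\alpha_k|^{-2}\le2|\alpha_{0k}|^{-2}+2c_{b1}^{-2}b_{n1}^{-4}(\widetilde\alpha_k-\alpha_{0k})^2$; summing over $\mathcal S_z$ and invoking $\sum_{k=1}^{p_1}(\widetilde\alpha_k-\alpha_{0k})^2=O_P(r_{n1}^{-2})$ (Theorem \ref{THM:LASSO-1}(ii), with the group LASSO $\ell_2$-rate identified with $r_{n1}^{-2}$ as in the discussion after (A6)) yields $\|\bs v_1\|_2^2\le2|\mathcal S_z|b_{n1}^{-2}+O_P(c_{b1}^{-2}b_{n1}^{-4}r_{n1}^{-2})$, i.e. \eqref{EQN:v1_order}. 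The bound \eqref{EQN:v2_order} follows verbatim after replacing $(\alpha,\mathcal S_z,b_{n1},c_{b1},r_{n1})$ by $(\beta,\mathcal S_{x,L},b_{n2},c_{b2},r_{n2})$ and using the $\beta$-part of (A5) and the second display of Theorem \ref{THM:LASSO-1}(ii).

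For $\bs v_3$ the same scheme runs with absolute values replaced by Euclidean norms of the coefficient blocks: by (A5), $\|\widetilde{\bs\gamma}_\ell\|\ge c_{b3}b_{n3}$ for all $\ell\in\mathcal S_{x,N}$ on a high-probability event, giving the crude bound $\|\bs v_3\|_2^2\le|\mathcal S_{x,N}|c_{b3}^{-2}b_{n3}^{-2}$, and for the refined form one uses the reverse triangle inequality $\big|\|\widetilde{\bs\gamma}_\ell\|-\|\bs\gamma_{0\ell}\|\big|\le\|\widetilde{\bs\gamma}_\ell-\bs\gamma_{0\ell}\|$ together with $\|\bs\gamma_{0\ell}\|\gtrsim b_{n3}$ and $\sum_{\ell}\|\widetilde{\bs\gamma}_\ell-\bs\gamma_{0\ell}\|^2\asymp\sum_\ell\|\widetilde g_\ell-g_{n\ell}\|_2^2=O_P(r_{n3}^{-2})$, the latter from the B-spline norm equivalence, Theorem \ref{THM:LASSO-1}(ii) and $\|g_{n\ell}-g_{0\ell}\|_2=O(N_n^{-1})$. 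The only step that is not pure bookkeeping is precisely this last one for $\bs v_3$: translating the hypothesis on the \emph{function} norm $b_{n3}=\min_{\ell\in\mathcal S_{x,N}}\|g_{0\ell}\|_2$ into control of the Euclidean norms of the spline coefficient vectors $\bs\gamma_{0\ell}$, $\widetilde{\bs\gamma}_\ell$; this is handled by the standard equivalence $\|\bs a\|\asymp\|\mathbf B_\ell^{(1)\top}\bs a\|_2$ and the approximation estimate in Lemma \ref{LEM:spline-approx}. Everything else reduces to elementary manipulations with (A5) and Theorem \ref{THM:LASSO-1}.
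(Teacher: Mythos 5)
Your proof is correct and follows essentially the same route as the paper's: both reduce $\Vert \bs{v}_j\Vert_2^2$ to $\sum(\omega^{\cdot})^2$, split each squared inverse weight into a $|\alpha_{0k}|^{-2}$-type term plus an estimation-error term controlled via Assumption (A5) and the initial-estimator rate $r_{nj}^{-1}$, yielding exactly the two summands in \eqref{EQN:v1_order}--\eqref{EQN:v3_order}. Your version differs only in bookkeeping (decomposing $|\widetilde\alpha_k|^{-1}$ rather than $|\widetilde\alpha_k|^{-2}$, and making explicit the well-definedness of the sign/direction factors and the B-spline coefficient--function norm equivalence for the $\bs{v}_3$ block), all of which is consistent with, and if anything slightly more careful than, the paper's argument.
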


\begin{proof}
Write
\begin{align*}
	\Vert \bs{v}_1 \Vert _2^2 &= \sum\limits_{k \in \mathcal{S}_z} \left(\omega_k^{\alpha}\right)^2
	= \sum\limits_{k \in \mathcal{S}_z} \, |\widetilde{\alpha}_k |^{-2}
	= \sum\limits_{k \in \mathcal{S}_z} \, \frac{\alpha_{0k}^2 - \widetilde{\alpha}_k^2}{\alpha_{0k}^2  \widetilde{\alpha}_k^2}
	+ \sum\limits_{k \in \mathcal{S}_z} \, |\alpha_{0k}|^{-2} , \\
	\Vert \bs{v}_2 \Vert _2^2 &= \sum\limits_{\ell \in \mathcal{S}_{x,L}} \left(\omega_{\ell}^{\beta}\right)^2
	= \sum\limits_{\ell \in \mathcal{S}_{x,L}} \, |\widetilde{\beta}_{\ell} |^{-2}
	= \sum\limits_{\ell \in \mathcal{S}_{x,L}} \, \frac{\beta_{0\ell}^2 - \widetilde{\beta}_{\ell}^2}{\beta_{0\ell}^2  \widetilde{\beta}_{\ell}^2}
	+ \sum\limits_{\ell \in \mathcal{S}_{x,L}} \, |\beta_{0\ell}|^{-2} , \\
	\Vert \bs{v}_3 \Vert _2^2 &= \sum\limits_{\ell \in \mathcal{S}_{x,N}} \left(\omega_{\ell}^{\gamma}\right)^2
	= \sum\limits_{\ell \in \mathcal{S}_{x,N}} \Vert \tbgamma_{\ell} \Vert_2^{-2}
	= \sum\limits_{\ell \in \mathcal{S}_{x,N}} \, \frac{\Vert\bs{\gamma}_{0\ell}\Vert_2^2 - \Vert\tbgamma_{\ell}\Vert^2}{\Vert\bs{\gamma}_{0\ell}\Vert_2^2  \Vert\tbgamma_{\ell}\Vert_2^2}
	+ \sum\limits_{\ell \in \mathcal{S}_{x,N}} \Vert \bs{\gamma}_{0\ell} \Vert_2^{-2} .
\end{align*}
Under (A5), there exist positive constants $M_1$, $M_2$ and $M_3$, such that
\begin{align*}
	\sum_{k \in \mathcal{S}_z} \frac{\left| \alpha_{0k}^2 - \widetilde{\alpha}_k^2 \right|}{\alpha_{0k}^2 \widetilde{\alpha}^2}
	&\leq M_1c_{b1}^{-2}b_{n1}^{-4} \|\widetilde{\bs{\alpha}} - \bs{\alpha}_0 \|^2
	= O_P \left( b_{n1}^{-4}c_{b1}^{-2} r_{n1}^{-2}\right), \\
	\sum_{\ell \in \mathcal{S}_{x,L}} \frac{\left| \beta_{0\ell}^2 - \widetilde{\beta}_{\ell}^2 \right|}{\beta_{0\ell}^2 \widetilde{\beta}^2}
	&\leq M_2c_{b2}^{-2}b_{n2}^{-4} \|\tbbeta - \bs{\beta}_0 \|^2
	= O_P \left( b_{n2}^{-4}c_{b2}^{-2} r_{n2}^{-2}\right), \\
	\sum_{\ell \in \mathcal{S}_{x,N}} \, \frac{\left|\|\bs{\gamma}_{0\ell}\|^2 - \Vert\tbgamma_{\ell}\Vert^2 \right|}{\Vert\bs{\gamma}_{0\ell}\Vert^2 \Vert\tbgamma\Vert^2}
	&\leq M_3 c_{b3}^{-2}b_{n3}^{-4} \Vert \tbgamma - \bs{\gamma}_0 \Vert^2
	= O_P \left( b_{n3}^{-4}c_{b3}^{-2} r_{n3}^{-2}\right),
\end{align*}
and the results follow from that $\sum_{k \in \mathcal{S}_z} |\alpha_k |^{-2} \leq |\mathcal{S}_z| b_{n1}^{-2}$, $\sum_{\ell \in \mathcal{S}_{x,L}} |\beta_{\ell} |^{-2} \leq |\mathcal{S}_{x,L}| b_{n2}^{-2}$ and $\sum_{\ell \in \mathcal{S}_{x,N}}\Vert \bs{\gamma}_{\ell} \Vert^{-2} \leq |\mathcal{S}_{x,N}|b_{n3}^{-2}$.
\end{proof}

{\baselineskip=12pt
\bibliographystyle{asa}
\bibliography{references_abbr}
}

\end{document}